\DeclareMathAlphabet{\mathpzc}{OT1}{pzc}{m}{it}
\renewcommand{\phi}{\varphi}
\renewcommand{\epsilon}{\varepsilon}
\newcommand{\sfsymbol}[1]{\textsf{\upshape {#1}}}
\newcommand{\ttsymbol}[1]{\texttt{\upshape {#1}}}
\newcommand{\wpsymbol}{\sfsymbol{wp}}
\newcommand{\boldwpsymbol}{\textbf{\sfsymbol{wp}}}
\renewcommand{\wp}[2]{\wpsymbol \left\llbracket {#1} \right\rrbracket \left( {#2} \right)}
\newcommand{\boldwp}[2]{\boldsymbol{\boldwpsymbol \left\llbracket {#1} \right\rrbracket \left( {#2} \right)}}
\newcommand{\wpC}[1]{\wpsymbol \left\llbracket {#1} \right\rrbracket}
\newcommand{\ertsymbol}{\sfsymbol{ert}}
\newcommand{\boldertsymbol}{\textbf{\sfsymbol{ert}}}
\newcommand{\ert}[2]{\ertsymbol \, \left\llbracket {#1} \right\rrbracket \, \left( {#2} \right)}
\newcommand{\boldert}[2]{\boldsymbol{\boldertsymbol \left\llbracket {#1} \right\rrbracket \left( {#2} \right)}}
\newcommand{\conditionalPair}[2]{{\let\oldarraystretch\arraystretch}\renewcommand{\arraystretch}{1}~\holter{~\raisebox{.5ex}{${#1}$}~}{~\raisebox{.125ex}{${#2}$}~}~\renewcommand{\arraystretch}{\oldarraystretch}}
\newcommand{\SKIP}{\ttsymbol{skip}}
\newcommand{\AssignSymbol}{\mathrel{\textnormal{\texttt{:=}}}}
\newcommand{\ASSIGN}[2]{\ensuremath{#1 \AssignSymbol #2}}
\newcommand{\COMPOSE}[2]{\ensuremath{{#1}{\,\fatsemi}~ {#2}}}
\newcommand{\PCHOICE}[3]{\ensuremath{\left\{\, {#1} \,\right\}\mathrel{\left[{#2}\right]}\left\{\, {#3} \,\right\}}}
\newcommand{\IFSYMBOL}{\ensuremath{\textnormal{\texttt{if}}}}
\newcommand{\ELSESYMBOL}{\ensuremath{\textnormal{\texttt{else}}}}
\newcommand{\ITE}[3]{\ensuremath{\IFSYMBOL\,\left(\, {#1} \,\right)\,\left\{\, {#2} \,\right\}\,\ELSESYMBOL\,\left\{\, {#3} \,\right\}}}
\newcommand{\TRUE}{\ensuremath{\textnormal{\texttt{true}}}}
\newcommand{\WHILESYMBOL}{\ensuremath{\textnormal{\texttt{while}}}}
\newcommand{\WHILE}[1]{\ensuremath{\WHILESYMBOL \left(\, {#1} \,\right)\left\{\right.}}
\newcommand{\WHILEDO}[2]{\ensuremath{\WHILESYMBOL \left(\, {#1} \,\right)\left\{\, {#2} \,\right\}}}
\newcommand{\pgcl}{\textnormal{\sfsymbol{pGCL}}\xspace} 
\newcommand{\Vars}{\ensuremath{\mathsf{Vars}}\xspace} 
\newcommand{\Nats}{\ensuremath{\mathbb{N}}\xspace}
\newcommand{\NatsInf}{\ensuremath{\overline{\Nats}}}
\newcommand{\Reals}{\ensuremath{\mathbb{R}}\xspace}
\newcommand{\PosReals}{\mathbb{R}_{\geq 0}}
\newcommand{\PosRealsInf}{\overline{\mathbb{R}}_{\geq 0}}
\newcommand{\States}{\Sigma}
\newcommand{\State}{s}
\newcommand{\E}{\mathbb{F}}
\newcommand{\abs}[1]{\ensuremath{\left\lvert #1 \right\rvert}}
\newcommand{\iverson}[1]{\left[ {#1} \right]}
\newcommand{\Min}[2]{\min\left\{\,{#1},\: {#2}\,\right\}}
\newcommand{\Max}[2]{\max\left\{\,{#1},\: {#2}\,\right\}}
\newcommand{\subst}[2]{\left[ {#1} \middle/ {#2}\right]}
\newcommand{\charfun}[4]{\tensor*[^{\smash{#1}}_{\smash{\langle #2, #3 \rangle}}]{\Phi}{_{{#4}}}}
\newcommand{\charfunn}[5]{\tensor*[^{#1}_{{\langle #2, #3 \rangle}}]{\Phi}{_{{#4}}^{{#5}}}}
\newcommand{\charwp}[3]{\charfun{\wpsymbol}{#1}{#2}{#3}}
\newcommand{\charert}[3]{\charfun{\ertsymbol}{#1}{#2}{#3}}
\newcommand{\charwpnoindex}[3]{{}^{\wpsymbol}\Phi_{#3}}
\newcommand{\charertnoindex}[3]{{}^{\ertsymbol}\Phi_{#3}}
\newcommand{\charwpn}[4]{\charfunn{\wpsymbol}{#1}{#2}{#3}{#4}}
\newcommand{\charwpnnoindex}[4]{{}^{\wpsymbol}\Phi^{#4}_{#3}}
\newcommand{\charertnnoindex}[4]{{}^{\ertsymbol}\Phi^{#4}_{#3}}
\newcommand{\To}{\rightarrow}
\newcommand{\mydot}{{}\scalebox{1.05}{$\centerdot$}{}~}
\newcommand{\guard}{\varphi}
\newcommand{\qiff}{\quad\textnormal{iff}\quad}
\newcommand{\qqiff}{\qquad\textnormal{iff}\qquad}
\newcommand{\qand}{\quad\textnormal{and}\quad}
\newcommand{\qqand}{\qquad\textnormal{and}\qquad}
\newcommand{\qimplies}{\quad\textnormal{implies}\quad}
\newcommand{\qqimplies}{\qquad\textnormal{implies}\qquad}
\newcommand{\pprec}{\mathrel{{\prec}\hspace{-.5ex}{\prec}}}
\newcommand{\ppreceq}{~{}\preceq{}~}
\newcommand{\eeq}{~{}={}~}
\newcommand{\ccoloneqq}{~{}\coloneqq{}~}
\newcommand{\lleq}{~{}\leq{}~}
\newcommand{\ggeq}{~{}\geq{}~}
\newcommand{\ssqsubseteq}{~{}\sqsubseteq{}~}
\newcommand{\pplus}{~{}+{}~}
\newcommand{\setcomp}[2]{\left\{\, {#1} ~\middle|~ {#2} \,\right\}}
\newcommand{\annocolor}[1]{\textcolor{darkgray!50!gray}{#1}}
\newcommand{\annotate}[1]{\boldsymbol{\annocolor{\!\!{\fatslash}\!\!{\fatslash}~~\vphantom{G'} {#1}}}}
\newcommand{\starannotate}[1]{\boldsymbol{\annocolor{{\talloblong}\!{\talloblong}\:\vphantom{G'} {#1}}}}
\newcommand{\phiannotate}[1]{\boldsymbol{\annocolor{\!\!\hspace{-.15ex}{}^{\annocolor{\Phi}}\!\!\!{\fatslash}\!\!{\fatslash}~~\vphantom{G'} {#1}}}}
\newcommand{\bowtieannotate}[1]{\annocolor{\!\!\hspace{-.5ex}{}^{\annocolor{{\bowtie}}}\boldsymbol{\!\!\!{\fatslash}\!\!{\fatslash}~~\vphantom{G'} {#1}}}}
\newcommand{\preceqannotate}[1]{\annocolor{\!\!\hspace{-.25ex}{}^{\annocolor{{\preceq}}}\boldsymbol{\!\!\!{\fatslash}\!\!{\fatslash}~~\vphantom{G'} {#1}}}}
\newcommand{\eqannotate}[1]{\annocolor{\!\!\hspace{-.1ex}{}^{\annocolor{{=}}}\boldsymbol{\!\!\!{\fatslash}\!\!{\fatslash}~~\vphantom{G'} {#1}}}}
\newcommand{\wpannotate}[1]{\boldsymbol{\annocolor{\!\!\hspace{-0.75ex}{}^{\annocolor{\text{\tiny $\wpsymbol$}}}\!\!\!{\fatslash}\!\!{\fatslash}~~\vphantom{G'} {#1}}}}
\newcommand{\ertannotate}[1]{\boldsymbol{\annocolor{\!\!\hspace{-.6ex}{}^{\annocolor{\text{\tiny $\ertsymbol$}}}\!\!\!{\fatslash}\!\!{\fatslash}~~\vphantom{G'} {#1}}}}
\newcounter{computationarrowsone}
\newcounter{computationarrowstwo}
\newcounter{sarrow}
\newcommand{\lfp}{\ensuremath{\textnormal{\sfsymbol{lfp}}~}}
\newcommand{\gfp}{\ensuremath{\textnormal{\sfsymbol{gfp}}~}}
\newcommand{\IP}[2]{\ensuremath{\phantom{}^{#1}\mathbb{P}\left(#2\right)}}
\newcommand{\IR}{\ensuremath{\mathbb{R}}}
\newcommand{\expec}[2]{\ensuremath{\phantom{}^{#1}\mathbb{E}\left(#2\right)}}
\newcommand{\IE}{\ensuremath{\mathbb{E}}}
\newcommand{\F}[1]{\ensuremath{\mathfrak{#1}}}
\newcommand{\sigmagen}[1]{\ensuremath{\left\langle #1 \right \rangle _{\sigma}}}
\newcommand{\primed}[1]{\ensuremath{{#1}^{\prime}}}
\newcommand{\ddashed}[1]{\ensuremath{{#1}^{\prime\prime}}}
\renewcommand{\subset}{\subseteq}
\renewcommand{\comment}[2]{\footnote{\textcolor{orange}{\textbf{ #1 \normalsize !!!
[}} {#2} \textcolor{orange}{\textbf{\normalsize ] !!!}}}}
\renewcommand{\comment}[2]{}
\newcommand{\oldcomment}[1]{}
\newcommand{\termtime}[1]{\ensuremath{T^{\neg #1}}}
\newcommand{\run}{\vartheta}
\newcommand{\indicator}[1]{\ensuremath{\left[{#1}\right]}}
\newcommand{\lp}{{\mbox{\tiny \rm loop}}}
\newcommand{\harm}[1]{\ensuremath{\mathcal{H}_{#1}}}
\newcommand{\Diff}[1]{\ensuremath{\Delta #1}}
\newcommand{\lowerpenalties}{\allowdisplaybreaks\predisplaypenalty=0\clubpenalty=0\widowpenalty=0\displaywidowpenalty=0\brokenpenalty=0{}}
\newcommand{\restorepenalties}{\allowdisplaybreaks[0]\predisplaypenalty=10000\clubpenalty=150\widowpenalty=150\displaywidowpenalty=50\brokenpenalty=100{}}
\newcommand{\unlowerpenalties}{\restorepenalties}
\makeatletter \newlength{\negph@wd}
\DeclareRobustCommand{\negphantom}[1]{%
	\ifmmode \mathpalette\negph@math{#1}%
	\else \negph@do{#1}%
	\fi
}
\newcommand{\negph@math}[2]{\negph@do{$\m@th#1#2$}}
\newcommand{\negph@do}[1]{%
	\settowidth{\negph@wd}{#1}%
	\hspace*{-\negph@wd}%
}
\crefname{lemma}{Lem.}{Lem.}
\crefname{example}{Ex.}{Ex.}
\crefname{section}{Sect.}{Sect.}
\crefname{appendix}{App.}{App.}
\crefname{definition}{Def.}{Def.}
\crefname{theorem}{Thm.}{Thm.}
\crefname{corollary}{Cor.}{Cor.}
\crefname{algorithm}{Alg.}{Alg.}
\crefname{equation}{}{}
\crefname{counterexample}{Counterex.}{Counterex.}
\theoremstyle{plain}
\newtheorem*{thm*}{Theorem}
\newtheorem*{exmp*}{Example}
\newtheorem*{counterexmp*}{Counterexample}
\newtheoremstyle{ourstyle}{}{}{\itshape}{}{\bfseries}{.}{ }{\thmname{#1}\thmnumber{ #2}\thmnote{ (#3)}}
\theoremstyle{ourstyle}
\newtheorem{theorem}{Theorem}
\newtheorem{counterexample}[theorem]{Counterexample}
\newtheorem{definition}[theorem]{Definition}
\newtheorem{lemma}[theorem]{Lemma}
\newtheorem{example}[theorem]{Example}
\newtheorem{corollary}[theorem]{Corollary}
\newcommand{\eat}[1]{}
\begin{document}

\title{Aiming Low Is Harder}         
\titlenote{This technical report supplements a paper of the same title which appeared at
  POPL 2020.}
\subtitle{Induction for Lower Bounds in Probabilistic Program Verification}  


\author{Marcel Hark}
\affiliation{
  \institution{RWTH Aachen University}            
  \country{Germany}
}
\email{marcel.hark@cs.rwth-aachen.de}          

\author{Benjamin Kaminski}
\affiliation{
  \institution{RWTH Aachen University}            
  \country{Germany}
}
\email{benjamin.kaminski@cs.rwth-aachen.de}         

\author{J\"urgen Giesl}
\affiliation{
  \institution{RWTH Aachen University}            
  \country{Germany}
}
\email{giesl@cs.rwth-aachen.de}         

\author{Joost-Pieter Katoen}
\affiliation{
  \institution{RWTH Aachen University}            
  \country{Germany}
}
\email{katoen@cs.rwth-aachen.de}         

\allowdisplaybreaks
\begin{abstract}
	We present a new inductive rule for verifying lower bounds on expected values of random variables after execution of probabilistic loops as well as on their expected runtimes. 
	Our rule is \emph{simple} in the sense that loop body semantics need to be applied only finitely often in order to verify that the candidates are indeed lower bounds.
	In particular, it is not necessary to find the limit of a sequence as in many previous rules.
\end{abstract}

\begin{CCSXML}
  <ccs2012>
  <concept>
  <concept_id>10002950.10003648.10003671</concept_id>
  <concept_desc>Mathematics of computing~Probabilistic algorithms</concept_desc>
  <concept_significance>500</concept_significance>
  </concept>
  <concept>
  <concept_id>10002950.10003648.10003700.10003701</concept_id>
  <concept_desc>Mathematics of computing~Markov processes</concept_desc>
  <concept_significance>300</concept_significance>
  </concept>
  <concept>
  <concept_id>10003752.10010124.10010131.10010133</concept_id>
  <concept_desc>Theory of computation~Denotational semantics</concept_desc>
  <concept_significance>100</concept_significance>
  </concept>
  </ccs2012>
\end{CCSXML}

\ccsdesc[500]{Mathematics of computing~Probabilistic algorithms}
\ccsdesc[300]{Mathematics of computing~Markov processes}
\ccsdesc[100]{Theory of computation~Denotational semantics}

\keywords{probabilistic programs, verification, weakest precondition, weakest preexpectation, lower bounds, optional stopping theorem, uniform integrability}  

\newcommand{\cameraready}[1]{#1}
\newcommand{\techrep}[1]{#1}

\renewcommand{\cameraready}[1]{}

\maketitle

\section{Introduction and Overview}
We study probabilistic programs featuring discrete probabilistic choices
as well as \emph{unbounded loops}.
Randomized algorithms are the classical application of such programs.
Recently, applications in \emph{biology}, \emph{quantum computing}, \emph{cyber security}, \emph{machine learning}, and \emph{artificial intelligence}
led to rapidly growing interest in probabilistic programming\ \cite{DBLP:conf/icse/GordonHNR14}.

Formal verification of probabilistic programs is strictly harder than for nonprobabilistic programs\ \cite{DBLP:journals/acta/KaminskiKM19}.
Given a random variable $f$, a key verification task is to reason about the \emph{expected value of $f$} after termination of a program\ $C$ on input\ $\State$.
If $f$ is the indicator function of an event\ $A$, then this expected value is the probability that $A$ has occurred on termination of $C$.

For verifying probabilistic loops, most approaches share a common, \emph{conceptually very simple}, technique: an \emph{induction rule} for verifying \emph{upper bounds} on expected values, which are characterized as least fixed points ($\textsf{lfp}$) of a suitable function\ $\Phi$.
This rule, called \mbox{``Park induction'', reads}
\begin{align*}
	\Phi(I) \ssqsubseteq I \qqimplies \lfp\, \Phi \ssqsubseteq I\ ,
\end{align*}
i.e., for a candidate upper bound $I$ we check $\Phi(I) \sqsubseteq I$ (for a suitable partial order $\sqsubseteq$) to prove that $I$ is indeed an upper bound on the least fixed point, and hence on the sought--after expected value.

For \emph{lower bounds}, a simple proof principle analogous to Park induction, namely
\begin{align*}
	I \ssqsubseteq \Phi(I) \qqimplies I \ssqsubseteq \lfp\, \Phi\ , \tag*{\Large\lightning}
\end{align*}
is \emph{unsound} in general.
\emph{Sound} rules (see \cref{sec:related}), on the other hand, often suffer from the fact that either $f$ needs to be \emph{bounded}, or that one has to find the \emph{limit of some sequence}, as well as the sequence itself, rendering those rules conceptually much more involved than Park induction.

Our main contribution (\cref{sec:optional-stopping-wp}, \cref{thm:optional_stopping_probabilistic_programs}) is to provide relatively \emph{simple} side conditions that can be added to the (unsound) implication above, such that the implication becomes true, i.e.,
\begin{align*}
	I \ssqsubseteq \Phi(I) \ \,{}\wedge{}\ {
	\begin{array}{c}
		\textnormal{\small some side} \\[-.5ex]
		\textnormal{\small conditions}
	\end{array}
	} \qqimplies I \ssqsubseteq \lfp\, \Phi\ . \tag*{\Large$\checkmark$}
\end{align*}
In particular, our side conditions will be simple in the sense that (a variation of) $\Phi$ needs to be applied to a candidate $I$ only a \emph{finite} number of times, \mbox{which is beneficial for potential \emph{automation}}.

The need for verifying lower bounds on expected values is quite natural: First of all, they help to assess the quality and tightness of upper bounds. Moreover, giving \emph{total correctness} guarantees for probabilistic programs amounts to lower--bounding the correctness probability, e.g., in order to establish membership in complexity classes like \textsf{RP} and \textsf{PP}.

In addition to expected values of random variables at program termination, lower bounds on expected runtimes are also of significant interest: Lower bounds on expected runtimes which depend on secret program variables may compromise the secret, thus allowing for timing side--channel attacks; ``very large'' lower bounds could indicate \mbox{potential denial--of--service attacks}.

In order to enable practicable reasoning about lower bounds on expected runtimes, we will show how our inductive lower bound rule carries over to expected runtimes (\cref{sec:runtime}, \cref{thm:lower_bounds_ert}). As an example to show the applicability of our rule, we will verify that the well--known and notoriously difficult coupon collector's problem \cite{DBLP:books/cu/MotwaniR95}
(\cref{sec:runtime}, \cref{ex:coupon-collector}), modeled by \mbox{the probabilistic program}\footnote{The random assignment $\ASSIGN{i}{\mathrm{Unif}[1..N]}$ does not --- strictly speaking --- adhere to our syntax of binary probabilistic choices, but it can be modeled in our syntax. For the sake of readability, we opted for $\ASSIGN{i}{\mathrm{Unif}[1..N]}$.}
\begin{align*}
	 & \COMPOSE{\ASSIGN{x}{N}}{}                                             \\
	 & \WHILE{0 < x}                                                         \\
	 & \qquad \COMPOSE{\ASSIGN{i}{N+1}}{}                                    \\
	 & \qquad \COMPOSE{\WHILEDO{x < i}{ \ASSIGN{i}{\mathrm{Unif}[1..N]} }}{} \\
	 & \qquad \ASSIGN{x}{x-1}                                                \\
	 & \}\ ,
\end{align*}
has an expected runtime of at least $N \harm{N}$, where $\harm{N}$ is the $N$-th harmonic number.

Our new inductive rules will be stated in terms of so--called expectation transformers \techrep{\linebreak}\cite{DBLP:series/mcs/McIverM05} (\cref{sec:wp}) and rely on the notions of \emph{uniform integrability} (\cref{sec:bounds}, in particular \ref{sec:series_perspective_on_loops}, and \cref{sec:expectations_processes}), \emph{martingales}, \emph{conditional difference boundedness}, and the \emph{Optional Stopping Theorem} (\cref{sec:optional-stopping-wp}) from the theory of stochastic processes.
However, we do not only \emph{make use} of these notions in order to prove soundness of our induction rule, but instead establish tight connections in terms of these notions between expectation transformers and certain canonical stochastic processes (\cref{sec:expectations_processes}, \cref{thm:lfp_expectation} and \cref{sec:optional-stopping-wp}, \cref{thm:conditional_difference_boundedness}).
In particular, we will build upon the key result of this connection (\cref{thm:lfp_expectation}) to study exactly how inductive proof rules for both upper and lower bounds can be understood in the realm of these stochastic processes and vice versa (\cref{sec:optional-stopping-wp}, \cref{thm:optional_stopping_probabilistic_programs} and \cref{sec:fatou}).
We see those connections between the theories of expectation transformers and stochastic processes as a stepping stone for applying further results from stochastic process theory to probabilistic program analysis and possibly also vice versa.

As a final contribution, we revisit one of the few existing rules for lower bounds due to \techrep{\linebreak}\cite{DBLP:series/mcs/McIverM05}, which gives \emph{sufficient} criteria for a candidate being a lower bound on the expected value of a \emph{bounded} function $f$. We show that their rule is also a consequence of uniform integrability and we are moreover able to generalize their rule to a \emph{necessary} and sufficient criterion (\cref{sec:mciver_morgan}, \cref{thm:generalization_mciver_morgan}). We demonstrate the usability of our generalization by an example (\cref{sec:mciver_morgan}, \cref{ex:mciver_morgan}).

\techrep{The appendix contains }\cameraready{We refer to \cite{arxiv} for }more case studies illustrating the effectiveness of our lower bound proof rule, a more detailed introduction to probability theory, and more detailed proofs of our results.

\section{Weakest Preexpectation Reasoning}
\label{sec:wp}
\emph{Weakest preexpectations} for probabilistic programs are a generalization of Dijkstra's \emph{weakest preconditions} for nonprobabilistic programs.
Dijkstra employs \emph{predicate transformers}, which push a \emph{postcondition~$F$} (a predicate) backward through a nonprobabilistic program~$C$ and yield the \emph{weakest precondition}~$G$ (another predicate) describing the largest set of states such that whenever~$C$ is started in a state satisfying~$G$, $C$ terminates in a state satisfying~$F$.\footnote{We consider total correctness, i.e., from any state satisfying the weakest precondition~$G$, $C$ definitely terminates.}

The \emph{weakest preexpecation calculus} on the other hand employs \emph{expectation transformers} which act on real--valued functions called \emph{expectations}, mapping program states to non--negative reals.\footnote{For simplicity of the presentation, we study the standard case of positive expectations.
	Mixed--sign expectations mapping to the \emph{full} extended reals require much more technical machinery, see~\cite{DBLP:conf/lics/KaminskiK17}.}
These transformers push a \emph{postexpectation~$f$} backward through a probabilistic program~$C$ and yield a \emph{preexpectation~$g$}, such that~$g$ represents the expected value of~$f$ after executing~$C$.
The term \emph{expectation} coined by \cite{DBLP:series/mcs/McIverM05} may appear somewhat misleading at first.
We \emph{clearly distinguish between expectations and expected values}: An expectation is hence not an expected value, per se.
Instead, we can think of an expectation as a \emph{random variable}.
In Bayesian network jargon, expectations are also called \emph{factors}.%
\begin{definition}[Expectations~\textnormal{\cite{thesis:kaminski,DBLP:series/mcs/McIverM05}}]
	\label{defQQQexpectations}
	Let $\Vars$ denote the finite set of program variables and let $\States = \{ \State \mid \State\colon \Vars \To \mathbb{Q} \}$
	denote the set of program states.\footnote{We choose rationals to have some range of values at hand which are conveniently represented in a computer.}

	The set of expectations, denoted by $\E$, is defined as
	\begin{align*}
		\E \eeq \Bigl \{ f ~\Big|~ f\colon \States \To \PosRealsInf\Bigr\} ~,
	\end{align*}
	where $\PosRealsInf = \setcomp{r \in \Reals}{r \geq 0} \cup \{\infty\}$.
	We say that $f\in \E$ is \emph{finite} and write $f \pprec \infty$, if $f(\State) < \infty$ for all $\State \in \States$.
	A partial order $\preceq$ on $\E$ is obtained by point--wise lifting the usual order~$\leq$ on $\PosRealsInf$, i.e.,
	\begin{align*}
		f_1 \ppreceq f_2 \qiff \forall \State\in\States\colon~~ f_1(\State) ~{}\leq{}~ f_2(\State)~.
	\end{align*}
	$(\E,\, {\preceq})$ is a complete lattice where suprema and infima are constructed point--wise.
\end{definition}
\noindent
We note that our notion of expectations is more general than the one of McIver and Morgan: Their work builds almost exclusively on \emph{bounded} expectations, i.e., non--negative real--valued functions which are bounded from above by some constant, whereas we allow \emph{unbounded} expectations.
As a result, we have that $(\E,\, {\preceq})$ forms a complete lattice, whereas McIver and Morgan's space of bounded expectations does not.

\subsection{Weakest Preexpectations}
\label{Weakest Preexpectations}
Given program $C$ and postexpectation $f \in \E$, we are interested in the \emph{expected value} of $f$ evaluated in the final states reached after termination of $C$.
More specifically, we are interested in a \mbox{\emph{function}~$g\colon \States \To \PosRealsInf$} mapping each \emph{initial} state~$\State_0$ of~$C$ to the respective expected value of~$f$ evaluated in the final states reached after termination of $C$ on input $\State_0$.
This function $g$ is called the \emph{weakest preexpectation of $C$ with respect to $f$}, denoted $\wp{C}{f}$.
Put as an equation, if $\phantom{}^{\State_0}\mu_C$ is the probability (sub)measure\footnote{\label{distribution explanation}
	$\phantom{}^{\State_0}\mu_C(s) \in [0,1]$ is the probability that $s$ is the final state reached after termination of $C$ on input $\State_0$.
	We have $\sum_{\State \in \States} \phantom{}^{\State_0}\mu_C(s) \leq 1$, where the ``missing'' probability mass is the probability of \emph{nontermination} of $C$ on $\State_0$.} over final states reached after termination of~$C$ on initial state~$\State_0$, then\footnote{As $\States$ is countable, the integral can be expressed as $\sum_{\State \in \States} \phantom{}^{\State_0}\mu_C(\State)\cdot f(\State)$.}
\begin{align*}
	g(\State_0) \eeq \wp{C}{f}(\State_0) \eeq \int_\States~f~d\,(\phantom{}^{\State_0}\mu_C)~.
\end{align*}
While $\wp{C}{f}$ in fact represents an expected value, $f$ itself does not.
In an analogy to Dijkstra's pre-- and postconditions, as $f$ is evaluated in the final states after termination of $C$ it is called the \emph{postexpectation}, and as $\wp{C}{f}$ is evaluated in the initial states of $C$ it is called the \emph{preexpectation}.

\subsection{The Weakest Preexpectation Calculus}
We now show how to determine weakest preexpectations in a systematic and \emph{compositional} manner by recapitulating the \emph{weakest preexpectation calculus} \`{a} la McIver and Morgan.
This calculus employs expectation transformers which move backward through the program in a \emph{continuation--passing} style, see \cref{fig:wptrans}.
\begin{figure*}[t!]
	\hrule \vspace{1.5ex}
	\begin{center}
		\begin{adjustbox}{max totalheight=.382\textheight, max width=\textwidth}
			\begin{tikzpicture}[node distance=2.2cm,decoration={snake,pre=lineto,pre length=.5mm,post=lineto,post length=1mm}]
				\node (C) {$~~~~C_2$}; \node (dummy) [left of=C] {}; \node (t) [right of=C] {$f$}; \node (ert) [left of=C] {$\wp{C_2}{f}$}; \draw (t) edge[decorate,->,bend right] (ert); \node (desct) [below of=t]{$
						\begin{array}{c}
							\text{\footnotesize postexpectation $f$}       \\[-.3em]
							\text{\footnotesize evaluated in final states} \\[-.3em]
							\text{\footnotesize after termination of $C_2$}
						\end{array}
					$}; \draw (desct) edge[->,gray] (t); \node (descert) [below of=ert]{$
						\begin{array}{c}
							\text{\footnotesize weakest preexpectation of $C_2$} \\[-.3em]
							\text{\footnotesize with respect to $f$}
						\end{array}
					$}; \draw (descert) edge[->,gray] (ert); \node (C2) [left of=ert] {$~~~~C_1~~~$}; \node (dummy) [left of=C2] {}; \node (ert2) [left of=C2] {$\wp{C_1}{\vphantom{\bigl(}\wp{C_2}{f}}$}; \draw (ert) edge[decorate,->,bend right] (ert2); \node (descert2) [below of=ert2]{$
						\begin{array}{c}
							\text{\footnotesize weakest preexpectation of $C_1$}                \\[-.3em]
							\text{\footnotesize with respect to $\wp{C_2}{f}$}                  \\
							\text{\footnotesize or in other words:}                             \\
							\text{\footnotesize weakest preexpectation of $\COMPOSE{C_1}{C_2}$} \\[-.3em]
							\text{\footnotesize with respect to $f$}
						\end{array}
					$}; \draw (descert2) edge[->,gray] (ert2);
			\end{tikzpicture}
		\end{adjustbox}
	\end{center}
	\vspace{.5ex}
	\hrule \vspace{-2ex}
	\caption{Backward--moving continuation--passing style weakest preexpectation transformer.}
	\vspace{.75ex}
	\hrule \label{fig:wptrans}
\end{figure*}
If we are interested in the expected value of some postexpectation $f$ after executing the sequential composition $\COMPOSE{C_1}{C_2}$, then we can first determine the weakest preexpectation of $C_2$ with respect to $f$, i.e., $\wp{C_2}{f}$. Thereafter, we can use the intermediate result $\wp{C_2}{f}$ as \emph{postexpectation} to determine the weakest preexpectation of $C_1$ with respect to $\wp{C_2}{f}$. Overall, this gives the weakest preexpectation of $\COMPOSE{C_1}{C_2}$ with respect to the postexpectation $f$. The above explanation illustrates the compositional nature of the weakest preexpectation calculus. $\wpsymbol$--transformers for all language constructs can be defined by induction on \mbox{the program structure:}%
\begin{definition}[The \textnormal{$\boldwpsymbol$}--Transformer~\textnormal{\cite{DBLP:series/mcs/McIverM05}}]
	\label{defQQQwp}
	Let $\pgcl$ be the set of programs in the \emph{probabilistic guarded command language}. 
	Then the weakest~\mbox{preexpectation~transformer}
	\begin{align*}
		\wpsymbol\colon \pgcl \To \E \To \E
	\end{align*}
	is defined according to the rules given in \textnormal{\cref{table:wp}}, where $\iverson{\guard}$ denotes the Iverson--bracket of $\guard$, i.e., $\iverson{\guard}(\State)$ evaluates to $1$ if $\State \models \guard$ and to $0$ otherwise. Moreover, for any variable $b \in \Vars$ and any expression~$e$, let $f\subst{b}{e}$ be the expectation with $f\subst{b}{e}(\State) = f(\State\subst{b}{e})$ for any $\State \in \States$, where $\State\subst{b}{e}(b) = \State(e)$ and $\State\subst{b}{e}(x) = \State(x)$ for all $x \in \Vars\setminus\{b\}$.
	\begin{table}[t]
		\renewcommand{\arraystretch}{1.5}
		\begin{tabular}{@{\hspace{.1em}}l@{\hspace{1em}}l@{\hspace{.1em}}}
			\hline $\boldsymbol{C}$        & $\boldwp{C}{f}$                                                                \\[.25em]
			\hline $\SKIP$                 & $f$                                                                            \\
			$\ASSIGN{b}{e}$                & $f\subst{b}{e}$                                                                \\
			$\ITE{\guard}{C_1}{C_2}$       & $\iverson{\guard} \cdot \wp{C_1}{f} + \iverson{\neg \guard} \cdot \wp{C_2}{f}$ \\
			$\PCHOICE{C_1}{p}{C_2}$        & $p \cdot \wp{C_1}{f} + (1 - p) \cdot \wp{C_2}{f}$                              \\
			$\COMPOSE{C_1}{C_2}$           & $\wp{C_1}{\wp{C_2}{f}}$                                                        \\
			$\WHILEDO{\guard}{\primed{C}}$ & $\lfp \charwp{\primed{C}}{\guard}{f}$                                          \\[.75em]
			\hline                                                                                                          \\[-1.25em]
			\multicolumn{2}{c}{$\charwp{\guard}{C}{f}(X) \eeq \iverson{\neg \guard}\cdot f + \iverson{\guard} \cdot \wp{C}{X}\qquad
					\begin{array}{c}
						\textnormal{\tiny characteristic} \\[-1.2em]
						\textnormal{\tiny function}
					\end{array}
			$}                                                                                                              \\
			\\[-1em]
		\end{tabular}
		\hrule \vspace{1.25ex}
		\caption{Rules for the $\wpsymbol$--transformer.}
		\label{table:wp}
		\vspace{-1.5ex}
		\hrule
	\end{table}

	We call the function $\charwp{\guard}{C}{f}$ the \emph{characteristic function} of $\WHILEDO{\guard}{C}$ with respect to $f$. Its least fixed point is understood in terms of $\preceq$. To increase readability, we omit $\wpsymbol$, $\guard$, $C$, or $f$ from $\Phi$ whenever they are clear from the context.
\end{definition}
\begin{example}[Applying the $\boldwpsymbol$ Calculus]
	\label{ex:wp-calc-a}
	Consider the probabilistic program $C$ given by
	\begin{align*}
		\PCHOICE{\ASSIGN{b}{b+5}}{\sfrac{4}{5}}{\ASSIGN{b}{10}}~.
	\end{align*}
	Suppose we want to know the expected value of $b$ after execution of $C$. For this, we determine $\wp{C}{b}$. Using the annotation style shown in \textnormal{\cref{fig:wp-annotations}}, we can annotate the program~$C$ as shown in \textnormal{\cref{fig:ex:wp-calc-a}}, using the rules from \textnormal{\cref{table:wp}}. At the top, we read off the weakest preexpectation of $C$ with respect to $b$, namely $\smash{\tfrac{4b}{5} + 6}$. This tells us that the expected value of $b$ after {termination of $C$ on $s_0$ is equal to $\smash{\tfrac{4\cdot s_0(b)}{5} + 6}$}.
\end{example}
\noindent
The $\wpsymbol$--transformer satisfies what is sometimes called \emph{healthiness conditions}~\cite{DBLP:conf/lics/HinoKH016,DBLP:journals/entcs/Keimel15,DBLP:series/mcs/McIverM05}
or \emph{homomorphism properties}~\cite{DBLP:books/daglib/0096285}:%
\begin{theorem}[Healthiness Conditions~\textnormal{\cite{thesis:kaminski,DBLP:series/mcs/McIverM05}}]
	\label{thm:healthiness}
	Let $C \in \pgcl$, $S = \{s_0 \preceq s_1 \preceq s_2 \preceq \cdots\} \subseteq \E$,\footnote{That is, $S$ is a chain.} $f, g \in \E$, and $r \in \PosReals$. Then:
	\begin{enumerate}
		\item \textbf{\textup{Continuity:}} \quad
		      $\wp{C}{\sup~ S} \eeq \sup~ \wp{C}{S}$~.

		\item \textbf{\textup{Strictness:}}\footnote{Here, we overload notation and denote by $0$ the constant expectation that maps every $\State \in \States$ to $0$.} \quad $\wpC{C}$ is strict, i.e.,
		      $\wp{C}{0} = 0$.

		\item \textbf{\textup{Monotonicity:}} \quad $f \ppreceq g \qimplies \wp{C}{f} \ppreceq \wp{C}{g}$.
		\item \textbf{\textup{Linearity:}} \quad
		      $\wp{C}{r \cdot f + g} \eeq r \cdot \wp{C}{f} + \wp{C}{g}$.
	\end{enumerate}
\end{theorem}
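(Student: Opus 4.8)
My plan is to prove all four healthiness conditions simultaneously by structural induction on the program $C$, so that in each case the induction hypothesis supplies continuity, strictness, monotonicity, \emph{and} linearity for every strict subprogram at once. For the atomic statements $\SKIP$ and $\ASSIGN{b}{e}$ the four claims are immediate from the defining equations in \cref{table:wp}: the substitution $f \mapsto f\subst{b}{e}$ acts pointwise and hence commutes with suprema, fixes the constant $0$, is monotone, and is linear. The compositional cases $\COMPOSE{C_1}{C_2}$, $\ITE{\guard}{C_1}{C_2}$, and $\PCHOICE{C_1}{p}{C_2}$ follow by applying the hypothesis to $C_1,C_2$ and then invoking elementary facts about $(\PosRealsInf,\leq)$: multiplication by an Iverson bracket $\iverson{\guard}$ or by a constant $p\in[0,1]$, and pointwise addition, are monotone, strict, linear, and commute with directed suprema; for $\COMPOSE{C_1}{C_2}$ one additionally uses that composing two maps with these properties preserves them. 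Throughout I read \emph{continuity} as preservation of suprema of directed sets (Scott--continuity), which is the form consumed by Kleene's fixed point theorem and suffices everywhere.

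The crux will be the loop $\WHILEDO{\guard}{C'}$, whose semantics is the least fixed point $\lfp \Phi_f$ of the characteristic function $\Phi_f \ccoloneqq \charwp{\guard}{C'}{f}$, where $\Phi_f(X) = \iverson{\neg\guard}\cdot f + \iverson{\guard}\cdot \wp{C'}{X}$. My first step is to record that $\Phi_f$ is monotone and continuous \emph{in its fixed--point argument $X$}: this is inherited from the hypothesis for $C'$, since multiplying by $\iverson{\guard}$ and adding the fixed expectation $\iverson{\neg\guard}\cdot f$ preserve monotonicity and directed suprema. By Kleene's theorem I then obtain the characterization $\wp{\WHILEDO{\guard}{C'}}{f} = \lfp \Phi_f = \sup_{n\in\Nats}\Phi_f^n(0)$, where $0$ is the bottom expectation. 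This Kleene iteration is the workhorse for all four loop claims, which is precisely why the properties must travel through the induction together: the continuity half of the hypothesis is what makes the fixed point computable as a supremum, and every remaining loop argument is phrased in terms of that supremum.

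With the Kleene characterization in hand, each property is settled as follows. For \textbf{strictness}, the hypothesis gives $\Phi_0(0) = \iverson{\guard}\cdot\wp{C'}{0} = 0$, so the bottom element is already a fixed point of $\Phi_0$, hence the least one, and $\wp{\WHILEDO{\guard}{C'}}{0} = 0$. For \textbf{monotonicity}, if $f \ppreceq g$ then $\Phi_f(X) \ppreceq \Phi_g(X)$ for every $X$ (the $\iverson{\guard}\cdot\wp{C'}{X}$ summands coincide and $\iverson{\neg\guard}\cdot f \ppreceq \iverson{\neg\guard}\cdot g$); evaluating at $X = \lfp\Phi_g$ exhibits $\lfp\Phi_g$ as a pre--fixed point of $\Phi_f$, whence $\lfp\Phi_f \ppreceq \lfp\Phi_g$ by the Knaster--Tarski least--pre--fixed--point characterization. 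For \textbf{continuity}, a routine induction on $n$ using the hypothesis for $C'$ (and the commutation of $\iverson{\guard}\cdot(-)$ and $+$ with directed suprema) yields $\Phi_{\sup S}^n(0) = \sup_{f\in S}\Phi_f^n(0)$, and then interchanging the two suprema, legitimate in the complete lattice $(\E,\preceq)$, gives $\wp{\WHILEDO{\guard}{C'}}{\sup S} = \sup_n \sup_{f\in S}\Phi_f^n(0) = \sup_{f\in S}\sup_n\Phi_f^n(0) = \sup\wp{\WHILEDO{\guard}{C'}}{S}$. For \textbf{linearity}, an induction on $n$ combining the hypothesis for $C'$ with linearity of $\iverson{\guard}\cdot(-)$ and $+$ shows $\Phi_{r\cdot f+g}^n(0) = r\cdot\Phi_f^n(0) + \Phi_g^n(0)$; since the iterates form increasing chains and scalar multiplication and addition are chain--continuous on $\PosRealsInf$, passing to $\sup_n$ distributes over the sum and the scalar, giving $\wp{\WHILEDO{\guard}{C'}}{r\cdot f+g} = r\cdot\wp{\WHILEDO{\guard}{C'}}{f} + \wp{\WHILEDO{\guard}{C'}}{g}$.

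I expect the main obstacle to be exactly this loop case, and within it the continuity claim: one must have the Kleene characterization available (hence continuity has to be carried as a hypothesis, not merely produced as a conclusion) and then justify interchanging the outer supremum over $S$ with the supremum over Kleene iterates, which also requires checking that the relevant families of iterates are directed in $f$. Once the double--supremum exchange and the chain--continuity of the arithmetic operations on $\PosRealsInf$ are pinned down, the remaining loop arguments and all non--loop cases reduce to bookkeeping with the induction hypothesis.
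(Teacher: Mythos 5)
The paper itself gives no proof of \cref{thm:healthiness}: it is stated as a known result and discharged by citation to \cite{thesis:kaminski,DBLP:series/mcs/McIverM05}. Your proof is correct and is essentially the standard argument from those sources: simultaneous structural induction on $C$, with all non-loop cases reduced to pointwise facts about $\PosRealsInf$ and the loop case handled through the Kleene characterization $\lfp \Phi_f = \sup_n \Phi_f^n(0)$, Park/Knaster--Tarski for monotonicity, and interchange of suprema plus chain-continuity of $+$ and $r\cdot(-)$ for continuity and linearity. Two points deserve explicit credit. First, your reading of ``continuity'' as preservation of \emph{directed} suprema is not merely a convenience but necessary: for arbitrary $S \subseteq \E$ the literal statement fails (e.g., for $\PCHOICE{\ASSIGN{x}{0}}{\sfrac{1}{2}}{\ASSIGN{x}{1}}$ with $S = \{\iverson{x = 0}, \iverson{x=1}\}$ one gets $\wp{C}{\sup S} = 1$ but $\sup \wp{C}{S} = \sfrac{1}{2}$), so the theorem's $S \subseteq \E$ must be understood as directed, exactly as in the cited references. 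Second, you correctly identified that the properties must be carried through the induction \emph{together} (continuity of $\wpC{C'}$ is what licenses the Kleene iteration for the enclosing loop, and monotonicity in $f$ is what makes the family $\{\Phi_f^n(0)\}_{f \in S}$ directed so that the induction hypothesis applies); these are precisely the places where a case-by-case proof of the four properties separately would break down.
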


\noindent

\section{Bounds on Weakest Preexpectations}
\label{sec:bounds}
For \emph{loop--free} programs, it is generally straightforward to determine weakest preexpectations, simply by applying the rules in \cref{table:wp}, which guide us along the syntax of $C$, see \cref{ex:wp-calc-a}. Weakest preexpectations of loops, on the other hand, are generally non--computable least fixed points and we often have to content ourselves with some \emph{approximation} of those fixed points.

For us, a sound approximation is either a \emph{lower} or an \emph{upper} bound on the least fixed point. There are in principle two challenges: (1) finding a candidate bound and (2) verifying that the candidate is indeed an upper or lower bound. In this paper, we study the \emph{latter} problem.

\subsection{Upper Bounds}

\begin{figure}[t]
	\hrule \vspace{.05cm}
	\begin{subfigure}[t]{6.1cm}
		\begin{minipage}{6.1cm}
			\abovedisplayskip=-0em%
			\belowdisplayskip=0pt%
			\begin{align*}
				 & \bowtieannotate{g'} \tag{meaning $g' \bowtie g$, for ${\bowtie} \in \{{\preceq},{=},{\succeq}\}$} \\
				 & \wpannotate{g} \tag{meaning $g = \wp{C'}{f}$}                                                     \\
				 & C'                                                                                                \\
				 & \annotate{f} \tag{postexpectation is $f$}
			\end{align*}
			\normalsize%
		\end{minipage}
		\vspace{1ex}
		\hrule \vspace{-.5ex}
		\subcaption{Style for $\wpsymbol$ annotations.}
		\label{fig:wp-annotations}
	\end{subfigure}
	\qquad\qquad
	\begin{subfigure}[t]{4.2cm}
		\begin{minipage}{4.2cm}
			\abovedisplayskip=0em%
			\belowdisplayskip=0pt%
			\begin{align*}
				 & \eqannotate{\tfrac{4b}{5} + 6}                                  \\
				 & \wpannotate{\tfrac{4}{5} \cdot (b + 5) + \tfrac{1}{5} \cdot 10} \\
				 & \PCHOICE{\ASSIGN{b}{b+5}}{\sfrac{4}{5}}{\ASSIGN{b}{10}}         \\
				 & \annotate{b}\tag*{\phantom{postexpectation is $f$}}
			\end{align*}
			\normalsize%
		\end{minipage}
		\vspace{1ex}
		\hrule \vspace{-.5ex}
		\caption{$\wpsymbol$ annotations for \cref{ex:wp-calc-a}.}
		\label{fig:ex:wp-calc-a}
	\end{subfigure}
	\vspace{1ex}
	\hrule \vspace{-1.5ex}
	\caption{Annotations for weakest preexpectations. It is more intuitive to read these from the bottom to top.}
	\label{fig:annotationzz}
	\vspace{1ex}
	\hrule
\end{figure}
\noindent
The \emph{Park induction} principle provides us with a very convenient proof rule for verifying upper bounds. In general, this principle reads as follows:%
\begin{theorem}[Park Induction~\textnormal{\protect{\cite{park}}}]
	\label{thm:park-induction}
	Let $(D,\, {\sqsubseteq})$ be a complete lattice and let $\Phi\colon D \To D$ be continuous.\footnote{It would even suffice for $\Phi$ to be monotonic, but we consider continuous functions throughout this paper.}
	Then $\Phi$ has a least fixed point in $D$ and for any $I \in D$,
	\begin{align*}
		\Phi(I) \ssqsubseteq I \qqimplies \lfp\, \Phi \ssqsubseteq I~.
	\end{align*}
\end{theorem}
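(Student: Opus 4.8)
The plan is to derive both assertions from \emph{Kleene's fixed point theorem}, exploiting that the complete lattice $(D,\sqsubseteq)$ possesses a least element $\bot = \sup\emptyset$ and that continuity of $\Phi$ already entails monotonicity. For the latter, I would observe that whenever $a \sqsubseteq b$ we have $\sup\{a,b\} = b$, so continuity gives $\Phi(b) = \Phi(\sup\{a,b\}) = \sup\{\Phi(a),\Phi(b)\}$ and hence $\Phi(a) \sqsubseteq \Phi(b)$.

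For \emph{existence}, I would consider the Kleene chain $\bot \sqsubseteq \Phi(\bot) \sqsubseteq \Phi^2(\bot) \sqsubseteq \cdots$, which is ascending: the first inequality holds because $\bot$ is least, and the remaining ones follow by applying monotonicity inductively. Since $D$ is a complete lattice, the supremum $\Phi^* := \sup_{n} \Phi^n(\bot)$ exists, and continuity yields
\begin{align*}
  \Phi(\Phi^*) \eeq \Phi\bigl(\sup_{n} \Phi^n(\bot)\bigr) \eeq \sup_{n} \Phi^{n+1}(\bot) \eeq \sup_{n} \Phi^n(\bot) \eeq \Phi^*~,
\end{align*}
where the third equality uses that prepending the least element $\bot$ to the chain does not change its supremum. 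Hence $\Phi^*$ is a fixed point of $\Phi$.

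For \emph{Park induction} --- and simultaneously for confirming that $\Phi^*$ is the \emph{least} fixed point --- I would fix any $I$ with $\Phi(I) \sqsubseteq I$ and prove $\Phi^n(\bot) \sqsubseteq I$ for all $n$ by induction on $n$. The base case $\Phi^0(\bot) = \bot \sqsubseteq I$ is immediate. For the step, assuming $\Phi^n(\bot) \sqsubseteq I$, monotonicity gives $\Phi^{n+1}(\bot) = \Phi(\Phi^n(\bot)) \sqsubseteq \Phi(I)$, and the hypothesis $\Phi(I) \sqsubseteq I$ closes the chain to $\Phi^{n+1}(\bot) \sqsubseteq I$. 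Taking the supremum over $n$ yields $\Phi^* = \sup_{n} \Phi^n(\bot) \sqsubseteq I$. Specializing $I$ to an arbitrary fixed point (which is in particular a pre-fixed point) shows $\Phi^*$ is the least fixed point, so $\lfp\Phi = \Phi^*$; specializing instead to a general pre-fixed point $I$ delivers exactly the desired implication $\Phi(I) \sqsubseteq I \implies \lfp\Phi \sqsubseteq I$.

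Since this is a classical result, I expect no serious obstacle. The only points that warrant care are verifying that continuity genuinely supplies monotonicity and that the continuity step legitimately commutes $\Phi$ past the supremum of the Kleene chain; the upper-bound induction itself is entirely routine.
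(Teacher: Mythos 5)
Your proof is correct, but it takes a different route from the one the paper sketches. The paper cites this result to Park and justifies it via the \emph{Tarski--Kantorovich Principle} (its Theorem 5): starting from a superinvariant $I \sqsupseteq \Phi(I)$, one iterates $\Phi$ on $I$ \emph{itself}, obtaining a descending chain $I \sqsupseteq \Phi(I) \sqsupseteq \Phi^2(I) \sqsupseteq \cdots$ that converges to a fixed point $\Phi^\omega(I) \sqsubseteq I$; since every fixed point dominates $\lfp \Phi$, this immediately gives $\lfp \Phi \sqsubseteq \Phi^\omega(I) \sqsubseteq I$. You instead iterate from below: you build the Kleene chain from $\bot$, show its supremum $\Phi^*$ is a fixed point by continuity, and prove $\Phi^n(\bot) \sqsubseteq I$ for every $n$ by induction, which yields both $\lfp \Phi = \Phi^*$ and the Park implication in one stroke. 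Your approach is more self-contained --- it proves existence of the least fixed point rather than importing it from Tarski--Kantorovich --- whereas the paper's descent-from-$I$ argument is chosen deliberately because the same principle is then reused to explain \emph{why the dual rule for lower bounds fails} (iterating upward from a subinvariant $I$ lands on the least fixed point \emph{above} $I$, which says nothing about $\lfp \Phi$); that symmetry is the pedagogical payoff of the paper's route. Two minor points of care in your argument: your derivation of monotonicity from continuity applies $\Phi$ to the supremum of the two-element chain $\{a,b\}$, which is legitimate under Scott-continuity (directed sets) and, under mere $\omega$-continuity, needs the eventually-constant chain $a \sqsubseteq b \sqsubseteq b \sqsubseteq \cdots$; and note that the paper's footnote (monotonicity suffices) is recovered not by your construction but by Knaster--Tarski, since the Kleene chain may fail to reach the least fixed point in $\omega$ steps without continuity.
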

\noindent
In the realm of weakest preconditions, Park induction gives rise to the following induction principle:%
\begin{corollary}[Park Induction for $\boldwpsymbol$~\textnormal{\cite{DBLP:journals/jcss/Kozen85,thesis:kaminski}}]
	\label{Park Induction wp}
	Let $\Phi_f$ be the characteristic function of the while loop $\WHILEDO{\guard}{C}$ with respect to postexpectation $f$ and let $I \in \E$. Then
	\begin{align*}
		\Phi_f(I) \ppreceq I \qimplies \wp{\WHILEDO{\guard}{C}}{f} \ppreceq I~.
	\end{align*}
\end{corollary}
\noindent
We call an $I$ that satisfies $\Phi_f(I) \preceq I$ a \emph{superinvariant}. The striking power of Park induction is its simplicity: Once an appropriate candidate $I$ is found (even though this is usually not an easy task), all we have to do is push it through the characteristic function $\Phi_f$ \emph{once} and check whether we went down in our underlying partial order. If this is the case, we have verified that $I$ is indeed an upper bound on the least fixed point and thus on the sought--after weakest preexpectation.%
\begin{example}[Induction for Upper Bounds]
	\label{ex:wp-upper-bounds}
	Consider the program $C_\mathit{geo}$, given by
	\begin{align*}
		 & \WHILEDO{a \neq 0}{ \PCHOICE{\ASSIGN{a}{0}}{\sfrac{1}{2}}{\ASSIGN{b}{b + 1}} }~,
	\end{align*}
	where we assume $b \in \Nats$. Suppose we aim at an upper bound on the expected value of $b$ executing~$C_\mathit{geo}$. Using the annotation style of \textnormal{\cref{fig:loop-annotations}}, we can annotate the loop $C_\mathit{geo}$ as shown in \textnormal{\cref{fig:ex:wp-upper-bounds}}, using superinvariant $I = b + \iverson{a \neq 0}$, establishing $\Phi_b(I) \preceq I$, and by \textnormal{\cref{Park Induction wp}} establishing $\wp{C_\mathit{geo}}{b} \preceq b + \iverson{a \neq 0}$. So the expected value of $b$ after termination of $C_\mathit{geo}$ on $s_0$ is at most $\smash{s_0(b) + \iverson{s_0(a) \neq 0}}$.
\end{example}
\begin{figure}[t]
	\hrule
	\begin{subfigure}[t]{.475\linewidth}
		\begin{minipage}{1\linewidth}
			\begin{align*}
				 & \bowtieannotate{I} \tag{see \cref{fig:wp-annotations}}                                         \\
				 & \phiannotate{g} \tag{meaning $g = \iverson{\neg \guard} \cdot f + \iverson{\guard} \cdot I''$} \\
				 & \WHILE{\guard}                                                                                 \\
				 & \qquad \bowtieannotate{I''} \tag{see \cref{fig:wp-annotations}}                                \\
				 & \qquad \wpannotate{I'} \tag{see \cref{fig:wp-annotations}}                                     \\
				 & \qquad \mathit{Body}                                                                           \\
				 & \qquad \starannotate{I} ~\}\tag{meaning we employ invariant $I$}                               \\
				 & \annotate{f} \tag{postexpectation of loop is $f$}
			\end{align*}
		\end{minipage}
		\vspace{1ex}
		\hrule \vspace{-.5ex}
		\subcaption{Annotation style for loops using invariants.}
		\label{fig:loop-annotations}
	\end{subfigure}
	\quad\hfill
	\begin{subfigure}[t]{.475\linewidth}
		\begin{adjustbox}{max width=1\linewidth}
			\begin{minipage}{1\linewidth}
				\begin{align*}
					 & \preceqannotate{b + \iverson{a \neq 0}}                                                                                                \\
					 & \phiannotate{\iverson{a = 0} \cdot b \pplus \iverson{a \neq 0} \cdot \Bigl( b + \tfrac{1}{2} \bigl(1 + \iverson{a \neq 0}\bigr)\Bigr)} \\
					 & \WHILE{a \neq 0}                                                                                                                       \\
					 & \qquad \eqannotate{b \pplus \tfrac{1}{2} \bigl(1 + \iverson{a \neq 0}\bigr)}                                                           \\
					 & \qquad \wpannotate{\tfrac{1}{2} \bigl( b + \iverson{0 \neq 0} \pplus b + 1 + \iverson{a \neq 0} \bigr)}                                \\
					 & \qquad \!\!\PCHOICE{\ASSIGN{a}{0}}{\sfrac{1}{2}}{\ASSIGN{b}{b + 1}}                                                                    \\
					 & \qquad \starannotate{b + \iverson{a \neq 0}} ~\}                                                                                       \\
					 & \annotate{b}
				\end{align*}
			\end{minipage}
		\end{adjustbox}
		\vspace{1ex}
		\hrule \vspace{-.5ex}
		\subcaption{$\wpsymbol$ loop annotations for \cref{ex:wp-upper-bounds}.}
		\label{fig:ex:wp-upper-bounds}
	\end{subfigure}
	\vspace{1ex}
	\hrule \vspace{-1.5ex}
	\caption[]{Annotation style for loops using invariants and annotations for \cref{ex:wp-upper-bounds}. Inside the loop, we push an invariant $I$ (provided externally, denoted by $\boldsymbol{\annocolor{{\talloblong}\!{\talloblong}\:I}}$) through the loop body, thus obtaining $I''$ which is (possibly an over- or underapproximation of) $\wp{\mathit{Body}}{I}$. Above the loop head, we then annotate $g = \iverson{\neg\guard} \cdot f + \iverson{\guard} \cdot I''$. In the first line, we establish $g \bowtie I$, for~\mbox{${\bowtie} \in \{{\preceq},\, {\succeq}\}$}. Note that if ${\bowtie} = {\preceq}$, we have established the precondition of \cref{Park Induction wp}, since we have then overall established
	\begin{minipage}{\linewidth}
		\begin{align*}
			\Phi_f(I) \eeq \iverson{\neg\guard} \cdot f + \iverson{\guard} \cdot \wp{\mathit{Body}}{I}
			\ppreceq \iverson{\neg\guard} \cdot f + \iverson{\guard} \cdot I'' \eeq g \ppreceq I \\[-.9em]
		\end{align*}
	\end{minipage}
	For reasoning about lower bounds, we will later employ ${\bowtie} = {\succeq}$.%
		}
	\label{fig:loop-annotationzzz}
	\vspace{1ex}
	\hrule
\end{figure}
\noindent
Let us explain why Park induction is sound using the so-called \emph{Tarski-Knaster Fixed Point Theorem}:\footnote{
	In \cite{popl}, we explain Park induction via an extended version of \cref{thm:tarski-kantorovich}. However, this explanation was incorrect and is fixed here. This does not have any consequences regarding the results of \cite{popl}.}

\begin{theorem}[Tarski-Knaster Fixed Point Theorem \textnormal{\cite{tarski_fixpoint,knaster_fixpoint}}]
	\label{thm:tarski-knaster}
	Let $(D,\, {\sqsubseteq})$ be a complete lattice and $\Phi\colon D \to D$ be continuous.
	Then the set of fixed points of $\Phi$ is a complete lattice.
	Here, $\inf \{U \in D \mid \Phi(U) \sqsubseteq U\}$ is the least fixed point of $\Phi$ and $\sup \{I \in D \mid I \sqsubseteq \Phi(I)\}$ is the greatest fixed point of $\Phi$.
\end{theorem}
\noindent
Again, for this theorem it would even suffice for $\Phi$ to be only monotonic.

In our setting, the characterization of the least fixed point in \cref{thm:tarski-knaster} proves soundness of Park induction (recall that $\Phi$ is continuous, see \cref{thm:healthiness}).
For making a comparison to the lower bound case which we consider later, let us introduce the so-called \emph{Tarski-Kantorovich Principle}:

\begin{theorem}[Tarski-Kantorovich Principle\textnormal{, see \cite{jachymski2000tarski}}]
	\label{thm:tarski-kantorovich}
	Let $(D,\, {\sqsubseteq})$ be a complete lattice, $\Phi\colon D \to D$ be continuous, and let $I \in D$ such that $I \sqsubseteq \Phi(I)$.
	Then the sequence $I \sqsubseteq \Phi(I) \sqsubseteq \Phi^2(I) \sqsubseteq \Phi^3(I) \sqsubseteq {\cdots}$ is an \emph{ascending chain} that \emph{converges} to an element
	\[
		\Phi^\omega(I) = \lim_{n \to \omega} \Phi^n(I) = \sup \{\Phi^n(I) \mid n \in \Nats\} \in D,
	\]
	which is a fixed point of $\Phi$. In particular, $\Phi^\omega(I)$ is the least fixed point of $\Phi$ that is ${}\sqsupseteq{} I$.
\end{theorem}
\noindent
The well--known Kleene Fixed Point Theorem (cf.~\cite{DBLP:journals/ipl/LassezNS82}), which states that $\lfp \Phi = \Phi^\omega(\bot)$, where $\bot$ is the least element of~$D$, is a special case of the Tarski--Kantorovich Principle.

\subsection{Lower Bounds}
\label{sec:lower-bounds}
\lowerpenalties

For verifying lower bounds, we do not have a rule as simple as Park induction available. In particular, for a given complete lattice $(D,\, {\sqsubseteq})$ and monotonic function $\Phi\colon D \To D$, the rule
\begin{align*}
	I \ssqsubseteq \Phi(I) \qqimplies I \ssqsubseteq \lfp\, \Phi~, \tag*{\Large\lightning}
\end{align*}
is \emph{unsound} in general. We call an $I$ satisfying $I \sqsubseteq \Phi(I)$ a \emph{subinvariant} and the above rule \emph{simple lower induction}. Generally, we will call an $I$ that is a sub-- or a superinvariant an \emph{invariant}. $I$~being an invariant thus expresses mainly its \emph{inductive} nature, namely that $I$ is comparable with~$\Phi(I)$ with respect to the partial order $\sqsubseteq$.

An explanation why simple lower induction is unsound is as follows: By \cref{thm:tarski-kantorovich}, we know from $I \sqsubseteq \Phi(I)$ that $\Phi^\omega(I)$ is the \emph{least fixed point of $\Phi$ that is greater than or equal to $I$}. Since $\Phi^\omega(I)$ is a fixed point, $\Phi^\omega(I) \sqsubseteq \gfp \Phi$ holds, but we do not know how $I$ compares to $\lfp \Phi$. We only know that if indeed $I \sqsubseteq \lfp \Phi$ and $I \sqsubseteq \Phi(I)$, then iterating $\Phi$ on $I$ also \mbox{converges to $\lfp \Phi$, i.e.,}
\begin{align*}
	I \ssqsubseteq \lfp \Phi \qand I \ssqsubseteq \Phi(I) \qqimplies \Phi^\omega(I) \eeq \lfp \Phi~.
\end{align*}
If, however, $I \sqsubseteq \Phi(I)$ and $I$ is \emph{strictly greater} than $\lfp \Phi$, then iterating $\Phi$ on $I$ will yield a fixed point strictly greater than $\lfp \Phi$, contradicting soundness of simple lower induction.

While we just illustrated by means of the Tarski--Kan{\-}to{\-}ro{\-}vich principle why the simple lower induction rule is not sound in general, we should note that the rule is not per se absurd: So called \emph{metering functions} \cite{DBLP:conf/cade/FrohnNHBG16} basically employ simple lower induction to verify \emph{lower bounds} on runtimes of nonprobabilistic programs \cite[Thm.~7.18]{thesis:kaminski}. For weakest preexpectations, however, simple lower induction is unsound:%
\begin{counterexample}[Simple Induction for Lower Bounds]
	\label{ex:simple-induction}
	Consider the following loop $C_{\mathit{cex}}$, where $b,k \in \Nats$
	\begin{align*}
		 & \WHILE{a \neq 0}                                                            \\
		 & \qquad \COMPOSE{\PCHOICE{\ASSIGN{a}{0}}{\sfrac{1}{2}}{\ASSIGN{b}{b + 1}}}{} \\
		 & \qquad \ASSIGN{k}{k + 1}                                                    \\
		 & \}~.
	\end{align*}
	As in \textnormal{\cref{ex:wp-upper-bounds}}, $\wp{C_{\mathit{cex}}}{b} = b + \iverson{a \neq 0}$.
	In particular, this weakest preexpectation is independent of $k$.
	The corresponding characteristic function is
	\begin{align*}
		\Phi_b(X) \eeq & \iverson{a = 0} \cdot b \pplus \iverson{a \neq 0} \cdot \tfrac{1}{2} \cdot \bigl(X\subst{a}{0}
		\pplus X\subst{b}{b + 1}\bigr)\subst{k}{k+1}.
	\end{align*}
	Let us consider $\primed{I} = b + \iverson{a \neq 0} \cdot (1 + 2^k)$, which \emph{does} depend on $k$. 
	Indeed, one can check that $\primed{I} \preceq \Phi_b(\primed{I})$, i.e., $I'$ is a subinvariant. If the simple lower induction rule were sound, we would immediately conclude that $\primed{I}$ is a \emph{lower bound} on $\wp{C_{\mathit{cex}}}{b}$, but this is obviously false since
	\begin{align*}
		\wp{C_{\mathit{cex}}}{b} = b + \iverson{a \neq 0} ~{}\prec{}~ \primed{I}~.
	\end{align*}
\end{counterexample}

\subsection{Problem Statement}
\label{sec:purpose}

The purpose of this paper is to present a \emph{sound lower induction rule} of the following form: Let $\Phi_f$ be the characteristic function of $\WHILEDO{\guard}{C}$ with respect to $f$ and let $I \in \E$. Then
\begin{align*}
	I \ppreceq \Phi_f(I) ~{}\wedge{}~
	\begin{array}{c}
		\textnormal{\small some side} \\[-.5ex]
		\textnormal{\small conditions}
	\end{array}
	\qqimplies I \ppreceq \lfp\, \Phi_f~.
\end{align*}
We still want our lower induction rule to be simple in the sense that checking the side conditions should be conceptually as simple as checking $I \preceq \Phi_f(I)$. Intuitively, we want to apply the semantics of the loop body only \emph{finitely often}, not $\omega$ times, to avoid reasoning about limits of sequences or anything alike. We provide such side conditions in our main contribution, \cref{thm:optional_stopping_probabilistic_programs}, which transfers the Optional Stopping Theorem of probability theory to weakest preexpectation reasoning.

\renewcommand{\charwp}[3]{\Phi_{#3}}
\renewcommand{\charwpn}[4]{\Phi_{#3}^{#4}}

\subsection{Uniform Integrability}
\label{sec:series_perspective_on_loops}
\unlowerpenalties

\noindent
We now present a sufficient and necessary criterion to under--approximate the least fixed points that we seek for.
Let again $\charwp{\guard}{C}{f}$
be the characteristic function of $\WHILEDO{\guard}{C}$ with respect to $f$.
\cref{thm:healthiness} implies that $\charwp{\guard}{C}{f}$ is continuous and monotonic.

Let us consider a \emph{subinvariant} $I$, i.e., $I \preceq \charwp{\guard}{C}{f}(I)$.
If we iterate $\charwp{\guard}{C}{f}$ on $I$ ad infinitum, then the Tarski--Kantorovich principle (\cref{thm:tarski-kantorovich}) guarantees that we will converge to some fixed point~$\charwpn{\guard}{C}{f}\omega(I)$ that is ${\succeq}\,I$.
From continuity of $\charwp{\guard}{C}{f}$ and \cref{thm:tarski-kantorovich}, one can easily show that $\charwpn{\guard}{C}{f}\omega(I)$ coincides with $\lfp \charwp{\guard}{C}{f}$ \emph{if and only if} $I$ itself was already ${\preceq}\, \lfp \charwp{\guard}{C}{f}$, i.e.:
\begin{theorem}[Subinvariance and Lower Bounds]
	\label{thm:uniform_integrability_least_fixed_point}
	For any \emph{subinvariant} $I$, we have
	\begin{align*}
		\charwpn{\guard}{C}{f}\omega(I) \eeq \lfp \charwp{\guard}{C}{f} \qqiff I \ppreceq \lfp \charwp{\guard}{C}{f}~.
	\end{align*}
\end{theorem}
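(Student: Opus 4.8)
The plan is to prove the theorem as a direct consequence of the Tarski--Kantorovich principle (\cref{thm:tarski-katorovich}) together with monotonicity of $\charwp{\guard}{C}{f}$. Throughout, write $\Phi = \charwp{\guard}{C}{f}$ and recall that by \cref{thm:healthiness} it is continuous (hence monotonic), and that since $I$ is a subinvariant, i.e.\ $I \preceq \Phi(I)$, the Tarski--Kantorovich principle guarantees that $\Phi^\omega(I)$ exists and is the \emph{least} fixed point of $\Phi$ that is ${\succeq}\, I$. I would prove the two implications separately.

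For the direction ``$\Longleftarrow$'', I would assume $I \preceq \lfp\,\Phi$ and derive $\Phi^\omega(I) = \lfp\,\Phi$. Since $\lfp\,\Phi$ is itself a fixed point of $\Phi$ with $\lfp\,\Phi \succeq I$, and since $\Phi^\omega(I)$ is by Tarski--Kantorovich the \emph{least} such fixed point, we immediately get $\Phi^\omega(I) \preceq \lfp\,\Phi$. Conversely, $\Phi^\omega(I)$ is a fixed point of $\Phi$, so by the defining minimality of the least fixed point, $\lfp\,\Phi \preceq \Phi^\omega(I)$. Antisymmetry of $\preceq$ then yields $\Phi^\omega(I) = \lfp\,\Phi$.

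For the direction ``$\Longrightarrow$'', I would prove the contrapositive: if $I \not\preceq \lfp\,\Phi$, then $\Phi^\omega(I) \neq \lfp\,\Phi$. Indeed, since $\Phi^\omega(I) \succeq I$ holds by Tarski--Kantorovich, were it the case that $\Phi^\omega(I) = \lfp\,\Phi$, we would obtain $I \preceq \lfp\,\Phi$, contradicting the assumption. Hence $\Phi^\omega(I) \neq \lfp\,\Phi$, which establishes the implication.

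The argument is short and the main subtlety, rather than any genuine obstacle, is simply to invoke the \emph{correct half} of \cref{thm:tarski-katorovich}: the ascending-chain (dual) statement, which asserts that the limit $\Phi^\omega(I)$ of the chain $I \sqsubseteq \Phi(I) \sqsubseteq \Phi^2(I) \sqsubseteq \cdots$ is the least fixed point of $\Phi$ lying above $I$. Everything else is a routine combination of this minimality property with the universal minimality of $\lfp\,\Phi$ and the antisymmetry of the partial order, so I expect no real difficulty beyond carefully tracking the two distinct ``least fixed point'' characterizations and not confusing $\Phi^\omega(I)$ with $\gfp\,\Phi$.
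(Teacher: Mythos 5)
Your proof is correct and follows exactly the route the paper itself indicates for \cref{thm:uniform_integrability_least_fixed_point}: the paper treats it as an immediate consequence of monotonicity of $\Phi_f$ and the ascending-chain half of the Tarski--Kantorovich principle (\cref{thm:tarski-katorovich}), which is precisely your argument. Both directions---using that $\Phi^\omega(I)$ is the least fixed point above $I$, combined with the universal minimality of $\lfp \Phi_f$ and antisymmetry---are handled correctly, so there is nothing to add.
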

\noindent
More generally, for \emph{any} expectation $X$ (not necessarily a sub- or superinvariant), if iterating $\charwp{\guard}{C}{f}$ on $X$ converges to the least fixed point of $\charwp{\guard}{C}{f}$, then we call $X$ \emph{uniformly integrable for $f$}:%
\begin{definition}[Uniform Integrability of Expectations]
	\label{def:uniform_integrability_invariants}
	Given a loop $\WHILEDO{\guard}{C}$,
	an expectation $X \in \E$ is called \emph{uniformly integrable (u.i.) for $f \in \E$} if $\lim\limits_{n \to \omega} \charwpn{\guard}{C}{f}n(X)$ exists and
	\begin{align*}
		\lim\limits_{n \to \omega} \charwpn{\guard}{C}{f}n(X) \eeq \lfp \charwp{\guard}{C}{f}~.
	\end{align*}
\end{definition}
\noindent
So far, we have thus established the following diagram which we will gradually extend over the next two sections:
\begin{center}
	\begin{adjustbox}{max width=\linewidth}
		\begin{tikzcd}[column sep = huge, row sep=large]
			I~\textnormal{u.i.\ for}~f \quad \arrow[r, Leftrightarrow, "\textnormal{\raisebox{1ex}{\cref{def:uniform_integrability_invariants}}}"] \arrow[d, Leftrightarrow, "\textnormal{
				\begin{tabular}{l}
					\cref{thm:uniform_integrability_least_fixed_point} \\[-1ex]
					and \cref{def:uniform_integrability_invariants}
				\end{tabular}
			}" shift={(0,-0.1)}]
			&\quad\charwpn{\guard}{C}{f}n(I) \xrightarrow{~n \to \omega~} \lfp \charwp{\guard}{C}{f}
			\\
			I \preceq \charwp{\guard}{C}{f}(I) \Rightarrow I \preceq \lfp \charwp{\guard}{C}{f}
		\end{tikzcd}
	\end{adjustbox}
\end{center}
Uniform integrability~\cite{grimmett2001probability} --- a notion originally from probability theory --- will be essential for the Optional Stopping Theorem in \cref{sec:optional-stopping-wp}. While, so far, we have studied the function $\charwp{\guard}{C}{f}$ solely from an expectation transformer point of view and defined a purely expectation--theoretical notion of uniform integrability without any reference to probability theory, we will study in \cref{sec:expectations_processes} the function $\charwp{\guard}{C}{f}$ from a stochastic process point of view. Stochastic processes are not inductive per se, whereas expectation transformers make heavy use of induction. We will, however, rediscover the inductiveness also in the realm of stochastic processes. We will also see how our notion of uniform integrability corresponds to uniform integrability in its original sense.

\renewcommand{\charwp}[3]{\charfun{\wpsymbol}{#1}{#2}{#3}}
\renewcommand{\charwpn}[4]{\charfunn{\wpsymbol}{#1}{#2}{#3}{#4}}

\renewcommand{\charwp}[3]{\Phi_{#3}}
\renewcommand{\charwpn}[4]{\Phi_{#3}^{#4}}

\section{From Expectations to Stochastic Processes}
\label{sec:expectations_processes}
\allowdisplaybreaks
\noindent
In this section, we connect concepts from expectation transformers with notions from probability theory.
In \cref{sec:Canonical Probability Space}, we recapitulate standard constructions of probability spaces for probabilistic programs, instantiate them in our setting, and present our new results on connecting expectation transformers with stochastic processes (\cref{sec:Canonical Stochastic Process}) and uniform integrability (\cref{sec:Uniform Integrability}).
Proofs can be found in \techrep{\cref{app:proofs_expectations_processes}}\cameraready{\cite[App.\ C]{arxiv}}.
For further background on probability theory, we refer to \techrep{\cref{app:prelim_probability}}\cameraready{\cite[App.\ B]{arxiv}} and \cite{bauer71measure,grimmett2001probability}.

We fix for this section an arbitrary loop
$\WHILEDO{\guard}{C}$.
The loop body $C$ may contain loops but we require $C$ to be \emph{universally almost--surely terminating (AST)}, i.e., $C$ terminates on any input with probability 1.
The set of program states can be uniquely partitioned into
$\States = \States_{\guard} \uplus \States_{\neg \guard}$,
with $\State \in \States_{\guard}$ iff $\State \models \guard$.
The set $\States_{\neg \guard}$ thus contains the terminal states from which the loop is not executed further.

\subsection{Canonical Probability Space}
\label{sec:Canonical Probability Space}

\noindent
We begin with constructing a canonical probability measure and space corresponding to the execution of our loop.
As every \pgcl program is, operationally, a countable Markov chain, our construction is similar to the standard construction for Markov chains (cf.~\cite{DBLP:conf/focs/Vardi85}).

In general, a \emph{measurable space} is a pair $(\Omega, \F{F})$ consisting of a \emph{sample space} $\Omega$ and a \emph{$\sigma$--field} $\F{F}$ of $\Omega$, which is a collection of subsets of $\Omega$, closed under complement and countable union, such that~$\Omega \in \F{F}$.
In our setting, a loop $\WHILEDO{\guard}{C}$ induces the following canonical measurable space:%
\begin{definition}[Loop Space]
	The loop $\WHILEDO{\guard}{C}$ induces a unique measurable space~$(\Omega^\lp,\, \F{F}^\lp)$ with sample space
	$ \Omega^\lp ~{}\coloneqq{}~ \States^{\omega} \eeq \{\run \colon \Nats \to \States\}, $
	i.e., it is the set of all \emph{infinite} sequences of program states (so--called \emph{runs}).
	For $\run \in \Omega^\lp$, we denote by $\run[n]$ the $n$--th state in the sequence $\run$ (starting to count at 0).
	The $\sigma$--field $\F{F}^\lp$ is the smallest $\sigma$--field that contains all cylinder sets $Cyl(\pi) = \setcomp{ \pi \run }{ \run \in \States^\omega }$, for all finite prefixes $\pi \in \States^+$, denoted as
	\begin{align*}
		\F{F}^\lp \eeq \sigmagen{\setcomp{ Cyl(\pi) }{ \pi \in \States^+ }}~.
	\end{align*}
\end{definition}
\noindent
Intuitively, a run $\run \in \Omega$ is an infinite sequence of states
$ \run \eeq \State_0 \, \State_1 \, \State_2 \, \State_3 \, {{}\cdots{}}~, $
where $\State_0$ represents the initial state on which the loop is started and $\State_i$ is a state that could be reached after $i$ iterations of the loop.
Obviously, some sequences in $\Omega^\lp$ may not actually be admissible by our loop.

We next develop a canonical probability measure corresponding to the execution of the loop, which will assign the measure $0$ to inadmissible runs.
We start with considering a single loop iteration.
The loop body $C$ induces a \emph{family} of distributions\footnote{Since the loop body $C$ is AST, these are distributions and not subdistributions.}
\begin{align*}
	\phantom{}^{\bullet}\mu_{C}\colon \States \to \States \to [0,1]~,
\end{align*}
where $\phantom{}^{\State}\mu_{C}(\primed{\State})$ is the probability that after \emph{one} iteration of $C$ on~$\State$, the program is in state $\primed{\State}$.

The loop $\WHILEDO{\guard}{C}$ induces a \emph{family} of probability measures on $(\Omega^\lp,\, \F{F}^\lp)$.
This family is parameterized by the initial state of the loop.
Using the distributions $\phantom{}^{\bullet}\mu_{C}$ above, we first define the probability of a finite non--empty prefix of a run, i.e., for $\pi \in \Sigma^+$.
Here, ${}^{\State}p(\pi)$ is the probability that $\pi$ is the sequence of states reached after the first loop iterations, when starting the loop in state $\State$.
Hence, the family
\begin{align*}
	{}^{\bullet}{}p\colon \States \to \States^+ \to [0,1]
\end{align*}
of distributions on $\States^+$ is defined by
\begin{enumerate}
	\item $^{\State}p(\primed{\State})=\iverson{\State=\primed{\State}}$

	\item $^{\State}p(\pi \primed{\State}\ddashed{\State}) =
		      \begin{cases}
			      \phantom{}^{\State}p(\pi\primed{\State}) \cdot \iverson{\ddashed{\State}=\primed{\State}},             & \text{if } \primed{\State} \in \States_{\neg \guard} \\

			      \phantom{}^{\State}p(\pi\primed{\State}) \cdot \phantom{}^{\primed{\State}}\mu_{C}(\ddashed{\State}) , & \text{if } \primed{\State} \in \States_{\guard}
		      \end{cases}
	      $.
\end{enumerate}
Using the family ${}^{\bullet}{}p$, we now obtain a canonical probability measure on the loop space.%
\begin{lemma}[Loop Measure~\textnormal{\protect{\cite[Kolmogorov's Extension Theorem]{feller-vol-2}}}]
	There exists a unique family of probability measures ${}^{\bullet}{}\mathbb{P}\colon \States \to \F{F} \to [0,1]$ with
	\begin{align*}
		\IP{\State}{Cyl(\pi)} \eeq {}^{\State}p(\pi)~.
	\end{align*}
\end{lemma}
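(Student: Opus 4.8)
The plan is to fix an arbitrary initial state $\State \in \States$, construct the measure $\IP{\State}{\cdot}$ coordinate--wise via Kolmogorov's Extension Theorem, and then let $\State$ range over $\States$ to obtain the whole family. First I would observe that $\States$ is countable (as $\Vars$ is finite and $\mathbb{Q}$ is countable), so each copy of $\States$ may be equipped with its discrete $\sigma$--field; the cylinder sets $Cyl(\pi)$ for $\pi \in \States^+$ then correspond exactly to the finite--dimensional rectangles of the product space $\States^\omega$, and for $\pi$ of length $n$ the value ${}^{\State}p(\pi)$ is to be read as the finite--dimensional distribution assigning mass to the event that the first $n$ coordinates of a run coincide with $\pi$. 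Kolmogorov's Extension Theorem then yields a unique probability measure on $\F{F}^\lp = \sigmagen{\setcomp{Cyl(\pi)}{\pi \in \States^+}}$ agreeing with these finite--dimensional distributions, \emph{provided} they form a consistent (projective) family.

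The central step is therefore to verify the Kolmogorov consistency condition, namely that marginalising the length--$(n{+}1)$ distribution over its last coordinate recovers the length--$n$ distribution:
\begin{align*}
	\sum_{\ddashed{\State} \in \States} {}^{\State}p(\pi\primed{\State}\ddashed{\State}) \eeq {}^{\State}p(\pi\primed{\State})~.
\end{align*}
This is immediate from the defining recursion for ${}^{\State}p$ by a case distinction on the last state $\primed{\State}$. If $\primed{\State} \in \States_{\neg\guard}$, the factor $\iverson{\ddashed{\State} = \primed{\State}}$ sums to $1$ over $\ddashed{\State}$; if $\primed{\State} \in \States_{\guard}$, the factor ${}^{\primed{\State}}\mu_{C}(\ddashed{\State})$ sums to $1$ over $\ddashed{\State}$. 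The latter is exactly where the hypothesis that the loop body $C$ is AST is indispensable: it guarantees that ${}^{\primed{\State}}\mu_{C}$ is a genuine probability distribution rather than a subdistribution, so that no probability mass is lost in a single iteration. An analogous one--line computation on the base case gives $\sum_{\primed{\State}} {}^{\State}p(\primed{\State}) = \sum_{\primed{\State}} \iverson{\State = \primed{\State}} = 1$, and induction on $n$ using consistency shows that each level sums to $1$; hence the finite--dimensional distributions are genuine probability distributions and $\IP{\State}{\Omega^\lp} = 1$.

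With consistency established, existence follows directly from Kolmogorov's Extension Theorem. For uniqueness I would note that the cylinder sets form a $\pi$--system generating $\F{F}^\lp$ on which $\IP{\State}{\cdot}$ is pinned down by $\IP{\State}{Cyl(\pi)} = {}^{\State}p(\pi)$; since $\IP{\State}{\cdot}$ is a finite (indeed probability) measure, the Dynkin $\pi$--$\lambda$ uniqueness theorem forces any two such measures to coincide on all of $\F{F}^\lp$. The only genuinely nontrivial ingredient is the $\sigma$--additivity of the induced premeasure on the algebra of cylinders, and this is precisely what is delegated to Kolmogorov's theorem; I do not expect an obstacle beyond correctly matching our cylinder--set bookkeeping to its hypotheses. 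The conceptual crux, rather than any technical difficulty, is the AST assumption on $C$, without which the construction would only yield a subprobability measure and the normalisation $\IP{\State}{\Omega^\lp}=1$ would fail.
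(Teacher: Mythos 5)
Your proposal is correct and follows exactly the route the paper takes: the paper offers no written proof at all, delegating the entire statement to the cited Kolmogorov Extension Theorem, and your argument simply makes that delegation explicit by verifying its hypotheses (the consistency of the family ${}^{\State}p$, which hinges on the AST assumption making each ${}^{\primed{\State}}\mu_{C}$ a full probability distribution, plus uniqueness via the $\pi$--$\lambda$ theorem on the generating cylinder sets). Your identification of AST as the one load-bearing assumption matches the paper's own footnote remarking that AST of the loop body is what makes these distributions rather than subdistributions.
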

\noindent
We now turn to random variables and their expected values.
A mapping $X \colon \Omega \to \smash{\PosRealsInf}$ on a probability space $\smash{(\Omega, \F{F}, \mathbb{P})}$ is called \mbox{($\F{F}$--)}\emph{measurable} or \emph{random variable} if for any open set $U \subset \PosRealsInf$ its preimage lies in $\F{F}$, i.e., $X^{-1}(U) \in \F{F}$.
If $X(\Omega) \subset \NatsInf = \Nats \cup \{
	\omega \}$, then this is equivalent to checking $X^{-1}(\{n\}) \in \F{F}$ for any $n \in \Nats$.
The \emph{expected value} $\expec{}{X}$ of a random variable $X$ is defined as $\expec{}{X} \coloneqq \int_{\Omega} X d \mathbb{P}$.\footnote{Details on integrals for arbitrary measures can be found in \techrep{\cref{app:prelim_probability}}\cameraready{\cite[App.\ B]{arxiv}}.} If $X$ takes only countably many values we have
\begin{align*}
	\expec{}{X} \eeq \int_{\Omega} X \, d \mathbb{P} \eeq \sum\limits_{r \in X(\Omega)} \IP{}{X^{-1}(\{r\})} \cdot r~.
\end{align*}
We saw that $\WHILEDO{\guard}{C}$ gives rise to a unique canonical measurable space $(\Omega^\lp,\, \F{F}^\lp)$ and to a
family of probability measures ${}^{\State}\mathbb{P}$ parameterized by the initial state $\State$ on which our loop is started.
We now define a corresponding parameterized expected value~\mbox{operator~${}^{\bullet}{}\mathbb{E}$}.%
\begin{definition}[Expected Value for Loops $\boldsymbol{{}^{\bullet}{}\mathbb{E}}$]
	Let $\State \in \States$ and \mbox{$X\colon \Omega^\lp \to \PosRealsInf$} be a random variable.
	The \emph{expected value of $X$} with respect to the loop measure ${}^{\State}\mathbb{P}$, parameterized by state~$\State$, is defined by $\expec{\State}{X}\coloneqq\int_{\Omega} X \, d \left({}^{\State}\mathbb{P}\right)$.
\end{definition}
\noindent
Next, we define a random variable that corresponds to the number of iterations that our loop makes until it terminates.%
\begin{definition}[Looping Time]
	\label{def:looping-time}
	The mapping
	\begin{align*}
		\termtime{\guard} \colon\quad \Omega^\lp \to \overline{\Nats},~ \run \mapsto \inf \{n \in \Nats \mid \run[n] \in \States_{\neg \guard}\}~,
	\end{align*}
	is a random variable and called the \emph{looping time} of $\WHILEDO{\guard}{C}$.
	Here, $\overline{\Nats} = \Nats \cup \{\omega\}$ and $\inf \emptyset = \omega$.
\end{definition}
\noindent
The canonical $\sigma$--field $\F{F}^\lp$ contains infinite runs.
But after $n$~iterations of the loop we only know the first $n+1$ states $\State_0\cdots \State_n$ of a run.
Gaining knowledge in this successive fashion can be captured by a so--called \emph{filtration} of the $\sigma$--field $\F{F}^\lp$.
In general, a filtration is a sequence $(\F{F}_n)_{n\in \Nats}$ of subsets of $\F{F}$, such that $\F{F}_n \subset \F{F}_{n+1}$ and $\F{F}_n$ is a $\sigma$--field for any $n \in \Nats$, i.e., $\F{F}$ is approximated from below.%
\begin{definition}[Loop Filtration]
	\label{def:canonical-filtration}
	The sequence $(\F{F}_n^\lp)_{n \in \Nats}$ is a filtration of $\F{F}^\lp$, where
	\begin{align*}
		\F{F}_n^\lp \eeq \sigmagen{\{Cyl(\pi) \mid \pi \in \States^+,~|\pi| = n+1\}},
	\end{align*}
	i.e., $\F{F}_n^\lp$ is the smallest $\sigma$--field containing $\{Cyl(\pi) \mid \pi \in \States^+,~|\pi| = n+1\}$.\footnote{Note that here $\F{F}^\lp=\bigcup\limits_{n \in \Nats} \F{F}^\lp_n$ which is \emph{not} the case for general filtrations.}
\end{definition}
\noindent
Next, we recall the notion of stopping times from probability theory.

\begin{definition}[Stopping Time]
	For a probability space~\mbox{$(\Omega,\F{F},\mathbb{P})$} with filtration $(\F{F}_n)_{n \in \Nats}$, a random variable $T \colon \Omega \to \NatsInf$ is called a \emph{stopping time}
	with respect to~$(\F{F}_n)_{n \in \Nats}$ if for every $n \in \Nats$ we have $T^{-1}(\{n\})= \{ \run \in \Omega \mid T(\run) = n \} \in \F{F}_n$.
\end{definition}

Let us reconsider the looping time $\termtime{\guard}$ and the loop filtration $(\F{F}_n^\lp)_{n \in \Nats}$.
In order to decide for a run $\run=\State_0\State_1\cdots \in \Omega^\lp$ whether its looping time is $n$, we only need to consider the states $\State_0\cdots\State_n$.
Hence, $(\termtime{\guard})^{-1}(\{n\}) \in \F{F}_n^\lp$ for any $n \in \Nats$ and thus $\termtime{\guard}$ is a \emph{stopping time} with respect to $(\F{F}_n^\lp)_{n \in \Nats}$.

Note that $\termtime{\guard}$ does \emph{not} reflect the actual runtime of $\WHILEDO{\guard}{C}$, as it does not take the runtime of the loop body $C$ into account.
Instead, $\termtime{\guard}$ only counts the number of \emph{loop iterations} of the ``outer loop'' $\WHILEDO{\guard}{C}$.
This enriches the class of probabilistic programs our technique will be able to analyze, as we will not need to require that the whole program has finite expected runtime, but only that \emph{the outer loop is expected to be executed finitely often}.

\subsection{Canonical Stochastic Process}
\label{sec:Canonical Stochastic Process}

\noindent
Now we can present our novel results on the connection of weakest preexpectations and stochastic processes.
Henceforth, let $f,I \in \E$.
Intuitively, $f$ will play the role of the \emph{postexpectation} and $I$ the role of an \emph{invariant} (i.e., $I$ is a sub-- or superinvariant).
We now present a canonical stochastic process, i.e., a sequence of random variables that captures approximating $\wp{\WHILEDO{\guard}{C}}{f}$ using the invariant $I$.%
\begin{definition}[Induced Stochastic Process]
	\label{def:induced-stochastic-process}
	The \emph{stochastic process induced by $I$}, denoted $\mathbf{X}^{f,I} = (X_n^{f,I})_{n \in \Nats}$, is given by
	\begin{align*}
		X_n^{f,I}\colon \Omega^\lp \to \PosRealsInf, \run \mapsto
		\begin{cases}
			f\left(\run\left[\termtime{\guard}(\run)\right]\right), & \textnormal{if }\termtime{\guard}(\run) \leq n \\
			I(\run[n+1]),                                           & \textnormal{otherwise}
		\end{cases}
		.
	\end{align*}
\end{definition}
\noindent
Now, in what sense does the stochastic process $\mathbf{X}^{f,I}$ capture approximating the weakest preexpectation of our loop with respect to $f$ by invariant $I$?
$X_n^{f,I}$ takes as argument a run~$\run$ of the loop and assigns to~$\run$ a value as follows: If the loop has reached a terminal state within $n$ iterations, it returns the value of the postexpectation $f$ evaluated in that terminal state.
If no such terminal state is reached within $n$ steps, it simply \emph{approximates the remainder of the run}, i.e.,
\begin{align*}
	\textcolor{gray}{\run[0] \,{}\cdots{}\, \run[n]} \, \underbracket[.5pt][2.5pt]{\run[n{+}1]\, \run[n{+}2]\, \run[n{+}3] \, {}\cdots{}}~,
\end{align*}
by returning the value of the invariant $I$ evaluated in $\run[n{+}1]$.
We see that $X_n^{f,I}$ needs at most the first $n+2$ states of a run to determine its value.
Thus, $X_n^{f,I}$ is not $\F{F}_{n}$--measurable but $\F{F}_{n+1}$--measurable, as there exist runs that agree on the first $n+1$ states but yield different images under $X_n^{f,I}$.
Hence, we shift the loop filtration $(\F{F}_n^\lp)_{n \in \Nats}$ by one.
\begin{definition}[Shifted Loop Filtration]
	\label{def:filtration}
	The filtration $(\F{G}_n^\lp)_{n \in \Nats}$ of~$\F{F}^\lp$ is defined by
	\begin{align*}
		\F{G}_n^\lp \ccoloneqq \F{F}_{n+1}^\lp \eeq \sigmagen{\{Cyl(\pi) \mid \pi \in \States^+,~|\pi| = n+2\}}~.
	\end{align*}
\end{definition}
\noindent
Note that $(\termtime{\guard})^{-1}(\{n\})\in \F{F}_n^\lp \subset \F{F}_{n+1}^\lp=\F{G}_n^\lp$, so $\termtime{\guard}$ is a stopping time w.r.t.~$(\F{G}_n^\lp)_{n \in \Nats}$ as well.%
\begin{restatable}[Adaptedness of Induced Stochastic Process]{lemma}{lemmameasurability}
	\label{lemma:measurability}
	$\mathbf{X}^{f,I}$ is adapted to $(\F{G}_n^\lp)_{n \in \Nats}$, i.e., $X_n^{f,I}$ is $\F{G}_n^\lp$--measurable.
\end{restatable}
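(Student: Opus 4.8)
The goal is to show that each $X_n^{f,I}$ is $\F{G}_n^\lp$--measurable. The guiding observation, already noted in the text, is that the value $X_n^{f,I}(\run)$ is completely determined by the first $n+2$ states $\run[0], \dots, \run[n+1]$ of the run: deciding the case split $\termtime{\guard}(\run) \leq n$ only inspects $\run[0], \dots, \run[n]$, and in either branch the returned value is $f$ or $I$ evaluated at one of $\run[0], \dots, \run[n+1]$. Since $\F{G}_n^\lp = \F{F}_{n+1}^\lp$ is exactly the $\sigma$--field generated by the length--$(n{+}2)$ cylinders $Cyl(\pi)$, any function depending only on these coordinates ought to be $\F{G}_n^\lp$--measurable. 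The plan is to make this precise by decomposing $X_n^{f,I}$ into finitely many measurable building blocks.

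Concretely, I would write, for every $\run \in \Omega^\lp$,
\begin{align*}
  X_n^{f,I}(\run) \eeq \sum_{j=0}^{n} \iverson{\termtime{\guard}(\run) = j} \cdot f\bigl(\run[j]\bigr) \pplus \iverson{\termtime{\guard}(\run) > n} \cdot I\bigl(\run[n+1]\bigr)~.
\end{align*}
A short case analysis confirms this identity: if $\termtime{\guard}(\run) = j$ for some $j \leq n$, then exactly one indicator in the sum equals $1$ and the trailing term vanishes, yielding $f(\run[j])$; and if $\termtime{\guard}(\run) > n$ --- including the case $\termtime{\guard}(\run) = \omega$ of a nonterminating run --- then all summands vanish and only $I(\run[n+1])$ survives, matching \cref{def:induced-stochastic-process}.

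It then remains to check that each factor is $\F{G}_n^\lp$--measurable and to invoke closure of measurability under finite sums and products. For the indicators, I use that $\termtime{\guard}$ is a stopping time w.r.t.~$(\F{G}_n^\lp)_{n \in \Nats}$ (established just above the lemma), so $\{\termtime{\guard} = j\} \in \F{G}_j^\lp \subseteq \F{G}_n^\lp$ for $j \leq n$ and hence $\{\termtime{\guard} > n\} \in \F{G}_n^\lp$ as the complement of a finite union. For the state--dependent factors, the coordinate projection $\run \mapsto \run[j]$ is $\F{F}_j^\lp$--measurable (the preimage of each state is a countable union of length--$(j{+}1)$ cylinders), hence $\F{G}_n^\lp$--measurable whenever $j \leq n+1$; and since $\States$ is countable, equipped with the discrete $\sigma$--field, every $f, I \colon \States \to \PosRealsInf$ is automatically measurable, so the compositions $\run \mapsto f(\run[j])$ and $\run \mapsto I(\run[n+1])$ are $\F{G}_n^\lp$--measurable. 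Because the indicators are $\{0,1\}$--valued, the products are well defined in $\PosRealsInf$, and a finite sum of nonnegative measurable functions is again measurable; this yields $\F{G}_n^\lp$--measurability of $X_n^{f,I}$.

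I do not expect a genuine obstacle here; the content is an elementary measurability verification. The only points requiring care are bookkeeping ones: matching the highest needed coordinate $\run[n+1]$ to the level $\F{G}_n^\lp = \F{F}_{n+1}^\lp$ of the \emph{shifted} filtration (this is precisely why the shift was introduced in \cref{def:filtration}), correctly absorbing the nonterminating case $\termtime{\guard} = \omega$ into the branch $\termtime{\guard} > n$, and appealing to countability of $\States$ to obtain measurability of $f$ and $I$ for free. An alternative, perhaps even more direct route is to observe that for measurable $B$ the set $(X_n^{f,I})^{-1}(B)$ is the countable union of those length--$(n{+}2)$ cylinders $Cyl(\pi)$ on which the (constant) value of $X_n^{f,I}$ lies in $B$, which belongs to $\F{G}_n^\lp$ by definition; this bypasses the decomposition at the cost of a slightly more hands--on argument.
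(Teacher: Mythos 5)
Your proof is correct, but your primary argument takes a genuinely different route from the paper's. The paper argues directly at the level of generators: since $X_n^{f,I}(\run)$ depends only on $\run[0],\dots,\run[n+1]$, it is constant on every cylinder $Cyl(\pi)$ with $\abs{\pi}=n+2$; because $\States$ is countable, these cylinders form a countable partition of $\Omega^\lp$ whose unions all lie in $\F{G}_n^\lp$, so $\bigl(X_n^{f,I}\bigr)^{-1}(B)$ is a countable union of such cylinders and hence lies in $\F{G}_n^\lp$. This is precisely the ``more hands--on'' alternative you sketch in your closing paragraph, so you had the paper's proof in reach but chose the structured route instead: the decomposition $X_n^{f,I} = \sum_{j=0}^{n} \iverson{\termtime{\guard}=j}\cdot f(\run[j]) + \iverson{\termtime{\guard}>n}\cdot I(\run[n+1])$, assembled from the stopping--time property of $\termtime{\guard}$ with respect to the shifted filtration, measurability of the coordinate projections, and closure of measurability under finite sums and products. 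Your version is sound --- the case $\termtime{\guard}=\omega$ is correctly absorbed into the branch $\termtime{\guard}>n$, and the products are unproblematic under the standard convention $0\cdot\infty=0$, which you implicitly need when $f$ or $I$ takes the value $\infty$ on a coordinate whose indicator vanishes. What the paper's argument buys is brevity: countability of $\States$ is exploited once and the proof is two sentences. What yours buys is modularity: it isolates exactly where the shift $\F{G}_n^\lp=\F{F}_{n+1}^\lp$ and the stopping--time property enter, and it adapts more readily to settings where ``constant on countably many cylinders'' is unavailable (e.g., a non--discrete state space), provided $f$, $I$, and the projections are assumed measurable; note that in the present discrete setting your route, too, leans on countability of $\States$, but only to obtain those measurability facts for free.
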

\noindent
The loop space, the loop measure, and the induced stochastic process~$\mathbf{X}^{f,I}$ are not defined by induction on the number of steps performed in the program.
The loop space, for instance, contains \emph{all} infinite sequences of states, whether they are admissible by the loop or not.
The loop measure filters out the inadmissible runs and gives them probability 0.

Reasoning by invariants and characteristic functions, on the other hand, \emph{is} inductive.
We will thus relate iterating a characteristic function on $I$ to the stochastic process~$\mathbf{X}^{f,I}$.
For this, let $\charwp{\guard}{C}{f}$ again be the characteristic function of $\WHILEDO{\guard}{C}$ with respect to $f$, i.e.,
\begin{align*}
	\charwp{\guard}{C}{f}(X) \eeq \iverson{\neg \guard} \cdot f \pplus \iverson{\guard} \cdot \wp{C}{X}~.
\end{align*}
We now develop a first connection between the stochastic process $\mathbf{X}^{f,I}$ and $\charwp{\guard}{C}{f}$, which involves the notion of conditional expected values with respect to a $\sigma$--field, for which we provide some preliminaries here.
In general, for $M\subset \Omega$, by slight abuse of notation, the \emph{Iverson bracket} $\indicator{M}:\Omega \to \PosRealsInf$ maps $\run \in \Omega$ to $1$ if $\run \in M$ and to $0$ otherwise.
$\indicator{M}$ is $\F{F}$--measurable iff $M\in \F{F}$.
If $X$ is a random variable on $(\Omega, \F{F}, \mathbb{P})$ and $\F{G}\subset \F{F}$ is a $\sigma$--field with respect to $\Omega$, then the \emph{conditional expected value} $\expec{}{X \mid \F{G}} \colon \Omega \to \PosRealsInf$ is a $\F{G}$--measurable \emph{mapping}
such that for every $G \in \F{G}$ the equality $\expec{}{X\cdot \indicator{G}}=\expec{}{\expec{}{X \mid \F{G}}\cdot \indicator{G}}$ holds, i.e., restricted to the set $G$ the conditional expected value $\expec{}{X \mid \F{G}}$ and $X$ have the same expected value.
Hence, $\expec{}{X \mid \F{G}}$ is a random variable that is like $X$, but for elements that are indistinguishable in the subfield $\F{G}$, i.e., they either are both contained or none of them is contained in a $\F{G}$--measurable set, it ``distributes the value of $X$ equally''.%
\begin{restatable}[Relating $\boldsymbol{\mathbf{X}^{f,I}}$ and $\boldsymbol{\charwp{\guard}{C}{f}}$]{theorem}{thmcondexpec}
	\label{thm:cond_expec}
	For any $n \in \Nats$ and any $\State \in \States$, we have
	\begin{align*}
		\expec{\State}{X_{n+1}^{f,I} \Big| \F{G}_n^\lp} \eeq X_n^{f, \charwp{\guard}{C}{f}(I)}~.
	\end{align*}
\end{restatable}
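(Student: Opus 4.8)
The plan is to show that the $\F{G}_n^\lp$--measurable random variable $X_n^{f, \charwp{\guard}{C}{f}(I)}$ is a version of the conditional expected value $\expec{\State}{X_{n+1}^{f,I} \mid \F{G}_n^\lp}$ by checking the two properties that characterize it. First, $\F{G}_n^\lp$--measurability is immediate from \cref{lemma:measurability}, applied with postexpectation $f$ and invariant $\charwp{\guard}{C}{f}(I) \in \E$: the process $X_n^{f, \charwp{\guard}{C}{f}(I)}$ inspects at most the first $n+2$ states of a run and is therefore $\F{G}_n^\lp$--measurable. Second, I must verify the averaging identity $\expec{\State}{X_{n+1}^{f,I}\cdot\indicator{G}} = \expec{\State}{X_n^{f, \charwp{\guard}{C}{f}(I)}\cdot\indicator{G}}$ for every $G \in \F{G}_n^\lp$.

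Since $\States$ is countable, the cylinders $Cyl(\pi)$ with $|\pi| = n+2$ form a \emph{countable partition} of $\Omega^\lp$ that generates $\F{G}_n^\lp$; hence every $G \in \F{G}_n^\lp$ is a countable disjoint union of such cylinders. By $\sigma$--additivity it therefore suffices to establish
\begin{align*}
  \int_{Cyl(\pi)} X_{n+1}^{f,I} \, d\left({}^{\State}\mathbb{P}\right) \eeq \int_{Cyl(\pi)} X_n^{f, \charwp{\guard}{C}{f}(I)} \, d\left({}^{\State}\mathbb{P}\right)
\end{align*}
for each individual prefix $\pi = \State_0 \cdots \State_{n+1}$ of length $n+2$, and then to sum over the cylinders comprising $G$.

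I would split into cases according to whether $\pi$ already hits a terminal state. If some $\State_j \in \States_{\neg\guard}$ with $j \leq n+1$ (take $j$ least), then every $\run \in Cyl(\pi)$ has $\termtime{\guard}(\run) = j$, so unfolding \cref{def:induced-stochastic-process} shows that $X_{n+1}^{f,I}$ equals the constant $f(\State_j)$ on $Cyl(\pi)$; the same holds for $X_n^{f, \charwp{\guard}{C}{f}(I)}$, using $\iverson{\neg\guard}(\State_j) = 1$ in $\charwp{\guard}{C}{f}(I)(\State_j)$ when $j = n+1$. The two integrands then agree pointwise and the identity is trivial. The only substantial case is $\State_0, \ldots, \State_{n+1} \in \States_{\guard}$, where $\termtime{\guard}(\run) > n+1$ for all $\run \in Cyl(\pi)$: the right--hand integrand is the constant $\charwp{\guard}{C}{f}(I)(\State_{n+1}) = \wp{C}{I}(\State_{n+1})$, while the left--hand integrand is $I(\run[n+2])$, genuinely depending on the next state. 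Decomposing $Cyl(\pi) = \biguplus_{\primed{\State}\in\States} Cyl(\pi\primed{\State})$ over this next state and using the recursive definition of ${}^{\bullet}p$ together with $\State_{n+1} \in \States_{\guard}$ gives
\begin{align*}
  \int_{Cyl(\pi)} X_{n+1}^{f,I} \, d\left({}^{\State}\mathbb{P}\right) \eeq \sum_{\primed{\State}\in\States} I(\primed{\State})\cdot {}^{\State}p(\pi\primed{\State}) \eeq {}^{\State}p(\pi) \cdot \sum_{\primed{\State}\in\States} {}^{\State_{n+1}}\mu_{C}(\primed{\State})\cdot I(\primed{\State})~.
\end{align*}
The remaining sum is $\int_\States I \, d\left({}^{\State_{n+1}}\mu_{C}\right) = \wp{C}{I}(\State_{n+1})$ by the semantic characterization of the weakest preexpectation of the (almost--surely terminating) loop body $C$, and since ${}^{\State}p(\pi) = \IP{\State}{Cyl(\pi)}$ by the loop measure lemma, the result matches the right--hand integral.

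I expect the main obstacle to be this last identification: bridging the \emph{operational} one--step body transition probabilities ${}^{\State_{n+1}}\mu_C$ with the \emph{denotational} transformer $\wp{C}{I}$, and keeping the index bookkeeping straight — in particular, that $X_{n+1}^{f,I}$ peeks one state beyond what its $\F{G}_n^\lp$--measurability alone would suggest, which is exactly why the filtration is shifted by one and why the interesting contribution is the averaging over $\run[n+2]$.
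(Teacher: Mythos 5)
Your proposal is correct and follows essentially the same route as the paper's own proof: measurability via \cref{lemma:measurability}, reduction of the averaging identity to the cylinder sets of length $n+2$ generating $\F{G}_n^\lp$, a case split on whether the prefix already contains a state in $\States_{\neg\guard}$ (where both integrands agree pointwise, since $\Phi_f(I)$ coincides with $f$ on terminal states), and in the all--guard case the decomposition $Cyl(\pi) = \biguplus_{\primed{\State}} Cyl(\pi\primed{\State})$ together with the recursive definition of ${}^{\bullet}p$ and the identification $\sum_{\primed{\State}} {}^{\State_{n+1}}\mu_C(\primed{\State}) \cdot I(\primed{\State}) = \wp{C}{I}(\State_{n+1})$. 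Your merged handling of the terminal cases (least terminal index $j \leq n+1$) and your explicit computation in the guard case even subsume the paper's separate null--set observation, so no gap remains.
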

\noindent
Note that both sides in \cref{thm:cond_expec} are mappings of type $\Omega^\lp \to \PosRealsInf$.
\noindent
Intuitively, \cref{thm:cond_expec} expresses the following: Consider some cylinder $Cyl(\pi)\in \F{G}_n^\lp$, i.e., $\pi=\State_0\cdots\State_{n+1}\in \States^{n+2}$ is a sequence of states of length $n+2$.
Then, $X_n^{f, \charwp{\guard}{C}{f}(I)}$ and $X_{n+1}^{f,I}$ have the same expected value under ${}^{s}\mathbb{P}$ on the cylinder set $Cyl(\pi)$ independent of the initial state $\State$ of the loop.%

Using \cref{thm:cond_expec}, one can now explain in which way iterating $\charwp{\guard}{C}{f}$ on $I$ represents an expected value, thus revealing the inductive structure inside the induced stochastic process:%
\begin{restatable}[Relating Expected Values of $\boldsymbol{\mathbf{X}^{f,I}}$ and Iterations of $\boldsymbol{\charwp{\guard}{C}{f}}$]{corollary}{corconnectionexpectation}
	\label{cor:connection_expectation}
	For any $n \in \Nats$ and any $\State \in \States$, we have
	\begin{align*}
		\expec{\State}{X_n^{f,I}} \eeq \charwpn{\guard}{C}{f}{n+1}(I)(\State)~.
	\end{align*}
\end{restatable}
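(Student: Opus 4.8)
The plan is to prove the identity by induction on $n$, keeping $I$ (and $\State$) universally quantified, and using \cref{thm:cond_expec} together with the tower property of conditional expectation. The tower property needed here is just the defining equation of $\expec{\State}{X_{n+1}^{f,I} \mid \F{G}_n^\lp}$ instantiated at $G = \Omega^\lp$ (which lies in every $\sigma$--field and whose indicator is $1$ everywhere), yielding $\expec{\State}{X_{n+1}^{f,I}} = \expec{\State}{\expec{\State}{X_{n+1}^{f,I} \mid \F{G}_n^\lp}}$. Everything then reduces to one base computation plus a mechanical unfolding.

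For the base case $n = 0$, I would compute $\expec{\State}{X_0^{f,I}}$ directly from \cref{def:induced-stochastic-process} and the loop measure. Since ${}^\State\mathbb{P}$ is supported on runs $\run$ with $\run[0] = \State$, there are two cases. If $\State \in \States_{\neg\guard}$, then $\termtime{\guard}(\run) = 0$ almost surely, so $X_0^{f,I}(\run) = f(\State)$ and $\expec{\State}{X_0^{f,I}} = f(\State)$, which equals $\Phi_f(I)(\State)$ because $\iverson{\neg\guard}(\State) = 1$ and $\iverson{\guard}(\State) = 0$. If instead $\State \models \guard$, then $X_0^{f,I}(\run) = I(\run[1])$, and $\run[1]$ is distributed according to the one--step body distribution ${}^\State\mu_C$; hence $\expec{\State}{X_0^{f,I}} = \int_\States I \, d({}^\State\mu_C) = \wp{C}{I}(\State)$, which again equals $\Phi_f(I)(\State)$ since here $\iverson{\guard}(\State) = 1$. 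Thus $\expec{\State}{X_0^{f,I}} = \Phi_f(I)(\State) = \Phi_f^{1}(I)(\State)$ in both cases, matching the claim at $n=0$.

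For the inductive step, assuming the claim holds at $n$ for \emph{every} expectation, I would chain the tower property with \cref{thm:cond_expec}: $\expec{\State}{X_{n+1}^{f,I}} = \expec{\State}{\expec{\State}{X_{n+1}^{f,I} \mid \F{G}_n^\lp}} = \expec{\State}{X_n^{f, \Phi_f(I)}}$. Applying the induction hypothesis with the shifted expectation $\Phi_f(I)$ substituted for $I$ then gives $\expec{\State}{X_n^{f, \Phi_f(I)}} = \Phi_f^{n+1}(\Phi_f(I))(\State) = \Phi_f^{n+2}(I)(\State)$, which is exactly the claim at $n+1$, completing the induction.

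I expect the \textbf{main obstacle} to be the base case rather than the inductive step: one must match the operational one--step body distribution ${}^\State\mu_C$ against the denotational semantics $\wp{C}{\cdot}$, i.e.\ justify $\int_\States I \, d({}^\State\mu_C) = \wp{C}{I}(\State)$, and treat the terminal case $\State \in \States_{\neg\guard}$ carefully. The other delicate point is structural: the induction hypothesis must be stated for \emph{all} $I$ simultaneously, so that it can be re--applied to the expectation $\Phi_f(I)$ appearing on the right--hand side of \cref{thm:cond_expec}. Once this universal quantification is in place, the inductive step is a one--line computation.
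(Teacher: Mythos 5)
Your proposal is correct and follows essentially the same route as the paper's own proof: induction on $n$ with the induction hypothesis quantified over all expectations $I$, a base case computed directly by matching the one--step distribution $\phantom{}^{\State}\mu_{C}$ against $\wp{C}{\cdot}$, and an inductive step that chains the tower property $\expec{\State}{X_{n+1}^{f,I}} = \expec{\State}{\expec{\State}{X_{n+1}^{f,I} \mid \F{G}_n^\lp}}$ with \cref{thm:cond_expec} and then applies the hypothesis to $\Phi_f(I)$. The two delicate points you flag (the operational--denotational identification in the base case and the universal quantification of the induction hypothesis) are precisely the points the paper's proof relies on as well.
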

\noindent
Intuitively, $\charwpn{\guard}{C}{f}{n+1}$ represents allowing for at most $n+1$ evaluations of the loop guard.
For any state $\State \in \States$, the number $\charwpn{\guard}{C}{f}{n+1}(I)(s)$ is composed of
\begin{itemize}
	\item[(a)] $f$'s average value on the final states of those runs starting in $s$ that terminate within $n + 1$ guard evaluations, and
	\item[(b)] $I$'s average value on the $(n + 2)$--nd states of those runs starting in $s$ that do \emph{not} terminate within $n + 1$ guard evaluations.
\end{itemize}
We now want to take $n$ to the limit by considering all possible numbers of iterations of the loop body.
We will see that this corresponds to evaluating the stochastic process $\mathbf{X}^{f,I}$ at the time when our loop terminates, i.e., the looping time~$\termtime{\guard}$:%
\begin{definition}[Canonical Stopped Process]
	\label{def:stopped_process}
	The mapping
	\begin{align*}
		X_{\termtime{\guard}}^{f,I}\colon \Omega^\lp \to \PosRealsInf, \run \mapsto
		\begin{cases}
			X^{f,I}_{\termtime{\guard}(\run)}(\run) = f(\run[\termtime{\guard}(\run)]), & \textnormal{if }\termtime{\guard}(\run) <\infty \\
			0,                                                                          & \textnormal{otherwise}
		\end{cases}
	\end{align*}
	is the \emph{stopped process}, corresponding to $\mathbf{X}^{f,I}$ stopped at stopping time $\termtime{\guard}$.
	As this mapping is independent of $I$, we write \mbox{$X_{\termtime{\guard}}^{f}$ instead of $X_{\termtime{\guard}}^{f,I}$}.
\end{definition}
\noindent
The stopped process now corresponds exactly to the quantity we want to reason about --- the value of $f$ evaluated in the final state after termination of our loop.
For nonterminating runs we get $0$, as there exists no state in which to evaluate $f$.

We now show that the limit of the induced stochastic process $\mathbf{X}^{f,I}$ corresponds to the stopped process $X_{\termtime{\guard}}^{f}$.
For the following lemma, note that a statement over runs $\alpha$ holds \emph{almost--surely} in the probability space $(\Omega^{\lp}, \F{F}^{\lp}, {}^{\State}\mathbb{P})$, if \mbox{$\IP{\State}{\setcomp{\run \in \Omega}{ \run \text{ satisfies } \alpha}} =1$}, i.e., the set of all elements of the sample space satisfying $\alpha$ has probability $1$.%
\begin{restatable}[Convergence of $\boldsymbol{\mathbf{X}^{f,I}}$ to $\boldsymbol{X_{\termtime{\guard}}^f}$]{lemma}{lemmaaslimit}
	\label{lemma:as_limit}
	The stochastic process $\mathbf{X}^{f,I}\cdot \indicator{(\termtime{\guard})^{-1}(\Nats)}$ converges point--wise to $X_{\termtime{\guard}}^f$, i.e., for all $\run \in \Omega^\lp$,
	\begin{align*}
		\lim\limits_{n \to \omega}~ X_n^{f,I} (\run) \cdot \indicator{(\termtime{\guard})^{-1}(\Nats)}(\run) \eeq X_{\termtime{\guard}}^f(\run)~.
	\end{align*}
	So if $\WHILEDO{\guard}{C}$ is universally almost--surely terminating, then $\mathbf{X}^{f,I}$ converges to $X_{\termtime{\guard}}^f$ almost--surely with respect to the measure $\phantom{}^{\State}\mathbb{P}$ \mbox{for any $\State \in \States$}.
\end{restatable}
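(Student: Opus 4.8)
The plan is to establish the point-wise identity directly by a case distinction on the looping time $\termtime{\guard}(\run)$ of an arbitrary fixed run $\run \in \Omega^\lp$, and then to upgrade it to the almost-sure statement using the termination hypothesis. No limit of $I$-values is actually involved on terminating runs, which is the key observation that makes the argument elementary.

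First I would fix $\run$ and distinguish the two cases coming from \cref{def:looping-time}. If $\termtime{\guard}(\run) = m$ is finite, then $\indicator{(\termtime{\guard})^{-1}(\Nats)}(\run) = 1$, and for every $n \geq m$ the defining case distinction of the induced process (\cref{def:induced-stochastic-process}) selects its first branch, so that $X_n^{f,I}(\run) = f(\run[m])$. Hence the sequence $\bigl(X_n^{f,I}(\run)\bigr)_{n}$ is eventually constant with value $f(\run[m])$, and the limit of $X_n^{f,I}(\run) \cdot \indicator{(\termtime{\guard})^{-1}(\Nats)}(\run)$ equals $f(\run[\termtime{\guard}(\run)])$. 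By \cref{def:stopped_process} this is precisely $X_{\termtime{\guard}}^f(\run)$, since $\termtime{\guard}(\run) < \infty$. If instead $\termtime{\guard}(\run) = \omega$, then $\indicator{(\termtime{\guard})^{-1}(\Nats)}(\run) = 0$, so every term of the product vanishes and the limit is $0$; and again by \cref{def:stopped_process} we have $X_{\termtime{\guard}}^f(\run) = 0$ because the looping time is infinite. This establishes the first displayed equation for every $\run \in \Omega^\lp$.

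For the almost-sure statement I would observe that universal almost-sure termination of $\WHILEDO{\guard}{C}$ means exactly that the event $(\termtime{\guard})^{-1}(\Nats)$ of runs with finite looping time satisfies $\IP{\State}{(\termtime{\guard})^{-1}(\Nats)} = 1$ for every $\State \in \States$. On this set of full measure the indicator is identically $1$, so $\mathbf{X}^{f,I}$ and $\mathbf{X}^{f,I} \cdot \indicator{(\termtime{\guard})^{-1}(\Nats)}$ coincide, and the point-wise convergence just shown yields convergence of $\mathbf{X}^{f,I}$ itself to $X_{\termtime{\guard}}^f$ almost surely with respect to $\IP{\State}{\cdot}$.

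I do not anticipate a genuine obstacle: the proof is essentially bookkeeping over the branches of \cref{def:looping-time,def:induced-stochastic-process,def:stopped_process}. The only point requiring care is to verify that once $n$ reaches the finite looping time the induced process has already switched to the terminal value $f(\run[\termtime{\guard}(\run)])$ and remains there, so that the value of $I$ plays no role on terminating runs and the eventual-constancy argument is legitimate.
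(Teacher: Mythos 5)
Your proof is correct and follows essentially the same route as the paper: the paper expresses your eventual-constancy observation via an index shift ($X_{n+\termtime{\guard}(\run)}^{f,I}(\run) = f(\run[\termtime{\guard}(\run)])$ for terminating runs) and then performs the same case distinction on whether $\termtime{\guard}(\run)$ is finite, with the indicator killing the nonterminating case. The almost-sure upgrade under AST is argued identically in both proofs, so there is nothing to add.
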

\noindent
Intuitively, the factor $\indicator{(\termtime{\guard})^{-1}(\Nats)}(\run)$ selects those runs $\run$ where the looping time $\termtime{\guard}$ is finite.
If the loop is AST, then this factor can be neglected, because then $\indicator{(\termtime{\guard})^{-1}(\Nats)}$ \emph{is} the constant function $1$ for the probability measures $\phantom{}^{\State}\mathbb{P}$.
In any case, (i.e., whether the looping time is almost--surely finite or not) the expected value of the \emph{stopped process} captures precisely the weakest preexpectation of our loop with respect to the postexpectation $f$, since only the terminating runs are taken into account by $X_{\termtime{\guard}}^f$ and by $\lfp \charwp{\guard}{C}{f}$ when computing the expected value of $f$ after termination of the loop.
So from \cref{cor:connection_expectation} and \cref{lemma:as_limit}
we get our first main result:$\!\!$%
\begin{restatable}[Weakest Preexpectation is Expected Value of Stopped Process]{theorem}{thmlfpexpectation}
	\label{thm:lfp_expectation}
	\begin{align*}
		\wp{\WHILEDO{\phi}{C}}{f} \eeq \lfp \Phi_{f} \eeq \lambda \State\mydot~ \expec{\State}{X_{\termtime{\guard}}^f}.
	\end{align*}
\end{restatable}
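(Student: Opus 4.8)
The plan is to prove the two equalities separately. The first, $\wp{\WHILEDO{\phi}{C}}{f} = \lfp \Phi_f$, holds by definition: \cref{defQQQwp} (\cref{table:wp}) defines the weakest preexpectation of a loop as the least fixed point of its characteristic function, so there is nothing to show. The remaining work is to establish $\lfp \Phi_f = \lambda \State\mydot \expec{\State}{X_{\termtime{\guard}}^f}$, and the key idea is to instantiate the induced stochastic process with the invariant $I = 0$ (the least element $\bot$ of the complete lattice $(\E,\, \preceq)$) and to read off $\lfp \Phi_f$ as the limit of the expected values $\expec{\State}{X_n^{f,0}}$.

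Concretely, I would first recall that, since $0 = \bot$, the Kleene Fixed Point Theorem --- the special case of \cref{thm:tarski-katorovich} obtained from the trivial subinvariant $0 \preceq \Phi_f(0)$ --- yields $\lfp \Phi_f = \lim_{n \to \omega} \Phi_f^{n+1}(0)$. By \cref{cor:connection_expectation} each iterate is already an expected value of the induced process, namely $\Phi_f^{n+1}(0)(\State) = \expec{\State}{X_n^{f,0}}$. Hence $\lfp \Phi_f(\State) = \lim_{n \to \omega} \expec{\State}{X_n^{f,0}}$, and it only remains to pull the limit inside the expectation and to identify the pointwise limit of $X_n^{f,0}$ with the stopped process.

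Next I would analyse $X_n^{f,0}$ pointwise. For a run $\run$ with finite looping time $T = \termtime{\guard}(\run)$, \cref{def:induced-stochastic-process} gives $X_n^{f,0}(\run) = 0$ for $n < T$ and $X_n^{f,0}(\run) = f(\run[T])$ for $n \geq T$; for a nonterminating run it gives $X_n^{f,0}(\run) = 0$ for all $n$. Two things follow. First, because $f \succeq 0$, the sequence $(X_n^{f,0})_{n \in \Nats}$ is pointwise nondecreasing. Second, with $I = 0$ the factor $\indicator{(\termtime{\guard})^{-1}(\Nats)}$ in \cref{lemma:as_limit} is redundant --- it only kills nonterminating runs, on which $X_n^{f,0}$ already vanishes --- so \cref{lemma:as_limit} simplifies to genuine pointwise convergence $\lim_{n \to \omega} X_n^{f,0}(\run) = X_{\termtime{\guard}}^f(\run)$ for every $\run \in \Omega^\lp$.

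Finally, I would justify the interchange of limit and integral, which I expect to be the only nontrivial step. Since $(X_n^{f,0})_n$ is a pointwise nondecreasing sequence of nonnegative measurable random variables (adaptedness by \cref{lemma:measurability}), the Monotone Convergence Theorem applies under each $\phantom{}^{\State}\mathbb{P}$ and gives $\lim_{n \to \omega} \expec{\State}{X_n^{f,0}} = \expec{\State}{\lim_{n \to \omega} X_n^{f,0}} = \expec{\State}{X_{\termtime{\guard}}^f}$. Chaining this with the identity from the second paragraph yields $\lfp \Phi_f(\State) = \expec{\State}{X_{\termtime{\guard}}^f}$ for every $\State$, which together with the definitional first equality completes the proof. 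The main obstacle is ensuring the hypotheses of the Monotone Convergence Theorem --- in particular the pointwise monotonicity --- actually hold; this is precisely where the choice $I = 0$ (rather than an arbitrary invariant) pays off, since for general $I$ neither monotonicity nor an integrable dominating function is available, and one would instead be forced to reason about the $I$-dependent Tarski--Kantorovich limit rather than $\lfp \Phi_f$ itself.
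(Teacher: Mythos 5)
Your proposal is correct and follows essentially the same route as the paper's own proof: instantiate the induced process with $I = 0$, identify the iterates $\Phi_f^{n+1}(0)$ with the expected values $\expec{\State}{X_n^{f,0}}$ via \cref{cor:connection_expectation}, use \cref{lemma:as_limit} together with the observation that $X_n^{f,0}$ vanishes on nonterminating runs, and conclude by the Monotone Convergence Theorem and Kleene's fixed point theorem. The only cosmetic difference is that the paper carries the factor $\indicator{(\termtime{\guard})^{-1}(\Nats)}$ through the computation and cancels it at the level of expectations, whereas you cancel it pointwise up front; the underlying observation is identical.
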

\noindent
\cref{thm:lfp_expectation} captures our sought--after least fixed point as an expected value of a canonical stopped process.
This is what will allow us to later apply the Optional Stopping Theorem.
Moreover, it is crucial for deriving our generalization of an existing rule for lower bounds (cf.\ \cref{sec:mciver_morgan}) and the connection of upper bounds to the Lemma of Fatou (cf.\ \cref{sec:fatou}).

\subsection{Uniform Integrability}
\label{sec:Uniform Integrability}

\noindent
As we will see in \cref{sec:optional-stopping-wp}, \emph{uniform integrability}
of a certain stochastic process is the central aspect of the Optional Stopping Theorem (\cref{thm:optional_stopping}).
In probability theory, uniform integrability means that taking the expected value and taking the limit of a stochastic process commutes.%
\begin{definition}[\protect{Uniform Integrability of Stochastic Processes, \textnormal{\cite[Lemma~7.10.(3)]{grimmett2001probability}}}]
	\label{def:uniform_integrability_stochastic_processes}
	Let $\mathbf{X}=(X_n)_{n \in \Nats}$ be a stochastic process on a probability space $(\Omega,\F{F},\mathbb{P})$ with almost--surely existing limit $\lim_{n \to \omega} X_n$.
	The process $\mathbf{X}$ is \emph{uniformly integrable} if
	\[
		\expec{}{\lim\limits_{n \to \omega} X_n} \eeq \lim\limits_{n \to \omega} \expec{}{X_n}~.
	\]
\end{definition}
\begin{counterexample}[\textnormal{\protect{\cite[Sect.~7.10]{grimmett2001probability}}}]
	Consider the stochastic process $\mathbf{X}=(X_n)_{n \in \Nats}$ on a probability space $(\Omega,\F{F},\mathbb{P})$ with $\IP{}{Y_n=n}=\tfrac{1}{n}=1-\IP{}{Y_n=0}$.
	Then $\expec{}{X_n}=1$.
	Moreover, $\mathbf{X}$ converges almost surely to $Y\equiv 0$, i.e., the constant function $0$.
	So, $\expec{}{Y}=0$.
	But
	\[
		\lim\limits_{n \to \omega} \expec{}{X_n}=\lim\limits_{n \to \omega} 1 = 1 \neq 0 = \expec{}{0},
	\]
	so $\mathbf{X}$ is not u.i.
\end{counterexample}
\noindent
Note that our notion of uniform integrability of expectations from \cref{def:uniform_integrability_invariants} coincides with uniform integrability of the corresponding induced stochastic process.%
\begin{restatable}[Uniform Integrability of Expectations and Stochastic Processes]{corollary}{corouniformintegrability}
	\label{coro:uniform_integrability}
	Let the loop $\WHILEDO{\guard}{C}$ be AST.\footnote{It suffices that $\IP{\State}{\termtime{\phi}<\infty} = 1$ for any $\State$.
		But this is equivalent to AST as we required the body of the loop to be AST.}
	Then $I$ is uniformly integrable for $f$ (in the sense of \textnormal{\cref{def:uniform_integrability_invariants}}) iff the induced stochastic process $\mathbf{X}^{f,I}$ is uniformly integrable (in the sense of \textnormal{\cref{def:uniform_integrability_stochastic_processes}}), i.e.,
	\[
		\lim_{n \to \omega} \charwpn{\guard}{C}{f}n(I) \eeq \lfp \charwp{\guard}{C}{f}
		\qqiff \forall \State \in \States\colon \quad \expec{\State}{\lim\limits_{n \to \omega} X_n^{f,I}} \eeq \lim\limits_{n \to \omega} \expec{\State}{X_n^{f,I}}~.
	\]
\end{restatable}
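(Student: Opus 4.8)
The plan is to prove the equivalence by reducing both sides of the ``iff'' to the single equation $\lim_{n \to \omega} \Phi_f^n(I) = \lfp \Phi_f$, using only the three results already established in this section. First I would note that, since $\WHILEDO{\guard}{C}$ is AST, \cref{lemma:as_limit} guarantees that the almost--sure limit $\lim_{n \to \omega} X_n^{f,I}$ exists and equals the stopped process $X_{\termtime{\guard}}^f$. This is precisely the standing hypothesis required for \cref{def:uniform_integrability_stochastic_processes} to be applicable to $\mathbf{X}^{f,I}$, so the ``process side'' is well posed. Using \cref{thm:lfp_expectation}, the expected value of this limit simplifies, for every $\State \in \States$, to
\[
	\expec{\State}{\lim_{n \to \omega} X_n^{f,I}} \eeq \expec{\State}{X_{\termtime{\guard}}^f} \eeq \lfp \Phi_f(\State)~,
\]
where I use that modifying $\lim_n X_n^{f,I}$ on the ${}^{\State}\mathbb{P}$--null set of nonterminating runs leaves the integral unchanged.

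Next I would rewrite the right--hand side of the commutation equation by means of \cref{cor:connection_expectation}, which yields $\expec{\State}{X_n^{f,I}} = \Phi_f^{n+1}(I)(\State)$ for all $n$ and $\State$. Passing to the limit shows that $\lim_{n \to \omega} \expec{\State}{X_n^{f,I}}$ exists if and only if the pointwise limit $\lim_{n \to \omega} \Phi_f^n(I)$ exists (the index shift being irrelevant in the limit), and in that case they coincide at $\State$. Combining this with the previous display, the commutation condition of \cref{def:uniform_integrability_stochastic_processes}, namely $\expec{\State}{\lim_n X_n^{f,I}} = \lim_n \expec{\State}{X_n^{f,I}}$ for all $\State$, is equivalent to
\[
	\lfp \Phi_f \eeq \lim_{n \to \omega} \Phi_f^n(I)~,
\]
which is exactly the defining condition for $I$ being uniformly integrable for $f$ in the sense of \cref{def:uniform_integrability_invariants}. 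Since every step is an equivalence, both directions of the ``iff'' are obtained at once.

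The only point requiring care is the treatment of the \emph{existence} of the two limits, not merely their values. Both notions of uniform integrability implicitly presuppose that a limit exists --- the almost--sure limit of $\mathbf{X}^{f,I}$ on the process side, and the pointwise limit of $\Phi_f^n(I)$ on the expectation side. I would therefore make explicit that AST supplies the former via \cref{lemma:as_limit}, while \cref{cor:connection_expectation} reduces the existence of $\lim_n \expec{\State}{X_n^{f,I}}$ to convergence of $\lim_n \Phi_f^n(I)$, so that the two existence assumptions match up. Beyond this bookkeeping I expect no genuine obstacle: the corollary is essentially the composition of \cref{lemma:as_limit}, \cref{thm:lfp_expectation}, and \cref{cor:connection_expectation}.
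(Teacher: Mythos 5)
Your proposal is correct and follows essentially the same route as the paper's own proof: both reduce the process--side commutation condition to $\lim_{n\to\omega}\Phi_f^n(I) = \lfp \Phi_f$ by chaining \cref{lemma:as_limit}, \cref{thm:lfp_expectation}, and \cref{cor:connection_expectation} as a sequence of equivalences. Your explicit bookkeeping about matching up the \emph{existence} of the two limits is a point the paper's proof treats only implicitly (by folding existence into the word ``converges''), but it is the same argument.
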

\noindent
\cref{coro:uniform_integrability} justifies the naming in \cref{def:uniform_integrability_invariants}: an expectation $I$ is uniformly integrable for $f$ iff its induced process $\mathbf{X}^{f,I}$ is uniformly integrable.
So, we can now extend the diagram from \cref{sec:series_perspective_on_loops} as follows:

\begin{center}
	\begin{adjustbox}{max width=\linewidth}
		\begin{tikzcd}[column sep = huge, row sep=large]
			\mathbf{X}^{f,I}~\textnormal{u.i.} \quad \arrow[d,Leftrightarrow, "\textnormal{~~\cref{coro:uniform_integrability}}"] \arrow[r, Leftrightarrow, "\textnormal{{\cref{lemma:as_limit,def:uniform_integrability_stochastic_processes}}}" shift={(0,0.2)}]
			&\quad\expec{\bullet}{X_n^{f,I}} \xrightarrow{~n \to \omega~} \expec{\bullet}{X_{\termtime{\guard}}^f}
			\arrow[d,Leftrightarrow, "\textnormal{
					\begin{tabular}{l}
						\cref{cor:connection_expectation} \\[-1ex]
						and \cref{thm:lfp_expectation}
					\end{tabular}
				}"] \\
			I~\textnormal{u.i.\ for}~f \quad \arrow[r, Leftrightarrow, "\textnormal{\raisebox{1ex}{\cref{def:uniform_integrability_invariants}}}"] \arrow[d, Leftrightarrow, "\textnormal{
				\begin{tabular}{l}
					\cref{thm:uniform_integrability_least_fixed_point} \\[-1ex]
					and \cref{def:uniform_integrability_invariants}
				\end{tabular}
			}" shift={(0,-0.1)}]
			&\quad\charwpn{\guard}{C}{f}n(I) \xrightarrow{~n \to \omega~} \lfp \charwp{\guard}{C}{f}
			\\
			I \preceq \charwp{\guard}{C}{f}(I) \Rightarrow I \preceq \lfp \charwp{\guard}{C}{f}
		\end{tikzcd}
	\end{adjustbox}
\end{center}
Uniform integrability is very hard to verify in general, both in the realm of stochastic processes as well as in the realm of expectation transformers. Thus, one usually tries to find \emph{sufficient} criteria for uniform integrability that are easier to verify. The very idea of the Optional Stopping Theorem is to provide such sufficient criteria for uniform integrability which then allow deriving a lower bound as we will discuss in the next section. \renewcommand{\charwp}[3]{\charfun{\wpsymbol}{#1}{#2}{#3}}
\renewcommand{\charwpn}[4]{\charfunn{\wpsymbol}{#1}{#2}{#3}{#4}}

\renewcommand{\charwp}[3]{\Phi_{#3}}
\renewcommand{\charwpn}[4]{\Phi_{#3}^{#4}}

\section{The Optional Stopping Theorem of Weakest Preexpectations}
\label{sec:optional-stopping-wp}
\noindent
In this section, we develop an inductive proof rule for lower bounds on preexpectations by using the results of \cref{sec:expectations_processes} and the Optional Stopping Theorem (\cref{thm:optional_stopping}).
The proofs of our results in this section can be found in \techrep{\cref{app:proofs_optional_stopping}}\cameraready{\cite[App.\ D]{arxiv}}.
Recall that we have fixed a loop $\WHILEDO{\guard}{C}$, a finite postexpectation $f$, a corresponding characteristic function $\charwp{\guard}{C}{f}$, and another finite expectation $I$ which plays the role of an invariant.

We first introduce the Optional Stopping Theorem from probability theory.
It builds upon the concept of submartingales.
A \emph{submartingale} is a stochastic process that induces a monotonically increasing sequence of its expected values.
\begin{definition}[Submartingale]
	\label{defQQQsubmartingale_filtration}
	Let $(X_n)_{n\in \Nats}$ be a stochastic process on a probability space $(\Omega,\F{F},\mathbb{P})$ adapted to a filtration $(\F{F}_n)_{n\in \Nats}$ of $\F{F}$, i.e., a sequence of random variables $X_n\colon\Omega \to \PosRealsInf$ such that $X_n$ is $\F{F}_n$--measurable.
	Then $(X_n)_{n \in \Nats}$ is called a \emph{submartingale} with respect to $(\F{F}_n)_{n \in \Nats}$ if
	\begin{align*}
		 & \mathrm{(a)} \qquad \expec{}{X_n}< \infty \text{ for all } n\in \Nats, \text{ and} \\
		 & \mathrm{(b)} \qquad \expec{}{X_{n+1}\mid \F{F}_n} \geq X_n.
	\end{align*}
\end{definition}
\noindent
It turns out that submartingales are closely related to subinvariants.
In fact, $I$ being a \emph{subinvariant} (plus some side conditions) gives us that the stochastic process induced by $I$ is a \emph{submartingale}.
\begin{restatable}[Subinvariant Induces Submartingale]{lemma}{lemsubmartingale}
	\label{lemma:submartingale}
	Let $I$ be a \emph{subinvariant}, i.e., $I \preceq \charwp{\guard}{C}{f}(I)$, such that $\charwpn{\guard}{C}{f}{n}(I) \pprec \infty$ for every $n \in \Nats$, that is, $\charwpn{\guard}{C}{f}{n}(I)$ only takes finite values.
	Then the induced stochastic process $\mathbf{X}^{f,I}$ is a submartingale with respect to $(\F{G}_n^\lp)_{n\in \Nats}$.
\end{restatable}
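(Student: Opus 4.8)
The plan is to verify directly the two defining conditions of a submartingale from \cref{defQQQsubmartingale_filtration}, with respect to each measure $\phantom{}^{\State}\mathbb{P}$, since all the structural work has already been done in the previous section. Adaptedness of $\mathbf{X}^{f,I}$ to $(\F{G}_n^\lp)_{n \in \Nats}$ is exactly \cref{lemma:measurability}, so it remains to establish the integrability condition (a) and the submartingale inequality (b).

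For condition (a), I would invoke \cref{cor:connection_expectation}, which gives $\expec{\State}{X_n^{f,I}} = \charwpn{\guard}{C}{f}{n+1}(I)(\State)$ for every $n \in \Nats$ and every $\State \in \States$. By the hypothesis that $\charwpn{\guard}{C}{f}{n}(I) \pprec \infty$ for every $n$ (applied to the index $n+1$), the right-hand side is finite, so $\expec{\State}{X_n^{f,I}} < \infty$ as required.

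For condition (b), the key is the identity $\expec{\State}{X_{n+1}^{f,I} \mid \F{G}_n^\lp} = X_n^{f,\, \charwp{\guard}{C}{f}(I)}$ from \cref{thm:cond_expec}. It then suffices to prove the pointwise estimate $X_n^{f,\, \charwp{\guard}{C}{f}(I)} \geq X_n^{f,I}$. To this end I would first observe that the map $J \mapsto X_n^{f,J}$ is monotone in its second argument: inspecting \cref{def:induced-stochastic-process}, on runs $\run$ with $\termtime{\guard}(\run) \leq n$ the value $f(\run[\termtime{\guard}(\run)])$ does not depend on the second argument at all, while on all other runs the value equals $J(\run[n+1])$, which is monotone in $J$ under the pointwise order $\preceq$ on $\E$. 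Since $I$ is a subinvariant, $I \ppreceq \charwp{\guard}{C}{f}(I)$ holds everywhere, so monotonicity yields $X_n^{f,I} \leq X_n^{f,\, \charwp{\guard}{C}{f}(I)} = \expec{\State}{X_{n+1}^{f,I} \mid \F{G}_n^\lp}$, which is precisely (b).

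I do not anticipate a serious obstacle, since the heavy lifting is carried out by \cref{thm:cond_expec} and \cref{cor:connection_expectation}. The only two points requiring care are the monotonicity of $J \mapsto X_n^{f,J}$ (immediate from the case distinction in \cref{def:induced-stochastic-process}) and the role of the finiteness hypothesis on the iterates $\charwpn{\guard}{C}{f}{n}(I)$, which is exactly what keeps the expected values finite in condition (a). Because the subinvariant inequality $I \ppreceq \charwp{\guard}{C}{f}(I)$ holds at every state rather than merely almost surely, transferring it through the second-argument substitution introduces no measure-theoretic subtlety.
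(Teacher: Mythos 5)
Your proposal is correct and follows essentially the same route as the paper's own proof: adaptedness via \cref{lemma:measurability}, integrability via \cref{cor:connection_expectation} together with the finiteness hypothesis on the iterates, and the submartingale inequality via \cref{thm:cond_expec} combined with $I \preceq \charwp{\guard}{C}{f}(I)$. The only difference is cosmetic: you spell out explicitly the monotonicity of $J \mapsto X_n^{f,J}$ in its second argument, which the paper's proof uses implicitly when passing from $\charwp{\guard}{C}{f}(I) \succeq I$ to $X_n^{f,\charwp{\guard}{C}{f}(I)} \geq X_n^{f,I}$.
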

\noindent
Given a submartingale $(X_n)_{n \in \Nats}$ and a stopping time $T$, the goal of the Optional Stopping Theorem is to prove a lower bound for the expected value of $X_n$ at the stopping time $T$.
To this end, we define a stochastic process $(X_{n \wedge T})_{n \in \Nats}$ where for any $\run \in \Omega$, $X_{n \wedge T}(\run) = X_n(\run)$ if $n$ is smaller than the stopping time $T(\run)$ and otherwise, $X_{n \wedge T}(\run) = X_{T(\run)}(\run)$.
Hence, $\expec{}{\lim_{n \to \omega}X_{n\wedge T}}$ is the expected value of $X_n$ at the stopping time $T$.
The Optional Stopping Theorem asserts that the first component $X_0$ of the stochastic process $(X_n)_{n \in \Nats}$ is a lower bound for $\expec{}{\lim_{n \to \omega}X_{n\wedge T}}$ provided that $(X_{n \wedge T})_{n \in \Nats}$ is uniformly integrable.
Moreover, the Optional Stopping Theorem provides a collection of criteria that are sufficient for uniform integrability of $(X_{n \wedge T})_{n \in \Nats}$.
\begin{theorem}[Optional Stopping Theorem \textnormal{\protect{\cite[Theorems~12.3.(1), 12.4.(11), 12.5.(1), 12.5.(2), 12.5.(9)]{grimmett2001probability}}}]
	\label{thm:optional_stopping}
	Let $(X_n)_{n\in \Nats}$ be a submartingale and $T$ be a stopping time on a probability space $(\Omega,\F{F},\mathbb{P})$ with respect to a filtration $(\F{F}_n)_{n \in \Nats}$.
	Then 
	$\mathbf{X}_{\wedge T} = (X_{n \wedge T})_{n \in \Nats}$ defined by
	\[
		X_{n \wedge T}: \Omega \to \PosRealsInf, \run \mapsto X_{\min(n,T(\run))}(\run),
	\]
	is also a submartingale w.r.t.~$(\F{F}_n)_{n \in \Nats}$.
	If $\mathbf{X}_{\wedge T}$ converges almost--surely and is uniformly integrable,
	\[
		\expec{}{X_0}=\expec{}{X_{0 \wedge T}}\leq \lim\limits_{n \to \omega}\expec{}{X_{n\wedge T}} = \expec{}{\lim\limits_{n \to \omega}X_{n\wedge T}}.
	\]
	If \emph{one} of the following conditions holds, then $\mathbf{X}_{\wedge T}$ converges almost--surely and is uniformly integrable:
	\begin{enumerate}
		\item[(a)] $T$ is almost--surely bounded, i.e., there is a constant $N\in \Nats$ such that~$\IP{}{T \leq N} = 1$.
		\item[(b)] $\expec{}{T}<\infty$ and there is a constant $c\in \Reals_{\geq 0}$, such that for each $n\in \Nats$
			\[
				\expec{}{\abs{X_{n+1}-X_n}\mid \F{F}_n}\leq c \qquad \text{holds almost--surely~.}
			\]
		\item[(c)] There exists a constant $c\in \Reals_{\geq 0}$ such that $X_{n \wedge T}\leq c$ holds almost--surely for every $n \in \Nats$.
	\end{enumerate}
\end{theorem}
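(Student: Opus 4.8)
The plan is to prove the three assertions in turn, following the classical martingale argument of \cite{grimmett2001probability}, since this is an imported result. The computational backbone is the increment identity for the stopped process,
\begin{align*}
  X_{(n+1) \wedge T} - X_{n \wedge T} \eeq \indicator{T > n} \cdot \bigl( X_{n+1} - X_n \bigr)~,
\end{align*}
which I would verify by distinguishing $T(\run) \leq n$ (both sides are $0$) from $T(\run) > n$ (both sides equal $X_{n+1} - X_n$). Because $T$ is a stopping time, $\{T > n\} = \Omega \setminus \{T \leq n\} \in \F{F}_n$, so the factor $\indicator{T > n}$ is $\F{F}_n$--measurable, which is what makes the whole argument go through.

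For the first assertion I would first establish adaptedness by writing $X_{n \wedge T} = \sum_{k=0}^{n-1} X_k \cdot \indicator{T = k} + X_n \cdot \indicator{T \geq n}$, which is $\F{F}_n$--measurable since each $\{T = k\} \in \F{F}_k \subseteq \F{F}_n$ and each $X_k$ is $\F{F}_k$--measurable; integrability then follows from $0 \leq X_{n \wedge T} \leq \sum_{k=0}^{n} X_k$ together with condition (a) of \cref{defQQQsubmartingale_filtration}. Taking conditional expectations in the increment identity and pulling out the $\F{F}_n$--measurable factor gives
\begin{align*}
  \expec{}{X_{(n+1) \wedge T} \mid \F{F}_n} \eeq X_{n \wedge T} + \indicator{T > n} \cdot \bigl( \expec{}{X_{n+1} \mid \F{F}_n} - X_n \bigr) \ssucceq X_{n \wedge T}~,
\end{align*}
where the inequality uses the submartingale property of $(X_n)_{n \in \Nats}$; hence $\mathbf{X}_{\wedge T}$ is a submartingale.

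The second assertion is then almost immediate. Being a submartingale, $\expec{}{X_{n \wedge T}}$ is nondecreasing in $n$, so its limit exists and dominates $\expec{}{X_{0 \wedge T}} = \expec{}{X_0}$; and the assumed uniform integrability is, by \cref{def:uniform_integrability_stochastic_processes}, precisely the equality $\lim_{n \to \omega} \expec{}{X_{n \wedge T}} = \expec{}{\lim_{n \to \omega} X_{n \wedge T}}$, which closes the chain.

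The real work is the third assertion, where each of (a)--(c) must force both almost--sure convergence and uniform integrability; I expect case (b) to be the main obstacle. Case (a) is easy: if $\IP{}{T \leq N} = 1$ then $X_{n \wedge T} = X_{N \wedge T}$ almost surely for all $n \geq N$, so the process is eventually constant and the two limits agree trivially. For case (b) I would dominate the entire family by one integrable random variable: summing the increment identity yields $\abs{X_{n \wedge T}} \leq Y$ for all $n$, where $Y = X_0 + \sum_{k=0}^{T-1} \abs{X_{k+1} - X_k}$, and the crucial estimate is
\begin{align*}
  \expec{}{Y} \lleq \expec{}{X_0} + \sum_{k=0}^\infty \expec{}{\indicator{T > k} \cdot \expec{}{\abs{X_{k+1} - X_k} \mid \F{F}_k}} \lleq \expec{}{X_0} + c \cdot \sum_{k=0}^\infty \IP{}{T > k}~,
\end{align*}
using the tower property and $\{T > k\} \in \F{F}_k$; since $\sum_{k} \IP{}{T > k} = \expec{}{T} < \infty$, the bound $Y$ is integrable, domination gives uniform integrability, and $\expec{}{T} < \infty$ forces $T < \infty$ almost surely so that $X_{n \wedge T} \to X_T$ almost surely. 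Case (c) is handled by observing that $0 \leq X_{n \wedge T} \leq c$ bounds the submartingale in $L^1$, whence Doob's submartingale convergence theorem supplies the almost--sure limit and boundedness supplies uniform integrability. In every case the commuting of limit and expectation is an instance of dominated (respectively bounded) convergence, i.e.\ exactly \cref{def:uniform_integrability_stochastic_processes}.
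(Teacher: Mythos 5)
Your proof is correct and takes the same route as the source the paper relies on: the paper never proves this theorem itself but imports it by citation from \cite{grimmett2001probability}, and your argument --- the increment identity $X_{(n+1)\wedge T}-X_{n\wedge T}=\indicator{T>n}\cdot(X_{n+1}-X_n)$ with $\{T>n\}\in\F{F}_n$, domination by $Y=X_0+\sum_{k=0}^{T-1}\abs{X_{k+1}-X_k}$ via the tower property and $\sum_{k}\IP{}{T>k}=\expec{}{T}<\infty$ for case (b), and Doob convergence plus bounded convergence for case (c) --- is precisely the classical one. Indeed, the same case-(b) domination argument is what the paper itself replays (for the induced process $\mathbf{X}^{f,I}$ and the looping time $\termtime{\guard}$) in its appendix proof of \cref{lemma:conditional_difference_boundedness_lfp}, so there is nothing to flag.
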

\noindent
Our goal now is to transfer the Optional Stopping Theorem from probability theory to the realm of weakest preexpectations in order to obtain inductive proof rules for lower bounds on weakest preexpectations.
So far, we have introduced the looping time $\termtime{\guard}$ (which is a stopping time w.r.t.\ $(\F{F}_n^\lp)_{n \in \Nats}$), presented the connection of subinvariants and submartingales, and defined the concept of uniform integrability also for expectations.
Hence, the only missing ingredient is a proper connection of expectations to the condition ``$\expec{}{\abs{X_{n+1}-X_n}\mid \F{F}_n}\leq c$'' in \cref{thm:optional_stopping}
(b).
To translate this concept to expectations, we require that the expectation $I$ has a certain shape depending on the postexpectation $f$.
\begin{definition}[Harmonization]
	\label{def:harmony}
	An expectation $I$ \emph{harmonizes with $f \in \E$} if it is of the form
	\begin{align*}
		I \eeq \iverson{\neg \guard} \cdot f \pplus \iverson{\guard} \cdot \primed{I}~,
	\end{align*}
	for some expectation $\primed{I} \in \E$.
\end{definition}
\noindent
\cref{def:harmony} reflects that in terminal states $t$ of the loop the invariant $I$ evaluates to $f(t)$.
For an invariant $I$ to harmonize with postexpectation $f$ is a minor restriction on the shape of~$I$.
It is usually easy to choose an $I$ that takes the value of $f$ for states in which the loop is not executed at all.
Moreover, performing one iteration of $\charwp{\guard}{C}{f}$ obviously brings \emph{any} expectation~\mbox{``into shape''}:%
\begin{corollary}[Harmonizing Expectations]
	For any $f, J \in \E$, $\charwp{\guard}{C}{f}(J)$ harmonizes with $f$.
\end{corollary}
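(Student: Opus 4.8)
The plan is to prove this by directly unfolding the definition of the characteristic function and reading off the required witness. Recall from \cref{table:wp} that the characteristic function of $\WHILEDO{\guard}{C}$ with respect to $f$ is given by
\begin{align*}
	\charwp{\guard}{C}{f}(J) \eeq \iverson{\neg \guard} \cdot f \pplus \iverson{\guard} \cdot \wp{C}{J}~.
\end{align*}
Comparing this with the shape required by \cref{def:harmony}, namely $\iverson{\neg \guard} \cdot f + \iverson{\guard} \cdot \primed{I}$, the natural candidate for the witness is $\primed{I} \ccoloneqq \wp{C}{J}$.

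First I would confirm that this witness is itself a legitimate expectation, i.e., that $\wp{C}{J} \in \E$. This is immediate from the type of the weakest preexpectation transformer established in \cref{defQQQwp}, where $\wpsymbol\colon \pgcl \To \E \To \E$; hence applying $\wpC{C}$ to any $J \in \E$ yields an element of $\E$. With $\primed{I} = \wp{C}{J} \in \E$ in hand, the displayed equation exhibits $\charwp{\guard}{C}{f}(J)$ in exactly the form demanded by \cref{def:harmony}, so $\charwp{\guard}{C}{f}(J)$ harmonizes with $f$.

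There is essentially no obstacle here: the statement is a pure definition-chasing exercise, and the only point to check beyond unfolding is the well-typedness of the witness, which is part of the basic machinery of the calculus. The conceptual content of the corollary is merely the observation that a single application of $\charwp{\guard}{C}{f}$ always forces the terminal-state value to coincide with $f$, since the guarded summand $\iverson{\neg \guard} \cdot f$ is syntactically built into the characteristic function regardless of the argument $J$.
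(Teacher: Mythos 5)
Your proof is correct and takes essentially the same route as the paper: the corollary is a pure definition-chasing fact, and the paper's own (implicit) justification is precisely that one application of $\Phi_f$ produces $\iverson{\neg\guard}\cdot f + \iverson{\guard}\cdot\wp{C}{J}$, which has the required shape with witness $\primed{I} = \wp{C}{J} \in \E$. Your additional check that the witness is well-typed is a fine, if unstated in the paper, piece of diligence.
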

\noindent
The actual criterion that connects ``$\expec{}{\abs{X_{n+1}-X_n}\mid \F{F}_n}\leq c$'' with the invariant $I$ is called \emph{conditional difference boundedness} (see also \cite{DBLP:conf/vmcai/FuC19,DBLP:conf/popl/FioritiH15}):%
\begin{definition}[Conditional Difference Boundedness]
	\label{conditionally difference boundedness}
	Let $I \in \E$.
	We define the function $H\colon \E \to \E$ and the expectation $\Diff{I} \in \E$ as\footnote{Recall that we have fixed a loop $\WHILEDO{\guard}{C}$.}
	\[
		H(X) \eeq \iverson{\guard} \cdot \wp{C}{X} \qqand \Diff{I} \eeq \lambda s \mydot \left(H\bigl(\abs{I-I(s)}\bigr)(s)\right)~.
	\]
	The expectation $I$ is called \emph{conditionally difference bounded} (c.d.b.) if for some constant $c \in \Reals_{\geq 0}$
	\begin{align*}
		\Diff{I}(\State) \lleq c \text{ holds for all } \State \in \States.
	\end{align*}
\end{definition}
\noindent
The expectation $\Diff{I}$ expresses the \emph{expected change of $I$} within one loop iteration.
So, if $I$ is c.d.b.~ it is expected to change at most by a constant in one loop iteration.%
\begin{example}
	\label{ex:running_example_cdb}
	Reconsider the program $C_{\mathit{cex}}$ 
	from \textnormal{\cref{ex:simple-induction}} and expectation $I = b + \iverson{a \neq 0}$.
	We will check \emph{conditional difference boundedness} of $I$, using the function $H$ given by
	\begin{align*}
		H(X) \eeq \iverson{a \neq 0} \cdot \tfrac{1}{2} \cdot \bigl(X\subst{a}{0} + X\subst{b}{b + 1}\bigr)\subst{k}{k+1}~.
	\end{align*}
	We then check the following:
	\begin{align*}
		\Diff{I}
		\eeq & \bigl(\lambda s\mydot \iverson{a \neq 0} \cdot \tfrac{1}{2} \cdot \bigl(\abs{I -I(s)}\subst{a}{0} \pplus \abs{I -I(s)}\subst{b}{b + 1}\bigr)\subst{k}{k+1}\bigr) (\State)                                                      \\
		\eeq & \lambda s\mydot \bigl(\iverson{a \neq 0} \cdot \tfrac{1}{2} \cdot \bigl(\abs{b + \iverson{0 \neq 0} - s(b) - \iverson{a \neq 0}(s)} \pplus \abs{b + 1 + \iverson{a \neq 0} - s(b) - \iverson{a \neq 0}(s)}\bigr)\bigr)(\State) \\
		\eeq & \iverson{a \neq 0}(\State) \cdot \tfrac{1}{2} \cdot \bigl(\abs{s(b) + \iverson{0 \neq 0} - s(b) - \iverson{a \neq 0}(s)} \pplus \abs{s(b) + 1 + \iverson{a \neq 0}(s) - s(b) - \iverson{a \neq 0}(s)}\bigr)                    \\
		\eeq & \iverson{a \neq 0}(s) \cdot \tfrac{1}{2} \cdot \bigl(\abs{{-} 1} + \abs{1}\bigr) \lleq 1~.
	\end{align*}
	Thus, $I$ is c.d.b.~by the constant $1$.
	In contrast, the subinvariant $\primed{I} = b + \iverson{a \neq 0} \cdot (1 + 2^k)$ from \textnormal{\cref{ex:simple-induction}} is not conditionally difference bounded.
	Indeed, we would get (cf.\ \textnormal{\techrep{\cref{app:proofs_optional_stopping}}\cameraready{\cite[App.\ D]{arxiv}}} for details)
	\begin{align*}
		 & \Diff{I'} \eeq \bigl(\iverson{a \neq 0} \cdot (1 + 2^k)\bigr) (\State)~,
	\end{align*}
	which cannot be bounded by a constant.
\end{example}
\noindent
Finally, we can connect the expected change of $I$ to a property of the stochastic process $\mathbf{X}^{f,I}$.
This is our second major result.%
\begin{restatable}[Expected Change of $\boldsymbol{I}$]{theorem}{thmconditionaldifferenceboundedness}
	\label{thm:conditional_difference_boundedness}
	Let $I \pprec \infty$ harmonize with $f$.
	Then
	\begin{align*}
		\expec{\State}{\big|X_{n+1}^{f,I}-X_n^{f,I}\big| ~\Big|~ \F{G}_n^\lp} \eeq X_n^{0,\Diff{I}}~.
	\end{align*}
\end{restatable}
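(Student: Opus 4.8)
The plan is to mimic the proof of \cref{thm:cond_expec}, but with the integrand $\abs{X_{n+1}^{f,I} - X_n^{f,I}}$ in place of $X_{n+1}^{f,I}$. Since conditioning on $\F{G}_n^\lp$ fixes the prefix $\run[0]\cdots\run[n+1]$ and leaves only the next state $\run[n+2]$ random, the key is first to evaluate the integrand pointwise, splitting on the value of the looping time $T = \termtime{\guard}(\run)$. By \cref{def:induced-stochastic-process}, if $T \leq n$ then both $X_n^{f,I}(\run)$ and $X_{n+1}^{f,I}(\run)$ equal $f(\run[T])$, so the difference is $0$. The boundary case $T = n+1$ is exactly where harmonization is needed: there $X_n^{f,I}(\run) = I(\run[n+1])$ while $X_{n+1}^{f,I}(\run) = f(\run[n+1])$, but $\run[n+1] \in \States_{\neg\guard}$, and because $I$ harmonizes with $f$ (\cref{def:harmony}) we have $I(\run[n+1]) = f(\run[n+1])$, so the difference still vanishes. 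Only when $T > n+1$, equivalently $\run[n+1] \in \States_\guard$, does the integrand survive, with value $\abs{I(\run[n+2]) - I(\run[n+1])}$.

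Next I would compute the conditional expectation through its defining property: $\expec{\State}{\,\cdot\mid\F{G}_n^\lp}$ is the $\F{G}_n^\lp$--measurable function whose integral agrees with that of the integrand on every generator $Cyl(\pi)$, where $\pi = \State_0\cdots\State_{n+1}$ has length $n+2$. On $Cyl(\pi)$ the prefix is fixed, so $\run[n+1] = \State_{n+1}$; by the definition of the loop measure, if $\State_{n+1} \in \States_\guard$ then the continuation $\run[n+2]$ is distributed according to $\phantom{}^{\State_{n+1}}\mu_C$, and $\phantom{}^{\State}p(\pi\,\State_{n+2}) = \phantom{}^{\State}p(\pi)\cdot \phantom{}^{\State_{n+1}}\mu_C(\State_{n+2})$. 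Integrating the pointwise result then gives, for $\State_{n+1}\in\States_\guard$,
\begin{align*}
  \expec{\State}{\abs{X_{n+1}^{f,I}-X_n^{f,I}}\cdot\indicator{Cyl(\pi)}} \eeq \phantom{}^{\State}p(\pi)\cdot\!\!\sum_{\State_{n+2}\in\States}\!\!\phantom{}^{\State_{n+1}}\mu_C(\State_{n+2})\cdot\abs{I(\State_{n+2})-I(\State_{n+1})},
\end{align*}
whose sum is precisely $\wp{C}{\abs{I - I(\State_{n+1})}}(\State_{n+1})$. Reinstating the guard factor $\iverson{\guard}(\State_{n+1})$, which equals $1$ here and kills the expression on terminal prefixes (where the integrand is $0$ anyway), this is exactly $\Diff{I}(\State_{n+1})$ from \cref{conditionally difference boundedness}.

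Finally I would recognize $\Diff{I}(\run[n+1])$ as $X_n^{0,\Diff I}$. Instantiating \cref{def:induced-stochastic-process} with postexpectation $0$ and invariant $\Diff I$ gives $X_n^{0,\Diff I}(\run) = \Diff I(\run[n+1])$ for $T > n$ and $0$ for $T \leq n$; but since $\Diff I$ carries the factor $\iverson{\guard}$ it already vanishes at the terminal state reached when $T \leq n$, so $X_n^{0,\Diff I}(\run) = \Diff I(\run[n+1])$ for almost every run. As $X_n^{0,\Diff I}$ is $\F{G}_n^\lp$--measurable (by the argument of \cref{lemma:measurability}) and matches the conditional expectation's integral on every cylinder, the two sides coincide almost surely. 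I expect the \textbf{main obstacle} to be the measure--theoretic bookkeeping rather than any deep idea: getting the boundary case $T = n+1$ to cancel via harmonization, and verifying that terminal prefixes and inadmissible (measure--zero) runs do not spoil the cylinder--wise integral identity, so that the two $\F{G}_n^\lp$--measurable functions are genuinely almost--surely equal.
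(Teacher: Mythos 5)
Your proposal is correct and takes essentially the same route as the paper's own proof: the same pointwise case analysis on $\termtime{\guard}$ in which harmonization cancels the boundary case $\termtime{\guard}(\run) = n+1$, followed by verifying the defining integral identity of the conditional expectation on the generating cylinders of $\F{G}_n^\lp$, where the one--step measure $\phantom{}^{\State_{n+1}}\mu_C$ turns the sum into $\wp{C}{\abs{I - I(\State_{n+1})}}(\State_{n+1}) = \Diff{I}(\State_{n+1})$. The paper likewise invokes the $\iverson{\guard}$ factor inside $\Diff{I}$ to dispose of cylinders whose looping time is at most $n+1$ and uses the countable--cover lemma to reduce to generators, so the two arguments coincide in every essential step.
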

\noindent
The stochastic process $\mathbf{X}^{f,I}$ induced by $I$ exhibits \mbox{an interesting correspondence}:
If $\Diff{I}$ is bounded by a constant $c$ (i.e., if $I$ is c.d.b.), then so is $X_n^{0,\Diff{I}}$ and thus \cref{thm:conditional_difference_boundedness} ensures that precondition (b) of the Optional Stopping Theorem (\cref{thm:optional_stopping}) is fulfilled.
Note that \cref{thm:conditional_difference_boundedness} depends crucially on the fact that $I \pprec \infty$ as otherwise the well--definedness of the expectation $\Diff{I}$ cannot be ensured.

Now \cref{lemma:submartingale} allows us to use the Optional Stopping Theorem from probability theory (\cref{thm:optional_stopping}) to prove a \emph{novel Optional Stopping Theorem for weakest preexpectations}, which collects sufficient conditions for uniform integrability.
In particular, due to \cref{thm:conditional_difference_boundedness}, \emph{our} Optional Stopping Theorem shows that our notion of conditional difference boundedness is an (easy--to--check) sufficient criterion for uniform integrability and hence, for ensuring that a subinvariant is indeed a lower bound for the weakest preexpectation under consideration.
After stating the theorem, we will discuss the intuition of its parts in more detail.%
\begin{restatable}[Optional Stopping Theorem for Weakest Preexpectation Reasoning]{theorem}{thmoptionalstoppingprobabilisticprograms}
	\label{thm:optional_stopping_probabilistic_programs}
	Consider a loop $\WHILEDO{\guard}{C}$ where $C$ is AST.
	Let $I \pprec \infty$ be a subinvariant w.r.t.~the postexpectation $f \pprec \infty$ (i.e., $I \preceq \charwp{\guard}{C}{f}(I)$).
	$I$ is uniformly integrable~for $f$ iff $I$ is a \emph{lower bound}, i.e.,
	\[
		I \ppreceq \lfp \charwp{\guard}{C}{f} \eeq \wp{\WHILEDO{\guard}{C}}{f}.
	\]
	$I$ is \emph{uniformly integrable} for $f$ if one of the following three conditions holds:

	\begin{enumerate}
		\item[(a)] The looping time $\termtime{\guard}$ of $\WHILEDO{\guard}{C}$ is almost--surely bounded, i.e., for every state $\State\in \States$ there exists a constant $N(\State) \in \Nats$ with $\IP{\State}{\termtime{\guard} \leq N(\State)}=1$ and $\charwpn{\guard}{C}{f}{n}(I) \pprec \infty$ for every $n \in \Nats$.
		\item[(b)] The expected looping time of $\WHILEDO{\guard}{C}$ is finite for every initial state $\State \in \States$, $I$ harmonizes with $f$, $\charwp{\guard}{C}{f}(I) \pprec \infty$, and $I$ is conditionally difference bounded.
		\item[(c)] Both $f$ and $I$ are bounded and $\WHILEDO{\guard}{C}$ is AST.
	\end{enumerate}
\end{restatable}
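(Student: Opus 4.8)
The plan is to establish the equivalence first and then treat the three sufficient conditions. For the equivalence, note that since $I$ is a subinvariant we have $I \preceq \Phi_f(I)$, so by the Tarski--Kantorovich principle (\cref{thm:tarski-katorovich}) the chain $\bigl(\Phi_f^n(I)\bigr)_{n \in \Nats}$ is ascending and its limit $\lim_{n \to \omega}\Phi_f^n(I)$ exists in the complete lattice $(\E, \preceq)$. By \cref{def:uniform_integrability_invariants}, $I$ is uniformly integrable for $f$ exactly when this limit equals $\lfp \Phi_f$, and by \cref{thm:uniform_integrability_least_fixed_point} this in turn holds iff $I \preceq \lfp \Phi_f$. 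As $\lfp \Phi_f = \wp{\WHILEDO{\guard}{C}}{f}$ by definition, this is precisely the claimed equivalence.

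For the sufficient conditions, the whole strategy is to apply the probabilistic Optional Stopping Theorem (\cref{thm:optional_stopping}) to the induced process $\mathbf{X}^{f,I}$ with the stopping time $\termtime{\guard}$, and then to translate the conclusion back to expectations via \cref{coro:uniform_integrability}. The conceptual crux, which I would verify directly from \cref{def:induced-stochastic-process,def:stopped_process}, is that $\mathbf{X}^{f,I}$ is already its own stopped process at $\termtime{\guard}$: for every run $\run$ and every $n$ one has $X_n^{f,I}(\run) = X_{\min(n,\termtime{\guard}(\run))}^{f,I}(\run)$, since the value freezes at $f(\run[\termtime{\guard}(\run)])$ as soon as the run reaches a terminal state and equals $I(\run[n{+}1])$ beforehand. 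Hence the stopped process $\mathbf{X}_{\wedge \termtime{\guard}}$ produced by \cref{thm:optional_stopping} coincides with $\mathbf{X}^{f,I}$ itself, so every assertion the theorem makes about uniform integrability of $\mathbf{X}_{\wedge \termtime{\guard}}$ is an assertion about $\mathbf{X}^{f,I}$.

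To invoke \cref{thm:optional_stopping} I first need $\mathbf{X}^{f,I}$ to be a submartingale, which by \cref{lemma:submartingale} amounts to $\Phi_f^n(I) \pprec \infty$ for all $n$. This holds by hypothesis in case (a), and in case (c) it follows because bounded $f$ and $I$ make every iterate bounded and hence finite. In case (b) it is the one genuinely nontrivial point, and I would obtain it by telescoping: writing $X_n^{f,I} = X_0^{f,I} + \sum_{k < n}\bigl(X_{k+1}^{f,I} - X_k^{f,I}\bigr)$, passing to expected values, and bounding each increment by the constant $c$ using \cref{thm:conditional_difference_boundedness} together with conditional difference boundedness, I get $\Phi_f^{n+1}(I)(\State) = \expec{\State}{X_n^{f,I}} \leq \Phi_f(I)(\State) + n\,c < \infty$ via \cref{cor:connection_expectation}. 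With the submartingale property secured, I match each case to the corresponding clause of \cref{thm:optional_stopping}: case (a) to clause (a), working in the probability space for each fixed initial state $\State$ where $\termtime{\guard}$ is almost--surely bounded by $N(\State)$; case (b) to clause (b), since the expected looping time is finite and $\expec{\State}{\abs{X_{n+1}^{f,I} - X_n^{f,I}} \mid \F{G}_n^\lp} = X_n^{0,\Diff{I}} \leq c$ holds by \cref{thm:conditional_difference_boundedness} and c.d.b.; and case (c) to clause (c), since $X_{n \wedge \termtime{\guard}}^{f,I} = X_n^{f,I}$ is bounded.

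In each case \cref{thm:optional_stopping} yields that $\mathbf{X}_{\wedge \termtime{\guard}} = \mathbf{X}^{f,I}$ converges almost surely and is uniformly integrable in the sense of \cref{def:uniform_integrability_stochastic_processes}. Moreover each of the three conditions forces $\IP{\State}{\termtime{\guard} < \infty} = 1$ (almost--surely bounded or finite--expectation looping time, respectively AST in case (c)), so \cref{coro:uniform_integrability} applies and transfers uniform integrability of the process $\mathbf{X}^{f,I}$ to uniform integrability of $I$ for $f$, which finishes the argument. I expect the main obstacle to be case (b): showing that all iterates $\Phi_f^n(I)$ remain finite, as demanded by the submartingale structure, and correctly identifying the conditional increment with $X_n^{0,\Diff{I}}$. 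This is exactly where the hypothesis $I \pprec \infty$ and \cref{thm:conditional_difference_boundedness} are indispensable, and it is the step that converts the easily checkable algebraic notion of conditional difference boundedness into the probabilistic hypothesis required by Optional Stopping.
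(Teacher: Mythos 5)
Your proposal is correct, and its skeleton matches the paper's proof: the equivalence is read off from \cref{thm:uniform_integrability_least_fixed_point}, the pivotal observation that $\mathbf{X}^{f,I}$ equals its own stopped process $\mathbf{X}^{f,I}_{\wedge \termtime{\guard}}$ is also the paper's first step, and cases (a) and (c) are matched to clauses (a) and (c) of \cref{thm:optional_stopping} exactly as you do, with \cref{lemma:submartingale} supplying the submartingale property once finiteness of the iterates is in hand.

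Where you genuinely diverge is case (b), which is also where the paper invests most of its effort. The paper does \emph{not} invoke clause (b) of \cref{thm:optional_stopping} as a black box. Instead it proves a dedicated per-state lemma (\cref{lemma:conditional_difference_boundedness_lfp}, packaged as \cref{coro:conditional_difference_boundedness_lfp}) that re-implements Grimmett's proof of that clause: it constructs the dominating random variable $W^{f,I} = X^{f,I}_0 + \sum_{k}\abs{X^{f,I}_{k+1}-X^{f,I}_k}\cdot\indicator{\{\termtime{\guard}\geq k+1\}}$, bounds $\expec{\State}{W^{f,I}} \leq \Phi_f(I)(\State) + c\cdot\expec{\State}{\termtime{\guard}} < \infty$, and concludes uniform integrability by domination (\cref{lem:upper_bound_uniform_integrability}); finiteness of all iterates $\Phi_f^{n}(I)$ falls out of the same computation. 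You split these tasks: a telescoping bound $\Phi_f^{n+1}(I)(\State) = \expec{\State}{X_n^{f,I}} \leq \Phi_f(I)(\State) + nc$ (via \cref{cor:connection_expectation,thm:conditional_difference_boundedness} and the tower property) secures finiteness of the iterates and hence the submartingale property, after which clause (b) of \cref{thm:optional_stopping} is cited directly and \cref{coro:uniform_integrability} transfers process-level uniform integrability back to $I$; this is sound, since \cref{thm:conditional_difference_boundedness} needs only $I \pprec \infty$ and harmonization, so there is no circularity. Both routes work. Yours is more modular, treating the classical OST as given; the paper's self-contained lemma buys a \emph{per-state} statement that is reused later in the proof of \cref{thm:lower_bounds_ert}, where the expected looping time is finite only for some initial states and a statewise argument is indispensable.
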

\noindent
We can now extend the diagram from \cref{sec:expectations_processes} connecting the realm of stochastic processes (on the right) and the realm of expectation transformers (on the left) for a universally almost--surely terminating program.
The respective Optional Stopping Theorems provide the sufficient criteria for uniform integrability, which is marked by the dashed implications.%

\begin{center}
	\begin{adjustbox}{max width=\linewidth}
		\begin{tikzcd}[column sep = huge, row sep=large]
			I~\textnormal{c.d.b.~by}~c \quad \arrow[r, Leftrightarrow, "\textnormal{\raisebox{1ex}{\cref{thm:conditional_difference_boundedness}}}" shift={(0,0)}] \arrow[dd, bend right = 60, Rightarrow, dashed, "\textnormal{~~\cref{thm:optional_stopping_probabilistic_programs}~(b)}" shift={(-2,0)}]
			&\quad\expec{\bullet}{\big|X_{n+1}^{f,I}-X_n^{f,I} \big| ~\Big|~ \F{G}_n}\leq c\arrow[d, Rightarrow, dashed, "\textnormal{\raisebox{1ex}{~~\cref{thm:optional_stopping}}}"] \\
			\mathbf{X}^{f,I}~\textnormal{u.i.} \quad \arrow[d,Leftrightarrow, "\textnormal{~~\cref{coro:uniform_integrability}}"] \arrow[r, Leftrightarrow, "\textnormal{{\cref{lemma:as_limit,def:uniform_integrability_stochastic_processes}}}" shift={(0,0.2)}]
			&\quad \expec{\bullet}{X_n^{f,I}} \xrightarrow{~n \to \omega~} \expec{\bullet}{X_{\termtime{\guard}}^f}
			\arrow[d,Leftrightarrow, "\textnormal{
					\begin{tabular}{l}
						\cref{cor:connection_expectation} \\[-1ex]
						and \cref{thm:lfp_expectation}
					\end{tabular}
				}"] \\
			I~\textnormal{u.i.\ for}~f \quad \arrow[r, Leftrightarrow, "\textnormal{\raisebox{1ex}{\cref{def:uniform_integrability_invariants}}}"] \arrow[d, Leftrightarrow, "\textnormal{
				\begin{tabular}{l}
					\cref{thm:uniform_integrability_least_fixed_point} \\[-1ex]
					and \cref{def:uniform_integrability_invariants}
				\end{tabular}
			}" shift={(0,-0.1)}]
			&\quad \charwpn{\guard}{C}{f}n(I) \xrightarrow{~n \to \omega~} \lfp \charwp{\guard}{C}{f}
			\\
			I \preceq \charwp{\guard}{C}{f}(I) \Rightarrow I \preceq \lfp \charwp{\guard}{C}{f}
		\end{tikzcd}
	\end{adjustbox}
\end{center}
Let us elaborate on the different cases of \emph{our} Optional Stopping Theorem (\cref{thm:optional_stopping_probabilistic_programs}): Case (a) yields an alternative proof for the technique of so--called metering functions by~\cite{DBLP:conf/cade/FrohnNHBG16} for \emph{deterministic terminating} loops. As for the severity of the finiteness condition ``$\charwpn{\guard}{C}{f}{n}(I) \pprec \infty$ for every $n \in \Nats$'', note that if the body $C$ is loop--free, this condition is vacuously satisfied as $I$ itself is finite and cannot become infinite by finitely iterations of $\charwp{\guard}{C}{f}$. If $C$ contains loops, then we can establish the finiteness condition by finding a finite superinvariant $U$ with \mbox{$I \preceq U \pprec \infty$}. \mbox{In this case, we can also guarantee $\charwpn{\guard}{C}{f}{n}(I) \pprec \infty$}.\footnote{The reason is that we have $U \succeq \charwpnnoindex{\cond}{\PP}{f}{n}(U)$ for all $n \in \Nats$ by monotonicity of $\charwpnoindex{\cond}{\PP}{f}$ (\cref{thm:healthiness}).
	Then, $U \succeq I$ implies $\charwpnnoindex{\cond}{\PP}{f}{n}(U) \succeq \charwpnnoindex{\cond}{\PP}{f}{n}(I)$ also by monotonicity of $\charwpnoindex{\cond}{\PP}{f}$, which gives us $\infty \mathrel{{\succ}\hspace{-.5ex}{\succ}} U \succeq \charwpnnoindex{\cond}{\PP}{f}{n}(U) \succeq \charwpnnoindex{\cond}{\PP}{f}{n}(I)$.}

Case (b) applies whenever the outer loop is expected to be executed finitely often. In particular, this holds if the entire loop terminates positively almost--surely (i.e., within finite expected \emph{runtime}).

To the best of our knowledge, Cases (a) and (b) are the first sufficiently simple induction rules for lower bounds that do not require restricting to \mbox{bounded postexpectations $f$}. While the requirements on the loop's termination behavior gradually weaken along $\textnormal{(a)} \rightarrow \textnormal{(b)} \rightarrow \textnormal{(c)}$, the requirements on the subinvariant $I$ become stricter.

Finally, Case (c) yields an alternative proof of the result of \cite{DBLP:series/mcs/McIverM05} on inductive lower bounds for bounded expectations in case of AST, which we will generalize in \cref{sec:mciver_morgan}.

When comparing the cases (c) of \cref{thm:optional_stopping} and \cref{thm:optional_stopping_probabilistic_programs}, we notice that \cref{thm:optional_stopping} (c) has no restrictions on the stopping time, whereas \cref{thm:optional_stopping_probabilistic_programs}~(c) requires almost--sure termination. This might spark some hope that AST is not needed in \cref{thm:optional_stopping_probabilistic_programs} (c), but the following counterexample shows that this is not the case:%
\begin{counterexample}
	\label{non-termination}
	Consider the program
	\begin{align*}
		 & \WHILEDO{\TRUE}{\SKIP}~,
	\end{align*}
	together with the bounded postexpectation $f = 1$, i.e., we are interested in the termination probability which is obviously~$0$. The corresponding characteristic function is given by
	\begin{align*}
		\Phi_1(X) \eeq \iverson{\neg \TRUE}\cdot 1 + \iverson{\TRUE}\cdot\wp{\SKIP}{X} \eeq X~,
	\end{align*}
	i.e., $\Phi_1$ is the identity map. Trivially, the bounded expectation $I = 1$ is a fixed point of $\Phi_1$, thus in particular $I$ is a subinvariant. Clearly, $I$ is not a lower bound on the actual termination probability, i.e., on $\lfp \Phi_1$. If the condition of almost--sure termination in \textnormal{\cref{thm:optional_stopping_probabilistic_programs} (c)}
	could be weakened, it has to be ensured that for any program $\WHILEDO{\guard}{C}$ with universally almost--surely terminating body $C$\footnote{Note that in this case $1$ is always a subinvariant.} and postexpectation $f = 1$, $1$ is a lower bound only if the program terminates universally almost--surely. But this means that this property has to be at least as strong as almost--sure termination.
\end{counterexample}
\noindent
We reconsider \cref{ex:simple-induction} illustrating unsoundness of simple lower induction and do \emph{sound} lower induction instead.%
\begin{example}
	\label{ex:running_example_our_rule}
	Let us continue \textnormal{\cref{ex:running_example_cdb}}, where we have checked that for the program $C_{\mathit{cex}}$ the expectation $I = b + \iverson{a \neq 0}$ is conditionally difference bounded by $1$. It is easy to check that $I$ is a fixed point of the characteristic function $\Phi_b$ with respect to the postexpectation $b$, which by Park induction gives us a finite upper bound on the least fixed point of $\Phi_b$. But up to now we could not prove that $I$ is indeed equal to the least fixed point. Using \textnormal{\cref{thm:optional_stopping_probabilistic_programs}}, we can now do this.

	First of all, we already have $\charwp{\guard}{C}{b}(I)=I \pprec \infty$ and since $I$ is a fixed point, it is also a subinvariant. Secondly, the loop is expected to be executed twice.\footnote{Positive almost--sure termination itself can also be verified by Park induction, see~\cite{DBLP:conf/esop/KaminskiKMO16,DBLP:journals/jacm/KaminskiKMO18}.}
	Finally, $I = b + \iverson{a \neq 0} = \iverson{\neg (a\neq 0)} \cdot b + \iverson{a \neq 0}\cdot (b+1)$ harmonizes with $b$ and is conditionally difference bounded. Hence, the preconditions of \textnormal{\cref{thm:optional_stopping_probabilistic_programs}}~(b) are satisfied and $I$ is indeed a lower bound on $\lfp \Phi_b$. Since $I$ is a fixed point, \emph{it is the least fixed point}, i.e., we have proved $\wp{C_\mathit{cex}}{b} = I$.
\end{example}
\noindent
Further case studies demonstrating the effectiveness of our proof rule, as well as an example that cannot be treated by \mbox{\cref{thm:optional_stopping_probabilistic_programs}, are provided in \techrep{\cref{app:examples}}\cameraready{\cite[App.\ A]{arxiv}}}.

\renewcommand{\charwp}[3]{\charfun{\wpsymbol}{#1}{#2}{#3}}
\renewcommand{\charwpn}[4]{\charfunn{\wpsymbol}{#1}{#2}{#3}{#4}}

\renewcommand{\charwp}[3]{\Phi_{#3}}
\renewcommand{\charwpn}[4]{\Phi_{#3}^{#4}}
\section{Lower Bound Rules by McIver and Morgan}
\label{sec:mciver_morgan}
In \cref{sec:optional-stopping-wp}, we briefly mentioned the rules for lower bounds for \emph{bounded} expectations by \techrep{\linebreak}\cite{DBLP:series/mcs/McIverM05} which are restated in \cref{thm:lower_bounds_mcivermorgan} below.
To the best of our knowledge, before our new \cref{thm:optional_stopping_probabilistic_programs} these were the only existing \emph{inductive} proof rules for weakest preexpectations.

\begin{restatable}[\textnormal{\cite{DBLP:series/mcs/McIverM05}}]{theorem}{thmlowerboundsmcivermorgan}
	\label{thm:lower_bounds_mcivermorgan}
	Let $f \in \E$ be a \textbf{bounded} postexpectation.
	Furthermore, let $I' \in \E$ be a \textbf{bounded} expectation such that the harmonized expectation $I \in \E$ given by $ I \eeq \iverson{\neg \guard} \cdot f + \iverson{\guard} \cdot I' $ is a sub{\-}in{\-}va{\-}ri{\-}ant of $\WHILEDO{\guard}{C}$ with respect to $f$.
	Finally, let $ T = \wp{\WHILEDO{\guard}{C}}{1}
	$ be the termination probability of $\WHILEDO{\guard}{C}$. Then:
	\begin{enumerate}
		\item If $I = \iverson{G}$ for some predicate $G$, then $ T \cdot I \ppreceq \wp{\WHILEDO{\guard}{C}}{f}~. $
		\item If $\iverson{G} \preceq T$ for some predicate $G$, then $ \iverson{G} \cdot I \ppreceq \wp{\WHILEDO{\guard}{C}}{f}~. $
		\item If $\epsilon \cdot I \preceq T$ for some $\epsilon > 0$, then $ I \ppreceq \wp{\WHILEDO{\guard}{C}}{f}~. $
	\end{enumerate}
\end{restatable}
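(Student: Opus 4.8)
The plan is to derive all three parts from a single \emph{master inequality} obtained by applying the Optional Stopping Theorem (\cref{thm:optional_stopping}) to the process induced by $I$. Since $f$ and $I'$ are bounded, the harmonized subinvariant $I = \iverson{\neg\guard}\cdot f + \iverson{\guard}\cdot I'$ is bounded from above by some constant $M \in \PosReals$, and \cref{lemma:submartingale} makes $\mathbf{X}^{f,I}$ a submartingale with respect to $(\F{G}_n^\lp)_{n\in\Nats}$ (the finiteness hypothesis is automatic from boundedness). The stopped process $(X_{n\wedge\termtime{\guard}}^{f,I})_{n\in\Nats}$ is also bounded by $M$, so condition~(c) of \cref{thm:optional_stopping} guarantees that it converges almost--surely and is uniformly integrable; hence $\expec{\State}{X_0^{f,I}}\leq \expec{\State}{\lim_{n\to\omega}X_{n\wedge\termtime{\guard}}^{f,I}}$. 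By \cref{cor:connection_expectation} the left--hand side equals $\charwp{\guard}{C}{f}(I)(\State)$, which is $\succeq I(\State)$ by subinvariance. Splitting the right--hand side according to whether $\termtime{\guard}$ is finite, and using \cref{lemma:as_limit} together with \cref{thm:lfp_expectation} for the terminating part, I obtain
\begin{align*}
 I(\State) \leq \wp{\WHILEDO{\guard}{C}}{f}(\State) + N(\State), \qquad 0 \leq N(\State) \leq M\cdot\bigl(1 - T(\State)\bigr),
\end{align*}
where $N(\State) = \expec{\State}{\bigl(\lim_{n\to\omega}X_n^{f,I}\bigr)\cdot \indicator{(\termtime{\guard})^{-1}(\{\omega\})}}$ collects the contribution of the non--terminating runs (on which $X_{\termtime{\guard}}^f$ vanishes).

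Cases~1 and~2 then follow by bookkeeping. For Case~1, $I = \iverson{G}$ forces $M = 1$, so $\wp{\WHILEDO{\guard}{C}}{f} \succeq I - (1-T)$; on states with $\State\models G$ this reads $\wp{\WHILEDO{\guard}{C}}{f}(\State) \geq T(\State) = T(\State)\cdot I(\State)$, while on states with $\State\not\models G$ we have $T\cdot I = 0 \preceq \wp{\WHILEDO{\guard}{C}}{f}$ trivially, yielding $T\cdot I \preceq \wp{\WHILEDO{\guard}{C}}{f}$. For Case~2, the assumption $\iverson{G}\preceq T$ forces $T(\State)=1$ whenever $\State\models G$ (as $T\preceq 1$ always holds), so the slack $N$ vanishes on $G$ and the master inequality gives $\wp{\WHILEDO{\guard}{C}}{f} \succeq I = \iverson{G}\cdot I$ there; off $G$ the bound $\iverson{G}\cdot I = 0 \preceq \wp{\WHILEDO{\guard}{C}}{f}$ is again trivial.

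Case~3 is the genuine obstacle, since the crude estimate $N \preceq M\,(1-T)$ is too weak to absorb the slack when $T \prec 1$. I plan two steps. First, a \emph{scaling reduction}: because $\wp{C}{\cdot}$ is homogeneous (\cref{thm:healthiness}), $\epsilon I$ is a bounded subinvariant harmonizing with the bounded postexpectation $\epsilon f$ and satisfies $\epsilon I \preceq T$; by linearity $\wp{\WHILEDO{\guard}{C}}{\epsilon f} = \epsilon\cdot\wp{\WHILEDO{\guard}{C}}{f}$, so it suffices to treat the case $\epsilon = 1$, i.e.\ $I \preceq T$. Second, I show that under $I \preceq T$ the slack vanishes, $N\equiv 0$. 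The key observation is that the termination probability $T$ is itself a fixed point of $\charwp{\guard}{C}{1}$, hence an invariant, so applying the same stopped--process analysis to the \emph{bounded martingale} $\mathbf{X}^{1,T}$ shows that along almost every non--terminating run the values $T(\run[n])$ tend to $0$. Since $I\preceq T$ gives $I(\run[n])\leq T(\run[n])$, the almost--sure limit $\lim_{n\to\omega}X_n^{f,I} = \lim_{n\to\omega}I(\run[n+1])$ equals $0$ on $(\termtime{\guard})^{-1}(\{\omega\})$, whence $N\equiv 0$ and $I \preceq \wp{\WHILEDO{\guard}{C}}{f}$.

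The main difficulty is precisely this limiting argument for Case~3: establishing that $T(\run[n])\to 0$ almost surely on non--terminating runs (a martingale--convergence / zero--one phenomenon) and justifying the interchange of limit and integral that forces $N=0$. Everything else reduces to the already--established correspondence between $\charwp{\guard}{C}{f}$ and $\mathbf{X}^{f,I}$ (\cref{cor:connection_expectation,lemma:as_limit,thm:lfp_expectation}) plus elementary case analysis on the predicate $G$ and the scalar $\epsilon$.
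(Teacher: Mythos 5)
Your proposal is correct and follows essentially the same route as the paper's proof in \cref{app:mciver_morgan}: the same master inequality \eqref{eq:lower_bounding} for the bounded submartingale $\mathbf{X}^{f,I}$ (your use of \cref{thm:optional_stopping}~(c) is interchangeable with the paper's use of \cref{thm:martingale_convergence}, because $\mathbf{X}^{f,I}_{\wedge\termtime{\guard}} = \mathbf{X}^{f,I}$), the same slack bookkeeping for Cases~1 and~2, and for Case~3 the same key step that $T$ is a fixed point, so $\mathbf{X}^{1,T}$ is a bounded martingale whose mass on non--terminating runs vanishes. Your remaining deviations --- rescaling to $\epsilon = 1$ via linearity, and stating the vanishing slack pointwise ($T(\run[n]) \to 0$, hence $I(\run[n]) \to 0$, almost surely on non--terminating runs) instead of the paper's expectation--level comparison against the process induced by $T/\epsilon$ --- are cosmetic.
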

\noindent
\cref{thm:lower_bounds_mcivermorgan} does not make any assumptions on the termination behavior of the loop, so, it is also possible to analyze programs with termination probability $<1$.
It turns out that \cref{thm:lower_bounds_mcivermorgan} (1) -- (3) can be proved easily from our results from \cref{sec:expectations_processes} in the case where $C$ is AST where we do \emph{not} need the restriction that $I$ harmonizes with $f$.
In particular, we can show that in \cref{thm:lower_bounds_mcivermorgan} (3) the fact that $T$ is the probability of termination is \emph{insignificant} (see \techrep{\cref{app:mciver_morgan}}\cameraready{\cite[App.\ E]{arxiv}}).
In fact, it suffices if $T$ is the weakest preexpectation for some arbitrary bounded postexpectation, i.e., a \emph{least fixed point} (see \techrep{\cref{app:mciver_morgan}}\cameraready{\cite[App.\ E]{arxiv}}
for details and proofs).
So, we obtain the following generalized version of \cref{thm:lower_bounds_mcivermorgan} (3) in the case where $C$ is AST which is substantially more powerful: it states a sufficient condition for a subinvariant to be a lower bound but also a \emph{necessary} condition.
This is the main new contribution of this section.

\begin{restatable}[Generalization of \cref{thm:lower_bounds_mcivermorgan} (3)]{theorem}{thmgeneralizationmcivermorgan}
	\label{thm:generalization_mciver_morgan}
	Let $f \in \E$ be a \textbf{bounded} postexpectation.
	Furthermore, let $I \in \E$ be a \textbf{bounded} expectation such that $I$ is a sub{\-}in{\-}va{\-}ri{\-}ant of $\WHILEDO{\guard}{C}$ with respect to $f$ where $C$ is AST.
	There exist $\epsilon > 0$ and $g \in \E$ \textbf{bounded} s.t.~
	\[
		\epsilon \cdot I \preceq \wp{\WHILEDO{\guard}{C}}{g} \qquad \text{if and only if} \qquad I \ppreceq \wp{\WHILEDO{\guard}{C}}{f}~.
	\]
\end{restatable}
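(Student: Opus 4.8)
The plan is to prove the two implications of the biconditional separately. The direction from $I \preceq \wp{\WHILEDO{\guard}{C}}{f}$ to the existence of suitable $\epsilon, g$ is immediate: take $g := f$, which is bounded by assumption, and $\epsilon := 1$, so that $\epsilon \cdot I = I \preceq \wp{\WHILEDO{\guard}{C}}{f} = \wp{\WHILEDO{\guard}{C}}{g}$. All the real work lies in the converse.

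For the converse, suppose $\epsilon > 0$ and a bounded $g$, say $g \preceq M$ for a constant $M$, satisfy $\epsilon \cdot I \preceq \wp{\WHILEDO{\guard}{C}}{g}$. First I would discard the specific nature of $g$ and reduce to a condition on the termination probability $T := \wp{\WHILEDO{\guard}{C}}{1}$: by monotonicity and linearity of $\wpsymbol$ (\cref{thm:healthiness}) we have $\wp{\WHILEDO{\guard}{C}}{g} \preceq \wp{\WHILEDO{\guard}{C}}{M} = M \cdot T$, so that $\epsilon' \cdot I \preceq T$ with $\epsilon' := \epsilon/M > 0$ --- exactly the hypothesis of \cref{thm:lower_bounds_mcivermorgan}~(3), but now obtained \emph{without} assuming that $I$ harmonizes with $f$. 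Since $I$ is a subinvariant, by \cref{thm:uniform_integrability_least_fixed_point,def:uniform_integrability_invariants} it suffices to prove that $I$ is uniformly integrable for $f$, i.e.\ $\Phi_f^\omega(I) = \lfp \Phi_f = \wp{\WHILEDO{\guard}{C}}{f}$; as $I \preceq \Phi_f^\omega(I)$, this then yields the desired $I \preceq \wp{\WHILEDO{\guard}{C}}{f}$.

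To establish $\Phi_f^\omega(I) = \wp{\WHILEDO{\guard}{C}}{f}$ I would pass to the stochastic--process side. By \cref{cor:connection_expectation}, $\expec{\State}{X_n^{f,I}} = \Phi_f^{n+1}(I)(\State)$, and I would split this expected value according to whether the loop has already terminated within $n$ iterations:
\begin{align*}
	\expec{\State}{X_n^{f,I}} \eeq \underbrace{\expec{\State}{f\bigl(\run[\termtime{\guard}(\run)]\bigr) \cdot \indicator{\termtime{\guard} \leq n}}}_{=:\, A_n} \pplus \underbrace{\expec{\State}{I\bigl(\run[n+1]\bigr) \cdot \indicator{\termtime{\guard} > n}}}_{=:\, B_n}~.
\end{align*}
The terminating summand $A_n$ is non--decreasing in $n$ and increases, by monotone convergence together with \cref{thm:lfp_expectation}, to $\expec{\State}{X_{\termtime{\guard}}^f} = \wp{\WHILEDO{\guard}{C}}{f}(\State)$. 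Hence everything reduces to showing that the non--terminating summand $B_n$ vanishes as $n \to \omega$.

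The vanishing of $B_n$ is where the hypothesis $\epsilon' \cdot I \preceq T$ does its work, and it is the step I expect to be the main obstacle. Bounding $I(\run[n+1]) \leq \tfrac{1}{\epsilon'}\, T(\run[n+1])$ pointwise and then invoking the Markov property of the loop measure at the deterministic time $n+1$ (so that the conditional probability of eventual termination given the first $n+2$ states equals $T(\run[n+1])$), I expect
\begin{align*}
	B_n \lleq \tfrac{1}{\epsilon'} \cdot \expec{\State}{T\bigl(\run[n+1]\bigr) \cdot \indicator{\termtime{\guard} > n}} \eeq \tfrac{1}{\epsilon'} \cdot \IP{\State}{n < \termtime{\guard} < \infty}~,
\end{align*}
and the right--hand side tends to $0$ because the events $\{n < \termtime{\guard} < \infty\}$ shrink to $\emptyset$ while the measure is a probability measure. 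The delicate point is precisely this Markov/tower--property identity, and it is here that allowing the \emph{outer} loop to be non--almost--surely terminating (with only the body $C$ being AST) must be handled with care --- in contrast to case~(c) of \cref{thm:optional_stopping_probabilistic_programs}, where boundedness of $f$ and $I$ suffices only under AST of the \emph{whole} loop. Combining the two limits gives $\Phi_f^\omega(I)(\State) = \lim_{n \to \omega} \expec{\State}{X_n^{f,I}} = \wp{\WHILEDO{\guard}{C}}{f}(\State)$, which is what remained to be shown.
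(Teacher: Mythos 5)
Your proof is correct, but it takes a genuinely different route from the paper's. The paper's proof (\techrep{\cref{app:mciver_morgan}}\cameraready{\cite[App.\ E]{arxiv}}) applies the \emph{Martingale Convergence Theorem} to the bounded submartingale $\mathbf{X}^{f,I}$ to obtain the inequality \cref{eq:lower_bounding}, namely $I(\State) \leq \lfp \Phi_f(\State) + \expec{\State}{\indicator{(\termtime{\guard})^{-1}(\{\omega\})}\cdot X^{f,I}_{\omega}}$, and then kills the non--termination ``defect'' term by comparing with the process induced by $\wp{\WHILEDO{\guard}{C}}{g}$, which is a \emph{martingale} because a least fixed point is in particular a fixed point; the generalization from $T$ to arbitrary bounded $g$ is obtained there by re--inspecting the proof of \cref{thm:lower_bounds_mcivermorgan}~(3) and noting it only uses that $T$ is a least fixed point. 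You avoid martingale convergence entirely: your reduction $\wp{\WHILEDO{\guard}{C}}{g} \preceq M \cdot T$ via monotonicity and linearity (\cref{thm:healthiness}) makes the passage from arbitrary bounded $g$ to the termination probability a one--line consequence, and you then prove uniform integrability of $I$ for $f$ directly by splitting $\Phi_f^{n+1}(I)(\State) = \expec{\State}{X_n^{f,I}}$ into the terminating part $A_n$ (monotone convergence plus \cref{thm:lfp_expectation}) and the non--terminating part $B_n$, which the hypothesis $\epsilon' \cdot I \preceq T$ squeezes to zero. The step you flag as delicate needs no separate Markov--property argument: it follows from the paper's own machinery, since applying \cref{cor:connection_expectation} to the process induced by $T$ with postexpectation $1$, and using that $T = \lfp \Phi_1$ is a fixed point of $\Phi_1$, gives
\begin{align*}
	T(\State) \eeq \Phi_1^{n+1}(T)(\State) \eeq \expec{\State}{X_n^{1,T}} \eeq \IP{\State}{\termtime{\guard} \leq n} \pplus \expec{\State}{T\bigl(\run[n{+}1]\bigr)\cdot \indicator{\termtime{\guard} > n}}~,
\end{align*}
while $T(\State) = \IP{\State}{\termtime{\guard} < \infty}$ by \cref{thm:lfp_expectation}; subtracting yields exactly $\expec{\State}{T\bigl(\run[n{+}1]\bigr)\cdot \indicator{\termtime{\guard} > n}} = \IP{\State}{n < \termtime{\guard} < \infty}$, which tends to $0$ by continuity from above of the probability measure. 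So your plan closes completely within the paper's framework. As for what each approach buys: the paper's defect--term inequality \cref{eq:lower_bounding} simultaneously yields parts (1) and (2) of \cref{thm:lower_bounds_mcivermorgan}, giving a unified treatment of all three McIver--Morgan rules, whereas your argument is more elementary (only monotone convergence and the tower property of conditional expectation) and makes the generalization to arbitrary bounded $g$ structural rather than an artifact of proof inspection.
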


\begin{example}
	\label{ex:mciver_morgan}
	Let us consider the program $C_{\mathit{rdw}}$ for an asymmetric random walk
	\begin{align*}
		 & \WHILE{x > 0}                                                                 \\
		 & \qquad \COMPOSE{\PCHOICE{\ASSIGN{x}{x-1}}{\sfrac{1}{3}}{\ASSIGN{x}{x + 1}}}{} \\
		 & \qquad \ASSIGN{y}{\max(y-1,0)}                                                \\
		 & \}~,
	\end{align*}
	with $x,y\in \Nats$ and $y \leq 100$.
	This program is \emph{not} AST but the body of the loop is indeed AST.
	Furthermore, the postexpectation $y$ is bounded.
	If $y \leq x$ initially then $y$ is $0$ after termination of the program.
	So, $\wp{C_{\mathit{rdw}}}{y}\geq \iverson{y > x}\cdot \left(\tfrac{1}{3}\right)^x\cdot (y-x)\coloneqq I$.

	Now consider $f=\iverson{y \text{ even}}\cdot 200\cdot y^2 + \iverson{y \text{
				odd}}\cdot (y+5)^4$. We have $I' \preceq \Phi_f(I')$, where $I'=400\cdot I$ (see \textnormal{\techrep{\cref{app:mciver_morgan}}\cameraready{\cite[App.\ E]{arxiv}}}). As we have $\tfrac{1}{400}\cdot I' \preceq \wp{C_{\mathit{rdw}}}{y}$ we can conclude from \textnormal{\cref{thm:generalization_mciver_morgan}} that $I' \preceq \wp{C_{\mathit{rdw}}}{f}$. Note that this is easier than relating $I'$ and the termination probability as required by \textnormal{\cref{thm:lower_bounds_mcivermorgan}} since the probability of termination of the loop is independent of $y$.
\end{example}
Of course, \cref{ex:mciver_morgan} is an artificial example.
Nevertheless, it shows a strength of our generalization: it makes it easier to reason about bounded expectations which are independent of the probability of termination.
However, a drawback of \cref{thm:lower_bounds_mcivermorgan} remains: one already needs a lower bound, i.e., one has to be able to read off a lower bound directly from the program.
\renewcommand{\charwp}[3]{\charfun{\wpsymbol}{#1}{#2}{#3}}
\renewcommand{\charwpn}[4]{\charfunn{\wpsymbol}{#1}{#2}{#3}{#4}}

\renewcommand{\charwp}[3]{\Phi_{#3}}
\renewcommand{\charwpn}[4]{\Phi_{#3}^{#4}}
\section{Upper Bounds and Fatou's Lemma}
\label{sec:fatou}

\noindent
We saw that Park induction for proving upper bounds does not require additional conditions such as conditional difference boundedness or even bound{\-}ed{\-}ness of $f$ or $I$, respectively.
The question arises whether this fact is also explainable using our canonical stochastic process.
Indeed, the well--known \emph{Lemma of Fatou} provides such an explanation.
We will present a specialized variant of it which is sufficient for our purpose.
\begin{lemma}[Fatou's Lemma (cf.\ \protect{\cite[Lemma~2.7.1]{bauer71measure}})]
	\label{fatou}
	Let $(X_n)_{n \in \Nats}$ be a stochastic process on a probability space $(\Omega, \F{F}, \mathbb{P})$.
	Then
	\begin{align*}
		\expec{}{\lim_{n \to \omega} X_n} \lleq \lim_{n \to \omega} \expec{}{X_n}~,
	\end{align*}
	where the $\lim$ on the left--hand--side is point--wise.
\end{lemma}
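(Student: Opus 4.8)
The plan is to reduce the statement to the classical form of Fatou's Lemma, whose standard engine is the Monotone Convergence Theorem (MCT, cf.\ \cite{bauer71measure}), exploiting that on the left--hand side the pointwise limit $\lim_{n \to \omega} X_n$ is assumed to exist, so that it coincides with $\liminf_{n \to \omega} X_n$. The key point that makes the argument clean is that every $X_n$ maps into $\PosRealsInf$, so all integrals are well defined (possibly $+\infty$) and no integrability side conditions are needed.

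First I would introduce the auxiliary process $(Y_n)_{n \in \Nats}$ defined by $Y_n \eeq \inf_{m \geq n} X_m$. Since each $X_m$ is $\F{F}$--measurable, the pointwise infimum of the countably many functions $(X_m)_{m \geq n}$ is again $\F{F}$--measurable and non--negative, so each $Y_n$ is a random variable. By construction $(Y_n)_{n \in \Nats}$ is monotonically increasing (the infimum is taken over a shrinking index set, so $Y_n \lleq Y_{n+1}$), and its pointwise supremum is $\sup_n Y_n \eeq \liminf_{n \to \omega} X_n$. Because $\lim_{n \to \omega} X_n$ exists pointwise by hypothesis, we have $\liminf_{n \to \omega} X_n \eeq \lim_{n \to \omega} X_n$ everywhere on $\Omega$.

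Next I would apply MCT to the increasing sequence $(Y_n)_{n \in \Nats}$, obtaining $\expec{}{\lim_{n \to \omega} X_n} \eeq \expec{}{\sup_n Y_n} \eeq \lim_{n \to \omega} \expec{}{Y_n}$. Since $Y_n \lleq X_m$ for every $m \geq n$, monotonicity of the integral yields $\expec{}{Y_n} \lleq \expec{}{X_m}$ for all $m \geq n$, hence $\expec{}{Y_n} \lleq \inf_{m \geq n} \expec{}{X_m}$. Passing to the limit $n \to \omega$ then gives $\expec{}{\lim_{n \to \omega} X_n} \lleq \liminf_{n \to \omega} \expec{}{X_n}$, which is exactly the asserted inequality (the $\lim$ on the right is read as $\liminf$, which always exists in $\PosRealsInf$, and coincides with $\lim$ when the sequence of expectations converges, as it does in our intended applications where $\expec{\State}{X_n^{f,I}} \eeq \charwpn{\guard}{C}{f}{n+1}(I)(\State)$ forms a monotone chain for a superinvariant).

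The steps above are all routine measure theory: measurability of the countable infima $Y_n$, their monotonicity, and monotonicity of the integral are immediate. I therefore expect essentially no genuine obstacle here; the only real content is the invocation of MCT, and the only subtlety is bookkeeping around $\infty$--valued functions and being explicit that ``$\lim$'' on the right may legitimately be taken as $\liminf$ when convergence of the expectation sequence is not separately guaranteed.
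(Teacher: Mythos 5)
Your proof is correct: the construction $Y_n \coloneqq \inf_{m \geq n} X_m$, measurability of countable infima, monotonicity of $(Y_n)_{n \in \Nats}$, the Monotone Convergence Theorem, and the final passage $\expec{}{Y_n} \lleq \inf_{m \geq n}\expec{}{X_m}$ followed by $n \to \omega$ is exactly the classical argument for Fatou's Lemma, and your reading of the right--hand ``$\lim$'' as $\liminf$ is the right way to make the statement unambiguous when convergence of the expectations is not separately assumed. The paper itself offers no proof of this lemma --- it imports it from Bauer's textbook (Lemma~2.7.1), whose proof is precisely this MCT--based argument --- so your route coincides with the one the paper implicitly relies on.
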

\noindent
We can now reprove Park induction for $\wpsymbol$ using Fatou's Lemma: Let $I$ be a superinvariant, i.e., $\charwp{\guard}{C}{f}(I) \preceq I$.
By \cref{thm:cond_expec}, the canonical stochastic process $\mathbf{X}^{f,I}$ satisfies
\begin{align*}
	\expec{\State}{X^{f,I}_{n+1} \mid \F{G}_n} \eeq X_n^{f, \charwp{\guard}{C}{f}(I)} \lleq X_n^{f,I}~.
\end{align*}
By applying $\phantom{}^{\State}\mathbb{E}$ on both sides, we obtain $\expec{\State}{X_0^{f,I}}\geq \expec{\State}{X_1^{f,I}}\geq \dots$.
This implies
\begin{align}
	\label{Fatou proof}
	\expec{\State}{X_0^{f,I}} \geq \expec{\State}{X^{f,I}_n} \geq \expec{\State}{X^{f,I}_n\cdot \indicator{(\termtime{\guard})^{-1}(\Nats)}}~,
\end{align}
as $X^{f,I}_n \geq X^{f,I}_n\cdot \indicator{(\termtime{\guard})^{-1}(\Nats)}$.
We conclude
\begin{align*}
	      & \Bigl(\lfp \charwp{\guard}{C}{f}\Bigr)(\State)                                                                                       \\
	\eeq  & \expec{\State}{X_{\termtime{\guard}}^f} \tag{by \cref{thm:lfp_expectation}}                                                          \\
	\eeq  & \expec{\State}{\lim\limits_{n \to \omega} X^{f,I}_n\cdot \indicator{(\termtime{\guard})^{-1}(\Nats)}} \tag{by \cref{lemma:as_limit}} \\
	\lleq & \lim\limits_{n \to \omega}\expec{\State}{X^{f,I}_n\cdot \indicator{(\termtime{\guard})^{-1}(\Nats)}} \tag{by Fatou's Lemma}          \\
	\lleq & \lim\limits_{n \to \omega}\expec{\State}{X^{f,I}_0} \tag{by \eqref{Fatou proof}}                                                     \\
	\eeq  & \expec{\State}{X^{f,I}_0}                                                                                                            \\
	\eeq  & \charwp{\guard}{C}{f}(I)(\State) \tag{by \cref{cor:connection_expectation}}                                                          \\
	\lleq & I(\State)~,\tag{since $\charwp{\guard}{C}{f}(I) \preceq I$}
\end{align*}
so $I$ is indeed an upper bound on the least fixed point.

Note that here we handle arbitrary loops, i.e., they are \emph{not} necessarily AST.
While $I$ being a superinvariant (plus some side conditions) still implies that $\mathbf{X}^{f,I}$ is a supermartingale, the second part of \cref{lemma:as_limit} is not applicable, i.e., in general we have $X_{\termtime{\guard}}^f \neq \lim_{n \to \omega} X_{n}^{f,I}$ if the loop is not AST.
So in this case we cannot use classic results from martingale theory.
Nevertheless, Fatou's Lemma combined with \cref{thm:lfp_expectation} and the first part of \cref{lemma:as_limit} provide a connection of Park induction for upper bounds to stochastic processes.
\renewcommand{\charwp}[3]{\charfun{\wpsymbol}{#1}{#2}{#3}}
\renewcommand{\charwpn}[4]{\charfunn{\wpsymbol}{#1}{#2}{#3}{#4}}

\section{Lower bounds on the expected runtime}
\label{sec:runtime}
So far, we have developed techniques for verifying lower bounds on weakest preexpectations, i.e., expected values of random variables upon program termination.
In this section, we transfer those techniques to verify lower bounds on \emph{expected runtimes} of probabilistic programs.
For this, we employ the $\ertsymbol$--transformer \cite{DBLP:journals/jacm/KaminskiKMO18,DBLP:conf/esop/KaminskiKMO16}, which is very similar to the $\wpsymbol$-transformer: Given program $C$ and post\emph{runtime} $t \in \E$, we are interested in the \emph{expected time} it takes to first execute $C$ and then let time $t$ pass (where $t$ is evaluated in the final states reached after termination of $C$).
Again, the behavior (and the runtime) of $C$ depends on its input, so we are actually interested in a \mbox{\emph{function}~$g \in \E$} mapping \emph{initial} states~$\State_0$ to the respective expected time.
For more details, see also \cite[Chapter 7]{thesis:kaminski}.
Similarly to weakest preexpectations, expected runtimes can be determined in a systematic and \emph{compositional} manner by means of the $\ertsymbol$ calculus:
\begin{definition}[The \textnormal{$\boldertsymbol$}--Transformer~\textnormal{\cite{DBLP:journals/jacm/KaminskiKMO18,DBLP:conf/esop/KaminskiKMO16}}]
	\label{defQQQert}
	Let $\pgcl$ be again the set of programs in the \emph{probabilistic guarded command language}.
	Then the expected~\mbox{runtime~transformer}
	\begin{align*}
		\ertsymbol\colon \pgcl \To \E \To \E
	\end{align*}
	is defined according to the rules given in \textnormal{\cref{table:ert}}.
	We call the function $\charert{\guard}{C}{t}$ the $\ertsymbol$--\emph{character{\-}is{\-}tic function} of the loop $\WHILEDO{\guard}{C}$ with respect to $t$.
	Its least fixed point is understood in terms of the partial order $\preceq$.
	To increase readability, we will again usually omit $\ertsymbol$, $\guard$, $C$, or $t$ from $\Phi$ whenever they are clear from the context.
	\begin{table}[t]
		\renewcommand{\arraystretch}{1.25}
		\begin{tabular}{@{\hspace{.1em}}l@{\hspace{1em}}l@{\hspace{.1em}}}
			\hline $\boldsymbol{C}$        & $\boldert{C}{t}$                                                                               \\[.25em]
			\hline $\SKIP$                 & $1 \pplus t$                                                                                   \\
			$\ASSIGN{b}{e}$                & $1 \pplus t\subst{b}{e}$                                                                       \\
			$\ITE{\guard}{C_1}{C_2}$       & $1 \pplus \iverson{\guard} \cdot \ert{C_1}{t} \pplus \iverson{\neg \guard} \cdot \ert{C_2}{t}$ \\
			$\PCHOICE{C_1}{p}{C_2}$        & $1 \pplus p \cdot \ert{C_1}{t} \pplus (1 - p) \cdot \ert{C_2}{t}$                              \\
			$\COMPOSE{C_1}{C_2}$           & $\ert{C_1}{\ert{C_2}{t}}$                                                                      \\
			$\WHILEDO{\guard}{\primed{C}}$ & $\lfp \charert{\primed{C}}{\guard}{t}$                                                         \\[.75em]
			\hline                                                                                                                          \\[-1.25em]
			\multicolumn{2}{c}{$\charert{\guard}{C}{t}(X) \eeq 1 \pplus \iverson{\neg \guard}\cdot t \pplus \iverson{\guard} \cdot \ert{C}{X} \qquad
					\begin{array}{c}
						\textnormal{\tiny characteristic} \\[-1em]
						\textnormal{\tiny function}
					\end{array}
			$}                                                                                                                              \\
			\\[-1em]
		\end{tabular}
		\hrule \vspace{1.25ex}
		\caption{Rules for the $\ertsymbol$--transformer.}
		\label{table:ert}
		\vspace{-1.5ex}
		\hrule
	\end{table}
	\noindent
\end{definition}
\begin{example}[Applying the $\boldertsymbol$ Calculus]
	\label{ex:ert-calc-a}
	Consider the probabilistic program $C$ given by
	\begin{align*}
		 & \COMPOSE{\PCHOICE{\ASSIGN{b}{b+5}}{\sfrac{4}{5}}{\ASSIGN{b}{10}}}{} \\
		 & \ITE{b = 10}{\SKIP}{\COMPOSE{\SKIP}{\SKIP}}
	\end{align*}
	Suppose we want to know the expected runtime of $C$. Then we need to determine $\ert{C}{0}$. Reusing the annotation styles of \textnormal{\cref{fig:wp-annotations}} for $\wpsymbol$, we make the following $\ertsymbol$ annotations:
	\allowdisplaybreaks%
	\begin{align*}
		 & \eqannotate{4 \pplus \iverson{b \neq5} \cdot \tfrac{4}{5} }                                                                                                 \\
		 & \ertannotate{1 \pplus \tfrac{4}{5} \cdot \bigl(1 + 2 + \iverson{b + 5 \neq 10} \bigr) \pplus \tfrac{1}{5} \cdot \bigl(1 + 2 + \iverson{10 \neq 10} \bigr) } \\
		 & \COMPOSE{\PCHOICE{\ASSIGN{b}{b+5}}{\sfrac{4}{5}}{\ASSIGN{b}{10}}}{}                                                                                         \\
		 & \eqannotate{2 \pplus \iverson{b \neq 10} }                                                                                                                  \\
		 & \ertannotate{1 \pplus \iverson{b = 10} \cdot (1 + 0) \pplus \iverson{b \neq 10} \cdot (1 + 1 + 0) }                                                         \\
		 & \ITE{b = 10}{\SKIP}{\COMPOSE{\SKIP}{\SKIP}}                                                                                                                 \\
		 & \annotate{0}
	\end{align*}
	\allowdisplaybreaks%
	At the top, we read off the expected runtime of $C$, namely $4 + \iverson{b \neq5} \cdot \tfrac{4}{5} $. This tells us that the expected runtime of $C$ is $4$ if started in an initial state where $b$ is $5$, and $4 + \tfrac{4}{5} = \tfrac{24}{5}$ otherwise.
\end{example}
\noindent
The $\ertsymbol$-- and the $\wpsymbol$--transformers are not only similar in definition, but they are closely connected by the following equality~\cite{DBLP:conf/lics/OlmedoKKM16}:
\begin{align*}
	\ert{C}{t} \eeq \ert{C}{0} + \wp{C}{t}~.
\end{align*}
In addition, reasoning about upper bounds by Park induction works exactly the same way. For reasoning about lower bounds using subinvariants, notice above that $\ert{C}{0}$ is \emph{independent of~$t$}. So, we can combine our derivation of \cref{thm:optional_stopping_probabilistic_programs} for lower bounds on \wpsymbol{} in \cref{sec:expectations_processes,sec:optional-stopping-wp} with the equation above to establish the first inductive rule for verifying lower bounds on expected runtimes:
\begin{restatable}[Inductive Lower Bounds on Expected Runtimes]{theorem}{thmlowerboundsert}
	\label{thm:lower_bounds_ert}
	Let $t, I \in \E$ with $t,I \pprec \infty$ and let $I$ harmonize with $t$. Furthermore, let 
	$\charertnoindex{\guard}{C}{t}$ be the $\ertsymbol$--characteristic function of the loop $\WHILEDO{\guard}{C}$ with respect to $t$. If $I$ is conditionally difference bounded and $\charwpnoindex{\guard}{C}{t}(I) \pprec \infty$, then
	\begin{align*}
		I \ppreceq \charertnoindex{\guard}{C}{t}(I) \qqimplies I \ppreceq \ert{\WHILEDO{\guard}{C}}{t}~.
	\end{align*}
	We call an $I$ that satisfies $I \preceq \charertnoindex{\guard}{C}{t}(I)$ a \emph{runtime subinvariant}.
\end{restatable}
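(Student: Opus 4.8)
The plan is to reduce the runtime statement to the weakest‑preexpectation machinery already developed, exploiting the identity $\ert{C}{t} = \ert{C}{0} + \wp{C}{t}$ (stated just before the theorem). Writing $L(X) = \iverson{\guard}\cdot\wp{C}{X}$ for the ``loop‑body part'', which is linear and strict by \cref{thm:healthiness}, the two characteristic functions share exactly this linear part: substituting the identity into the table rows gives $\charwpnoindex{\guard}{C}{t}(X) = L(X) + b$ and $\charertnoindex{\guard}{C}{t}(X) = L(X) + d$, where $b = \iverson{\neg\guard}\cdot t$ and $d = b + a$ with $a = 1 + \iverson{\guard}\cdot\ert{C}{0}$ \emph{independent of the argument $X$} (this is the precise meaning of ``$\ert{C}{0}$ is independent of $t$''). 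Hence by Kleene iteration $\lfp\charwpnoindex{\guard}{C}{t} = \sum_{k\ge 0} L^k(b)$ and $\lfp\charertnoindex{\guard}{C}{t} = \sum_{k\ge 0} L^k(d)$, and, since $I$ is a runtime subinvariant, the ascending Tarski–Kantorovich chain (\cref{thm:tarski-katorovich}) satisfies $(\charertnoindex{\guard}{C}{t})^n(I) = L^n(I) + \sum_{k=0}^{n-1} L^k(d)$.

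First I would reduce the goal to a single statement about $L$. Fix a state $\State$. If $\ert{\WHILEDO{\guard}{C}}{t}(\State) = \infty$ the bound $I(\State)\le\infty$ is trivial, so assume it is finite; then $\lfp\charertnoindex{\guard}{C}{t}(\State) = \sum_{k} L^k(d)(\State) < \infty$, and letting $n\to\omega$ in the chain yields $(\charertnoindex{\guard}{C}{t})^\omega(I)(\State) = \lim_n L^n(I)(\State) + \lfp\charertnoindex{\guard}{C}{t}(\State)$. Because $I \preceq (\charertnoindex{\guard}{C}{t})^\omega(I)$ always holds for a subinvariant, it suffices to prove $\lim_{n\to\omega} L^n(I)(\State) = 0$; this is exactly the runtime analogue of \cref{thm:uniform_integrability_least_fixed_point}, whose proof uses only monotonicity and \cref{thm:tarski-katorovich} and so applies verbatim to $\charertnoindex{\guard}{C}{t}$.

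The crucial observation is that the limit $\lim_n L^n(I) = 0$ is governed by $L$ and $I$ alone, independently of the constants $b$ and $d$. Indeed, \cref{cor:connection_expectation} gives $\expec{\State}{X_n^{t,I}} = (\charwpnoindex{\guard}{C}{t})^{n+1}(I)(\State) = L^{n+1}(I)(\State) + \sum_{k=0}^{n} L^k(b)(\State)$, whereas \cref{thm:lfp_expectation} and \cref{lemma:as_limit} give $\expec{\State}{\lim_n X_n^{t,I}} = \expec{\State}{X_{\termtime{\guard}}^{t}} = \lfp\charwpnoindex{\guard}{C}{t}(\State) = \sum_{k} L^k(b)(\State)$. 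Subtracting, $\lim_n L^n(I)(\State) = 0$ is precisely uniform integrability of the induced process $\mathbf{X}^{t,I}$ (\cref{def:uniform_integrability_stochastic_processes}), so the theorem comes down to establishing that $\mathbf{X}^{t,I}$ is uniformly integrable. For this I would argue as in case (b) of the Optional Stopping Theorem: finiteness of $\ert{\WHILEDO{\guard}{C}}{t}(\State)\ge\ert{\WHILEDO{\guard}{C}}{0}(\State)$ forces $\expec{\State}{\termtime{\guard}}<\infty$ (each iteration costs at least one time unit); $\mathbf{X}^{t,I}$ coincides with its own stopping at $\termtime{\guard}$, being absorbed at the terminal value $t(\run[\termtime{\guard}(\run)])$; by \cref{thm:conditional_difference_boundedness} and $I$ being conditionally difference bounded its increments satisfy $\expec{\State}{\big|X_{n+1}^{t,I} - X_n^{t,I}\big| \mid \F{G}_n^\lp} = X_n^{0,\Diff{I}} \le c$; and $\expec{\State}{X_0^{t,I}} = \charwpnoindex{\guard}{C}{t}(I)(\State)$ is finite by hypothesis. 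A domination argument then bounds every $X_n^{t,I}$ by the integrable variable $X_0^{t,I} + \sum_{k<\termtime{\guard}}\big|X_{k+1}^{t,I} - X_k^{t,I}\big|$, whose expectation is at most $\charwpnoindex{\guard}{C}{t}(I)(\State) + c\cdot\expec{\State}{\termtime{\guard}} < \infty$ (using $\{\termtime{\guard}>k\}\in\F{G}_k^\lp$), so dominated convergence delivers uniform integrability.

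I expect the main obstacle to be exactly this last step. Although $\hat I := \tau + I$ looks like a $\wpsymbol$‑subinvariant of a time‑instrumented loop, that route would require the per‑iteration runtime $\ert{C}{0}$ to be bounded by a constant for $\hat I$ to remain conditionally difference bounded, which is \emph{not} assumed; so I avoid instrumentation. The price is that $I$, being a runtime subinvariant, is in general \emph{not} a $\wpsymbol$‑subinvariant (on guard states it may exceed $\wp{C}{I}$ by the per‑iteration runtime), whence $\mathbf{X}^{t,I}$ is \emph{not} a submartingale and \cref{thm:optional_stopping_probabilistic_programs}(b) cannot be cited verbatim. The point I would verify carefully is that the uniform‑integrability conclusion of \cref{thm:optional_stopping}(b) rests only on conditionally bounded increments and a finite‑expectation stopping time — the domination estimate above — and does \emph{not} use the submartingale property, which is needed solely for the accompanying \emph{inequality}; since the runtime argument never uses that inequality, this is enough. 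A secondary point is to confirm that the single finiteness hypothesis $\charwpnoindex{\guard}{C}{t}(I)\pprec\infty$, propagated through the bounded increments, keeps all $\expec{\State}{X_n^{t,I}}$ finite so that the process is genuinely integrable.
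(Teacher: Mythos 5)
Your proposal is correct and follows essentially the same route as the paper's proof: both split off the runtime accumulation via $\ert{C}{t} = \ert{C}{0} + \wp{C}{t}$ and linearity (the paper proves $\Phi^{n}_{\ertsymbol,t}(I) = \Phi^{n}_{\wpsymbol,t}(I) + \Phi^{n}_{\ertsymbol,0}(0)$ by induction, which is your affine decomposition in disguise), case--split on finiteness of the expected runtime at the given state, and in the finite case conclude convergence of the $\wpsymbol$--iterates of $I$ to $\wp{\WHILEDO{\guard}{C}}{t}$ from conditional difference boundedness and finite expected looping time via the domination estimate, finishing with the monotone chain from $\ertsymbol$--subinvariance. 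The subtlety you flagged --- that $I$ need not be a $\wpsymbol$--subinvariant, so only the uniform--integrability part of the Optional Stopping machinery (not the submartingale inequality) may be used --- is exactly how the paper proceeds: it invokes a standalone per--state lemma whose hypotheses are precisely harmonization, c.d.b., $\Phi_{\wpsymbol,t}(I) \pprec \infty$, and finite expected looping time, with no subinvariance assumption.
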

\noindent
The proof of \cref{thm:lower_bounds_ert} can be found in \techrep{\cref{app:runtime_proofs}}\cameraready{\cite[App.\ F.1]{arxiv}}. We now illustrate the applicability of \cref{thm:lower_bounds_ert}:
\begin{example}[Coupon Collector~\textnormal{\protect{\cite{doi:10.1002/zamm.19300100113}}}]
	\label{ex:coupon-collector}
	Consider the well--known coupon collector's problem: There are $N$ different types coupons. A collector wants to collect at least one of each type. Each time she buys a new coupon, its type is drawn uniformly at random. How many coupons does she (expectedly) need to buy in order to have collected at least one coupon of each type?

	We can model this problem by the program $C_{\mathit{cc}}$ for some non--zero
        natural number $N \in \Nats$:\footnote{In \cite{popl}, the guard of the inner loop is just ``$x < i$'' and thus, the body of the outer loop (as a standalone program) is not AST. Here, we fix this by changing the guard to ``$0 < x < i$''. While we adapted the calculations accordingly, this does not affect our overall result since the outer loop is only entered if $x$ is positive.}
	\begin{align*}
		 & \COMPOSE{\ASSIGN{x}{N}}{}                     \\
		 & \WHILE{0 < x}                                 \\
		 & \qquad \COMPOSE{\ASSIGN{i}{N+1}}{}            \\
		 & \qquad \WHILE{0 < x < i}                      \\
		 & \qquad \qquad \ASSIGN{i}{\mathrm{Unif}[1..N]} \\
		 & \qquad \COMPOSE{\}}{}                         \\
		 & \qquad \ASSIGN{x}{x-1}                        \\
		 & \}~,
	\end{align*}
	Variable $x$ represents the number of uncollected coupon types. The inner loop models the buying of new coupons until an uncollected type is drawn.\footnote{The random assignment $\ASSIGN{i}{\mathrm{Unif}[1..N]}$ does --- strictly speaking --- not adhere to our $\pgcl$ syntax, but it can be modeled in $\pgcl$. For the sake of readability, we opted for $\ASSIGN{i}{\mathrm{Unif}[1..N]}$.}

	The expected runtime of $C_{\mathit{cc}}$ is proportional to the expected number
        of coupons the collector needs to buy. We want to prove that $N \cdot \harm{N}$ is a lower bound on that expected runtime, where $\harm{m}$ is the $m$-th harmonic number, i.e., $\harm{0} = 0$ and $\harm{m} = \sum_{k = 1}^{m} \tfrac{1}{k}$. For this, we make the following annotations, reusing the annotation style of \textnormal{\cref{fig:loop-annotations}} (for more detailed annotations, see \textnormal{\techrep{\cref{app:coupon-collector}}\cameraready{\cite[App.\ F.2]{arxiv}}}):

	\allowdisplaybreaks
	\begin{align*}
		 & \eqannotate{1 \pplus N \cdot \harm{N} }                                                                                                       \\
		 & \ertannotate{1 \pplus \iverson{0 < N \leq N} \cdot N \cdot \harm{N} \pplus \iverson{N < N} \cdot (N \cdot \harm{N} + N - N)}                        \\
		 & \COMPOSE{\ASSIGN{x}{N}}{}                                                                                                               \\
		 & \preceqannotate{\iverson{0 < x \leq N} \cdot N \cdot \harm{x} \pplus \iverson{N < x} \cdot (N \cdot \harm{N} + x - N)}                        \\
		 & \eqannotate{
			1 \pplus \iverson{0 < x \leq N} \cdot \left( N \cdot \harm{x} + 3 + \tfrac{N}{x}\right) + \iverson{N < x} \cdot (2 + N \cdot \harm{N} + x - N)
		}
		\\
		 & \phiannotate{
			1 \pplus \iverson{x \leq 0} \cdot 0 \pplus \iverson{0 < x} \cdot \Bigl( 2 + t + \iverson{0 < x \leq N} \cdot \tfrac{2 \cdot N}{x}
			\Bigr)
		}                                                                                                                                          \\
		 & \WHILE{0 < x}                                                                                                                           \\
		 & \qquad \eqannotate{
			2 \pplus t \pplus \iverson{0 < x \leq N} \cdot \tfrac{2 \cdot N}{x}
		}                                                                                                                                          \\
		 & \qquad \ertannotate{
			1 \pplus 1 \pplus t \pplus \iverson{0 < x < N + 1} \cdot 2 \cdot \Max{\tfrac{N}{x}}{1}
		}                                                                                                                                          \\
		 & \qquad \COMPOSE{\ASSIGN{i}{N+1}}{}                                                                                                      \\
		 & \qquad \ertannotate{
			1 \pplus t \pplus \iverson{0 < x < i} \cdot 2 \cdot \Max{\tfrac{N}{x}}{1}
		} \tag{by \techrep{\cref{app:coupon-collector}, \cref{lem:batz-special-case}}\cameraready{\cite[App.\ F.2, Lemma 86]{arxiv}}}              \\
		 & \qquad \COMPOSE{\WHILEDO{0 < x < i}{ \ASSIGN{i}{\mathrm{Unif}[1..N]} }}{}                                                               \\
		 & \qquad \eqannotate{\underbrace{
			1 \pplus \iverson{1 < x \leq N + 1} \cdot N \cdot \left(\harm{x} - \tfrac{1}{x}\right) \pplus \iverson{N + 1 < x} \cdot (N \cdot \harm{N} + x - 1 - N)
		}_{~{}~{}~{}~{}~{}\eqqcolon t}}                                                                                                            \\
		 & \qquad \ertannotate{1 \pplus \iverson{0 < x - 1 \leq N} \cdot N \cdot \harm{x - 1} \pplus \iverson{N < x - 1} \cdot (N \cdot \harm{N} + (x-1) - N)} \\
		 & \qquad \ASSIGN{x}{x-1}                                                                                                                  \\
		 & \qquad \starannotate{\iverson{0 < x \leq N} \cdot N \cdot \harm{x} \pplus \iverson{N < x} \cdot (N \cdot \harm{N} + x - N)} ~\}                     \\
		 & \annotate{0}
	\end{align*}
	\allowdisplaybreaks
	By our above annotations, we have shown that\footnote{In \cite{popl}, there was a typo in the invariant $I$. In the second case, it said ``$N - x$'' which leads to negative values. Here, we correct this mistake and adapt the calculations accordingly. Again, this does not change the overall result.}
	\begin{align*}
		I \eeq \iverson{0 < x \leq N} \cdot N \cdot \harm{x} \pplus \iverson{N < x} \cdot (N \cdot \harm{N} + x - N)
	\end{align*}
	is indeed a runtime subinvariant of the outer loop. Before we finish proving that $I$ is indeed a lower bound on the expected runtime of the outer loop, let us take a closer look at the meaning of $I$: If $I$ is a lower bound, the outer loop takes at least expected runtime $N \cdot \harm{x}$ if $x$ is between $1$ and $N$, and expected runtime $N \cdot \harm{N} + x - N$ if $x$ is larger than $N$. In the second case, the too--large $x$ value suggests that we have to collect more coupons than there are different coupons. So we first collect $x - N$ arbitrary ``excess coupons" before we enter the ``normal coupon collector mode" and collect the remaining $N$ coupons in expected time $N \cdot \harm{N}$. Indeed, $N \cdot \harm{x}$ (without case analysis) is \emph{not} a lower bound on the expected runtime and we would in fact fail to prove its subinvariance.

	For the inner loop, we have used the fact that this loop is a so--called \emph{independent and identically distributed loop}, for which \emph{exact} expected runtimes can be determined~\mbox{\textnormal{\cite[Theorem 4]{DBLP:conf/esop/BatzKKM18}}}. For more details, see \textnormal{\techrep{\cref{app:coupon-collector}, \cref{lem:batz-special-case}}\cameraready{\cite[App.\ F.2, Lemma 86]{arxiv}}}. We stress that while in this case we had an \emph{exact} expected runtime for the inner loop available by external techniques, a suitable \emph{underapproximation} of the expected runtime of the inner loop using the technique presented in this paper~(\textnormal{\cref{thm:lower_bounds_ert}}) would have worked as well. Hence, our technique \emph{is} generally applicable to nested loops.

	At the very top of the above annotations, we push $I$ over the initial assignment,
        thus verifying $1 + N \cdot \harm{N}$ (and hence also~$N \cdot \harm{N}$) as lower bound for the entire expected runtime of $C_{\mathit{cc}}$.

	In order to establish that the subinvariant $I$ is in fact a lower bound, we are still left to prove conditional difference boundedness of $I$. For this, we first make the following annotations:
	\begin{align*}
		 & \annotate{\bigl| I\subst{x}{x-1} - I(s) \bigr|} \tag{$I$ does not depend on $i$}   \\
		 & \ASSIGN{i}{N+1}                                                                    \\
		 & \annotate{\bigl| I\subst{x}{x-1} - I(s) \bigr|}
		\tag{by almost-sure term.~of inner loop and \cite[Lemma 1]{DBLP:conf/esop/BatzKKM18}} \\
		 & \WHILEDO{0 < x < i}{ \ASSIGN{i}{\mathrm{Unif}[1..N]} }                             \\
		 & \annotate{\bigl| I\subst{x}{x-1} - I(s)\bigr|}                                     \\
		 & \ASSIGN{x}{x-1}                                                                    \\
		 & \annotate{\bigl| I - I(s) \bigr|}
	\end{align*}
	Now that we have determined $\smash{\wp{\mathit{outer~loop~body}}{\bigl| I - I(s) \bigr|}}$, we finally bound $\Diff{I}$:
	\begin{align*}
		\Diff{I}
		\eeq     & \lambda s \mydot \wp{\mathit{outer~loop~body}}{\bigl| I - I(s) \bigr|} \\
		\eeq     & \lambda s \mydot \bigl| I\subst{x}{x-1} - I(s) \bigr|                  \\
		\eeq     & \iverson{x = 1} \cdot N \pplus \iverson{1 < x < N} \cdot \tfrac{N}{x}
		\pplus \iverson{N + 1 \leq x}
		\tag{by case analysis}                                                            \\
		\ppreceq & N
	\end{align*}
	Hence, $\Diff{I}$ is bounded by a constant, as $N$ is constant \emph{within} the program $C_{\mathit{cc}}$. Finally, we would still have to show $\charwpnoindex{\guard}{C}{t}(I) \pprec \infty$, which is easily checked and thus omitted here. This concludes our lower bound proof for the coupon collector's problem.
\end{example}
\noindent
In the example above, we have verified that $N \cdot \harm{N}$ is a lower bound on the expected runtime of the coupon collector program. This lower bound enjoys several nice properties: For one, our lower bound is an \emph{exact} asymptotic lower bound. Another fact is that our lower bound is a \emph{strict} lower bound. The actual runtime is a bit higher, as we have omitted some constants. This is, however, a desirable fact, as often we are only interested in the asymptotic runtime and do not wish to bother with the constants. Notice further, that we never had to find the limit of any sequence. Loop semantics (be it $\wpsymbol$ or $\ertsymbol$) were all applied only finitely many times in order to verify a tight asymptotic lower bound.\footnote{This is also true for the technique we used for the inner loop.}
All in all, the above example demonstrates the effectiveness of our inductive lower bound rule.

\section{Related Work}
\label{sec:related}

\paragraph{Weakest preexpectation reasoning.}
The weakest preexpectation calculus goes back to the predicate transformer calculus by~\cite{DBLP:journals/cacm/Dijkstra75,DBLP:books/ph/Dijkstra76}, which provides an important tool for qualitative formal reasoning about nonprobabilistic programs.
The probabilistic and quantitative analog to predicate transformers for nonprobabilistic programs are \emph{expectation transformers} for probabilistic programs.
Weakest--preexpectation--style reasoning was first studied in seminal work on probabilistic propositional dynamic logic (PPDL) by~\cite{DBLP:conf/stoc/Kozen83,DBLP:journals/jcss/Kozen85}.
Its box-- and diamond--modalities provide probabilistic versions of Dijkstra's weakest (liberal) preconditions.
Amongst others, \cite{DBLP:phd/ethos/Jones90}, \cite{DBLP:journals/toplas/MorganMS96}, \cite{DBLP:series/mcs/McIverM05}, and \cite{Hehner:FAC:2011} have furthered this line of research, e.g., by considering nondeterminism and proof rules for bounding preexpectations in the presence of loops.
Work towards automation of weakest preexpectation reasoning was carried out, amongst others, by \cite{ChenCAV2015}, \cite{DBLP:journals/afp/Cock14}, \cite{DBLP:conf/sas/KatoenMMM10}, and \cite{DBLP:conf/atva/FengZJZX17}.
Abstract interpretation of probabilistic programs was studied in this setting by \cite{DBLP:journals/scp/Monniaux05}.

\paragraph{Bounds on weakest preexpectations.}

Rules for bounding weakest preexpectations were considered from very early on.
Already \cite{DBLP:conf/stoc/Kozen83} provides an induction rule for verifying upper bounds.
Pioneering work on lower bounds by means of limits of sequences was carried out by \cite{DBLP:phd/ethos/Jones90} and later reconsidered by \cite{DBLP:journals/scp/AudebaudP09}.
Proof rules that do not make use of limits were studied by \cite{morgan1996proof} and later more extensively in \techrep{\linebreak}\cite{DBLP:series/mcs/McIverM05}.
An orthogonal approach to lower bounds by means of bounded model checking was explored by~\cite{DBLP:conf/atva/0001DKKW16}.

\paragraph{Advanced weakest preexpectation calculi.}

Apart from reasoning about expected values of random variables at termination of simple $\pgcl$ programs, more advanced expectation--based calculi were invented.
For instance, \cite{DBLP:journals/igpl/MorganM99} use expectation transformers to reason about temporal logic.
More recently, \cite{DBLP:journals/toplas/OlmedoGJKKM18} studies expectation transformers for probabilistic programs with \emph{conditioning}.
\cite{DBLP:conf/esop/KaminskiKMO16,DBLP:conf/lics/OlmedoKKM16,DBLP:journals/jacm/KaminskiKMO18} introduce expectation based calculi to reason about expected runtimes of probabilistic programs.
\cite{QSLpopl} present a quantitative separation logic together with a weakest preexpectation calculus for verifying probabilistic programs with pointer--access to dynamic memory.

In all of the above works, the rules for lower bounds rely throughout on finding limits of sequences as well as the sequences themselves.
In particular, the proof of the (exact) expected runtime of the coupon collector by \cite{DBLP:conf/esop/KaminskiKMO16} requires a fairly complicated sequence, whereas our invariant in \cref{ex:coupon-collector} was conceptually fairly easy and thus more informative for a human.

\paragraph{Martingale--based reasoning.}

Probabilistic program analysis using martingales was pioneered by \cite{DBLP:conf/cav/ChakarovS13}.
Our rules rely on the notions of \emph{uniform integrability} and \emph{conditional difference boundedness} as well as the \emph{Optional Stopping Theorem}.
Previous works have also used these notions.
\cite{DBLP:conf/cav/BartheEFH16} focus on synthesizing \emph{exact} martingale expressions.
\cite{DBLP:conf/popl/FioritiH15}
develop a type system for uniform integrability in order to prove (positive) almost--sure termination\footnote{Termination with probability 1 (within finite expected time).} of probabilistic programs and give upper bounds on the expected runtime.
\cite{DBLP:conf/vmcai/FuC19}
give lower bounds on expected runtimes.
\cite{DBLP:journals/corr/abs-1811-02133} provide a semi--decision procedure for lower bounding termination probabilities of probabilistic higher--order recursive programs.
\cite{DBLP:conf/pldi/NgoC018} perform automated template--driven resource analysis, but infer \mbox{upper bounds only}.

The latter four works analyze the termination behavior of a probabilistic program, whereas we focus on \emph{general} expected values, e.g., of program variables.
Furthermore, we do not only \emph{make use} of uniform integrability and/or conditional difference boundedness of some auxiliary stochastic process in order to prove soundness of our proof rules but establish tight connections between expectation--based reasoning via induction and martingale--based reasoning.

Other work on probabilistic program analysis by specialized kinds of martingales includes \techrep{\linebreak}\cite{DBLP:conf/sas/ChakarovS14}, \cite{DBLP:conf/popl/ChatterjeeFNH16}, \cite{DBLP:conf/popl/ChatterjeeNZ17}, \cite{DBLP:journals/corr/abs-1709-04037}, \cite{DBLP:conf/aplas/HuangFC18}, \cite{DBLP:conf/vmcai/FuC19}, and \cite{DBLP:conf/pldi/Wang0GCQS19}.
For instance, regarding expected runtimes of probabilistic (and possibly nondeterministic) programs, \cite{DBLP:conf/vmcai/FuC19} construct \emph{difference bounded} (as opposed to \emph{conditionally} difference bounded, which is a strictly weaker requirement) supermartingales which have to correspond to the \emph{exact} asymptotic expected runtime.
In contrast, our rule allows for reasoning about \emph{strict} lower bounds.

\section{Conclusion}
In this paper, we have studied proof rules for lower bounds in probabilistic program verification.
Our rules are \emph{simple} in the sense that the invariants need to be ``pushed through the loop semantics" only a \emph{finite} number of times, much like invariants in Hoare logic.
In contrast, existing rules for lower bounds of unbounded weakest preexpectations required coming up with an infinite \emph{sequence} of invariants, performing induction to prove relative inductiveness of two subsequent invariants, and then --- most unpleasantly --- finding the limit of this sequence.
The main results of this paper are the following:
\begin{enumerate}
	\item We have presented the first \emph{inductive proof rules}
	      (\cref{thm:optional_stopping_probabilistic_programs}~(a) and (b)) for verifying \emph{lower bounds on (possibly unbounded) weakest preexpectations of probabilistic while loops} using quantitative \emph{invariants}.
	      Our inductive rules are given as an \emph{Optional Stopping Theorem (OST) for weakest preexpectations}.
	      They provide sufficient conditions for the requirement of \emph{uniform integrability} which are much easier to check than uniform integrability in general.
	      Case studies demonstrating the effectiveness but also the limitations of these rules are found in~\techrep{\cref{app:examples}}\cameraready{\cite[App.\ A]{arxiv}}.

	\item For proving our OST, we resort to the classical OST from probability theory.
	      However, for most notions that appear in the classical OST, like \emph{uniform integrability} and \emph{conditional difference boundedness}, we were able to find purely expectation--transformer--based counterparts (see \cref{sec:expectations_processes,sec:optional-stopping-wp}).
	      We thus conjecture that our OST can be proven in purely expectation--theoretic terms, which would most likely simplify the proof of our OST significantly as no probability theory would be required anymore.

	\item We studied the inductive proof rules for lower bounds on \emph{bounded} weakest preexpectations from \cite{DBLP:series/mcs/McIverM05}.
	      Our results gave rise to a generalization of their proof rule to a \emph{sufficient and necessary} criterion for lower bounds.
	      (\cref{thm:generalization_mciver_morgan}).

	\item We have investigated a measure theoretical explanation for why verifying upper bounds using domain theoretical Park induction is conceptually simpler (\cref{sec:fatou}).
	      The underlying reason is the well--known \emph{Lemma of Fatou}.
	      This leads us to speculate that Fatou's Lemma could be proved in purely domain theoretical terms, perhaps as an instance of Park induction.
	      A successful attempt at a similar idea is due to \cite{DBLP:journals/dm/Baranga91} who proved that the well--known \emph{Banach Contraction Principle} is a particular instance of the Kleene Fixed Point Theorem.

	\item We used the close connection between $\wpsymbol$ and $\ertsymbol$ to present the first inductive proof rule for lower bounding expected runtimes (\cref{thm:lower_bounds_ert}).
	      As an example to demonstrate the power of this rule, we inferred a nontrivial lower bound on the expected runtime of the famous coupon collector's problem (\cref{ex:coupon-collector}).
\end{enumerate}

\noindent{}
Future work includes extending our proof rules for weakest preexpectation reasoning to recursive programs \cite{DBLP:conf/lics/OlmedoKKM16}, to probabilistic programs with nondeterminism~\cite{DBLP:journals/tcs/McIverM01a,DBLP:series/mcs/McIverM05}, and to \emph{mixed--sign} postexpectations.
For the latter, this will likely yield more appealing proof rules for loops than those provided in~\cite{DBLP:conf/lics/KaminskiK17} which currently involve reasoning about sequences.
Moreover, we are interested in (partially) automating the synthesis of the quantitative invariants needed in our proof rules.

\begin{acks}                            
The authors gratefully acknowledge the support of the German Research Council (DFG)
Research Training Group 2236 UnRAVeL and ERC Advanced Grant 787914 FRAPPANT. Furthermore,
we would like to thank Florian Frohn and Christoph Matheja for many fruitful discussions on examples and counterexamples.
\end{acks}

\bibliography{inductive_lower_bounds}

\techrep{
\clearpage
\appendix
\section*{Appendix}
This appendix contains additional material for our paper.
\cref{app:examples} presents a collection of case studies to demonstrate the strengths and the limitations of our
rule. In \cref{app:prelim_probability}, we give a more detailed introduction into
the required preliminaries from probability theory. Afterwards, in
\cref{app:proofs_expectations_processes}, we present the proofs for
\cref{sec:expectations_processes}. \cref{app:proofs_optional_stopping} then contains
the proofs of our main results in \cref{sec:optional-stopping-wp}. In \cref{app:mciver_morgan} we give the proofs for \cref{sec:mciver_morgan}. Finally, \cref{app:runtime} contains the proofs of our result for lower bounding the expected runtime from \cref{sec:runtime}.

\noindent
\renewcommand{\charwp}[3]{\Phi_{#3}}
\renewcommand{\charwpn}[4]{\Phi_{#3}^{#4}}
\section{Case Studies}
\label{app:examples}
\begin{example}
	[Negative Binomial Loop \protect{(cf.
			\cite{POPL18})}] \label{ex:negative_binomial_loop}
	Let us consider the program $C_{neg}$
	\begin{align*}
		 & \WHILE{x > 0}                                                     \\
		 & \qquad \PCHOICE{\ASSIGN{x}{x-1}}{\sfrac{1}{2}}{\ASSIGN{k}{k + 1}} \\
		 & \}~,
	\end{align*}
	with $x,k\in \Nats$.
	The characteristic function for $f=k$ of the program is given by
	\begin{align*}
		\charwp{\guard}{C}{f}(X) \eeq \iverson{x=0}\cdot k + \iverson{x>0}\cdot \tfrac{1}{2}\cdot \Bigl( X\subst{x}{x-1} + X\subst{k}{k+1}\Bigr)~.
	\end{align*}
	The loop is expected to be executed $2 \cdot \iverson{x>0}\cdot x$ times, so its expected looping time is finite.
	Intuitively, the value of $k$ after termination of the program is $I=\iverson{x=0}\cdot k + \iverson{x>0} \cdot (k+x)$, i.e., the initial value of $k$ increases by the initial value of $x$ if the loop can be executed at all.
	Note that $I$ harmonizes with $f$.
	We will prove that our intuition for $I$ is correct, i.e., that $I$ is indeed the least fixed point of $\charwp{\guard}{C}{f}$.

	First of all, it is a fixed point of $\charwp{\guard}{C}{f}$:

	\begin{align*}
		\charwp{\guard}{C}{f}(I) {} = {} & \iverson{x=0}\cdot k + \iverson{x>0}\cdot \tfrac{1}{2}\cdot\Bigl( I\subst{x}{x-1} + I\subst{k}{k+1}\Bigr)                                                                        \\
		{} = {}                          & \iverson{x=0}\cdot k + \iverson{x>0}\cdot \tfrac{1}{2}\cdot \Bigl( \iverson{x-1=0}\cdot k + \iverson{x-1>0} \cdot (k+x-1)                                                        \\
		                                 & \quad+ \iverson{x=0}\cdot k + \iverson{x>0} \cdot (k+1+x)\Bigr)                                                                                                                  \\
		{} = {}                          & \iverson{x=0}\cdot k + \iverson{x>0}\cdot \tfrac{1}{2}\cdot \Bigl( \iverson{x=1}\cdot k + \iverson{x>1} \cdot (k+x-1) + \iverson{x=0}\cdot k + \iverson{x>0} \cdot (k+1+x)\Bigr) \\
		{} = {}                          & \iverson{x=0}\cdot k + \iverson{x>0}\cdot \tfrac{1}{2}\cdot \Bigl( \iverson{x>0} \cdot (k+x-1) + \iverson{x>0} \cdot (k+1+x)\Bigr)                                               \\
		{} = {}                          & \iverson{x=0}\cdot k + \iverson{x>0} \cdot (k+x)=I
	\end{align*}
	Since $I$ is indeed a fixed point of $\charwp{\guard}{C}{f}$, it is also a subinvariant and furthermore finite.
	To apply \textnormal{\cref{thm:optional_stopping_probabilistic_programs} (b)} we need to check that $I$ is conditionally difference bounded.
	We derive
	\begin{align*}
		        & \Diff{I}(\State)                                                                                                                                                                                             \\
		{} = {} & \iverson{x>0}(\State)\cdot \tfrac{1}{2}\cdot\Bigl(\abs{I-I(\State)}\subst{x}{x-1}+\abs{I-I(\State)}\subst{k}{k+1}\Bigr)                                                                                      \\
		{} = {} & \iverson{x>0}(\State)\cdot \tfrac{1}{2}                                                                                                                                                                      \\
		        & \quad\cdot\Bigl(|\iverson{x-1=0}(\State)\cdot \State(k) +\iverson{x-1>0}(\State)\cdot(\State(k) + \State(x) - 1)                                                                                             \\
		        & \qquad-\iverson{x=0}(\State)\cdot \State(k) -\iverson{x>0}(\State)\cdot (\State(k)+\State(x))|                                                                                                               \\
		        & \qquad +|\iverson{x=0}(\State)\cdot (\State(k)+1)+\iverson{x>0}(\State)\cdot(\State(k)+1+\State(x))                                                                                                          \\
		        & \qquad-\iverson{x=0}(\State)\cdot \State(k) - \iverson{x>0}(\State)\cdot(\State(k)+\State(x))|\Bigr)                                                                                                         \\
		{} = {} & \iverson{x>0}(\State)\cdot \tfrac{1}{2} \cdot                                                                                                                                                                \\
		        & \quad\Bigl(\abs{\iverson{x=1}(\State)\cdot \State(k) +\iverson{x>1}(\State)\cdot(\State(k) + \State(x) - 1)-\iverson{x=0}(\State)\cdot \State(k) -\iverson{x>0}(\State)\cdot (\State(k)+\State(x))}          \\
		        & \quad+\abs{\iverson{x=0}(\State)+\iverson{x>0}(\State)}\Bigr)                                                                                                                                                \\
		{} = {} & \iverson{x>0}(\State)\cdot \tfrac{1}{2}\cdot\Bigl(\abs{\iverson{x>0}(\State)\cdot(\State(k) + \State(x) - 1)-\iverson{x=0}(\State)\cdot \State(k) -\iverson{x>0}(\State)\cdot (\State(k)+\State(x))}+1\Bigr) \\
		{} = {} & \iverson{x>0}(\State)\cdot \tfrac{1}{2}\cdot\Bigl(\abs{-\iverson{x>0}(\State)-\iverson{x=0}(\State)\cdot \State(k) }+1\Bigr)                                                                                 \\
		{} = {} & \iverson{x>0}(\State)\cdot \tfrac{1}{2}\cdot\Bigl(\iverson{x>0}(\State) +1\Bigr)                                                                                                                             \\
		{} = {} & \iverson{x>0}(\State)\leq 1~.
	\end{align*}
	So $I$ is indeed conditionally difference bounded by $1$.
	Hence, we can apply our new Optional Stopping Theorem (\textnormal{\cref{thm:optional_stopping_probabilistic_programs} (b)}) to obtain $I \preceq \lfp \charwp{\guard}{C}{f}$.
	As $I$ is also fixed point itself, it is the least fixed point of $\charwp{\guard}{C}{f}$, i.e.,
	\[
		I=\wp{C_{neg}}{f}~.
	\]
\end{example}

\begin{example}[Fair in the Limit Negative Binomial Loop]
	Consider the slight adaption of $C_{neg}$ from \textnormal{\cref{ex:negative_binomial_loop}} to $C_{filneg}$:
	\begin{align*}
		 & \WHILE{x > 0}                                                                 \\
		 & \qquad \PCHOICE{\ASSIGN{x}{x-1}}{\frac{1}{2+\sfrac{1}{x}}}{\ASSIGN{k}{k + 1}} \\
		 & \}~,
	\end{align*}
	with $x,k\in \Nats$.
	Note that for any $x>0$ we have $\tfrac{1}{3} \leq \frac{1}{2+\sfrac{1}{x}} \leq \tfrac{1}{2}$ and $\frac{1}{2+\sfrac{1}{x}}$ is monotonically increasing in the value of $x$.
	Therefore one can show using the \ertsymbol--transformer (cf.\ \textnormal{\cite{DBLP:journals/jacm/KaminskiKMO18}}) that the expected runtime of $C_{filneg}$ is at most $3 \cdot \iverson{x>0} \cdot x$.
	So we have positive almost--sure termination and therefore finite expected looping time.
	Again, we would like to reason about the expected value of $k$ after termination of $C_{filneg}$.
	The characteristic function for $f=k$ of the program is given by $$\charwp{\guard}{C}{f}(X) = \iverson{x=0}\cdot k + \iverson{x>0}\cdot \Bigl(\frac{1}{2+\sfrac{1}{x}}\cdot X\subst{x}{x-1} + \frac{1+\sfrac{1}{x}}{2+\sfrac{1}{x}}\cdot X\subst{k}{k+1}\Bigr)~.$$

	\noindent
	Intuitively, the value of $k$ after termination of the program should again be at least $I=\iverson{x=0}\cdot k + \iverson{x>0} \cdot (k+x)$, i.e., the initial value of $k$ again increases at least by the initial value of $x$ if the loop can be executed at all.
	We will prove that this intuition is correct, so that $I$ is indeed a lower bound on the least fixed point of $\charwp{\guard}{C}{f}$.
	Again, $I$ harmonizes with $f$.

	We first show that $I$ is a subinvariant:
	\begin{align*}
		\charwp{\guard}{C}{f}(I) {} = {} & \iverson{x=0}\cdot k + \iverson{x>0}\cdot \Bigl(\frac{1}{2+\sfrac{1}{x}}\cdot I\subst{x}{x-1} + \frac{1+\sfrac{1}{x}}{2+\sfrac{1}{x}}\cdot I\subst{k}{k+1}\Bigr)                 \\
		{} = {}                          & \iverson{x=0}\cdot k + \iverson{x>0}\cdot \Bigl(\frac{1}{2+\sfrac{1}{x}}\cdot \left(\iverson{x-1=0}\cdot k + \iverson{x-1>0} \cdot (k+x-1)\right)                                \\
		                                 & \quad+ \frac{1+\sfrac{1}{x}}{2+\sfrac{1}{x}}\cdot \left(\iverson{x=0}\cdot (k+1) + \iverson{x>0} \cdot (k+1+x)\right)\Bigr)                                                      \\
		{} = {}                          & \iverson{x=0}\cdot k + \iverson{x>0}\cdot \Bigl(\frac{1}{2+\sfrac{1}{x}}\cdot \left(\iverson{x>0} \cdot (k+x-1)\right) + \frac{1+\sfrac{1}{x}}{2+\sfrac{1}{x}}\cdot              \\
		                                 & \quad \left(\iverson{x=0}\cdot (k+1) + \iverson{x>0} \cdot (k+1+x)\right)\Bigr)                                                                                                  \\
		{} = {}                          & \iverson{x=0}\cdot k + \Bigl(\frac{1}{2+\sfrac{1}{x}}\cdot \left(\iverson{x>0}\cdot\iverson{x>0} \cdot (k+x-1)\right)                                                            \\
		                                 & \quad+ \frac{1+\sfrac{1}{x}}{2+\sfrac{1}{x}}\cdot \left(\iverson{x>0}\cdot\iverson{x=0}\cdot (k+1) + \iverson{x>0}\cdot\iverson{x>0} \cdot (k+1+x)\right)\Bigr)                  \\
		{} = {}                          & \iverson{x=0}\cdot k + \iverson{x>0}\cdot \Bigl(\frac{1}{2+\sfrac{1}{x}}\cdot \left(k+x-1\right) + \frac{1+\sfrac{1}{x}}{2+\sfrac{1}{x}}\cdot \left(k+1+x\right)\Bigr)           \\
		{} = {}                          & \iverson{x=0}\cdot k + \iverson{x>0}\cdot \Bigl(\frac{k+x-1 +k+1+x +\sfrac{k}{x}+\sfrac{1}{x}+1}{2+\sfrac{1}{x}}\Bigr)                                                           \\
		{} = {}                          & \iverson{x=0}\cdot k + \iverson{x>0}\cdot \Bigl(\frac{2k+2x +\sfrac{k}{x}+\sfrac{1}{x}+1}{2+\sfrac{1}{x}}\Bigr)                                                                  \\
		{} = {}                          & \iverson{x=0}\cdot k + \iverson{x>0}\cdot \Bigl(\frac{\left(2+\sfrac{1}{x}\right)(k+x)+\sfrac{1}{x}}{2+\sfrac{1}{x}}\Bigr)                                                       \\
		{} = {}                          & \iverson{x=0}\cdot k + \iverson{x>0}\cdot (k+x)+ \iverson{x>0} \cdot \frac{\sfrac{1}{x}}{2+\sfrac{1}{x}} = I + \iverson{x>0} \cdot \frac{\sfrac{1}{x}}{2+\sfrac{1}{x}} \succeq I
	\end{align*}
	So $I$ is indeed a subinvariant of $\charwp{\guard}{C}{f}$ and $\charwp{\guard}{C}{f}(I)\pprec \infty$.
	To apply \textnormal{\cref{thm:optional_stopping_probabilistic_programs} (b)} we need to check that $I$ is conditionally difference bounded.
	\begin{align*}
		        & \Diff{I}(\State)                                                                                                                                                                                                                                                           \\
		{} = {} & \left(\iverson{x>0}(\State)\cdot \Bigl(\frac{1}{2+\sfrac{1}{x}}\cdot \abs{I-I(\State)}\subst{x}{x-1} + \frac{1+\sfrac{1}{x}}{2+\sfrac{1}{x}}\cdot \abs{I-I(\State)}\subst{k}{k+1}\Bigr)\right)(\State)                                                                     \\
		{} = {} & \iverson{x>0}(\State)\cdot \Bigl(\frac{1}{2+\sfrac{1}{\State(x)}} \cdot |\iverson{x-1=0}(\State)\cdot \State(k) + \iverson{x-1>0}(\State) \cdot (\State(k)+\State(x)-1)                                                                                                    \\
		        & \qquad\qquad\qquad-\iverson{x=0}(\State)\cdot \State(k) - \iverson{x>0}(\State) \cdot (\State(k)+\State(x))|                                                                                                                                                               \\
		        & \qquad\qquad\qquad + \frac{1+\sfrac{1}{\State(x)}}{2+\sfrac{1}{\State(x)}}\cdot \abs{I-I(\State)}(\State \subst{k}{k+1})\Bigr)                                                                                                                                             \\
		{} = {} & \iverson{x>0}(\State)\cdot \Bigl(\frac{1}{2+\sfrac{1}{\State(x)}}\cdot \abs{\iverson{x>0}(\State) \cdot (\State(k)+\State(x)-1)-\iverson{x=0}(\State)\cdot \State(k) - \iverson{x>0}(\State) \cdot (\State(k)+\State(x))}                                                  \\
		        & \quad+ \frac{1+\sfrac{1}{\State(x)}}{2+\sfrac{1}{\State(x)}}\cdot \abs{I-I(\State)}(\State \subst{k}{k+1})\Bigr)                                                                                                                                                           \\
		{} = {} & \iverson{x>0}(\State)\cdot \Bigl(\frac{1}{2+\sfrac{1}{\State(x)}}\cdot \abs{\iverson{x>0}(\State) \cdot (-1)-\iverson{x=0}(\State)\cdot \State(k) }                                                                                                                        \\
		        & \qquad\qquad\qquad+ \frac{1+\sfrac{1}{\State(x)}}{2+\sfrac{1}{\State(x)}}\cdot |\iverson{x=0}(\State)\cdot (\State(k)+1) + \iverson{x>0}(\State) \cdot (\State(k)+1+\State(x))                                                                                             \\
		        & \qquad\qquad\qquad\qquad\qquad\qquad-\iverson{x=0}(\State)\cdot \State(k) - \iverson{x>0}(\State) \cdot (\State(k)+\State(x))|\Bigr)                                                                                                                                       \\
		{} = {} & \iverson{x>0}(\State)\cdot \Bigl(\frac{1}{2+\sfrac{1}{\State(x)}}\cdot \abs{\iverson{x>0}(\State) \cdot (-1)-\iverson{x=0}(\State)\cdot \State(k) } + \frac{1+\sfrac{1}{\State(x)}}{2+\sfrac{1}{\State(x)}}\cdot \abs{\iverson{x=0}(\State)+ \iverson{x>0}(\State)} \Bigr) \\
		{} = {} & \iverson{x>0}(\State)\cdot \Bigl( \frac{1}{2+\sfrac{1}{\State(x)}} + \frac{1 + \sfrac{1}{\State(x)}}{2+\sfrac{1}{\State(x)}}\Bigr) = \iverson{x>0}(\State) \leq 1
	\end{align*}
	So $I$ is indeed conditionally difference bounded by $1$.
	Hence, by \textnormal{\cref{thm:optional_stopping_probabilistic_programs} (b)}
	$I$ is a lower bound on the least fixed point of $\charwp{\guard}{C}{f}$, i.e.,
	\[
		I\preceq\wp{C_{filneg}}{f}.
	\]
\end{example}

\begin{example}[Negative Binomial Loop with Non--Constant Updates]
	Let us consider another adaption of $C_{neg}$ from \textnormal{\cref{ex:negative_binomial_loop}} to the program $C_{negncu}$:
	\begin{align*}
		 & \WHILE{x > 0}                                                                    \\
		 & \qquad \PCHOICE{\COMPOSE{\ASSIGN{x}{x-1}}{\ASSIGN{y}{y+x}}}{\sfrac{1}{2}}{\SKIP} \\
		 & \}~,
	\end{align*}
	with $x,y\in \Nats$.
	This program is positively almost--sure terminating and the expected number of loop iterations is $\iverson{x>0}\cdot 2 \cdot x$.
	The characteristic function for $f=y$ of the program is given by $\charwp{\guard}{C}{f}(X) = \iverson{x=0}\cdot y + \iverson{x>0}\cdot \frac{1}{2}\cdot \Bigl(X\subst{y}{y+x}\subst{x}{x-1} + X\Bigr)$.

	Intuitively, the value of $y$ after termination of the program should be $I=\iverson{x=0}\cdot y + \iverson{x>0} \cdot (y+\frac{x\cdot (x-1)}{2})$, i.e., the initial value of $y$ increases by the sum $\sum \limits_{j=0}^{x-1}j=\frac{x\cdot (x-1)}{2}$ if the loop can be executed at all.
	We will prove that this intuition is correct, so that $I$ is indeed the least fixed point of $\charwp{\guard}{C}{f}$.
	Again, $I$ harmonizes with $f$.

	First of all, we show that $I$ is a fixed point of $\charwp{\guard}{C}{f}$:

	\begin{align*}
		        & \charwp{\guard}{C}{f}(I)                                                                                                                                                                                                                  \\
		{} = {} & \iverson{x=0}\cdot y + \iverson{x > 0}\cdot \tfrac{1}{2} \cdot \Bigl(I\subst{y}{y+x}\subst{x}{x-1} +I\Bigr)                                                                                                                               \\
		{} = {} & \iverson{x=0}\cdot y + \iverson{x > 0}\cdot \tfrac{1}{2} \cdot \Bigl(\iverson{x-1=0}\cdot (y+x-1)                                                                                                                                         \\
		        & \qquad\qquad\qquad\qquad\qquad\qquad+ \iverson{x-1>0} \cdot (y+(x-1)+\frac{(x-1)\cdot (x-1-1)}{2}) +I\Bigr)                                                                                                                               \\
		{} = {} & \iverson{x=0}\cdot y + \iverson{x > 0}\cdot \tfrac{1}{2} \cdot \Bigl(\iverson{x=1}\cdot (y+x-1) + \iverson{x>1} \cdot (y+x-1+\frac{(x-1)\cdot (x-2)}{2}) +I\Bigr)                                                                         \\
		{} = {} & \iverson{x=0}\cdot y + \iverson{x > 0}\cdot \tfrac{1}{2} \cdot \Bigl(\iverson{x>0} \cdot \left(y+\frac{(x-1)\cdot x}{2}\right) +I\Bigr)                                                                                                   \\
		{} = {} & \iverson{x=0}\cdot y + \iverson{x > 0}\cdot \tfrac{1}{2} \cdot \Bigl(\iverson{x>0} \cdot \left(y+\frac{(x-1)\cdot x}{2}\right) +\iverson{x=0}\cdot y + \iverson{x>0} \cdot \left(y+\frac{(x-1)\cdot x}{2}\right)\Bigr)                    \\
		{} = {} & \iverson{x=0}\cdot y + \tfrac{1}{2} \cdot \Bigl(\iverson{x > 0}\cdot\iverson{x>0} \cdot \left(y+\frac{(x-1)\cdot x}{2}\right) +\iverson{x > 0}\cdot\iverson{x=0}\cdot y + \iverson{x>0} \cdot \left(y+\frac{(x-1)\cdot x}{2}\right)\Bigr) \\
		{} = {} & \iverson{x=0}\cdot y + \tfrac{1}{2} \cdot \Bigl(\iverson{x>0} \cdot \left(y+\frac{(x-1)\cdot x}{2}\right) +\iverson{x>0} \cdot \left(y+\frac{(x-1)\cdot x}{2}\right)\Bigr)                                                                \\
		{} = {} & \iverson{x=0}\cdot y + \iverson{x > 0}\cdot \iverson{x>0} \cdot \left(y+\frac{(x-1)\cdot x}{2}\right) = I
	\end{align*}
	So $I$ is indeed a fixed point of $\charwp{\guard}{C}{f}$ and $\charwp{\guard}{C}{f}(I)\pprec \infty$.
	To apply \textnormal{\cref{thm:optional_stopping_probabilistic_programs} (b)} we need to check that $I$ is conditionally difference bounded.
	\begin{align*}
		\Diff{I}(\State) {} = {} & \iverson{x>0}(\State)\cdot \tfrac{1}{2} \cdot \Bigl( \abs{I\subst{y}{y+x}\subst{x}{x-1}-I(\State)}(\State) + \abs{I-I(\State)}(\State) \Bigr)                                                                     \\
		{} = {}                  & \iverson{x>0}(\State)\cdot \tfrac{1}{2} \cdot \Bigl( \abs{I(\State\subst{y}{y+x}\subst{x}{x-1})-I(\State)} + \abs{I(\State)-I(\State)} \Bigr)                                                                     \\
		{} = {}                  & \iverson{x>0}(\State)\cdot \tfrac{1}{2} \cdot \Bigl( \abs{I(\State\subst{y}{y+x}\subst{x}{x-1})-I(\State)} \Bigr)                                                                                                 \\
		{} = {}                  & \iverson{x>0}(\State)\cdot \tfrac{1}{2} \cdot \Bigl( \Bigl \vert \iverson{x-1=0}(\State)\cdot (\State(y)+\State(x)-1)                                                                                             \\
		                         & \quad + \iverson{x-1>0}(\State) \cdot\left(\State(y)+(\State(x)-1)+\frac{(\State(x)-1)\cdot (\State(x)-1-1)}{2}\right)                                                                                            \\
		                         & \quad -\iverson{x=0}(\State)\cdot \State(y) - \iverson{x>0}(\State) \cdot \left(\State(y)+\frac{\State(x)\cdot (\State(x)-1)}{2}\right)\Bigr \vert \Bigr)                                                         \\
		{} = {}                  & \iverson{x>0}(\State)\cdot \tfrac{1}{2} \cdot \Bigl( \Bigl \vert \iverson{x=1}(\State)\cdot (\State(y)+\State(x)-1) + \iverson{x>1}(\State) \cdot \left(\State(y)+\frac{(\State(x)-1)\cdot (\State(x))}{2}\right) \\
		                         & \quad -\iverson{x=0}(\State)\cdot \State(y) - \iverson{x>0}(\State) \cdot \left(\State(y)+\frac{\State(x)\cdot (\State(x)-1)}{2}\right)\Bigr \vert \Bigr)                                                         \\
		{} = {}                  & \iverson{x>0}(\State)\cdot \tfrac{1}{2} \cdot \Bigl( \Bigl \vert \iverson{x>0}(\State) \cdot \left(\State(y)+\frac{(\State(x)-1)\cdot (\State(x))}{2}\right)                                                      \\
		                         & \quad -\iverson{x=0}(\State)\cdot \State(y) - \iverson{x>0}(\State) \cdot \left(\State(y)+\frac{\State(x)\cdot (\State(x)-1)}{2}\right)\Bigr \vert \Bigr)                                                         \\
		{} = {}                  & \iverson{x>0}(\State)\cdot \tfrac{1}{2} \cdot \Bigl( \abs{-\iverson{x=0}(\State)\cdot \State(y)} \Bigr)                                                                                                           \\
		{} = {}                  & \iverson{x>0}(\State)\cdot \tfrac{1}{2} \cdot \Bigl( \iverson{x=0}(\State)\cdot \State(y) \Bigr)                                                                                                                  \\
		{} = {}                  & \tfrac{1}{2} \cdot \Bigl( \iverson{x>0}(\State)\cdot \iverson{x=0}(\State)\cdot \State(y) \Bigr)                                                                                                                  \\
		{} = {}                  & 0
	\end{align*}
	So $I$ is indeed conditionally difference bounded.
	Hence, by \textnormal{\cref{thm:optional_stopping_probabilistic_programs} (b)} $I$ is a lower bound on the least fixed point of $\charwp{\guard}{C}{f}$.
	Since it is also a fixed point itself, we obtain
	\[
		I\eeq\wp{C_{negncu}}{f}.
	\]
\end{example}

\begin{example}[Probabilistic Doubling with Bounded Looping Time]
	Let us consider the program $C_{double}$
	\begin{align*}
		 & \WHILE{x > 0}                                                                          \\
		 & \qquad \COMPOSE{\ASSIGN{x}{x-1}}{\PCHOICE{\ASSIGN{y}{2 \cdot y}}{\sfrac{1}{2}}{\SKIP}} \\
		 & \}~,
	\end{align*}
	with $x,y\in \Nats$.
	The characteristic function for $f=y$ of the program is given by
	\[
		\charwp{\guard}{C}{f}(X) = \iverson{x \leq 0}\cdot y + \iverson{x > 0}\cdot \tfrac{1}{2}\cdot \Bigl( X\subst{x,y}{x-1,2\cdot y} + X\subst{x}{x-1}\Bigr).
	\]

	\noindent
	The looping time of this program is $\iverson{x>0}\cdot x$ which is bounded by $\max(\State(x),0) \in \Nats$ for any initial state $\State \in \States$.
	We will prove that $I= \iverson{x \leq 0}\cdot y + \iverson{x > 0}\cdot 2^{\frac{x}{2}}\cdot y$ is a lower bound on the expected value of $y$ after termination of the program.

	\begin{align*}
		\charwp{\guard}{C}{f}(I) {} = {} & \iverson{x \leq 0}\cdot y + \iverson{x > 0}\cdot \tfrac{1}{2}\cdot \Bigl( \iverson{x \leq 1}\cdot 2 \cdot y + \iverson{x > 1}\cdot 2^{\frac{x}{2} - \frac{1}{2}}\cdot 2 \cdot y + \iverson{x \leq 1}\cdot y + \iverson{x > 1}\cdot 2^{\frac{x}{2} - \frac{1}{2}}\cdot y\Bigr) \\
		{} = {}                          & \iverson{x \leq 0}\cdot y + \iverson{x > 0}\cdot \tfrac{1}{2}\cdot \Bigl( \iverson{x \leq 1}\cdot (2 \cdot y + y) + \iverson{x > 1}\cdot (2^{\frac{x}{2} + \frac{1}{2}}\cdot y + 2^{\frac{x}{2} - \frac{1}{2}}\cdot y)\Bigr)                                                  \\
		                                 & \succeq \iverson{x \leq 0}\cdot y + \iverson{x > 0}\cdot \Bigl( \iverson{x \leq 1}\cdot 2^{\frac{x}{2}} \cdot y + \iverson{x > 1}\cdot 2^{\frac{x}{2}}
		\cdot \tfrac{1}{2} \cdot \left(\sqrt{2} + \frac{1}{\sqrt{2}}\right) \cdot y\Bigr)                                                                                                                                                                                                                                \\
		                                 & \succeq \iverson{x \leq 0}\cdot y + \iverson{x > 0}\cdot \Bigl( \iverson{x \leq 1}\cdot 2^{\frac{x}{2}} \cdot y + \iverson{x > 1}\cdot 2^{\frac{x}{2}} \cdot y\Bigr) \tag{since $\sqrt{2} + \frac{1}{\sqrt{2}} \geq 2$}                                                       \\
		{} = {}                          & \iverson{x \leq 0}\cdot y + \iverson{x > 0} \cdot 2^{\frac{x}{2}} \cdot y                                                                                                                                                                                                     \\
		{} = {}                          & I
	\end{align*}

	\noindent
	So $I$ is indeed a subinvariant.
	Furthermore, we have $I \preceq \charwpn{\guard}{C}{f}n(I) \pprec \infty$ for any $n \in \Nats$, as the loop body is loop--free.
	So by using \textnormal{\cref{thm:optional_stopping_probabilistic_programs} (a)}
	we can deduce that $I$ is indeed a lower bound on the least fixed point of $\charwp{\guard}{C}{f}$, i.e.,

	\[
		I \ppreceq \wp{C_{double}}{f}.
	\]

	\noindent
	Note that in this case \textnormal{\cref{thm:optional_stopping_probabilistic_programs} (b)}
	is not applicable.
	$I$ harmonizes with $f$, but $I$ is not conditionally difference bounded:
	\begin{align*}
		         & \Diff{I}(\State)                                                                                                                                                                                                                                                               \\
		{} = {}  & \iverson{x > 0}(\State) \cdot \tfrac{1}{2} \cdot \left( \Bigl \vert I\subst{x,y}{x-1,2 \cdot y}-I(\State) \Bigr \vert(\State) + \Bigl \vert I\subst{x}{x-1}-I(\State) \Bigr \vert(\State)\right)                                                                               \\
		{}\geq{} & \iverson{x > 0}(\State) \cdot \tfrac{1}{2} \cdot \left( \Bigl \vert I\subst{x}{x-1}-I(\State) \Bigr \vert(\State)\right)                                                                                                                                                       \\
		{} = {}  & \iverson{x > 0}(\State) \cdot \tfrac{1}{2} \cdot \Bigl( \Bigl \vert \iverson{x \leq 1}(\State)\cdot \State(y) + \iverson{x > 1}(\State)\cdot 2^{\frac{\State(x)-1}{2}}\cdot \State(y)                                                                                          \\
		         & \qquad\qquad\qquad\qquad-\iverson{x = 0}(\State)\cdot \State(y) - \iverson{x > 0}(\State)\cdot 2^{\frac{\State(x)}{2}}\cdot \State(y) \Bigr \vert\Bigr)                                                                                                                        \\
		%
		%
		%
		{} = {}  & \iverson{x > 0}(\State) \cdot \tfrac{1}{2} \cdot \left( \Bigl \vert \iverson{x \leq 1}(\State)\cdot \State(y) + \iverson{x > 1}(\State)\cdot 2^{\frac{\State(x)-1}{2}}\cdot \State(y) - \iverson{x > 0}(\State)\cdot 2^{\frac{\State(x)}{2}}\cdot \State(y) \Bigr \vert\right) \\
		{} = {}  & \iverson{x > 0}(\State) \cdot \tfrac{1}{2} \cdot \Bigl( \Bigl \vert \iverson{x = 1}(\State)\cdot \State(y) + \iverson{x > 1}(\State)\cdot 2^{\frac{\State(x)-1}{2}}\cdot \State(y) - \iverson{x = 1}(\State)\cdot 2^{\frac{\State(x)}{2}}\cdot \State(y)                       \\
		         & \qquad\qquad\qquad\qquad - \iverson{x > 1}\cdot 2^{\frac{\State(x)}{2}}\cdot \State(y) \Bigr \vert\Bigr)                                                                                                                                                                       \\
		{} = {}  & \iverson{x > 0}(\State) \cdot \tfrac{1}{2} \cdot \left( \Bigl \vert \iverson{x = 1}(\State)\cdot \State(y) (1 - \sqrt{2}) + \iverson{x > 1}(\State) \cdot 2^{\frac{\State(x)}{2}}\cdot \State(y) \left( \frac{1}{\sqrt{2}} - 1 \right)\Bigr\vert\right)                        \\
		{} = {}  & \iverson{x > 0}(\State) \cdot \tfrac{1}{2} \cdot \left( \iverson{x = 1}(\State)\cdot \State(y) (\sqrt{2} - 1) + \iverson{x > 1}(\State) \cdot 2^{\frac{\State(x)}{2}}\cdot \State(y) \left( 1 - \frac{1}{\sqrt{2}} \right)\right)
	\end{align*}
	which is unbounded.
\end{example}


\noindent
Nevertheless, even in the case of finite expected looping time, \cref{thm:optional_stopping_probabilistic_programs} just provides \emph{sufficient}
conditions for lower bounds.
This is not surprising as conditional difference boundedness is a sufficient condition for uniform integrability but far from being necessary.
The following example presents a limitation of our proof rule.

\begin{example}[Probabilistic Doubling with Unbounded Looping Time]
	Let us consider the program
	\begin{align*}
		 & \WHILE{a = 1}                                                                   \\
		 & \qquad \COMPOSE{\PCHOICE{\ASSIGN{a}{0}}{\sfrac{1}{2}}{\ASSIGN{b}{2 \cdot b}}}{} \\
		 & \}~,
	\end{align*}
	with $b\in \Nats$ and $b>0$.
	The characteristic function for $f=b$ of the program is given by $\charwp{\guard}{C}{f}(X) = \iverson{a\neq 1}\cdot b + \iverson{a=1}\cdot \tfrac{1}{2}\cdot \Bigl( X\subst{a}{0} + X\subst{b}{2\cdot b}\Bigr)$.

	Now consider the sequence of invariants $I_n \coloneqq \iverson{a\neq 1}\cdot b + \iverson{a=1}\cdot \frac{n \cdot b}{2}$.
	Then $I_n = \charwpn{\guard}{C}{f}{n+1}(0)$: $I_0 = \iverson{a \neq 1} \cdot b = \charwp{\guard}{C}{f}(0)$.
	Furthermore, $\charwp{\guard}{C}{f}(I_n)=\iverson{a\neq 1}\cdot b + \iverson{a=1}\cdot \tfrac{1}{2}\cdot \Bigl(I_n\subst{a}{0} + I_n\subst{b}{2\cdot b}\Bigr) = \iverson{a\neq 1}\cdot b + \iverson{a=1}\cdot \tfrac{1}{2}\cdot \Bigl(b + n \cdot b\Bigr) = I_{n+1}$.
	Hence, $I_0 \preceq I_1 \preceq \cdots $ and $\iverson{a=0}\cdot b + \iverson{a=1}\cdot \infty = \lim_{n \to \omega} I_n=\lfp \charwp{\guard}{C}{f}$ by the \textnormal{Tarski--Kantorovich Principle \cref{thm:tarski-kantorovich}}.
	Moreover, each of the $I_n$ is a lower bound on the least fixed point, i.e., $I_n \preceq \lfp \charwp{\guard}{C}{f}$.

	Furthermore, the loop is expected to be executed twice.
	Clearly, \textnormal{\cref{thm:optional_stopping_probabilistic_programs}
		(a)} and \textnormal{(c)} are not applicable.
	Let $n \geq 2$.
	Then \textnormal{\cref{thm:optional_stopping_probabilistic_programs}
		(b)} is not applicable either, because $\Diff{I_n}$ is unbounded, i.e., $I_n$ is not conditionally difference bounded.
	To see this, let $\State \in \States$.
	\begin{align*}
		  & \Diff{I_n}(\State)                                                                                                                                                                                                                                                \\
		= & \left(\iverson{a = 1}\cdot \tfrac{1}{2} \cdot \Bigl(\abs{I_n -I_n(\State)}\subst{a}{0} + \abs{I_n -I_n(\State)}\subst{b}{2\cdot b}\Bigr)\right)(\State)                                                                                                           \\
		= & \iverson{a = 1}(\State)\cdot \tfrac{1}{2} \cdot \Bigl(\abs{\iverson{0\neq 1}(\State)\cdot \State(b) + \iverson{0=1}(\State)\cdot \frac{n \cdot \State(b)}{2}-(\iverson{a\neq 1}(\State)\cdot \State(b) + \iverson{a=1}(\State)\cdot \frac{n \cdot \State(b)}{2})} \\
		+ & \abs{\iverson{a\neq 1}(\State)\cdot 2\cdot \State(b) + \iverson{a=1}(\State)\cdot \frac{n \cdot 2 \cdot \State(b)}{2}-(\iverson{a\neq 1}(\State)\cdot \State(b) + \iverson{a=1}(\State)\cdot \frac{n \cdot \State(b)}{2})}\Bigr)                                  \\
		= & \iverson{a = 1}(\State)\cdot \tfrac{1}{2} \cdot \Bigl(\abs{\State(b)-(\iverson{a\neq 1}(\State)\cdot \State(b) + \iverson{a=1}(\State)\cdot \frac{n \cdot \State(b)}{2})}                                                                                         \\
		+ & \abs{\iverson{a\neq 1}(\State)\cdot 2\cdot \State(b) + \iverson{a=1}(\State)\cdot \frac{n \cdot 2 \cdot \State(b)}{2}-(\iverson{a\neq 1}(\State)\cdot \State(b) + \iverson{a=1}(\State)\cdot \frac{n \cdot \State(b)}{2})}\Bigr)                                  \\
		= & \iverson{a = 1}(\State)\cdot \tfrac{1}{2} \cdot \Bigl(\abs{\State(b) - \frac{n \cdot \State(b)}{2}} + \abs{n \cdot \State(b) - \frac{n \cdot \State(b)}{2}} \Bigr)                                                                                                \\
		= & \iverson{a = 1}(\State)\cdot \tfrac{1}{2} \cdot \Bigl(\abs{\frac{(2-n) \cdot \State(b)}{2}} + \abs{\frac{n \cdot \State(b)}{2}} \Bigr)                                                                                                                            \\
		= & \iverson{a = 1}(\State)\cdot \tfrac{1}{2} \cdot \frac{(2\cdot n -2) \cdot \State(b)}{2}                                                                                                                                                                           \\
		= & \iverson{a = 1}(\State)\cdot \tfrac{1}{2} \cdot \frac{(n -1) \cdot \State(b)}{2}~,
	\end{align*}
	which can take an arbitrary large value as there is no bound on the value of $b>0$.
	Hence \textnormal{\cref{thm:optional_stopping_probabilistic_programs} (b)} \emph{cannot}
	be applied although $I_n$ is a lower bound.
	However, in \cite{DBLP:conf/esop/KaminskiKMO16} it is proved that the $I_n$ form an $\omega$--subinvariant, hence, they are all lower bounds.
\end{example}

\renewcommand{\charwp}[3]{\charfun{\wpsymbol}{#1}{#2}{#3}}
\renewcommand{\charwpn}[4]{\charfunn{\wpsymbol}{#1}{#2}{#3}{#4}}

\section{Details on Probability Theory}
\label{app:prelim_probability}
\noindent
This section is devoted to a more detailed introduction of the concepts from probability theory that we use in our work.
\subsection{$\sigma$--Fields}
\noindent
When setting up a probability space over some sample space $\Omega$, which can be any set, we have to distinguish the sets whose probabilities we want to be able to measure.
The collection of these measurable sets is called a $\sigma$--field.
\begin{definition}[$\sigma$--Field]
	Let $\Omega$ be an arbitrary set and $\F{F}\subset Pot(\Omega)$.
	$\F{F}$ is called a \emph{$\sigma$--field over $\Omega$} if the following three conditions are satisfied.
	\begin{enumerate}
		\item $\Omega \in \F{F}$,
		\item $A \in \F{F} \Rightarrow \Omega \setminus A \in \F{F}$ i.e., $\F{F}$ is closed under taking the complement,
		\item $A_i \in \F{F} \Rightarrow \bigcup\limits_{i \in \Nats}A_i \in \F{F}$ i.e., $\F{F}$ is closed under countable union.
	\end{enumerate}
	The pair $(\Omega, \F{F})$ is called a \emph{measurable space}.
	The elements of $\F{F}$ are called \emph{measurable} sets.
\end{definition}

\noindent
In the setting of program verification, we have seen that $\Omega$ is the set of all program runs.
A \emph{program run} is an infinite sequence of states, i.e., variable assignments.
We regard $\sigma$--fields $\F{F} \subseteq Pot(\Omega)$ of the form $\sigmagen{\F{E}}$, where $\F{E}$ is a collection of cylinder sets.
(More precisely, we regard fields $\F{F}_n \subseteq Pot(\Omega)$, where $\F{F}_n$ is the smallest $\sigma$--field containing all cylinder sets of order $n$ or smaller.) In our setting, a set of runs $\F{E}$ is a \emph{cylinder set} of order $n$ if all runs in $\F{E}$ have the same $n+1$ first configurations, and all the following configurations can be arbitrary.

Let $\F{F}$ and $\F{B}$ be $\sigma$--fields over $\Omega$.
Then $\F{F} \cap \F{B}$ is a $\sigma$--field over $\Omega$.
Furthermore, if $(\F{F}_i)_{i \in I}$ is a family of $\sigma$--fields over $\Omega$ then so is $\bigcap\limits_{i \in I}\F{F}_i$.

For any set $\F{E}$ of subsets of $\Omega$, let $\sigmagen{\F{E}} \subseteq \F{F}$ consist of all elements that are contained in all $\sigma$--fields that are supersets of $\F{E}$.
The mapping from $\F{E}$ to $\sigmagen{\F{E}}$ is also called \emph{$\sigma$--operator}.

\begin{definition}[Generating $\sigma$--Fields]
	Let $\F{E}\subset Pot(\Omega)$.
	Then the smallest $\sigma$--field over $\Omega$ containing $\F{E}$ is
	\[
		\sigmagen{\F{E}} \coloneqq \bigcap\limits_{\F{E}\subset\F{F},~\F{F}~\sigma\textnormal{--field over } \Omega} \F{F}~.
	\]
\end{definition}

\noindent
It turns out that there is a special case in which the generated $\sigma$--field is easy to describe, namely in the case where a countable covering of the space $\Omega$ is given.

\begin{lemma}[Generating $\sigma$--Fields for Covering of $\Omega$]
	\label{lemma:sigma_field_countable_cover}
	If $\Omega = \biguplus\limits_{i=1}^{\infty} A_i$ for a sequence $A_i \in Pot(\Omega)$ and $\F{H} \coloneqq \sigmagen{\{A_i \mid i \in \Nats\}}$ then
	\[
		\F{H}=\underbrace{\left\{\biguplus\limits_{i \in J} A_i \mid J \subset \Nats\right\}}_{ \eqqcolon \F{E}}~.
	\]
\end{lemma}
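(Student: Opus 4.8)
The plan is to establish the set equality $\F{H} = \F{E}$ by proving the two inclusions separately, where throughout the disjointness of the covering $\Omega = \biguplus_{i=1}^{\infty} A_i$ is what makes everything go through. For the inclusion $\F{E} \subseteq \F{H}$ I would argue directly: fix any index set $J \subseteq \Nats$. Since $\F{H}$ is by definition a $\sigma$-field containing every generator $A_i$, and since $J$ is countable, the set $\biguplus_{i \in J} A_i$ is a countable union of elements of $\F{H}$ and therefore lies in $\F{H}$. As $J$ was arbitrary, every element of $\F{E}$ belongs to $\F{H}$.

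For the reverse inclusion $\F{H} \subseteq \F{E}$ I would exploit the minimality of $\F{H}$. Because $\F{H} = \sigmagen{\setcomp{A_i}{i \in \Nats}}$ is the \emph{smallest} $\sigma$-field over $\Omega$ containing the family $\setcomp{A_i}{i \in \Nats}$, it suffices to check that $\F{E}$ is itself a $\sigma$-field over $\Omega$ containing every $A_i$; minimality then forces $\F{H} \subseteq \F{E}$. That $\F{E}$ contains each generator is immediate by taking $J = \{i\}$, and one obtains $\Omega \in \F{E}$ by taking $J = \Nats$, using $\biguplus_{i \in \Nats} A_i = \Omega$. It then remains to verify closure under complement and under countable union, and the point is that both reduce to elementary operations on the index sets: for a countable family $E_k = \biguplus_{i \in J_k} A_i$ one has $\bigcup_{k} E_k = \biguplus_{i \in J} A_i$ with $J = \bigcup_k J_k \subseteq \Nats$, where disjointness of the $A_i$ guarantees the union on the right is again disjoint, so $\bigcup_k E_k \in \F{E}$.

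The step requiring the most care, and the only place where the disjointness hypothesis is genuinely indispensable, is closure under complement. For $E = \biguplus_{i \in J} A_i$ I would prove the identity $\Omega \setminus E = \biguplus_{i \in \Nats \setminus J} A_i$ pointwise: since the $A_i$ form a partition of $\Omega$, each $\omega \in \Omega$ lies in exactly one cell $A_{i(\omega)}$, and then $\omega \in E$ holds iff $i(\omega) \in J$, so $\omega$ lies in the complement iff $i(\omega) \in \Nats \setminus J$, i.e.\ iff $\omega \in \biguplus_{i \in \Nats \setminus J} A_i$. I expect this to be the crux of the argument: for a non-disjoint family the complement of $\biguplus_{i \in J} A_i$ need not be expressible as a union of the $A_i$ at all, so $\F{E}$ would fail to be closed under complementation and the lemma would break. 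A minor point I would flag but not dwell on is that empty cells $A_i$ cause distinct index sets $J$ to yield the same element of $\F{E}$; this is harmless, as $\F{E}$ is defined as a collection of \emph{sets} and the closure verifications above are unaffected.
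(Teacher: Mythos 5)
Your proposal is correct and follows essentially the same route as the paper's own proof: both arguments show that $\F{E}$ is itself a $\sigma$--field containing every $A_i$ (with disjointness doing the work in the complementation step, via $\Omega \setminus \biguplus_{i \in J} A_i = \biguplus_{i \in \Nats \setminus J} A_i$, and index-set union handling countable unions) and then invoke minimality of $\F{H}$, while the converse inclusion follows because any $\sigma$--field containing all the $A_i$ contains their countable unions. Your pointwise justification of the complement identity and the remark about empty cells are just slightly more explicit versions of steps the paper leaves implicit.
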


\begin{proof}
	Showing that $\F{E}$ is a $\sigma$--field is enough to prove the desired result: it contains all the sets $A_i$ and every $\sigma$--algebra containing all the $A_i$ has to contain all their countable unions, i.e., $\F{H}$:
	\begin{item}
	      \item $\Omega \in \F{H}$ by choosing $J=\Nats$.
	      \item Let $J\subset \Nats$.
	      Then $\Omega\setminus\left(\biguplus\limits_{i \in J}A_i\right) = \biguplus\limits_{i \in \Nats \setminus J}A_i \in \F{H}.$
	      \item Let $J_n \subset \Nats$.
	      Then $\bigcup\limits_{n \in \Nats} \biguplus\limits_{i \in J_n} A_i = \biguplus\limits_{i \in \bigcup\limits_{n \in \Nats}J_n} A_i\in \F{H}.$
	\end{item}
\end{proof}

\noindent
This special type of a $\sigma$--field can be used to describe the elements of the $\sigma$--fields $\F{F}_n$ (cf.
\cref{def:canonical-filtration}) and $\F{G}_n$ (cf.
\cref{def:filtration}).
We will discuss it in more detail in Appendix \ref{app:proofs_expectations_processes} where we use it in the proofs.
\begin{definition}[Borel--Field]
	If $\Omega = \PosRealsInf$ we use its $\sigma$--field $\F{B}=\F{B}\left(\PosRealsInf\right)$, the \emph{Borel--field} with
	\[
		\F{B}\left(\PosRealsInf\right) \coloneqq \sigmagen{\setcomp{(a,b)\subset \PosRealsInf}{a < b \in \PosRealsInf}}~.
	\]
\end{definition}

\noindent
In this work, we use the concept of measurable maps (or ``measurable mappings'').
Measurable maps are the structure--preserving maps between measurable spaces.
They are defined as follows.

\begin{definition}[Measurable Map]
	\label{defQQQmeasurable_map}
	Let $(\Omega_1,\F{F}_1)$ and $(\Omega_2,\F{F}_2)$ be measurable spaces.
	A function $f\colon \Omega_1 \to \Omega_2$ is called an \emph{$\F{F}_1$--$\F{F}_2$ measurable map} or just \emph{measurable} if for all $A \in \F{F}_2$
	\[
		f^{-1}(A) = \{\run \in \Omega_1 \mid f(\run) \in A\} \in \F{F}_1~.
	\]
	$\sigmagen{f} \coloneqq f^{-1}(\F{F}_2) \coloneqq \{f^{-1}(A) \mid A \in \F{F}_2\}$ is the smallest $\sigma$--field $\F{F}$ such that $f$ is $\F{F}_1$--$\F{F}_2$ measurable.
	Similarly, $\sigmagen{f_0,\ldots,f_n} = \sigmagen{f_0^{-1}(\F{F}_2) \cup \ldots \cup f_n^{-1}(\F{F}_2)}$.
	This will become important when talking about random variables and conditional expected value.
\end{definition}

\bigskip

\subsection{Probability Spaces}

\noindent
So far we have only introduced the concept of measurable spaces.
Intuitively, a measurable space provides the structure for defining a \emph{measure}.
Probability spaces are measurable space attached with a certain measure where the measure of the sample space is $1$.

\begin{definition}[Probability Measure, Probability Space]
	Let $(\Omega, \F{F})$ be a measurable space.
	A map $\mu \colon \F{F} \to \IR_{\geq 0}$ is called a \emph{measure} if
	\begin{enumerate}
		\item $\mu(\emptyset)=0$
		\item $\mu(\biguplus\limits_{i\geq 0} A_i) = \sum\limits_{i\geq 0}\mu(A_i)$~.
	\end{enumerate}
	A \emph{probability measure} is a measure $\mathbb{P} \colon \F{F} \to \IR_{\geq 0}$ with $\mathbb{P}(\Omega)=1$.
	This implies that $\mathbb{P}(A)\in [0,1]$ for every $A \in \F{F}$.
	If $\mathbb{P}$ is a probability measure, then $(\Omega, \F{F}, \mathbb{P})$ is called a \emph{probability space}.
	In this setting a set in $\F{F}$ is called an \emph{event}.
\end{definition}

\noindent
Here, the intuition for a probability measure $\mathbb{P}$ is that for any set $A \in \F{F}$, $\mathbb{P}(A)$ is the probability that an element chosen from $\Omega$ is contained in $A$.

In this work we will consider properties that hold almost--surely, for example almost--sure termination or almost--sure convergence.

\begin{definition}[Almost--Sure Properties]
	Let $(\Omega, \F{F}, \mathbb{P})$ be a probability space and $\alpha$ some property, e.g., a logical formula.
	If $A_\alpha \coloneqq \{\run \in \Omega \mid \run \vDash \alpha\}\in \F{F}$ (i.e., it is measurable) and $\IP{}{A_\alpha}=1$ then $\alpha$ is said to hold \emph{almost--surely}.
\end{definition}

\noindent
If a property $\alpha$ holds almost--surely it does not need to hold for all $\run \in \Omega$.
However, the measure $\mathbb{P}$ cannot distinguish $A_\alpha=\Omega$ and $A_\alpha\neq \Omega$ if $\IP{}{A_\alpha}=1$, so in the sense of $\mathbb{P}$, almost--surely holding properties can be considered as holding globally.
\bigskip

\subsection{Integrals of Arbitrary Measures}
\label{sec:integral_arbitrary_measure}
\noindent
We now introduce a notion of an integral with respect to an arbitrary measure.
Therefore, we fix a measurable space $(\Omega, \F{F})$ and a measure $\mu$.
The objective is to define a ``mean'' of a measurable function $f$.
The basic idea is to partition the image of $f$ into sets $A_i$ on which $f$ has a constant value $\alpha_i$.
Then we compute the weighted average of the $\alpha_i$, where the weights are the measures of the $A_i$.
This definition is fine if $f$ takes only finitely many values (these functions are called \emph{elementary}).
If $f$ takes infinitely (countable or even uncountable) many values, we have to approximate $f$ step by step by such functions with finite image.
This yields a limit process.
In this work we will only consider the cases where $\mu=\mathbb{P}$ is a probability measure or a probability submeasure (i.e., $\mathbb{P}(\Omega)\leq 1$).

\begin{definition}[Elementary Function, \protect{\cite[Def.~2.2.1]{bauer71measure}}]
	An \emph{elementary function} is a nonnegative measurable function $f\colon \Omega \to \PosRealsInf$ that takes only finitely many finite values, i.e., there exist $A_1, \dots, A_n \in \F{F}$ and $\alpha_1, \dots, \alpha_n \geq 0$ such that
	\[
		f= \sum \limits_{i=1}^n \alpha_i \cdot \indicator{A_i}~.
	\]
\end{definition}

\begin{lemma}[Decomposition of Elementary Functions \protect{\cite[Lemma 2.2.2]{bauer71measure}}]
	\label{lemma:elementary_functions_decompositions}
	Let $f=\sum \limits_{i=1}^n \alpha_i \cdot \indicator{A_i}=\sum \limits_{j=1}^n \beta_j \cdot \indicator{B_j}$ be an elementary function, where $\alpha_i, \beta_j \geq 0$, $A_i, B_j \in \F{F}$ for all $i$ and $j$.
	Then
	\[
		\sum \limits_{i=1}^n \alpha_i \cdot \mu\left(A_i\right)=\sum \limits_{j=1}^n \beta_j \cdot \mu\left(B_j\right)~.
	\]
\end{lemma}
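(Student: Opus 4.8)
The plan is to show that the number $\sum_{i=1}^n \alpha_i \cdot \mu(A_i)$ assigned to an elementary function is independent of the chosen representation by passing to a common refinement of the two generating families of sets. Concretely, I would first form the finite partition of $\Omega$ into \emph{atoms} generated by all $2n$ sets $A_1,\dots,A_n,B_1,\dots,B_n$: for each pair of index subsets $S,T \subseteq \{1,\dots,n\}$ set
\[
	D_{S,T} \eeq \bigcap_{i \in S} A_i \,\cap\, \bigcap_{i \notin S}(\Omega \setminus A_i)\,\cap\, \bigcap_{j \in T} B_j \,\cap\, \bigcap_{j \notin T}(\Omega \setminus B_j)~.
\]
These finitely many sets lie in $\F{F}$, are pairwise disjoint, and satisfy $\biguplus_{S,T} D_{S,T} = \Omega$. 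Moreover, each original set is a finite disjoint union of atoms, namely $A_i = \biguplus_{S \ni i,\,T} D_{S,T}$ and $B_j = \biguplus_{S,\,T \ni j} D_{S,T}$.

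The key observation is that $f$ is constant on every atom. Indeed, fix $D_{S,T}$ and any $\run \in D_{S,T}$; then $\indicator{A_i}(\run) = 1$ exactly when $i \in S$, so $f(\run) = \sum_{i \in S} \alpha_i$, and symmetrically $f(\run) = \sum_{j \in T} \beta_j$. Hence both sums agree with the common value of $f$ on that atom, which I will denote $\gamma_{S,T}$.

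It then remains to rewrite both sides of the claimed identity as a single sum over atoms and to invoke finite additivity of $\mu$. Using additivity on the disjoint decomposition of each $A_i$ and swapping the two finite sums,
\[
	\sum_{i=1}^n \alpha_i \cdot \mu(A_i) \eeq \sum_{i=1}^n \alpha_i \sum_{\substack{S \ni i\\ T}} \mu(D_{S,T}) \eeq \sum_{S,T} \mu(D_{S,T}) \sum_{i \in S} \alpha_i \eeq \sum_{S,T} \gamma_{S,T}\cdot \mu(D_{S,T})~.
\]
The identical computation starting from the $B_j$--representation yields $\sum_{j=1}^n \beta_j \cdot \mu(B_j) = \sum_{S,T}\gamma_{S,T}\cdot \mu(D_{S,T})$ as well, and comparing the two expressions establishes the lemma.

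I do not expect any genuine obstacle here: the argument is entirely finite and combinatorial, so finite additivity of $\mu$ suffices and no limiting or integrability considerations arise --- all summands are non--negative, so even for an infinite measure the rearrangement of finitely many terms is unproblematic. The only point requiring a little care is the bookkeeping: one must retain the atom $D_{\emptyset,\emptyset}$ on which $f \equiv 0$ --- it carries $\gamma_{\emptyset,\emptyset}=0$ and hence contributes nothing to either side --- so that the atoms genuinely partition all of $\Omega$ even when the families $(A_i)$ and $(B_j)$ fail to cover $\Omega$.
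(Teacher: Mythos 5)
Your proof is correct, but note that the paper itself never proves this lemma: it is imported as a black box from Bauer's measure--theory text (Lemma 2.2.2 there), cited precisely to justify the well--definedness of the integral of elementary functions, so there is no in--paper argument to compare against. Your common--refinement argument is the standard self--contained proof of this fact: form the atoms $D_{S,T}$, observe that $f$ is constant on each, and rewrite both representations via finite additivity. One point deserves to be made more explicit than your closing remark about $D_{\emptyset,\emptyset}$: for an \emph{empty} atom the ``common value'' $\gamma_{S,T}$ is not defined (there is no point at which to evaluate $f$), and the two coefficient sums $\sum_{i \in S}\alpha_i$ and $\sum_{j \in T}\beta_j$ can genuinely differ on such index pairs --- e.g.\ with $A_1 = B_1$ and $\alpha_1 = \beta_1 = 1$, the pair $S=\{1\}$, $T=\emptyset$ gives coefficients $1$ and $0$. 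The identity nevertheless survives because an empty atom has $\mu(D_{S,T}) = \mu(\emptyset) = 0$ while all coefficients are finite (elementary functions take only finite values), so the corresponding terms vanish on both sides; the pointwise argument is needed only on nonempty atoms. With that case split stated, your proof is complete, and as you observe it uses nothing beyond finite additivity and is valid for arbitrary measures, since every quantity involved lies in $[0,\infty]$ and only finitely many terms are rearranged.
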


\noindent
Given an elementary function, the decomposition into a linear combination of indicator functions is not unique.
But to define an integral we have to guarantee that its value does not depend on the chosen decomposition.
Fortunately, this can be proved:

\begin{definition}[Integral of Elementary Function, \protect{\cite[Def.~2.2.3]{bauer71measure}}]
	\label{def:integral_elementary}
	Let $f= \sum \limits_{i=1}^n \alpha_i \cdot \indicator{A_i}$ be an elementary function.
	Then we define its integral w.r.t.~$\mu$ as
	\[
		\int_{\Omega} f\, d\mu \coloneqq \int f \, d\mu \coloneqq \sum \limits_{i=1}^n \alpha_i \cdot \mu\left(A_i\right).
	\]
	The well--definedness is justified by \textnormal{\cref{lemma:elementary_functions_decompositions}} as it shows the independence of the chosen decomposition of the elementary function.
\end{definition}

\noindent
However, the measurable functions we use are not elementary.
They take arbitrary (countably or even uncountably) many values.
So we have to generalize \cref{def:integral_elementary}.
It can be shown that any nonnegative measurable function is the limit of a monotonic increasing sequence of elementary functions.

\begin{theorem}[Representation by Elementary Functions]
	Let $f\colon \Omega \to \PosRealsInf$ be a nonnegative measurable function.
	Then there exists a monotonic sequence $f_0 \leq f_1 \leq \dots$ of elementary functions such that
	\[
		f = \sup_{n \in \Nats} f_n = \lim \limits_{n \to \omega} f_n~.
	\]
	Furthermore, for any two such sequences $(f_n)_{n \in \Nats}$ and $(g_n)_{n \in \Nats}$ we have $\sup\limits_{n \in \Nats} \int f_n \, d \mu=\sup\limits_{n \in \Nats} \int g_n \, d \mu$.
\end{theorem}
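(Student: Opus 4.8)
The plan is to follow the classical two-step argument: first exhibit one explicit monotone approximating sequence, and then prove that the limiting integral does not depend on the chosen sequence, the latter via the standard $c \in (0,1)$ truncation device combined with continuity of the measure from below. Throughout I may rely on well--definedness of the integral of an elementary function (\cref{lemma:elementary_functions_decompositions}) and on the elementary monotonicity and linearity properties it implies.

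\textbf{Construction.} For each $n \in \Nats$ and $0 \leq k < n2^n$, I set $A_{n,k} = f^{-1}\bigl([\tfrac{k}{2^n}, \tfrac{k+1}{2^n})\bigr)$ and $B_n = f^{-1}([n,\infty])$; these lie in $\F{F}$ since $f$ is measurable and they are preimages of Borel sets. Define
\[
  f_n \eeq \sum_{k=0}^{n2^n - 1} \tfrac{k}{2^n}\cdot \indicator{A_{n,k}} \pplus n \cdot \indicator{B_n}.
\]
Each $f_n$ takes only finitely many finite values, hence is elementary. Monotonicity $f_n \leq f_{n+1}$ holds because the dyadic partition at level $n+1$ refines that at level $n$ and the cap rises from $n$ to $n+1$; a short case distinction on the interval containing $f(\omega)$ suffices. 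For convergence: if $f(\omega) < \infty$ then $0 \leq f(\omega) - f_n(\omega) < 2^{-n}$ for all $n > f(\omega)$, while if $f(\omega) = \infty$ then $f_n(\omega) = n \to \infty$. Hence $\sup_n f_n = \lim_n f_n = f$ pointwise.

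\textbf{Independence.} The substantive part is the following lemma: \emph{if $(u_n)_{n \in \Nats}$ is a monotone increasing sequence of elementary functions and $v$ is elementary with $v \leq \sup_n u_n$, then $\int v\,d\mu \leq \sup_n \int u_n\, d\mu$.} Granting it, for two sequences $(f_n),(g_n)$ increasing to $f$ each $g_m \leq f = \sup_n f_n$ yields $\int g_m\,d\mu \leq \sup_n \int f_n\,d\mu$; taking the supremum over $m$ and arguing symmetrically gives the claimed equality. To prove the lemma, write $v = \sum_{i=1}^p \beta_i \indicator{B_i}$, fix $c \in (0,1)$, and put $E_n = \{\omega : u_n(\omega) \geq c\,v(\omega)\}$. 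The $E_n$ increase and cover $\Omega$: trivially on $\{v=0\}$, and on $\{v>0\}$ because $c\,v(\omega) < v(\omega) \leq \sup_n u_n(\omega)$. Since $u_n \geq c\,v\cdot\indicator{E_n}$ pointwise and both sides are elementary, monotonicity and linearity of the elementary integral give $\int u_n\,d\mu \geq c\sum_i \beta_i\,\mu(B_i \cap E_n)$. Continuity of $\mu$ from below (a consequence of countable additivity) yields $\mu(B_i\cap E_n)\to\mu(B_i)$, so $\sup_n\int u_n\,d\mu \geq c\int v\,d\mu$; letting $c\uparrow 1$ finishes the lemma.

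\textbf{Main obstacle.} The only genuinely delicate point is this lemma. The factor $c \in (0,1)$ is unavoidable: one cannot compare $u_n$ with $v$ directly, since the set on which $u_n$ still lags behind $v$ may be nonempty for every finite $n$, and the slack produced by $c$ is precisely what lets continuity of measure from below push the inequality through the limit. Everything else --- the dyadic construction and the elementary monotonicity/linearity facts --- is routine bookkeeping built on \cref{lemma:elementary_functions_decompositions}.
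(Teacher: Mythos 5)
Your proposal is correct: the dyadic staircase construction, the $c\in(0,1)$ truncation lemma with continuity of $\mu$ from below, and the symmetric sup argument together give a complete and sound proof, with only routine facts (monotonicity and linearity of the elementary integral, measurability of the sets $E_n$) left implicit. The paper itself offers no proof at all --- it simply cites \cite[Cor.~2.3.2, Thm.~2.3.6]{bauer71measure} --- and your argument is essentially the standard one carried out in that reference, so you have supplied exactly what the paper delegates to the literature.
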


\begin{proof}
	See \cite[Cor.
		2.3.2., Thm.
		2.3.6]{bauer71measure}.
\end{proof}

\noindent
This theorem justifies the following definition of an integral for an arbitrary nonnegative function.

\begin{definition}[Integral of Arbitrary Functions]
	Let $f\colon \Omega \to \PosRealsInf$ be a nonnegative measurable function and $(f_n)_{n \in \Nats}$ a monotonic sequence of elementary functions such that~$f= \sup_{n \in \Nats} f_n$.
	Then we define the integral of $f$ w.r.t.~$\mu$ by
	\[
		\int_{\Omega} f \, d\mu \coloneqq \int f \, d\mu \coloneqq \sup_{n \in \Nats} \int f_n \, d\mu~.
	\]
\end{definition}

\noindent
Before we state the properties of the integral used in this work we will define the integral on a measurable subset of $\Omega$.

\begin{definition}[Integral on Measurable Subset]
	Let $f\colon \Omega \to \PosRealsInf$ be a nonnegative measurable function and $A\in \F{F}$.
	Then $f \cdot \indicator{A}$ is nonnegative and measurable, and we define
	\[
		\int_{A} f \, d\mu \coloneqq \int f \cdot \indicator{A} \, d\mu~.
	\]
\end{definition}

\noindent
With this definition of an integral, a very special property holds for monotonically increasing sequences of nonnegative functions: taking the limit (it always exists due to monotonicity) and the integral can be intertwined.
We will focus on this when discussing uniform integrability.

\begin{theorem}[Monotone Convergence Theorem, \protect{\cite[Thm.~2.3.4.]{bauer71measure}}]
	\label{app_thm:monotone_convergence_general_integral}
	Let $(f_n)_{n \in \Nats}$ be a monotonic sequence of nonnegative measurable functions, i.e., $f_n \colon \Omega \to \PosRealsInf$ is measurable and $f_0 \leq f_1 \leq \dots$.
	Then $\sup_{n\in \Nats} f_n = \lim\limits_{n \to \omega} f_n \colon \Omega \to \PosRealsInf$ is measurable and
	\[
		\lim_{n \to \omega} \int f_n \, d \mu = \int\lim_{n \to \omega} f_n \, d \mu~.
	\]
\end{theorem}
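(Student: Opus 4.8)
The plan is to prove the two inequalities separately after recording that the pointwise limit is itself a legitimate integrand. First I would observe that, since $(f_n)_{n \in \Nats}$ is monotonically increasing, $f := \lim_{n \to \omega} f_n = \sup_{n \in \Nats} f_n$ exists pointwise in $\PosRealsInf$, and that it is measurable: for every $a \in \PosRealsInf$ the preimage satisfies $f^{-1}((a,\infty]) = \bigcup_{n \in \Nats} f_n^{-1}((a,\infty])$, a countable union of $\F{F}$--measurable sets. Next I would record \emph{monotonicity of the integral}, i.e.\ that $0 \preceq g \le h$ (both measurable) implies $\int_\Omega g\,d\mu \le \int_\Omega h\,d\mu$; this follows directly from the definition of the integral together with the Representation by Elementary Functions theorem stated just above, since every elementary minorant of $g$ is also one of $h$. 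Applying this to $f_n \le f_{n+1} \le f$ gives an increasing sequence of integrals bounded by $\int_\Omega f\,d\mu$, whence the easy direction $\lim_{n \to \omega} \int_\Omega f_n\,d\mu \le \int_\Omega f\,d\mu$.

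The reverse inequality is the heart of the argument. Here I would fix an arbitrary elementary function $g = \sum_{i=1}^m \alpha_i \cdot \indicator{A_i}$ with $g \le f$ and an arbitrary constant $c \in (0,1)$, and introduce the sets $E_n := \setcomp{\omega \in \Omega}{f_n(\omega) \ge c \cdot g(\omega)} \in \F{F}$. Monotonicity of $(f_n)_{n \in \Nats}$ makes $(E_n)_{n \in \Nats}$ an increasing sequence, and I would check that $\bigcup_{n \in \Nats} E_n = \Omega$: wherever $g(\omega) = 0$ this is immediate, and wherever $g(\omega) > 0$ we have $c\,g(\omega) < g(\omega) \le f(\omega) = \lim_{n \to \omega} f_n(\omega)$, so $\omega \in E_n$ for all sufficiently large $n$. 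The point of the slack factor $c$ is precisely to force this exhaustion of $\Omega$.

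With these sets in hand, the key estimate reads
\[
\int_\Omega f_n \, d\mu \;\ge\; \int_{E_n} f_n \, d\mu \;\ge\; c \int_{E_n} g \, d\mu \;=\; c \sum_{i=1}^m \alpha_i \cdot \mu(A_i \cap E_n)~.
\]
Since $E_n \uparrow \Omega$ we have $A_i \cap E_n \uparrow A_i$, and \emph{continuity of $\mu$ from below} --- a consequence of countable additivity, obtained by writing the increasing union as a disjoint union of successive differences --- yields $\mu(A_i \cap E_n) \to \mu(A_i)$ as $n \to \omega$. Passing to the limit gives $\lim_{n \to \omega} \int_\Omega f_n\,d\mu \ge c \sum_{i=1}^m \alpha_i \cdot \mu(A_i) = c \int_\Omega g\,d\mu$, and letting $c \uparrow 1$ produces $\lim_{n \to \omega} \int_\Omega f_n\,d\mu \ge \int_\Omega g\,d\mu$ for \emph{every} elementary $g \le f$. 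Finally I would invoke the Representation by Elementary Functions theorem to choose an increasing elementary sequence $g_k \uparrow f$ with $\int_\Omega f\,d\mu = \sup_{k \in \Nats} \int_\Omega g_k\,d\mu$; applying the bound to each $g_k$ gives $\int_\Omega f\,d\mu \le \lim_{n \to \omega} \int_\Omega f_n\,d\mu$, which together with the easy direction establishes the claimed equality. I expect the main obstacle to be this reverse direction, specifically getting the $E_n$/$c$ construction right so that continuity from below can be applied to each $\mu(A_i \cap E_n)$; the rest is bookkeeping built on monotonicity and the representation theorem.
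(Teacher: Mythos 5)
Your proof is correct, and it is the classical ``slack factor'' argument for the Monotone Convergence Theorem: measurability of the pointwise supremum via preimages, the easy inequality from monotonicity of the integral, and the reverse inequality via the exhaustion sets $E_n = \setcomp{\omega \in \Omega}{f_n(\omega) \geq c \cdot g(\omega)}$, continuity of $\mu$ from below applied to the finitely many terms $\mu(A_i \cap E_n)$, and letting $c \uparrow 1$. Be aware that the paper itself offers \emph{no} proof of this statement --- it is imported wholesale by citation from Bauer's textbook (Thm.~2.3.4 there) --- so your argument supplies what the paper delegates to the literature, and it is essentially the standard textbook proof. One step deserves tightening: you justify monotonicity of the integral by noting that ``every elementary minorant of $g$ is also one of $h$'', but the paper's integral of a general nonnegative measurable function is \emph{not} defined as a supremum over all elementary minorants; it is defined as $\sup_n \int g_n \, d\mu$ for a monotone elementary sequence $g_n \uparrow g$, with independence of the chosen sequence guaranteed by the Representation by Elementary Functions theorem. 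The equivalence of that definition with the supremum over elementary minorants is not free --- its nontrivial direction is essentially the same $c$-trick you deploy later --- so invoking it ``directly from the definition'' is circular in spirit. Within the paper's framework the repair is simple: for measurable $g \leq h$, choose elementary sequences $g_n \uparrow g$ and $h_n \uparrow h$; then $\max(g_n, h_n)$ is elementary and increases to $h$, so $\int h \, d\mu = \sup_n \int \max(g_n, h_n)\, d\mu \geq \sup_n \int g_n \, d\mu = \int g \, d\mu$ by monotonicity of the elementary integral together with the well-definedness theorem. With that adjustment, the remaining bookkeeping --- including the fact that elementary functions take only finite values, which is exactly what makes $\bigcup_n E_n = \Omega$ work --- goes through as you wrote it.
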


\begin{lemma}[Properties of the Integral]
	\label{lemma:properties_of_integral}
	Let $a,b \geq 0$, $f,g \colon \Omega \to \PosRealsInf$ be measurable functions, and $A,A_i \in \F{F}$.
	Then
	\begin{align*}
		\int_{A} a\cdot f + b \cdot g \, d \mu               & = a\cdot\int_{A} f \, d \mu + b \cdot \int_{A} g \, d \mu~. & \text{(Linearity)}  \\
		\int_{A} 1 \, d \mu                                  & = \mu(A)~.                                                  & \text{(Measure)}    \\
		\int_{\biguplus\limits_{i \in \Nats} A_i} f \, d \mu & = \sum\limits_{i \in \Nats}\int_{ A_i} f \, d \mu~.         & \text{(Additivity)} \\
	\end{align*}
\end{lemma}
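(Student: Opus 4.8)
The plan is to prove each of the three identities by the standard two-stage measure-theoretic argument: first establish it for \emph{elementary} functions directly from \cref{def:integral_elementary} and \cref{lemma:elementary_functions_decompositions}, and then lift it to arbitrary nonnegative measurable functions by monotone approximation and the Monotone Convergence Theorem (\cref{app_thm:monotone_convergence_general_integral}). Throughout, I would reduce every integral over a subset $A$ to one over $\Omega$ via the defining identity $\int_A h \, d\mu = \int h \cdot \indicator{A} \, d\mu$.

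For the \textbf{Measure} property this is immediate, and I would dispatch it first: $1 \cdot \indicator{A}$ is itself an elementary function with single coefficient $\alpha_1 = 1$ and set $A_1 = A$, so \cref{def:integral_elementary} gives $\int_A 1 \, d\mu = \int \indicator{A} \, d\mu = 1 \cdot \mu(A) = \mu(A)$. For \textbf{Linearity}, I would treat elementary $f = \sum_i \alpha_i \indicator{A_i}$ and $g = \sum_j \beta_j \indicator{B_j}$ first: then $a f + b g = \sum_i (a\alpha_i)\indicator{A_i} + \sum_j (b\beta_j)\indicator{B_j}$ is again elementary, and applying \cref{def:integral_elementary} (whose well-definedness is secured by \cref{lemma:elementary_functions_decompositions}) yields the identity. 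For general nonnegative measurable $f, g$, I would pick monotone sequences $(f_n)$ and $(g_n)$ of elementary functions with $f = \sup_n f_n$ and $g = \sup_n g_n$ via the Representation by Elementary Functions theorem. Then $(a f_n + b g_n)$ is a monotone sequence of elementary functions with supremum $a f + b g$; applying the elementary-function case and passing to the limit on both sides through \cref{app_thm:monotone_convergence_general_integral} gives linearity over $\Omega$. The restriction to $A$ follows by replacing $f, g$ with $f\indicator{A}, g\indicator{A}$ and noting $(af+bg)\indicator{A} = a(f\indicator{A}) + b(g\indicator{A})$.

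For \textbf{Additivity}, the key observation is that, since the $A_i$ are pairwise disjoint, $\indicator{\biguplus_i A_i} = \sum_{i \in \Nats} \indicator{A_i}$ pointwise. For elementary $f = \sum_k \alpha_k \indicator{C_k}$ this reduces to the countable additivity built into the definition of $\mu$: $\int_{\biguplus_i A_i} f \, d\mu = \sum_k \alpha_k \, \mu\bigl(C_k \cap \biguplus_i A_i\bigr) = \sum_k \alpha_k \sum_i \mu(C_k \cap A_i) = \sum_i \int_{A_i} f \, d\mu$, where rearranging the double series is justified because all terms are nonnegative. For general $f$, I would approximate by an elementary sequence $f = \sup_n f_n$ and invoke \cref{app_thm:monotone_convergence_general_integral}; the remaining point is interchanging the integral with the \emph{infinite} sum over $i$. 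I would handle this via finite partial sums: $\indicator{\biguplus_{i \le m} A_i} = \sum_{i \le m} \indicator{A_i}$, so finite additivity (a special case of the already-proven Linearity) gives $\int_{\biguplus_{i\le m} A_i} f\, d\mu = \sum_{i \le m} \int_{A_i} f\, d\mu$; since $f \cdot \indicator{\biguplus_{i \le m} A_i}$ increases monotonically to $f \cdot \indicator{\biguplus_i A_i}$ as $m \to \omega$, a final application of the Monotone Convergence Theorem turns the left side into $\int_{\biguplus_i A_i} f\, d\mu$ and the right side into $\sum_{i \in \Nats} \int_{A_i} f\, d\mu$. I expect this last interchange of the countable sum with the integral to be the main technical point; everything else is a direct consequence of the definitions together with monotone convergence.
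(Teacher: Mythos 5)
Your proposal is correct, but note that the paper does not actually prove this lemma at all: its entire proof is a pointer to the corresponding facts in Bauer's textbook, namely \cite[(2.2.4.), (2.3.6.), (2.3.7.), Cor.~2.3.5.]{bauer71measure}. What you wrote is, in essence, the standard argument that this citation stands for, assembled from exactly the tools the appendix provides: the integral of elementary functions (\cref{def:integral_elementary}), its well-definedness via \cref{lemma:elementary_functions_decompositions}, the representation of nonnegative measurable functions as monotone limits of elementary functions, and the Monotone Convergence Theorem (\cref{app_thm:monotone_convergence_general_integral}). All three of your arguments go through: the Measure property is immediate from the definition; Linearity follows from the elementary case plus monotone approximation, where the interchange of $\sup$ with the nonnegative linear combination is legitimate (with the usual convention $0 \cdot \infty = 0$); and for Additivity your partial-sum argument --- finite additivity as a special case of Linearity, then letting $f \cdot \indicator{\biguplus_{i \leq m} A_i}$ increase to $f \cdot \indicator{\biguplus_{i \in \Nats} A_i}$ and applying monotone convergence --- is exactly right, since the limit of the partial sums is by definition the value of the series. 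The only redundancy is your separate elementary-function computation for Additivity (the double-series manipulation over $k$ and $i$): it is subsumed by the partial-sum argument, which never needs the elementary case. In short, your proof buys self-containedness at the cost of length, whereas the paper buys brevity by deferring to the literature; mathematically the two coincide, since the cited proofs in Bauer follow the same two-stage scheme you propose.
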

\begin{proof}
	See \cite[(2.2.4.), (2.3.6.), (2.3.7.), Cor.~2.3.5.]{bauer71measure}.
\end{proof}

\subsection{Random Variables}

\noindent
A \emph{random variable} $X$ maps elements of one set $\Omega$ to another set $\Omega'$.
If $\mathbb{P}$ is a probability measure for $\Omega$ (i.e., for $A\subseteq \Omega$, $\mathbb{P}(A)$ is the probability that an element chosen from $\Omega$ is contained in $A$), then one obtains a corresponding probability measure $\mathbb{P}^X$ for $\Omega'$.
For $A' \subseteq \Omega'$, $\mathbb{P}^X(A')$ is the probability that an element chosen from $\Omega$ is mapped by $X$ to an element contained in $A'$.
In other words, instead of regarding the probabilities for choosing elements from $A$, one now regards the probabilities for the values of the random variable $X$.

\begin{definition}[Random Variable]
	\label{defQQQrandom_variable}
	Let $(\Omega, \F{F}, \mathbb{P})$ be a probability space.
	An $\F{F}$--$\F{B}(\PosRealsInf)$ measurable map $X \colon \Omega \to \PosRealsInf$ is a random variable.
	Instead of saying ``$\F{F}$--$\F{B}\left(\PosRealsInf\right)$ measurable'' we simply use the notion ``$\F{F}$--measurable''.
	It is called \emph{discrete random variable}, if its image is a \emph{countable} set.

	$\mathbb{P}^X \colon \primed{\F{F}} \to [0,1], A' \mapsto \mathbb{P}(X^{-1}(A'))$ is the induced probability measure by $X$ on $(\PosRealsInf,\F{B}(\PosRealsInf))$.
	Instead of $\mathbb{P}^X(A)$ the notation $\mathbb{P}(X \in A)$ is common.
	If $A = \{i\}$ is a singleton set, we also write $\mathbb{P}(X = i)$ instead of $\mathbb{P}^X(\{i\})$.
\end{definition}

\begin{definition}[Expected Value]
	Let $X \colon \Omega \to \PosRealsInf$ be a random variable.
	Then $\expec{}{X} \coloneqq \int X d\mathbb{P}$.
\end{definition}

\begin{lemma}[Expected Value as Sum]
	\label{alternative expected value}
	If $X$ is a discrete random variable we have $\expec{}{X}=\sum\limits_{r \in \PosRealsInf} r \cdot \IP{}{X=r}$.
	Note that this series has only countably many nonzero nonnegative summands.
	Hence, it either converges or it diverges to infinity.
\end{lemma}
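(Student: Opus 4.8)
The plan is to realize $X$ as the pointwise, monotonically increasing limit of a sequence of elementary functions and then invoke the Monotone Convergence Theorem (\cref{app_thm:monotone_convergence_general_integral}) together with the definition of the integral of an elementary function (\cref{def:integral_elementary}). Since $X$ is discrete, its image $X(\Omega) \subseteq \PosRealsInf$ is countable; I would enumerate its \emph{finite} values as $r_1, r_2, \ldots$ and treat the value $\infty$ separately. Concretely, I would define $f_n \colon \Omega \to \PosRealsInf$ by letting $f_n(\run) = X(\run)$ whenever $X(\run) \in \{r_1, \ldots, r_n\}$, letting $f_n(\run) = n$ whenever $X(\run) = \infty$, and $f_n(\run) = 0$ otherwise. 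Each $f_n$ is elementary: it attains only finitely many finite values, and since $X$ is $\F{F}$--measurable, each preimage $X^{-1}(\{r_i\})$ and $X^{-1}(\{\infty\})$ lies in $\F{F}$.

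First I would verify the two properties needed for the Monotone Convergence Theorem. Monotonicity is checked pointwise: at a $\run$ with $X(\run) = r_j$ finite, the value of $f_n(\run)$ is $0$ for $n < j$ and then equals $r_j$ for all $n \geq j$; at a $\run$ with $X(\run) = \infty$, we have $f_n(\run) = n$, which is increasing. Pointwise convergence $f_n \to X$ follows for the same reason—finite points are eventually captured exactly, and on $X^{-1}(\{\infty\})$ we have $n \to \infty = X(\run)$.

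Next I would apply \cref{app_thm:monotone_convergence_general_integral} to get $\expec{}{X} = \int X \, d\mathbb{P} = \lim_{n \to \omega} \int f_n \, d\mathbb{P}$, and then evaluate each finite--stage integral via the linearity in \cref{lemma:properties_of_integral} and \cref{def:integral_elementary}, obtaining $\int f_n \, d\mathbb{P} = \sum_{i=1}^{n} r_i \cdot \IP{}{X = r_i} + n \cdot \IP{}{X = \infty}$. Both summands are nondecreasing in $n$ within $\PosRealsInf$, so the limit distributes over the sum and yields $\sum_{i=1}^{\infty} r_i \cdot \IP{}{X = r_i} + \infty \cdot \IP{}{X = \infty}$, read with the convention $\infty \cdot 0 = 0$. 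Since $\IP{}{X = r} = 0$ for every $r \notin X(\Omega)$, this is precisely $\sum_{r \in \PosRealsInf} r \cdot \IP{}{X = r}$, giving the claimed identity. The countability remark is then immediate: the only possibly--nonzero summands are indexed by $r \in X(\Omega)$, a countable set, and all terms are nonnegative, so the series is a countable sum in $\PosRealsInf$ and hence either converges to a finite value or diverges to $\infty$.

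The only genuinely delicate point is the bookkeeping around the value $\infty$: I must ensure the elementary approximants climb to $\infty$ on $X^{-1}(\{\infty\})$ (handled by the term $n \cdot \indicator{X^{-1}(\{\infty\})}$) and that the convention $\infty \cdot 0 = 0$ is respected, so that a null set of infinite values contributes nothing while a positive--probability set of infinite values forces $\expec{}{X} = \infty$. Everything else is a routine application of the monotone convergence machinery already assembled earlier in this appendix.
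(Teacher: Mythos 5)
Your proof is correct, but it takes a genuinely different route from the paper's. The paper proves the lemma by decomposing the \emph{domain}: it writes $\Omega = \biguplus_{i \in \Nats} X^{-1}(\{r_i\})$ over an enumeration of $X(\Omega)$, applies countable additivity of the integral (\cref{lemma:properties_of_integral}, Additivity) to split $\int X \, d\mathbb{P}$ into the pieces $\int_{X^{-1}(\{r_i\})} X \, d\mathbb{P}$, and then uses constancy of $X$ on each piece together with linearity and the measure property to evaluate each term as $r_i \cdot \IP{}{X = r_i}$. You instead unfold the \emph{definition} of the integral: you build an explicit monotone sequence of elementary approximants $f_n = \sum_{i=1}^{n} r_i \cdot \indicator{X^{-1}(\{r_i\})} + n \cdot \indicator{X^{-1}(\{\infty\})}$, evaluate each $\int f_n \, d\mathbb{P}$ by \cref{def:integral_elementary}, and pass to the limit via \cref{app_thm:monotone_convergence_general_integral}. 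The paper's argument is shorter, being a direct application of an already-stated lemma; yours is closer to first principles and is arguably more careful on one point: the value $r = \infty$. The paper's chain pulls $r_i$ out of the integral by ``Linearity,'' which as stated in \cref{lemma:properties_of_integral} only covers scalars $a, b \geq 0$ and is delicate when $r_i = \infty$, whereas your truncation $n \cdot \indicator{X^{-1}(\{\infty\})}$ together with the convention $\infty \cdot 0 = 0$ handles that case explicitly. The one bookkeeping obligation your route adds --- verifying pointwise monotonicity and convergence of the $f_n$ --- you discharge correctly, since each point of $X^{-1}(\{r_j\})$ is captured exactly from index $j$ onward and the approximants climb unboundedly on $X^{-1}(\{\infty\})$.
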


\begin{proof}
	Let $X(\Omega)=\{r_1,r_2,\dots\}$.
	Then $\Omega = \biguplus\limits_{i \in \Nats} X^{-1}(\{r_i\})$.
	Hence
	\begin{align*}
		\IE{}{X} & =\int X d\mathbb{P}                                                                                                                     \\
		         & =\int_{\biguplus\limits_{i \in \Nats} X^{-1}(\{r_i\})} X \, d\mathbb{P}                                                                 \\
		         & =\sum\limits_{i \in \Nats}\int_{X^{-1}(\{r_i\})} X \, d\mathbb{P}          & \text{by \cref{lemma:properties_of_integral} (Additivity)} \\
		         & = \sum\limits_{i \in \Nats}\int_{ X^{-1}(\{r_i\})} r_i \, d\mathbb{P}      & X \text{ is constant on $X^{-1}(\{r_i\})$}                 \\
		         & =\sum\limits_{i \in \Nats}r_i\cdot \int_{X^{-1}(\{r_i\})} 1 \, d\mathbb{P} & \text{by \cref{lemma:properties_of_integral} (Linearity)}  \\
		         & =\sum\limits_{i \in \Nats}r_i\cdot \IP{}{X^{-1}(\{r_i\})}                  & \text{by \cref{lemma:properties_of_integral} (Measure)}    \\
		         & =\sum\limits_{i \in \Nats}r_i\cdot \IP{}{X = r_i}                                                                                       \\
		         & =\sum\limits_{r \in \PosRealsInf} r \cdot \IP{}{X = r}~.
	\end{align*}
\end{proof}

\subsection{Uniform Integrability}

\noindent
Now given any stochastic process (i.e., a sequence of random variables, $X_n:\Omega \to \PosRealsInf$ on a probability space $(\Omega,\F{F},\mathbb{P})$) that has an almost--surely existing limit the question arises whether we can construct the expectation of the limit as the limit of the expectations of the $X_n$.
However, this is false in general, a counterexample is given in \cite[Introduction of 7.10]{grimmett2001probability}.
Therefore, we distinguish stochastic processes with this special property.

\begin{definition}
	[Uniform Integrability, \protect{\cite[Thm.
				7.10.(3)]{grimmett2001probability}}] Let $(X_n)_{n \in \Nats}$ be a sequence of random variables converging almost--surely to a random variable $X$.
	Then $(X_n)_{n \in \Nats}$ is uniformly integrable if and only if $\lim\limits_{n \to \omega} \expec{}{X_n}=\expec{}{X}$.
\end{definition}

\noindent
To check for uniform integrability is one of the main purposes of this work.
There are two sufficient criteria which we will list below.
The first one is the monotonic convergence theorem for random variables, a corollary of \cref{app_thm:monotone_convergence_general_integral}.
It states that a sequence of monotonically increasing variables is always uniformly integrable.

\begin{corollary}[Monotone Convergence Theorem for Random Variables, \protect{\cite[Thm.~2.3.4.]{bauer71measure}}]
	\label{app_thm:monotone_convergence}
	Let $(X_n)_{n \in \Nats}$ be a monotonic sequence of nonnegative random variables, i.e., $X_n \colon \Omega \to \PosRealsInf$ is measurable and $X_0 \leq X_1 \leq \dots$.
	Then $\lim_{n \to \omega} X_n \colon \Omega \to \PosRealsInf$ is measurable and
	\[
		\lim_{n \to \omega} \expec{}{X_n} = \expec{}{\lim_{n \to \omega} X_n }.
	\]
\end{corollary}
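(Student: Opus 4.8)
The plan is to derive this statement as an immediate specialization of the general Monotone Convergence Theorem (\cref{app_thm:monotone_convergence_general_integral}), exploiting that a probability measure is just a particular measure and that the expected value is by definition an integral with respect to it. No genuinely new argument should be needed; the work is entirely a matter of translating the vocabulary of random variables into the vocabulary of measurable functions.

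First I would recall that, by \cref{defQQQrandom_variable}, every random variable $X_n$ is an $\F{F}$--$\F{B}(\PosRealsInf)$ measurable map $X_n\colon \Omega \to \PosRealsInf$, i.e., a nonnegative measurable function in exactly the sense required by \cref{app_thm:monotone_convergence_general_integral}. The hypotheses of the corollary --- nonnegativity together with the monotonicity $X_0 \leq X_1 \leq \cdots$ --- are precisely the hypotheses of the general theorem, instantiated with $f_n \coloneqq X_n$ and the measure $\mu \coloneqq \mathbb{P}$. Since the sequence is monotonically increasing, its pointwise limit exists and coincides with its pointwise supremum, so that $\lim_{n \to \omega} X_n = \sup_{n \in \Nats} X_n$ is well defined and this is the object the general theorem already speaks about.

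Applying \cref{app_thm:monotone_convergence_general_integral} then yields both conclusions at once: that $\lim_{n \to \omega} X_n$ is measurable (hence again a random variable), and that
\begin{align*}
  \lim_{n \to \omega} \int X_n \, d\mathbb{P} \eeq \int \lim_{n \to \omega} X_n \, d\mathbb{P}~.
\end{align*}
It then only remains to rewrite each integral as an expected value by the defining equation $\expec{}{X} = \int X \, d\mathbb{P}$, which turns the displayed identity into $\lim_{n \to \omega} \expec{}{X_n} = \expec{}{\lim_{n \to \omega} X_n}$, as claimed.

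There is essentially no obstacle here beyond this translation. The one point that deserves a word is the verification that $\mathbb{P}$ indeed qualifies as a measure $\mu$ in the hypothesis of \cref{app_thm:monotone_convergence_general_integral} --- which is immediate, since a probability measure is by definition a measure with the extra normalization $\mathbb{P}(\Omega) = 1$ (and even a probability submeasure with $\mathbb{P}(\Omega)\leq 1$ would do). Thus the statement is genuinely a \emph{corollary}, obtained by specializing the general theorem rather than by any fresh analysis.
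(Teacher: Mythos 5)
Your proposal is correct and matches the paper exactly: the paper states this result as an immediate corollary of the general Monotone Convergence Theorem (\cref{app_thm:monotone_convergence_general_integral}) without further proof, and your derivation --- instantiating $\mu \coloneqq \mathbb{P}$, $f_n \coloneqq X_n$, and rewriting integrals as expected values via $\expec{}{X} = \int X \, d\mathbb{P}$ --- is precisely the intended specialization.
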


\noindent
The second sufficient criterion states that if the sequence is bounded by an integrable random variable $M$, then uniform integrability is given as well.

\begin{lemma}
	[Bounded Stochastic Processes are Uniformly Integrable \protect{\cite[Thm.
				7.10.(4)]{grimmett2001probability}}] \label{lem:upper_bound_uniform_integrability}
	Let $(X_n)_{n \in \Nats}$ be a sequence of random variables and $M$ a nonnegative random variable on a probability space $(\Omega, \F{F}, \mathbb{P})$ with $X_n\leq M$ for all $n \in \Nats$.
	If $\expec{}{M}<\infty$ then $(X_n)_{n \in \Nats}$ is uniformly integrable.
\end{lemma}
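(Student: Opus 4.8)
The plan is to prove uniform integrability of $(X_n)_{n\in\Nats}$ in its standard sense, namely the \emph{uniform vanishing of the tails}: I will show that $\sup_{n\in\Nats}\expec{}{X_n \cdot \indicator{\{X_n > a\}}} \to 0$ as $a \to \infty$. Since every $X_n$ is a nonnegative random variable (mapping into $\PosRealsInf$), we have $\abs{X_n} = X_n$, so this tail quantity is exactly the one in the classical criterion; the limit--interchange formulation recalled earlier then follows for any almost--surely convergent such sequence by the characterization of \cite[Thm.~7.10]{grimmett2001probability}. The whole argument rests on dominating each tail \emph{uniformly in $n$} by a single tail of the integrable envelope $M$.

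First I would establish a pointwise domination of the tails. Fix $a > 0$ and $n \in \Nats$. If $\run$ satisfies $X_n(\run) > a$, then $M(\run) \geq X_n(\run) > a$, so the event $\{X_n > a\}$ is contained in $\{M > a\}$ and hence $\indicator{\{X_n > a\}} \leq \indicator{\{M > a\}}$. Combining this with the hypothesis $X_n \leq M$ gives, pointwise on $\Omega$,
\begin{align*}
	X_n \cdot \indicator{\{X_n > a\}} \;\leq\; M \cdot \indicator{\{X_n > a\}} \;\leq\; M \cdot \indicator{\{M > a\}}~.
\end{align*}
By monotonicity of the integral (which follows from linearity, \cref{lemma:properties_of_integral}), taking expectations and then the supremum over $n$ yields the uniform bound
\begin{align*}
	\sup_{n \in \Nats}\, \expec{}{X_n \cdot \indicator{\{X_n > a\}}} \;\leq\; \expec{}{M \cdot \indicator{\{M > a\}}}~,
\end{align*}
whose right--hand side no longer depends on $n$.

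It then remains to show $\expec{}{M \cdot \indicator{\{M > a\}}} \to 0$ as $a \to \infty$, which is where integrability of $M$ enters. I would note that $\expec{}{M} < \infty$ forces $M < \infty$ almost--surely, and that the truncations $M \cdot \indicator{\{M \leq a\}}$ increase monotonically and pointwise to $M$ as $a \to \infty$ (for every $\run$ with $M(\run) < \infty$ the indicator is eventually $1$). Applying the Monotone Convergence Theorem (\cref{app_thm:monotone_convergence}) along integer cutoffs gives $\expec{}{M \cdot \indicator{\{M \leq a\}}} \nearrow \expec{}{M}$; since $\indicator{\{M \leq a\}} + \indicator{\{M > a\}} = 1$, additivity/linearity (\cref{lemma:properties_of_integral}) lets me write $\expec{}{M \cdot \indicator{\{M > a\}}} = \expec{}{M} - \expec{}{M \cdot \indicator{\{M \leq a\}}} \to 0$. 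Because $\expec{}{M \cdot \indicator{\{M > a\}}}$ is monotonically decreasing in $a$, convergence along integers upgrades to convergence as $a \to \infty$ over the reals, completing the uniform tail estimate.

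The computations are routine; the one point requiring care — and the main conceptual obstacle — is the \emph{bridge between the two notions of uniform integrability}. The lemma is naturally established in terms of tail--vanishing \emph{without} any assumption on convergence of $(X_n)$, whereas the definition recalled earlier phrases uniform integrability as the commutation $\lim_n \expec{}{X_n} = \expec{}{\lim_n X_n}$ for almost--surely convergent sequences. I would make explicit that these coincide through the standard equivalence of \cite[Thm.~7.10]{grimmett2001probability}, so that the uniform tail bound derived above is exactly what is needed; integrability of $M$ is used solely to drive $\expec{}{M \cdot \indicator{\{M > a\}}}$ to zero, and no hypothesis on the limiting behavior of the $X_n$ themselves is required.
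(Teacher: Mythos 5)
Your proposal is correct, but there is nothing in the paper to match it against: the paper does not prove this lemma at all, importing it wholesale from \cite[Thm.~7.10.(4)]{grimmett2001probability}, so your argument is a reconstruction of the standard textbook proof rather than a parallel to an internal one. The reconstruction is sound. The inclusion $\{X_n > a\} \subseteq \{M > a\}$ and the resulting uniform bound $\sup_{n}\expec{}{X_n\cdot\indicator{\{X_n>a\}}} \leq \expec{}{M\cdot\indicator{\{M>a\}}}$ are correct, and you are careful exactly where the $\PosRealsInf$--valued setting demands it: $\expec{}{M}<\infty$ forces $\IP{}{M=\infty}=0$, the truncations $M\cdot\indicator{\{M\leq a\}}$ increase to $M$ off this null set, and the subtraction $\expec{}{M\cdot\indicator{\{M>a\}}} = \expec{}{M}-\expec{}{M\cdot\indicator{\{M\leq a\}}}$ is legitimate because both terms are finite. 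Your one external dependence --- bridging from tail--vanishing to the limit--interchange formulation that the paper adopts as its \emph{definition} of uniform integrability --- is discharged by citing \cite[Thm.~7.10]{grimmett2001probability}, the very source from which the paper takes that definition, so this is consistent with the paper's setup. A marginally more self--contained alternative, tailored to the paper's definition, would skip the tail machinery and prove dominated convergence directly: given $X_n \to X$ almost surely with $0 \leq X_n \leq M$ and $\expec{}{M}<\infty$, apply Fatou's Lemma (\cref{fatou}) once to $(X_n)$ and once to $(M - X_n)$, which is nonnegative almost surely, and use finiteness of $\expec{}{M}$ to rearrange; this yields $\lim_{n\to\omega}\expec{}{X_n} = \expec{}{X}$ from the same ingredients. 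What your tail--based route buys instead is the genuinely stronger conclusion that $(X_n)$ is uniformly integrable in the classical sense \emph{without any convergence hypothesis}, which is closer to what Grimmett and Stirzaker's Theorem 7.10.(4) actually asserts --- a distinction you flag explicitly, and correctly.
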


\subsection{Conditional Expected Values}

\noindent
We introduce the notion of conditional expected value w.r.t.~a sub--$\sigma$--field on a fixed probability space $(\Omega, \F{F},\mathbb{P})$.
The idea is that given a random variable $X$ and a subfield $\F{G} \subset \F{F}$ we would like to approximate $X$ by another $\F{G}$--measurable random variable w.r.t.~expectation.
Intuitively, this means that we want to construct a (possibly infinite) nonnegative linear combination of the functions $\indicator{G}, G \in \F{G}$, in such a way that restricted to a set $G \in \F{G}$, the random variable $X$ and this linear combination have the same average value w.r.t.~$\mathbb{P}$.

\begin{definition}[Conditional Expected Value w.r.t.\ a $\sigma$--Field, \protect{\cite[Def.~10.1.2]{bauer71measure}}]
	\label{defQQQconditional_expectation}
	Let $X \colon \Omega \to \PosRealsInf$ be a random variable and $\F{G}$ be a sub--$\sigma$--field of $\F{F}$.
	A random variable $Y \colon \Omega \to \PosRealsInf$ is called a \emph{conditional expected value} of $X$ w.r.t.~$\F{G}$ if
	\begin{enumerate}
		\item $Y$ is $\F{G}$--measurable
		\item $\int_G Y \, d \mathbb{P}=\expec{}{Y \cdot \indicator{G}}= \expec{}{X \cdot \indicator{G}}=\int_G X \, d \mathbb{P}$ for every $G \in \F{G}$.
	\end{enumerate}
	If two such random variables $Y$ and $\primed{Y}$ exist, the just stated properties already ensure that $\mathbb{P}(Y=\primed{Y})=1$.
	Therefore a conditional expected value is almost--surely unique which justifies the notation $\expec{}{X \mid \F{G}} \coloneqq Y$.
\end{definition}

\noindent
If the sub--$\sigma$--field has the special structure described in \cref{lemma:sigma_field_countable_cover} then the just stated property just needs to be checked on the generators.

\begin{lemma}[Conditional Expected Value]
	\label{app_lemma:conditional_expectation_countable_cover}
	Let $X$ be a random variable.
	If $\Omega = \biguplus\limits_{i=1}^{\infty} A_i$ for a sequence $A_i \in \F{F}$ and $\F{G} \coloneqq \sigmagen{\{A_i \mid i \in \Nats\}}$ then a $\F{G}$--measurable function $Y$ is a conditional expected value of $X$ iff
	\[
		\expec{}{X\cdot\indicator{A_i}}=\expec{}{Y\cdot\indicator{A_i}}, \text{ for all } i \in \Nats.
	\]
\end{lemma}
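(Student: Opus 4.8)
The plan is to check the two defining conditions of a conditional expected value from \cref{defQQQconditional_expectation} and to observe that most of the work has already been done for us. Since the function $Y$ in the statement is assumed $\F{G}$--measurable, the first condition holds by hypothesis, so the entire claim collapses to showing that the integral identity $\expec{}{Y \cdot \indicator{G}} = \expec{}{X \cdot \indicator{G}}$ holding for \emph{every} $G \in \F{G}$ is equivalent to that same identity holding merely on the generators $A_i$. The forward implication is then immediate, because each $A_i$ is itself an element of $\F{G}$ (the union indexed by the singleton $\{i\}$), so any identity valid for all $G \in \F{G}$ specializes to $G = A_i$.

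For the backward implication I would exploit the very explicit description of $\F{G}$ furnished by \cref{lemma:sigma_field_countable_cover}. Because the $A_i$ form a countable partition of $\Omega$, that lemma tells us every $G \in \F{G}$ is a disjoint union $G = \biguplus_{i \in J} A_i$ for some $J \subset \Nats$. First I would rewrite the two integrals over such a $G$ using the Additivity property of the integral (\cref{lemma:properties_of_integral}):
\[
  \expec{}{X \cdot \indicator{G}} \eeq \int_{\biguplus_{i \in J} A_i} X \, d\mathbb{P} \eeq \sum_{i \in J} \int_{A_i} X \, d\mathbb{P} \eeq \sum_{i \in J} \expec{}{X \cdot \indicator{A_i}}~,
\]
and identically for $Y$. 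Invoking the hypothesis $\expec{}{X \cdot \indicator{A_i}} = \expec{}{Y \cdot \indicator{A_i}}$ term by term equates the two series and hence the two integrals over $G$, which is precisely condition (2) of \cref{defQQQconditional_expectation}. Since $X, Y$ take values in $\PosRealsInf$, all summands are nonnegative, so the termwise comparison of these series is unproblematic (each either converges or diverges to $\infty$, exactly as noted for \cref{alternative expected value}).

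The only real subtlety here --- and it is a mild one --- is recognizing that the countable-partition structure lets one sidestep the usual $\pi$--$\lambda$ (monotone class) argument that would otherwise be needed to propagate an identity from a generating family up to the whole $\sigma$--field. Ordinarily one verifies such an identity on a generating $\pi$--system and then extends it; but \cref{lemma:sigma_field_countable_cover} already hands us a closed-form description of \emph{every} set in $\F{G}$ as a countable disjoint union of the $A_i$, so bare countable additivity of the integral does all the extending for us. Thus I do not expect any genuine obstacle: the proof is essentially the two-line additivity computation above, bracketed by the trivial forward direction and the free measurability condition.
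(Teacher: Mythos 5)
Your proposal is correct and follows essentially the same route as the paper's own proof: both reduce the claim to the generators via \cref{lemma:sigma_field_countable_cover}, exploit that every $G \in \F{G}$ is a countable disjoint union $\biguplus_{i \in J} A_i$, and then transfer the identity term by term using the additivity of the integral from \cref{lemma:properties_of_integral} (the paper phrases this by expanding $\indicator{\biguplus_{i \in J} A_i} = \sum_{i \in J} \indicator{A_i}$ pointwise, which is the same computation as your domain decomposition). The only cosmetic difference is that you spell out the trivial forward direction and the free measurability condition, which the paper dismisses with ``by definition.''
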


\begin{proof}
	By definition it is left to show that $\expec{}{X\cdot\indicator{A_i}}=\expec{}{Y\cdot\indicator{A_i}}$ for all $i \in \Nats$ implies $\expec{}{X\cdot\indicator{G}}=\expec{}{Y\cdot\indicator{G}}$ for any $G \in \F{G}$.
	But due to \cref{lemma:sigma_field_countable_cover} it is enough to show this for any disjoint union $\biguplus\limits_{i \in J}A_i$.
	We have $\indicator{\biguplus\limits_{i \in J}A_i}=\sum\limits_{i \in J}\indicator{A_i}$ and this series always converges point--wise as at most one of the summands is nonzero for any $\alpha \in \Omega$.
	Hence we have
	\begin{align*}
		\expec{}{X\cdot\indicator{\biguplus\limits_{i \in J}A_i}} & =\expec{}{X\cdot\sum\limits_{i \in J}\indicator{A_i}}                                                             \\
		                                                          & \overset{\textnormal{\cref{lemma:properties_of_integral}}}{=}\sum\limits_{i \in J}\expec{}{X\cdot\indicator{A_i}} \\
		                                                          & =\sum\limits_{i \in J}\expec{}{Y\cdot\indicator{A_i}}=\expec{}{Y\cdot\indicator{\biguplus\limits_{i \in J}A_i}}.
	\end{align*}
\end{proof}

It turns out that in our setting a conditional expectation always exists as it is almost surely finite.
\begin{theorem}
	[Existence of Conditional Expected Values \protect{\cite[Prop.
				3.1.]{DBLP:journals/corr/abs-1709-04037}}] \label{thm:existence_cond_exp_nonnegative}
	Let $X \colon \Omega \to \PosRealsInf$ be a random variable such that~$\IP{}{X=\infty}=0$ and let $\F{G}$ be a sub--$\sigma$--field of $\F{F}$.
	Then $\expec{}{X \mid \F{G}}$ exists.
\end{theorem}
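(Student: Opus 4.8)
The plan is to construct $\expec{}{X\mid\F{G}}$ by truncation, reducing to the classical case of bounded random variables and then passing to the limit with the Monotone Convergence Theorem (\cref{app_thm:monotone_convergence_general_integral}). For each $n\in\Nats$ set $X_n \coloneqq \min(X,n)$; each $X_n$ is a bounded, hence $\mathbb{P}$--integrable, $\F{F}$--measurable random variable, and $X_n \leq X_{n+1}$ with $X_n \uparrow X$ pointwise on all of $\Omega$ (on $\{X=\infty\}$ one has $X_n = n \to \infty$).

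First I would settle the base case: for each bounded $X_n$ a conditional expected value $Y_n \coloneqq \expec{}{X_n\mid\F{G}}$ exists. For a general sub--$\sigma$--field this is the classical Radon--Nikodym construction applied to the finite measure $G \mapsto \int_G X_n\,d\mathbb{P}$ on $\F{G}$, which is absolutely continuous w.r.t.\ $\mathbb{P}$ restricted to $\F{G}$. In the situation actually relevant to this paper, however, $\F{G}$ is generated by a countable partition $\Omega = \biguplus_i A_i$ (this is the shape of the loop filtrations $\F{F}_n^\lp$ and $\F{G}_n^\lp$; cf.\ \cref{lemma:sigma_field_countable_cover,def:canonical-filtration,def:filtration}), and then $Y_n$ can be written down explicitly as
\[
  Y_n \eeq \sum_{i \,:\, \mathbb{P}(A_i) > 0} \frac{\expec{}{X_n\cdot\indicator{A_i}}}{\mathbb{P}(A_i)}\cdot\indicator{A_i}~.
\]
This $Y_n$ is $\F{G}$--measurable by construction, and using \cref{lemma:properties_of_integral} one checks $\int_{A_j} Y_n\,d\mathbb{P} = \expec{}{X_n\cdot\indicator{A_j}} = \int_{A_j} X_n\,d\mathbb{P}$ on each generator; \cref{app_lemma:conditional_expectation_countable_cover} then lifts this to every $G\in\F{G}$, making the base case entirely elementary.

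Next I would pass to the limit. Since $X_n \leq n$, each $Y_n$ is bounded by $n$ almost--surely, and from $\int_G Y_n\,d\mathbb{P} = \int_G X_n\,d\mathbb{P} \leq \int_G X_{n+1}\,d\mathbb{P} = \int_G Y_{n+1}\,d\mathbb{P}$ holding for \emph{every} $G\in\F{G}$ I would deduce $Y_n \leq Y_{n+1}$ almost--surely (testing against $G = \{Y_n > Y_{n+1}\}\in\F{G}$). Hence $Y \coloneqq \sup_n Y_n = \lim_{n\to\omega} Y_n$ exists, maps into $\PosRealsInf$, and is $\F{G}$--measurable as a pointwise supremum of $\F{G}$--measurable functions. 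To verify condition~(2) of \cref{defQQQconditional_expectation}, fix $G\in\F{G}$ and chain two applications of the Monotone Convergence Theorem:
\[
  \int_G Y\,d\mathbb{P} \eeq \lim_{n\to\omega}\int_G Y_n\,d\mathbb{P} \eeq \lim_{n\to\omega}\int_G X_n\,d\mathbb{P} \eeq \int_G X\,d\mathbb{P}~,
\]
the outer equalities being \cref{app_thm:monotone_convergence_general_integral} for $Y_n\uparrow Y$ and $X_n\uparrow X$, and the middle one the defining property of $Y_n$. Thus $Y$ is a conditional expected value of $X$, proving existence.

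I expect the only non--formal step to be the base case, since it is the single point that appeals to an external result (Radon--Nikodym) in the general setting; the countable--cover structure of the filtrations used here sidesteps this through the explicit formula above, so for the purposes of this paper the whole argument stays within the integral calculus of \cref{lemma:properties_of_integral,app_lemma:conditional_expectation_countable_cover,app_thm:monotone_convergence_general_integral}. Finally, a remark on the hypothesis $\IP{}{X=\infty}=0$: the construction above in fact yields existence for \emph{any} nonnegative measurable $X$, with $Y$ allowed to take the value $\infty$. The assumption $\IP{}{X=\infty}=0$ is what is needed in the applications to guarantee that the resulting $\expec{}{X\mid\F{G}}$ is almost--surely finite (as asserted in the discussion preceding the theorem), rather than for existence per se.
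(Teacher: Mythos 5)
Your proposal is correct, but there is no in-paper proof to compare it against: the paper does not prove this theorem at all, it imports it by citation from \cite[Prop.~3.1]{DBLP:journals/corr/abs-1709-04037}. Your argument --- truncate to $X_n \coloneqq \min(X,n)$, establish existence of $Y_n \coloneqq \expec{}{X_n \mid \F{G}}$ for bounded variables, deduce $Y_n \leq Y_{n+1}$ almost--surely by testing against $\{Y_n > Y_{n+1}\} \in \F{G}$, and pass to the limit with two applications of the Monotone Convergence Theorem --- is the standard construction of conditional expected values for nonnegative, possibly non--integrable random variables, and every step checks out against the paper's \cref{defQQQconditional_expectation}. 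What it buys beyond the citation is self--containedness: in the only situation the paper actually needs (the filtrations of \cref{def:canonical-filtration,def:filtration}, which are generated by countable partitions into cylinder sets), your explicit formula for $Y_n$ combined with \cref{app_lemma:conditional_expectation_countable_cover} avoids Radon--Nikodym entirely, so the argument stays within the integral calculus the appendix already develops; for a general sub--$\sigma$--field the base case still needs Radon--Nikodym, but only for \emph{finite} measures, which is its cleanest form.

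One correction to your closing remark: the hypothesis $\IP{}{X=\infty}=0$ does \emph{not} guarantee that $\expec{}{X \mid \F{G}}$ is almost--surely finite. Take $\Omega = \Nats$ with $\IP{}{\{n\}} = 2^{-(n+1)}$, let $X(n) = 3^n$, and let $\F{G} = \{\emptyset, \Omega\}$: then $X$ is finite everywhere, yet any $\F{G}$--measurable function is constant, and condition~(2) of \cref{defQQQconditional_expectation} forces that constant to be $\expec{}{X} = \sum_{n} 3^n \cdot 2^{-(n+1)} = \infty$, so the conditional expected value exists but is nowhere finite. You are right that the hypothesis is not needed for existence --- your construction works for every nonnegative measurable $X$, precisely because the paper's definition allows the value $\infty$ --- but it does not ensure finiteness either; it is simply inherited from the statement of the cited reference. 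Almost--sure finiteness would instead follow from, e.g., $\expec{}{X} < \infty$, or from $\sigma$--finiteness of the measure $G \mapsto \int_G X \, d\mathbb{P}$ restricted to $\F{G}$.
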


This theorem helps us to find bounds on the conditional expected value if we are unable to determine it exactly.

\begin{lemma}
	\label{lem:upper_bound_cond_exp}
	Let $X$ be a random variable with $\IP{}{X=\infty}=0$ and $\Omega = \biguplus\limits_{i=1}^{\infty} A_i$ for a sequence $A_i \in \F{F}$, $\F{G} \coloneqq \sigmagen{\{A_i \mid i \in \Nats\}}$ and $Y$ a $\F{G}$--measurable function.
	We have $\expec{}{X \mid \F{G}} \leq Y$ iff
	\[
		\expec{}{X\cdot\indicator{A_i}}\leq\expec{}{Y\cdot\indicator{A_i}}, \text{ for all } i \in \Nats.
	\]
\end{lemma}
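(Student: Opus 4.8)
The plan is to reduce the claim to a pointwise comparison of two $\F{G}$--measurable functions on the atoms $A_i$ of $\F{G}$. First, since $\IP{}{X=\infty}=0$, \cref{thm:existence_cond_exp_nonnegative} guarantees that $Z \coloneqq \expec{}{X \mid \F{G}}$ exists. Each generator $A_i$ lies in $\F{G}$, so the defining property of the conditional expected value (\cref{defQQQconditional_expectation}, part~2) yields $\expec{}{X \cdot \indicator{A_i}} = \expec{}{Z \cdot \indicator{A_i}}$ for every $i \in \Nats$. Hence the right--hand condition $\expec{}{X \cdot \indicator{A_i}} \leq \expec{}{Y \cdot \indicator{A_i}}$ is equivalent to $\expec{}{Z \cdot \indicator{A_i}} \leq \expec{}{Y \cdot \indicator{A_i}}$, and the whole lemma reduces to showing that, for the two $\F{G}$--measurable functions $Z$ and $Y$, one has $Z \leq Y$ almost surely iff $\expec{}{Z \cdot \indicator{A_i}} \leq \expec{}{Y \cdot \indicator{A_i}}$ for all $i$.

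The key structural fact I would use is that, by \cref{lemma:sigma_field_countable_cover}, every element of $\F{G}$ is a disjoint union of the $A_i$, so the $A_i$ are precisely the atoms of $\F{G}$. Consequently any $\F{G}$--measurable map $h\colon \Omega \to \PosRealsInf$ is constant on each $A_i$: if $h$ took two distinct values on some $A_i$, a Borel set separating them would pull back to a member of $\F{G}$ that contains part of $A_i$ but not all of it, contradicting \cref{lemma:sigma_field_countable_cover}. Writing $z_i$ and $y_i$ for the (possibly infinite) constant values of $Z$ and $Y$ on $A_i$, the rules of \cref{lemma:properties_of_integral} give $\expec{}{Z \cdot \indicator{A_i}} = z_i \cdot \IP{}{A_i}$ and $\expec{}{Y \cdot \indicator{A_i}} = y_i \cdot \IP{}{A_i}$, using the convention $\infty \cdot 0 = 0$.

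For the forward direction, if $Z \leq Y$ holds almost surely then $Z \cdot \indicator{A_i} \leq Y \cdot \indicator{A_i}$ almost surely, and monotonicity of the integral (a consequence of \cref{lemma:properties_of_integral}) immediately yields the inequality of expectations. For the converse, I would consider the $\F{G}$--measurable set $G = \{Z > Y\}$, which by \cref{lemma:sigma_field_countable_cover} equals $\biguplus_{i \in J} A_i$ with $J = \{ i \mid z_i > y_i \}$. On $G$ we have $Y < Z$, so $y_i < \infty$ for each $i \in J$; hence for any such $i$ with $\IP{}{A_i} > 0$ the strict inequality $z_i > y_i$ gives $z_i \cdot \IP{}{A_i} > y_i \cdot \IP{}{A_i}$, i.e.\ $\expec{}{Z \cdot \indicator{A_i}} > \expec{}{Y \cdot \indicator{A_i}}$, contradicting the hypothesis. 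Therefore every $A_i$ with $i \in J$ is a null set, so $\IP{}{G} = \sum_{i \in J} \IP{}{A_i} = 0$ and $Z \leq Y$ almost surely.

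The routine part is the integral bookkeeping; the main subtlety, exactly as in the companion \cref{app_lemma:conditional_expectation_countable_cover}, is to read $\expec{}{X \mid \F{G}} \leq Y$ in the almost--sure sense and to handle the extended--valued expectations correctly. One must argue that the atoms on which $Z$ strictly exceeds $Y$ can only be null sets, and ensure the products $z_i \cdot \IP{}{A_i}$ and $y_i \cdot \IP{}{A_i}$ behave as intended when $z_i = \infty$ or $\IP{}{A_i} = 0$. Once constancy of $\F{G}$--measurable functions on the atoms is established via \cref{lemma:sigma_field_countable_cover}, these points become straightforward.
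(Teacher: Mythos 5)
Your proof is correct. The forward direction coincides with the paper's: apply the defining property of $\expec{}{X \mid \F{G}}$ on the generators $A_i \in \F{G}$ and conclude by monotonicity of the integral. For the converse, however, you take a genuinely different route. The paper never uses constancy on atoms: it forms the deficiency $Z = \max\left(\expec{}{X \mid \F{G}} - Y,\, 0\right)$, notes that $M = Z^{-1}\left((0,\infty)\right) \in \F{G}$ is a disjoint union of atoms, and invokes the measure--theoretic fact (cited there as a ``deep result'') that a nonnegative $\F{G}$--measurable function which is strictly positive on a non--null set has strictly positive integral over it; the contradiction is then obtained by writing $\expec{}{Z \cdot \indicator{A_{i_0}}}$ as a difference of two expectations. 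You instead establish that every $\F{G}$--measurable function is constant on each atom $A_i$ (via \cref{lemma:sigma_field_countable_cover}) and compare the constant values $z_i$ and $y_i$ by elementary arithmetic. Your route is more self--contained and treats extended values more carefully: in the paper's argument $\expec{}{X \mid \F{G}} - Y$ may be $\infty - \infty$, and its final subtraction tacitly requires $\expec{}{Y \cdot \indicator{A_{i_0}}} < \infty$ --- exactly the points that your observation ``$y_i < \infty$ on $\{Z > Y\}$'' settles. What the paper's route buys is that it does not need the atom--constancy lemma, whose separation argument is immediate for the standard Borel field on $[0,\infty]$ but slightly delicate under the paper's generator--based definition of $\F{B}\bigl(\PosRealsInf\bigr)$ (open intervals $(a,b)$ alone cannot separate $0$ from $\infty$); that quirk lies in the paper's definitions rather than in your reasoning, but it is the one place where your proof would need an extra remark.
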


\begin{proof}
	We prove the two directions separately.
	\begin{itemize}
		\item[``$\Rightarrow$''] Let $i \in \Nats$.
			Then we have by definition of the conditional expectation
			\[
				\expec{}{X \cdot \indicator{A_i}} = \expec{}{\expec{}{X \mid \F{G}} \cdot \indicator{A_i}} \leq \expec{}{Y \cdot \indicator{A_i}},
			\]
			by monotonicity of the integral.
		\item[``$\Leftarrow$''] By \cref{thm:existence_cond_exp_nonnegative} the conditional expected value of $X$ exists and is itself a nonnegative $\F{G}$--measurable random variable.
			Consider the random variable $Z=\max(\expec{}{X \mid \F{G}}-Y,0):\Omega \to \PosRealsInf$.
			It is a result of measure theory (cf.
			\cite{bauer71measure}) that $Z$ is a $\F{G}$--measurable random variable.
			So consider the set $M=Z^{-1}((0,\infty))\in \F{G}$.
			Then we know that $M$ is a disjoint union of some of the $A_i$, so w.l.o.g.~let us assume that $A_{i_0}\subseteq M$.
			Again, a deep result from measure theory shows that $\expec{}{Z\cdot \indicator{A_{i_0}}}>0$ (cf.
			\cite{bauer71measure}) if $\IP{}{A_{i_0}}>0$.
			Then we have
			\[
				0<\expec{}{Z\cdot \indicator{A_{i_0}}} = \expec{}{(\expec{}{X \mid \F{G}}-Y)\cdot \indicator{A_{i_0}}} = \expec{}{\expec{}{X \mid \F{G}} \cdot \indicator{A_{i_0}}}-\expec{}{Y \cdot \indicator{A_{i_0}}},
			\]
			i.e., $\expec{}{Y \cdot \indicator{A_{i_0}}} < \expec{}{\expec{}{X \mid \F{G}} \cdot \indicator{A_{i_0}}}$, a contradiction.
			So, we must have $\IP{}{A_{i_0}}=0$.
			As $i_0$ was chosen arbitrarily we have $\expec{}{X \mid \F{G}} \leq Y$ a.s.
	\end{itemize}
\end{proof}

The same proof can also be done for the inverse inequality, i.e., we have the following corollary.

\begin{corollary}
	\label{coro:lower_bound_cond_exp}
	Let $X$ be a random variable with $\IP{}{X=\infty}=0$ and $\Omega = \biguplus\limits_{i=1}^{\infty} A_i$ for a sequence $A_i \in \F{F}$, $\F{G} \coloneqq \sigmagen{\{A_i \mid i \in \Nats\}}$ and $Y$ a $\F{G}$--measurable function.
	We have $\expec{}{X \mid \F{G}} \geq Y$ iff
	\[
		\expec{}{X\cdot\indicator{A_i}}\leq\expec{}{Y\cdot\indicator{A_i}}, \text{ for all } i \in \Nats.
	\]
\end{corollary}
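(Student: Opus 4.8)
The plan is to obtain \cref{coro:lower_bound_cond_exp} as the mirror image of \cref{lem:upper_bound_cond_exp}, replacing $\expec{}{X \mid \F{G}} \leq Y$ by $\expec{}{X \mid \F{G}} \geq Y$ and correspondingly reversing the tested integral inequalities (so that the condition reads $\expec{}{X \cdot \indicator{A_i}} \geq \expec{}{Y \cdot \indicator{A_i}}$ for every $i$, which is the inequality one necessarily obtains in the forward direction). The opening step is identical to that of the lemma: since $\IP{}{X = \infty} = 0$, \cref{thm:existence_cond_exp_nonnegative} guarantees that $\expec{}{X \mid \F{G}}$ exists as an almost--surely finite, nonnegative, $\F{G}$--measurable random variable, so every conditional expectation appearing below is well defined.

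For the direction ``$\Rightarrow$'', I would assume $\expec{}{X \mid \F{G}} \geq Y$ almost surely, fix $i \in \Nats$, and integrate against the nonnegative indicator $\indicator{A_i}$. Monotonicity of the integral (\cref{lemma:properties_of_integral}) together with the defining property of the conditional expected value (\cref{defQQQconditional_expectation}) then gives
\[
  \expec{}{Y \cdot \indicator{A_i}}
  \,\leq\, \expec{}{\expec{}{X \mid \F{G}} \cdot \indicator{A_i}}
  \,=\, \expec{}{X \cdot \indicator{A_i}}~,
\]
which is exactly the required inequality for each $i$.

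For the direction ``$\Leftarrow$'', I would argue by contradiction as in \cref{lem:upper_bound_cond_exp}, but with the roles of $Y$ and $\expec{}{X \mid \F{G}}$ interchanged. Define $Z \coloneqq \max\{Y - \expec{}{X \mid \F{G}},\, 0\}$, which is $\F{G}$--measurable and nonnegative, and set $M \coloneqq Z^{-1}((0,\infty)) \in \F{G}$. By \cref{lemma:sigma_field_countable_cover}, $M$ is a disjoint union of some of the $A_i$; suppose for contradiction that one constituent block $A_{i_0} \subseteq M$ has $\IP{}{A_{i_0}} > 0$. On $A_{i_0}$ we have $Z > 0$, so the measure--theoretic fact that a nonnegative $\F{G}$--measurable function strictly positive on a set of positive measure has strictly positive integral there yields $\expec{}{Z \cdot \indicator{A_{i_0}}} > 0$, i.e.
\[
  \expec{}{Y \cdot \indicator{A_{i_0}}}
  \,>\, \expec{}{\expec{}{X \mid \F{G}} \cdot \indicator{A_{i_0}}}
  \,=\, \expec{}{X \cdot \indicator{A_{i_0}}}~,
\]
contradicting the hypothesis $\expec{}{X \cdot \indicator{A_{i_0}}} \geq \expec{}{Y \cdot \indicator{A_{i_0}}}$. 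Hence every block of $M$ has probability $0$, so $\IP{}{M} = 0$ and therefore $\expec{}{X \mid \F{G}} \geq Y$ almost surely.

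I expect the main obstacle to be bookkeeping rather than conceptual: keeping the inequality directions consistent after the swap, and checking that the two invoked measure--theoretic facts survive the reversal — namely that $Z = \max\{Y - \expec{}{X \mid \F{G}},\, 0\}$ is again $\F{G}$--measurable (both $Y$ and $\expec{}{X \mid \F{G}}$ are $\F{G}$--measurable, and the $\max$ of $\F{G}$--measurable functions is $\F{G}$--measurable), and that $\expec{}{Z \cdot \indicator{A_{i_0}}} > 0$ whenever $Z > 0$ on $A_{i_0}$ with $\IP{}{A_{i_0}} > 0$. Since these are precisely the facts already used in \cref{lem:upper_bound_cond_exp} and they are symmetric in the sign of $Z$, no new analytic input is required; the corollary is genuinely ``the same proof for the inverse inequality.''
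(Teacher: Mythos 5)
Your proof is correct and is exactly the paper's intended argument: the paper's entire proof of this corollary is the one-line remark that the proof of \cref{lem:upper_bound_cond_exp} ``can also be done for the inverse inequality,'' and you have carried out precisely that mirrored argument (forward direction by monotonicity and the defining property of $\expec{}{X \mid \F{G}}$; backward direction via $Z = \max\{Y - \expec{}{X \mid \F{G}},\, 0\}$, the block structure of $\F{G}$ from \cref{lemma:sigma_field_countable_cover}, and the same strict-positivity fact), at the same level of rigor as the paper's own proof of the lemma. You were also right to flip the tested inequality to $\expec{}{X\cdot\indicator{A_i}} \geq \expec{}{Y\cdot\indicator{A_i}}$: as printed, the corollary retains the lemma's $\leq$, which is a typo that makes the stated ``iff'' false (e.g., $X \equiv 2$, $Y \equiv 1$ on a single block of positive measure satisfies $\expec{}{X \mid \F{G}} \geq Y$ but violates the printed condition), so your corrected reading is the one your proof --- and the paper's --- actually establishes.
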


\noindent
Recall that $\expec{}{X \mid \F{G}}$ is a random variable that is like $X$, but for those elements that are not distinguishable in the sub--$\sigma$--field $\F{G}$, it ``distributes the value of $X$ equally''.
This statement is formulated by the following lemma.
\begin{lemma}[Expected Value Does Not Change When Regarding Conditional Expected Values]
	\label{lemma:expected_value_does_not_change}
	Let $X$ be a random variable on $(\Omega, \F{F}, \mathbb{P})$ and let $\F{G}$ be a sub--$\sigma$--field of $\F{F}$.
	Then
	\[
		\expec{}{X}=\expec{}{\expec{}{X \mid \F{G}}}
	\]
\end{lemma}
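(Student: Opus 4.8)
The plan is to read this off directly from the defining property of the conditional expected value (Definition~\ref{defQQQconditional_expectation}), instantiated at the single event $G = \Omega$. The only thing to check is that this instantiation is legal, namely that $\Omega \in \F{G}$; but this holds because $\F{G}$ is itself a $\sigma$--field over $\Omega$, and condition~(1) in the definition of a $\sigma$--field guarantees $\Omega \in \F{G}$. Since the lemma presupposes that a conditional expected value exists, I would (if needed) first invoke Theorem~\ref{thm:existence_cond_exp_nonnegative} to secure existence; in the situations where this lemma is applied, $X$ is almost--surely finite, so $\expec{}{X \mid \F{G}}$ indeed exists and is almost--surely unique.

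Concretely, I would argue as follows. By Definition~\ref{defQQQconditional_expectation}, the random variable $Y \coloneqq \expec{}{X \mid \F{G}}$ is $\F{G}$--measurable and satisfies
\begin{align*}
	\expec{}{Y \cdot \indicator{G}} \eeq \expec{}{X \cdot \indicator{G}} \qquad \text{for every } G \in \F{G}~.
\end{align*}
Taking $G = \Omega$, the Iverson bracket $\indicator{\Omega}$ is the constant function $1$ on $\Omega$, so $X \cdot \indicator{\Omega} = X$ and $Y \cdot \indicator{\Omega} = Y$ point--wise. Substituting these into the displayed equality yields
\begin{align*}
	\expec{}{\expec{}{X \mid \F{G}}} \eeq \expec{}{Y} \eeq \expec{}{Y \cdot \indicator{\Omega}} \eeq \expec{}{X \cdot \indicator{\Omega}} \eeq \expec{}{X}~,
\end{align*}
which is exactly the claim.

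There is essentially no obstacle here: the entire content is the observation that $\Omega$ is a legal choice of test set because $\Omega \in \F{G}$, together with $\indicator{\Omega} \equiv 1$. The only subtlety worth a remark is the standing assumption that the conditional expectation is well--defined, which is why I would explicitly cite Theorem~\ref{thm:existence_cond_exp_nonnegative} (using $\IP{}{X = \infty} = 0$) rather than silently assume existence. No limiting argument, integrability estimate, or appeal to the structure of $\F{G}$ beyond $\Omega \in \F{G}$ is required.
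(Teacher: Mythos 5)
Your proof is correct and is essentially identical to the paper's own argument: both instantiate the defining property of the conditional expected value (Definition~\ref{defQQQconditional_expectation}) at $G = \Omega$ and use $\indicator{\Omega} \equiv 1$ together with $\Omega \in \F{G}$. Your extra remark on existence via Theorem~\ref{thm:existence_cond_exp_nonnegative} is a harmless (and reasonable) addition that the paper leaves implicit.
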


\begin{proof}
	\[
		\begin{array}{rcll}
			\expec{}{\expec{}{X \mid \F{G}}} & = & \expec{}{\expec{}{X \mid \F{G}} \cdot \indicator{\Omega}} & \text{as $\indicator{\Omega}\equiv 1$}                                \\
			                                 & = & \expec{}{X \cdot \indicator{\Omega}}                      & \text{by \cref{defQQQconditional_expectation}, as $\Omega \in \F{G}$} \\
			                                 & = & \expec{}{X}                                               & \text{as $\indicator{\Omega}\equiv 1$}
		\end{array}
	\]
\end{proof}

\noindent
The following theorem shows (a) that linear operations carry over to conditional expected values w.r.t.\ sub--$\sigma$--fields, (b) that every random variable approximates itself if it is already measurable w.r.t.\ the sub--$\sigma$--field $\F{G}$, and (c) it allows to simplify multiplications with $\F{G}$--measurable random variables.
Moreover, (d) shows how to simplify expected values with several conditions.

\begin{theorem}[Properties of Conditional Expected Value \protect{\cite[p.~443]{grimmett2001probability}}]
	\label{app_thm:propert_conditional_expectation}
	Let $X,Y$ be random variables on $(\Omega, \F{F},\mathbb{P})$ and let $\F{G}$ be a sub--$\sigma$--field of $\F{F}$.
	Then the following properties hold.
	\begin{enumerate}
		\item[(a)] $\expec{}{a\cdot X+b \cdot Y \mid \F{G}}=a \cdot \expec{}{X\mid \F{G}}+b \cdot \expec{}{Y\mid \F{G}}$.
		\item[(b)] If $X$ is itself $\F{G}$--measurable then $\expec{}{X \mid \F{G}}=X$.
		\item[(c)] If $X$ is $\F{G}$--measurable then $\expec{}{X \cdot Y\mid \F{G}}=X \cdot \expec{}{Y\mid \F{G}}$.
		\item[(d)] If $\F{G} \subset \F{G}$ is a sub--$\sigma$--field of $\F{G}$ then $\expec{}{\expec{}{X\mid \F{G}}\mid \F{G}} = \expec{}{X \mid \F{G}}$.
	\end{enumerate}
\end{theorem}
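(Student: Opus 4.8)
The plan is to reduce all four claims to the defining property of the conditional expected value from \cref{defQQQconditional_expectation}: a $\F{G}$--measurable random variable $Y$ equals $\expec{}{X \mid \F{G}}$ (almost surely) precisely when $\int_G Y \, d\mathbb{P} = \int_G X \, d\mathbb{P}$ for every $G \in \F{G}$. Existence of all the conditional expected values appearing below is guaranteed by \cref{thm:existence_cond_exp_nonnegative}, since in our setting every random variable is almost--surely finite. Thus for each property I would exhibit an explicit $\F{G}$--measurable candidate, verify the integral identity against $X$ (respectively $X \cdot Y$) on an arbitrary $G \in \F{G}$, and invoke the almost--sure uniqueness asserted in \cref{defQQQconditional_expectation} to conclude. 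The integral manipulations are all instances of \cref{lemma:properties_of_integral} (linearity, measure, additivity).

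For (a), I would take the candidate $a \cdot \expec{}{X \mid \F{G}} + b \cdot \expec{}{Y \mid \F{G}}$, which is $\F{G}$--measurable as a nonnegative linear combination of $\F{G}$--measurable functions (recall $a, b \geq 0$ in our nonnegative setting). For any $G \in \F{G}$, linearity of the integral splits $\int_G$ of this candidate into $a \int_G \expec{}{X \mid \F{G}} \, d\mathbb{P} + b \int_G \expec{}{Y \mid \F{G}} \, d\mathbb{P}$, and applying the defining property to each term rewrites this as $a \int_G X \, d\mathbb{P} + b \int_G Y \, d\mathbb{P} = \int_G (a X + b Y) \, d\mathbb{P}$; uniqueness finishes the case. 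For (b), if $X$ is already $\F{G}$--measurable then $X$ trivially satisfies both defining conditions for $\expec{}{X \mid \F{G}}$ (measurability is immediate, and $\int_G X \, d\mathbb{P} = \int_G X \, d\mathbb{P}$), so uniqueness gives $\expec{}{X \mid \F{G}} = X$ directly.

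The main obstacle is (c), the pull--out property, since $X \cdot Y$ cannot be treated by a single integral identity. I would proceed by the standard ``elementary--function machine''. First, for an indicator $X = \indicator{A}$ with $A \in \F{G}$ and arbitrary $G \in \F{G}$, I use $G \cap A \in \F{G}$ to compute $\int_G \indicator{A} \cdot \expec{}{Y \mid \F{G}} \, d\mathbb{P} = \int_{G \cap A} \expec{}{Y \mid \F{G}} \, d\mathbb{P} = \int_{G \cap A} Y \, d\mathbb{P} = \int_G \indicator{A} \cdot Y \, d\mathbb{P}$, which establishes the claim for $X = \indicator{A}$. Next, linearity of the integral extends this to elementary $\F{G}$--measurable $X$, i.e.\ finite nonnegative combinations of such indicators. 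Finally, for a general nonnegative $\F{G}$--measurable $X$, I approximate $X$ from below by an increasing sequence of elementary functions and pass to the limit on both sides using the Monotone Convergence Theorem (\cref{app_thm:monotone_convergence_general_integral}); since $X \cdot Y \geq 0$ and $\expec{}{Y \mid \F{G}} \geq 0$, monotonicity of the approximating products is preserved, and uniqueness identifies $X \cdot \expec{}{Y \mid \F{G}}$ with $\expec{}{X \cdot Y \mid \F{G}}$.

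Finally, (d) is the tower property; as worded it contains a typographical collision of symbols, and the intended statement is that for a sub--$\sigma$--field $\F{H} \subseteq \F{G}$ one has $\expec{}{\expec{}{X \mid \F{G}} \mid \F{H}} = \expec{}{X \mid \F{H}}$. I would take the candidate $\expec{}{X \mid \F{H}}$, which is $\F{H}$--measurable by definition. For any $H \in \F{H}$, the inclusion $\F{H} \subseteq \F{G}$ gives $H \in \F{G}$, so the defining property of $\expec{}{X \mid \F{G}}$ applied to the set $H$ yields $\int_H \expec{}{X \mid \F{G}} \, d\mathbb{P} = \int_H X \, d\mathbb{P}$, while the defining property of $\expec{}{X \mid \F{H}}$ gives $\int_H \expec{}{X \mid \F{H}} \, d\mathbb{P} = \int_H X \, d\mathbb{P}$. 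Equating the two shows $\expec{}{X \mid \F{H}}$ satisfies the defining integral identity for $\expec{}{\expec{}{X \mid \F{G}} \mid \F{H}}$ on every $H \in \F{H}$, and almost--sure uniqueness concludes. I expect only (c) to require real work; (a), (b), and (d) are immediate once the defining property is set up.
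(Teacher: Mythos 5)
Your proof is correct, but be aware that the paper itself never proves this theorem: it is imported verbatim, with a citation to \cite[p.~443]{grimmett2001probability}, as part of the background material in the appendix, so there is no internal proof to compare against. What you give is the standard first--principles argument, and it is the right one relative to the paper's own toolkit: for each property you exhibit an explicit $\F{G}$--measurable candidate, verify the defining integral identity of \cref{defQQQconditional_expectation} on an arbitrary $G \in \F{G}$, and invoke almost--sure uniqueness; property (c) needs the usual three--step machine (indicators via $G \cap A \in \F{G}$, elementary functions by linearity, monotone limits via \cref{app_thm:monotone_convergence_general_integral}), and your repair of the typographical collision in (d) --- reading the inner field as some $\F{H} \subseteq \F{G}$ --- recovers exactly the tower property intended. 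What your route buys is self--containedness, since everything rests on results already stated in the appendix (\cref{lemma:properties_of_integral} and the Monotone Convergence Theorem); what the paper's citation buys is brevity, which is a reasonable trade for a textbook fact. Two caveats on your write--up. First, the claim that existence is guaranteed because ``in our setting every random variable is almost--surely finite'' overstates what the paper provides: \cref{thm:existence_cond_exp_nonnegative} requires $\IP{}{X=\infty}=0$, and this can fail in the loop space --- e.g., the looping time $\termtime{\guard}$ of a non--AST loop takes the value $\omega$ with positive probability. Your argument survives because it constructs explicit versions rather than appealing to abstract existence (the only conditional expectation needed as an \emph{input} is $\expec{}{Y \mid \F{G}}$ in (c), whose existence the theorem statement presupposes anyway), but the justification should be phrased that way. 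Second, in the limit step of (c) you should fix the measure--theoretic convention $0 \cdot \infty = 0$, so that the approximating products $X_n \cdot \expec{}{Y \mid \F{G}}$ genuinely increase pointwise to $X \cdot \expec{}{Y \mid \F{G}}$ on the set where $\expec{}{Y \mid \F{G}}$ is infinite.
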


\section{Proofs for Section \ref{sec:expectations_processes}}
\label{app:proofs_expectations_processes}
\renewcommand{\charwp}[3]{\Phi_{#3}}
\renewcommand{\charwpn}[4]{\Phi_{#3}^{#4}}
\noindent
We start this section with a crucial observation which will ease the proofs we conduct here.
Reconsider the filtration $(\F{F}_n^\lp)_{n \in \Nats}$ as presented in \cref{def:canonical-filtration}.
For every $n \in \Nats$ and every two distinct prefixes $\pi \neq \primed{\pi}$ of length $n+1$, their generated cylinder sets are disjoint, i.e., $Cyl(\pi)\cap Cyl(\primed{\pi})=\emptyset$ and $\biguplus_{\pi \in \States^+,\,|\pi| = n+1} Cyl(\pi)=\Omega$.
Therefore,
\begin{align*}
	\F{F}_n^\lp \overset{\textnormal{\cref{lemma:sigma_field_countable_cover}}}{\eeq} \left\{ \biguplus_{\pi \in J} Cyl(\pi) \mid J \subset \States^{n+1}\right\}.
\end{align*}
As $\F{G}_n^\lp=\F{F}_{n+1}^\lp$ (cf.~\cref{def:filtration}) we directly get
\begin{align*}
	\F{G}_n^\lp \eeq \left\{ \biguplus_{\pi \in J} Cyl(\pi) \mid J \subset \States^{n+2}\right\}.
\end{align*}

\lemmameasurability*
\begin{proof}
	We have to prove that $X_n^{f,I}$ is $\F{G}_n^\lp$--measurable, i.e., $\left(X_n^{f,I}\right)^{-1}(B)\in \F{G}_n^\lp$ for any $B \in \F{B}$.

	Consider any run $\run \in \Omega$.
	$X_n^{f,I}(\run)$ just depends on $\run[0], \cdots, \run[n+1]$, i.e., $X_n^{f,I}$ is constant on the cylinder set $Cyl(\run[0]\cdots \run[n+1])$.
	Therefore it is constant on the generators of $\F{G}_n^\lp$, i.e., on every $Cyl(\pi)$ with $|\pi| = n+2$.
	Since there are only countably many of these generators, $\left(X_n^{f,I}\right)^{-1}(B)$ for any $B \in \F{B}$ is a countable union of generators of $\F{G}_n^\lp$, i.e., $\left(X_n^{f,I}\right)^{-1}(B)\in \F{G}_n^\lp$.
\end{proof}

\thmcondexpec*
\begin{proof}
	Due to \cref{lemma:measurability}, $X_n^{f, \charwp{\guard}{C}{f}(I)}$ is $\F{G}_n^\lp$--measurable.
	So it is left to show that for any $G \in \F{G}_n^\lp$
	\begin{align}
		\expec{\State}{X_{n+1}^{f, I}\cdot \indicator{G}}=\int_{G}X_{n+1}^{f, I} \, d \left({}^{\State}\mathbb{P}\right) = \int_{G} X_n^{f, \charwp{\guard}{C}{f}(I)} \, d \left({}^{\State}\mathbb{P}\right) = \expec{\State}{X_{n}^{f, \charwp{\guard}{C}{f}(I)}\cdot \indicator{G}}.\label{cond_exp}
	\end{align}
	Due to \cref{app_lemma:conditional_expectation_countable_cover} it is enough to show \cref{cond_exp}
	for the generators of $\F{G}_n^\lp$, as any other set in $\F{G}_n^\lp$ is just a disjoint union of these generators.
	Hence, we prove \cref{cond_exp} for $G=Cyl(\pi)$ for a prefix run $\pi \in \States^{n+2}$.
	Furthermore, if $\pi \not \in \State\States^{n+1}$ the set $Cyl(\pi)$ is a nullset and hence, \cref{cond_exp} holds trivially.
	So assume that $\State_0\cdots \State_n = \pi \in \State\States^{n+1}$ and that $Cyl(\pi)$ is not a nullset, i.e., $^{\State}p(\pi)>0$.
	Note that if $\State_i \in \States_{\neg \guard}$ for some $i \leq n$, then $X_{n+1}^{f, I}$ and $X_n^{f, \charwp{\guard}{C}{f}(I)}$ are identical on $Cyl(\pi)$, as then $\termtime{\guard}(\run)\leq n\leq n+1$ for all $\run \in Cyl(\pi)$.
	So in this case \cref{cond_exp} holds trivially, too.
	Hence, we assume $\pi=\State_0 \cdots \State_{n+1} \in \States_{\guard}^{n}\States$.
	We will use a case analysis to prove the desired result.
	\begin{enumerate}
		\item $\State_{n+1}\in \States_{\neg \guard}$, i.e., $\termtime{\guard}(\run)=n+1$ for all $\run \in Cyl(\pi)$

		      \begin{align*}
			        & \int_{Cyl(\pi)}X_{n+1}^{f, I} \, d \left({}^{\State}\mathbb{P}\right)                                                                                                           \\
			      = & \int_{Cyl(\pi)}f(\State_{n+1}) \, d \left({}^{\State}\mathbb{P}\right)                                                                                                          \\
			      = & \int_{Cyl(\pi)} \iverson{\neg \guard}(\State_{n+1})\cdot f(\State_{n+1}) + \iverson{ \guard}(\State_{n+1})\cdot \wp{C}{I}(\State_{n+1}) \, d \left({}^{\State}\mathbb{P}\right) \\
			      = & \int_{Cyl(\pi)} \charwp{\guard}{C}{f}(I)(\State_{n+1}) \, d \left({}^{\State}\mathbb{P}\right)                                                                                  \\
			      = & \int_{Cyl(\pi)} X_n^{f, \charwp{\guard}{C}{f}(I)} \, d \left({}^{\State}\mathbb{P}\right)
		      \end{align*}

		\item $\State_{n+1}\in \States_{\guard}$, i.e., $\termtime{\guard}(\run)>n+1$ for all $\run \in Cyl(\pi)$

		      \begin{align*}
			        & \int_{Cyl(\pi)}X_{n+1}^{f, I}\, d \left({}^{\State}\mathbb{P}\right)                                                                                          \\
			      = & \int_{\biguplus\limits_{\State_{n+2} \in \States}Cyl(\pi \State_{n+2})}X_{n+1}^{f, I} \, d \left({}^{\State}\mathbb{P}\right)                                 \\
			      = & \sum\limits_{\State_{n+2} \in \States}\int_{Cyl(\pi \State_{n+2})}X_{n+1}^{f, I}
			      \, d \left({}^{\State}\mathbb{P}\right)                                                                                                                           \\
			      = & \sum\limits_{\State_{n+2} \in \States}\int_{Cyl(\pi \State_{n+2})}I(\State_{n+2}) \, d \left({}^{\State}\mathbb{P}\right)                                     \\
			      = & \sum\limits_{\State_{n+2} \in \States}\IP{\State}{Cyl(\pi \State_{n+2})}\cdot I(\State_{n+2})                                                                 \\
			      = & \sum\limits_{\State_{n+2} \in \States}\IP{\State}{Cyl(\pi)}\cdot \phantom{}^{\State_{n+1}}\mu_{C}(\State_{n+2}) \cdot I(\State_{n+2})                         \\
			      = & \IP{\State}{Cyl(\pi)}\cdot\sum\limits_{\State_{n+2} \in \States} \phantom{}^{\State_{n+1}}\mu_{C}(\State_{n+2}) \cdot I(\State_{n+2})                         \\
			      = & \IP{\State}{Cyl(\pi)} \cdot \wp{C}{I}(\State_{n+1})                                                                                                           \\
			      = & \IP{\State}{Cyl(\pi)} \cdot \left(\iverson{\neg \guard}(\State_{n+1})\cdot f(\State_{n+1})+\iverson{\guard}(\State_{n+1})\cdot \wp{C}{I}(\State_{n+1})\right) \\
			      = & \IP{\State}{Cyl(\pi)} \cdot \charwp{\guard}{C}{f}(I)(\State_{n+1})                                                                                            \\
			      = & \int_{Cyl(\pi)}\charwp{\guard}{C}{f}(I)(\State_{n+1}) \, d \left({}^{\State}\mathbb{P}\right)                                                                 \\
			      = & \int_{Cyl(\pi)} X_n^{f, \charwp{\guard}{C}{f}(I)} \, d \left({}^{\State}\mathbb{P}\right)
		      \end{align*}
	\end{enumerate}
\end{proof}

\corconnectionexpectation*
\begin{proof}
	Let us fix an arbitrary state $\State \in \States$.
	We will prove the result by induction.
	\begin{itemize}
		\item Induction base: $n=0$
		      \begin{align*}
			      \expec{\State}{X_0^{f,I}} & =\iverson{\neg \guard}(\State)\cdot f(\State)+\iverson{\guard}(\State)\cdot \sum \limits_{\primed{\State}\in \States}
			      \IP{\State}{Cyl(\State\primed{\State})} \cdot I(\primed{\State})                                                                                              \\
			                                & =\iverson{\neg \guard}(\State)\cdot f(\State)+\iverson{\guard}(\State)\cdot \underbrace{\sum \limits_{\primed{\State}\in \States}
			      \phantom{}^{s}\mu_{C}(\primed{\State}) \cdot I(\primed{\State})}_{=\wp{C}{I}}                                                                                 \\
			                                & =\charwp{\guard}{C}{f}(I)(\State).
		      \end{align*}
		\item Assume the result holds for a fixed $n \in \Nats$.
		      \begin{align*}
			      \expec{\State}{X_{n+1}^{f,I}} & \overset{\textnormal{\cref{lemma:expected_value_does_not_change}}}{=}\expec{\State}{\expec{\State}{X_{n+1}^{f,I}\big \vert \F{G}_n}}\overset{\textnormal{\cref{thm:cond_expec}}}{=}\expec{\State}{X_{n}^{f,\charwp{\guard}{C}{f}(I)}} \\
			                                    & \overset{I.H.}{=}\charwpn{\guard}{C}{f}{n+1}(\charwp{\guard}{C}{f}(I))(\State)= \charwpn{\guard}{C}{f}{n+2}(I)(\State).
		      \end{align*}
	\end{itemize}
\end{proof}
\renewcommand{\charwp}[3]{\Phi_{#3}}
\renewcommand{\charwpn}[4]{\Phi_{#3}^{#4}}
\lemmaaslimit*
\begin{proof}
	Let $\run$ in $\Omega$.
	\begin{align*}
		  & \lim\limits_{n \to \omega}X_n^{f,I}(\run)\cdot \indicator{(\termtime{\guard})^{-1}(\Nats)}(\run)=\lim\limits_{n \to \omega}X_{n+\termtime{\guard}(\run)}^{f,I}(\run)\cdot \indicator{(\termtime{\guard})^{-1}(\Nats)}(\run) \\
		= & \lim\limits_{n \to \omega} f(\run[\termtime{\guard}(\run)])\cdot \indicator{(\termtime{\guard})^{-1}(\Nats)}(\run)                                                                                                          \\[1.7ex]
		= & \smash{\left\{
			\begin{array}{@{}ll@{}}
				f(\run[\termtime{\guard}(\run)]), & \textnormal{if }\termtime{\guard}(\run)<\omega \\[\jot]
				0,                                & \textnormal{if }\termtime{\guard}(\run)=\omega
			\end{array}
		\right.}                                                                                                                                                                                                                        \\[.3cm]
		= & X^{f}_{\termtime{\guard}(\run)}(\run).
	\end{align*}
	If the program is universally almost--surely terminating (i.e., $\IP{\State}{\termtime{\guard}<\infty}=\IP{\State}{(\termtime{\guard})^{-1}(\Nats)}=1$ for any $\State \in \States$), then $\IP{\State}{X_n^{f,I}\cdot \indicator{(\termtime{\guard})^{-1}(\Nats)}=X_n^{f,I}}=1$ for any $\State \in \States$. Furthermore, $\IP{\State}{(\termtime{\guard})^{-1}(\Nats)}=1$ so by the previous result $\mathbf{X}^{f,I}$ converges point--wise to $X_{\termtime{\guard}}^f$ on a set with probability $1$. By definition $\mathbf{X}^{f,I}$ converges almost--surely to $X_{\termtime{\guard}}^f$.
\end{proof}

\thmlfpexpectation*
\begin{proof}
	Consider the expectation $I\coloneqq 0 $ and the indicator function $\indicator{(\termtime{\guard})^{-1}(\Nats)}$ . Then by definition $X_n^{f,0}\cdot\indicator{(\termtime{\guard})^{-1}(\Nats)} \leq X_{n+1}^{f,0}\cdot\indicator{(\termtime{\guard})^{-1}(\Nats)}$ for all $n\in \Nats$. To apply the Monotone Convergence Theorem we will calculate the expectation of $X_n^{f,0}\cdot\indicator{(\termtime{\guard})^{-1}(\Nats)}$. Note that $X_n^{f,0}$ is zero on the set $(\termtime{\guard})^{-1}(\{\omega\})$. Hence,
	\begin{align}
		\nonumber   & \charwpn{\guard}{C}{f}{n+1}(0)(\State)\overset{\textnormal{\cref{cor:connection_expectation}}}{=}\expec{\State}{X_n^{f,0}}=\expec{\State}{X_n^{f,0}\cdot \indicator{(\termtime{\guard})^{-1}(\Nats) \uplus (\termtime{\guard})^{-1}(\{\omega\})}} \\
		\nonumber = & \expec{\State}{X_n^{f,0}\cdot \indicator{(\termtime{\guard})^{-1}(\Nats)}}
		+ \underbrace{\expec{\State}{X_n^{f,0}\cdot \indicator{(\termtime{\guard})^{-1}(\{\omega\})}}}_{=0}                                                                                                                                                             \\
		\label{aux observation}
		=           & \expec{\State}{X_n^{f,0}\cdot \indicator{(\termtime{\guard})^{-1}(\Nats)}}.
	\end{align}
	Hence by the Monotone Convergence Theorem (\cref{app_thm:monotone_convergence})
	\begin{align*}
		                                                              & \expec{\State}{X^f_{\termtime{\guard}}}\overset{\textnormal{\cref{lemma:as_limit}}}{=}\expec{\State}{\lim\limits_{n \to \omega} X_n^{f,0}\cdot \indicator{(\termtime{\guard})^{-1}(\Nats)}}                \\
		\overset{\textnormal{\cref{app_thm:monotone_convergence}}}{=} & \lim\limits_{n \to \omega}\expec{\State}{X_n^{f,0} \cdot \indicator{(\termtime{\guard})^{-1}(\Nats)}}\overset{\eqref{aux observation}}{=}\lim\limits_{n \to \omega} \charwpn{\guard}{C}{f}{n+1}(0)(\State) \\
		=                                                             & \lim\limits_{n \to \omega} \charwpn{\guard}{C}{f}{n+1}(0)(\State)=(\lfp \charwp{\guard}{C}{f}) (\State).
	\end{align*}
\end{proof}

\corouniformintegrability*
\begin{proof}
	By \cref{def:uniform_integrability_stochastic_processes}, $\mathbf{X}^{f,I}$ is uniformly integrable for every $\State \in \States$ if and only if for every $\State \in \States$ the equation $\lim\limits_{n \to \omega}\expec{\State}{X_n^{f,I}}= \expec{\State}{\lim\limits_{n \to \omega}
		X_n^{f,I}}\overset{\textnormal{\cref{lemma:as_limit}}}{=}\expec{\State}{X_{\termtime{\guard}}^f}$ holds. But due to \cref{thm:lfp_expectation} and AST, we have $\expec{\State}{X^f_{\termtime{\guard}}}=(\lfp\charwp{\guard}{C}{f}) (\State)$. So $\mathbf{X}^{f,I}$ is uniformly integrable iff $\expec{\State}{X_n^{f,I}}$ converges to $(\lfp\charwp{\guard}{C}{f}) (\State)$. By \cref{cor:connection_expectation}, this is equivalent to the requirement that $\charwpn{\guard}{C}{f}n(I)$ converges to $\lfp \charwp{\guard}{C}{f}$. This is the definition of uniform integrability of $I$ for $f$, cf.\ \cref{def:uniform_integrability_invariants}.
\end{proof}
\renewcommand{\charwp}[3]{\charfun{\wpsymbol}{#1}{#2}{#3}}
\renewcommand{\charwpn}[4]{\charfunn{\wpsymbol}{#1}{#2}{#3}{#4}}

\section{Proofs for Section \ref{sec:optional-stopping-wp}}
\label{app:proofs_optional_stopping}
\renewcommand{\charwp}[3]{\Phi_{#3}}
\renewcommand{\charwpn}[4]{\Phi_{#3}^{#4}}
\lemsubmartingale*
\begin{proof}
	Let $n\in \Nats$.
	We have already proved in \cref{lemma:measurability} that $X_n^{f,I}$ is $\F{G}_n^\lp$--measurable.
	First of all, by \cref{cor:connection_expectation} we have $\expec{\State}{X_n^{f,I}}=\charwpn{\guard}{C}{f}{n+1}(I)(\State)<\infty$.
	Secondly, by \cref{thm:cond_expec} and $\charwp{\guard}{C}{f}(I) \succeq I$ we have $\expec{\State}{X_{n+1}^{f,I} \mid \F{G}_n^\lp} = X_n^{f, \charwp{\guard}{C}{f}(I)}\geq X_n^{f,I}$.
	By \cref{defQQQsubmartingale_filtration} this proves that $\mathbf{X}^{f,I}$ is a submartingale with respect to $(\F{G}_n^\lp)_{n\in \Nats}$.
\end{proof}

\thmconditionaldifferenceboundedness*
\begin{proof}
	First of all, $X_n^{0,\Diff{I}}$ is $\F{G}_n^\lp$--measurable as seen in \cref{lemma:measurability}.
	For any $\run \in \Omega$ we have
	\begin{align*}
		|X_{n+1}^{f,I}-X_n^{f,I}|(\run) = &
		\begin{cases}
			0,                             & \termtime{\guard}(\run)\leq n \\
			|f(\run[n+1]) - I(\run[n+1])|, & \termtime{\guard}(\run) = n+1 \\
			|I(\run[n+2]) - I(\run[n+1])|, & \termtime{\guard}(\run) > n+1
		\end{cases}
		\\
		=                                 &
		\begin{cases}
			0,                             & \termtime{\guard}(\run)\leq n+1 \\
			|I(\run[n+2]) - I(\run[n+1])|, & \termtime{\guard}(\run) > n+1
		\end{cases}
	\end{align*}
	as $I$ harmonizes with $f$.
	To show the result we will prove
	\begin{align}
		\expec{\State}{\big|X_{n+1}^{f,I}-X_n^{f,I}\big|\cdot \indicator{Cyl(\pi)}} = \expec{\State}{X_n^{0,\Diff{I}}\cdot \indicator{Cyl(\pi)}}\label{aux_cond_exp}
	\end{align}
	for any $\pi\in \States^{n+2}$ and use \cref{app_lemma:conditional_expectation_countable_cover} to obtain the desired result.
	Note that both sides of this equality are $0$ if $\IP{\State}{Cyl(\pi)} = 0$, so in this case the equality holds trivially.

	Take any $\pi\in \States^{n+2}$ such that $\IP{\State}{Cyl(\pi)} \neq 0$.
	Furthermore, as both random variables $\big|X_{n+1}^{f,I}-X_n^{f,I}\big|$ and $X_n^{0,\Diff{I}}$ are constant zero if all runs in $Cyl(\pi)$ have a looping time $\leq n+1$, \cref{aux_cond_exp} holds trivially in this case as well.
	Note that for any $X \in \E$, $H(X)(\State)=\iverson{\guard}(\State)\cdot \wp{C}{X}(\State)$ is zero if $\State \not \vDash \guard$.
	So, assume $\pi = \State_0 \cdots \State_{n+1} \in \State\States^{n+1}_{\guard}$.

	\begin{align*}
		  & \expec{\State}{\big|X_{n+1}^{f,I}-X_n^{f,I}\big|\cdot \indicator{Cyl(\pi)}}                                                                                                                 \\
		= & \int_{Cyl(\pi)} |X_{n+1}^{f,I}-X_n^{f,I}| \, d\left({}^{\State}\mathbb{P}\right)                                                                                                            \\
		= & \int_{\biguplus\limits_{\State_{n+2} \in \Sigma}Cyl(\pi\State_{n+2})}|X_{n+1}^{f,I}-X_n^{f,I}| \, d\left({}^{\State}\mathbb{P}\right)                                                       \\
		= & \sum\limits_{\State_{n+2}\in \States}\int_{Cyl(\pi\State_{n+2})}
		|X_{n+1}^{f,I}-X_n^{f,I}| \, d\left({}^{\State}\mathbb{P}\right)                                                                                                                                \\
		= & \sum\limits_{\State_{n+2}\in \States}\int_{Cyl(\pi\State_{n+2})}
		|I(\State_{n+2})-I(\State_{n+1})| \, d\left({}^{\State}\mathbb{P}\right)                                                                                                                        \\
		= & \sum\limits_{\State_{n+2}\in \States}\IP{\State}{Cyl(\pi\State_{n+2})} \cdot |I(\State_{n+2})-I(\State_{n+1})|                                                                              \\
		= & \sum\limits_{\State_{n+2}\in \States}\IP{\State}{Cyl(\pi)}\cdot \phantom{}^{\State_{n+1}}\mu_{C}(\State_{n+2}) \cdot |I(\State_{n+2})-I(\State_{n+1})|                                      \\
		= & \IP{\State}{Cyl(\pi)}\cdot \sum\limits_{\State_{n+2}\in \States}
		\phantom{}^{\State_{n+1}}\mu_{C}(\State_{n+2}) \cdot |I(\State_{n+2})-I(\State_{n+1})|                                                                                                          \\
		= & \IP{\State}{Cyl(\pi)}\cdot \iverson{\guard}(\State_{n+1})\cdot \sum\limits_{\State_{n+2}\in \States} \phantom{}^{\State_{n+1}}\mu_{C}(\State_{n+2}) \cdot |I(\State_{n+2})-I(\State_{n+1})| \\
		= & \IP{\State}{Cyl(\pi)}\cdot \Diff{I}(\State_{n+1})                                                                                                                                           \\
		= & \int_{Cyl(\pi)} \Diff{I}(\State_{n+1}) \, d\left({}^{\State}\mathbb{P}\right)                                                                                                               \\
		= & \int_{Cyl(\pi)} X_n^{0,\Diff{I}} \, d\left({}^{\State}\mathbb{P}\right)                                                                                                                     \\
		= & \expec{\State}{X_n^{0,\Diff{I}}\cdot \indicator{Cyl(\pi)}}.
	\end{align*}
	As we have already seen in Appendix \ref{app:proofs_expectations_processes} that $\F{G}_n^\lp = \left\{ \biguplus_{\pi \in J} Cyl(\pi) \mid J \subset \States^{n+2}\right\}$, we use \cref{app_lemma:conditional_expectation_countable_cover}
	to conclude our desired result $\expec{\State}{\big|X_{n+1}^{f,I}-X_n^{f,I}\big| ~\Big|~ \F{G}_n^\lp} \eeq X_n^{0,\Diff{I}}$.
\end{proof}

\noindent
The following auxiliary lemma is needed for the proof of \cref{thm:optional_stopping_probabilistic_programs}.

\begin{restatable}[Sufficient Condition for Uniform Integrability for a Fixed State]{lemma}{lemmaconditionaldifferenceboundednesslfp}
	\label{lemma:conditional_difference_boundedness_lfp}
	Let $I \pprec \infty$ be a conditionally difference bounded expectation that harmonizes with $f \pprec \infty$, $\charwp{\guard}{C}{f}(I) \pprec \infty$ and $\State \in \States$.
	Let the expected looping time of $\WHILEDO{\guard}{C}$ be finite for $\State \in \States$, where $C$ is AST, i.e., $\expec{\State}{\termtime{\guard}}<\infty$ and $\charwp{\guard}{C}{f}(I)(\State)<\infty$.
	Then $\charwpn{\guard}{C}{f}{n}(I)(\State)<\infty$ for all $n \in \Nats$ and
	\[
		\lim\limits_{n \to \omega} \charwpn{\guard}{C}{f}n(I)(\State) = \lfp \charwp{\guard}{C}{f} (\State)
	\]
\end{restatable}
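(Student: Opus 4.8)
The plan is to establish uniform integrability of the induced process $\mathbf{X}^{f,I}$ with respect to $\IP{\State}{\cdot}$ by exhibiting an integrable dominating function, and then to read off the claimed limit from the connection results of \cref{sec:expectations_processes}. The starting point is the telescoping identity $X_n^{f,I} = X_0^{f,I} + \sum_{k=0}^{n-1}(X_{k+1}^{f,I} - X_k^{f,I})$, which is valid pointwise and in $\Reals$ since $f, I \pprec \infty$ make every $X_n^{f,I}$ finite--valued. Bounding the increments by their absolute values gives the pointwise estimate $X_n^{f,I} \leq X_0^{f,I} + V$, where $V := \sum_{k=0}^\infty \bigl|X_{k+1}^{f,I} - X_k^{f,I}\bigr|$; this $V$ will be the candidate dominating function.

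The key step is to show $\expec{\State}{V} < \infty$. By the tower property (\cref{lemma:expected_value_does_not_change}) together with \cref{thm:conditional_difference_boundedness}, which applies precisely because $I \pprec \infty$ harmonizes with $f$, I get $\expec{\State}{|X_{k+1}^{f,I} - X_k^{f,I}|} = \expec{\State}{X_k^{0,\Diff{I}}}$. Since $X_k^{0,\Diff{I}} = \indicator{\termtime{\guard} > k}\cdot \Diff{I}(\run[k+1])$ and $I$ is conditionally difference bounded by some constant $c$, this is at most $c\cdot\IP{\State}{\termtime{\guard} > k}$. Monotone convergence (Tonelli, all summands being nonnegative) then yields $\expec{\State}{V} = \sum_{k\geq 0} \expec{\State}{|X_{k+1}^{f,I}-X_k^{f,I}|} \leq c\sum_{k\geq 0}\IP{\State}{\termtime{\guard}>k} = c\cdot\expec{\State}{\termtime{\guard}} < \infty$, using finiteness of the expected looping time at $\State$. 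Combined with $\expec{\State}{X_0^{f,I}} = \charwp{\guard}{C}{f}(I)(\State) < \infty$ (\cref{cor:connection_expectation} at $n=0$), the function $M := X_0^{f,I} + V$ satisfies $\expec{\State}{M} < \infty$.

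With $M$ in hand, the finiteness claim follows at once: $\charwpn{\guard}{C}{f}{n+1}(I)(\State) = \expec{\State}{X_n^{f,I}} \leq \expec{\State}{M} < \infty$ by \cref{cor:connection_expectation}, while $\charwpn{\guard}{C}{f}{0}(I)(\State) = I(\State) < \infty$ since $I \pprec \infty$. Moreover, \cref{lem:upper_bound_uniform_integrability} shows $\mathbf{X}^{f,I}$ is uniformly integrable w.r.t.\ $\IP{\State}{\cdot}$. Since $\expec{\State}{\termtime{\guard}} < \infty$ forces $\IP{\State}{\termtime{\guard} < \infty} = 1$, the second part of \cref{lemma:as_limit} applies at $\State$ (using AST of $C$), so $X_n^{f,I} \to X_{\termtime{\guard}}^f$ almost surely. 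Uniform integrability lets me interchange limit and expectation, and combining \cref{cor:connection_expectation} on the left with \cref{thm:lfp_expectation} on the right gives $\lim_{n\to\omega}\charwpn{\guard}{C}{f}{n+1}(I)(\State) = \lim_{n\to\omega}\expec{\State}{X_n^{f,I}} = \expec{\State}{X_{\termtime{\guard}}^f} = \lfp\charwp{\guard}{C}{f}(\State)$, which is the claim since shifting the index of the limit is harmless.

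The main obstacle I expect is the second paragraph: reducing the conditional increment $\expec{\State}{|X_{k+1}^{f,I}-X_k^{f,I}| \mid \F{G}_k^\lp}$ to $X_k^{0,\Diff{I}}$ via \cref{thm:conditional_difference_boundedness} (where harmonization is indispensable), and then converting the conditional--difference bound plus the identity $\sum_k \IP{\State}{\termtime{\guard}>k} = \expec{\State}{\termtime{\guard}}$ into an honest $L^1$ dominating function. The delicate point is that $I$ need not be a subinvariant, so $\mathbf{X}^{f,I}$ need not be monotone and plain monotone convergence of the iterates is unavailable; one genuinely needs the domination route, taking care that every extended--real manipulation is justified for the nonnegative quantities involved.
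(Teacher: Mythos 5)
Your proposal is correct and follows essentially the same route as the paper's proof: the dominating function $M = X_0^{f,I} + V$ is exactly the paper's $W^{f,I}$ (the increments vanish once the looping time has passed, so omitting the explicit indicator $\indicator{\{\termtime{\guard}\geq k+1\}}$ changes nothing), and both proofs combine the tower property with \cref{thm:conditional_difference_boundedness}, the identity $\sum_k \IP{\State}{\termtime{\guard}>k} = \expec{\State}{\termtime{\guard}}$, and \cref{lem:upper_bound_uniform_integrability} to get uniform integrability and the finiteness of the iterates. The only (harmless, arguably cleaner) deviation is at the very end, where you unroll the conclusion at the fixed state via \cref{lemma:as_limit}, \cref{cor:connection_expectation}, and \cref{thm:lfp_expectation} instead of citing \cref{coro:uniform_integrability}, whose statement is phrased with a global AST hypothesis.
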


\begin{proof}
	We present a proof based on the proof of the Optional Stopping Theorem given in \cite[Thm 12.5.(9)]{grimmett2001probability}.
	Consider the process $\mathbf{X}^{f,I}$ as studied in \cref{sec:expectations_processes}.
	As $I$ harmonizes with $f$, we have seen in \cref{thm:conditional_difference_boundedness} that if $I$ is conditionally difference bounded by the constant $c \geq 0$, then $\expec{\State}{\abs{X_{n+1}^{f,I}-X_n^{f,I}}\mid \F{G}_n^\lp}\leq c$, where $\F{G}_n^\lp$ belongs to the filtration defined in \cref{def:filtration}.
	Now we will show that $\mathbf{X}^{f,I}$ is uniformly integrable w.r.t.~$\phantom{}^{\State}\mathbb{P}$.

	Note that by definition of $\mathbf{X}^{f,I}$ (\cref{def:induced-stochastic-process}), we have $\mathbf{X}^{f,I}_{\wedge \termtime{\guard}}=\mathbf{X}^{f,I}$: Let $n \in \Nats$ and $\run \in \Omega$.
	Then if $\termtime{\guard}(\run) \leq n$, we have $X^{f,I}_{n \wedge \termtime{\guard}(\run)}(\run) = X^{f,I}_{\termtime{\guard}(\run)}(\run)=f(\run[\termtime{\guard}(\run)])=X^{f,I}_n(\run)$.
	If on the other hand $\termtime{\guard}(\run) > n$, we have $X^{f,I}_{n \wedge \termtime{\guard}(\run)}(\run) = X^{f,I}_{n}(\run)$.

	We have for any $n \in \Nats$, and any run $\run \in \Omega^\lp$:
	\begin{align*}
		X^{f,I}_n(\run) & = \abs{ X^{f,I}_{n}(\run)} = \abs{ X^{f,I}_{n \wedge \termtime{\guard}}(\run)} = \abs{
		X^{f,I}_{\min(n,\termtime{\guard}(\run))}(\run)}                                                                                                                                                           \\
		                & = \abs{X^{f,I}_0(\run) + \sum \limits_{k=0}^{\min(n,\termtime{\guard}(\run))-1} \left(X^{f,I}_{k+1}(\run) - X^{f,I}_{k}(\run)\right)}                                                    \\
		                & \leq \underbrace{\abs{X^{f,I}_0(\run)}}_{=X^{f,I}_0(\run)} + \sum \limits_{k=0}^{\min(n,\termtime{\guard}(\run))-1} \abs{X^{f,I}_{k+1}(\run) - X^{f,I}_{k}(\run)}                        \\
		                & \leq X^{f,I}_0(\run) + \sum \limits_{k=0}^{\termtime{\guard}(\run)-1} \abs{X^{f,I}_{k+1}(\run) - X^{f,I}_{k}(\run)}                                                                      \\
		                & = \underbrace{X^{f,I}_0(\run) + \sum \limits_{k=0}^{\infty} \abs{X^{f,I}_{k+1}(\run) - X^{f,I}_k(\run)}\cdot \indicator{\{\termtime{\guard}\geq k+1\}}(\run)}_{\coloneqq W^{f,I}(\run)}.
	\end{align*}
	We will show that the expectation of $W^{f,I}$ is finite.

	\begin{align*}
		\expec{\State}{W^{f,I}} & =\expec{\State}{X^{f,I}_0+\sum \limits_{k=0}^{\infty}
		\abs{\left(X^{f,I}_k - X^{f,I}_{k+1}\right)\cdot \indicator{\{\termtime{\guard}\geq k+1\}}}}                                                                                                                                                                                            \\
		                        & =\expec{\State}{X^{f,I}_0}+\sum \limits_{k=0}^{\infty} \expec{\State}{\abs{\left(X^{f,I}_k - X^{f,I}_{k+1}\right)\cdot \indicator{\{\termtime{\guard}\geq k+1\}}}}                                                                                            \\
		                        & =\expec{\State}{X^{f,I}_0}+\sum \limits_{k=0}^{\infty} \expec{\State}{\expec{\State}{\abs{\left(X^{f,I}_k - X^{f,I}_{k+1}\right)\cdot \indicator{\{\termtime{\guard}\geq k+1\}}}\mid \F{G}_k^\lp}}\tag{by \cref{lemma:expected_value_does_not_change}}        \\
		                        & =\expec{\State}{X^{f,I}_0}+\sum \limits_{k=0}^{\infty} \expec{\State}{\expec{\State}{\abs{\left(X^{f,I}_k - X^{f,I}_{k+1}\right)}\mid \F{G}_k^\lp}\cdot \indicator{\{\termtime{\guard}\geq k+1\}}} \tag{by \cref{app_thm:propert_conditional_expectation}(3)} \\
		                        & \leq \expec{\State}{X^{f,I}_0}+\sum \limits_{k=0}^{\infty} \expec{\State}{c\cdot \indicator{\{\termtime{\guard}\geq k+1\}}}                                                                                                                                   \\
		                        & =\expec{\State}{X^{f,I}_0}+\sum \limits_{k=0}^{\infty}c\cdot\expec{\State}{ \indicator{\{\termtime{\guard}\geq k+1\}}}                                                                                                                                        \\
		                        & =\expec{\State}{X^{f,I}_0}+\sum \limits_{k=0}^{\infty}c\cdot\IP{\State}{\termtime{\guard}\geq k+1}                                                                                                                                                            \\
		                        & =\expec{\State}{X^{f,I}_0}+c\cdot\sum \limits_{k=0}^{\infty}\IP{\State}{\termtime{\guard}\geq k+1}                                                                                                                                                            \\
		                        & =\expec{\State}{X^{f,I}_0}+c\cdot\sum \limits_{k=0}^{\infty}k \cdot \IP{\State}{\termtime{\guard}=k}                                                                                                                                                          \\
		                        & = \charwp{\guard}{C}{f}(I)(\State)+c\cdot \expec{\State}{\termtime{\guard}}
		< \infty~. \tag{by \cref{cor:connection_expectation} and as $\expec{\State}{\termtime{\guard}}<\infty$}
	\end{align*}
	By \cref{lem:upper_bound_uniform_integrability} the uniform integrability of $\mathbf{X}^{f,I}$ w.r.t.~$\phantom{}^{\State}\mathbb{P}$ follows.
	Therefore, we have that $\lim\limits_{n \to \omega} \charwpn{\guard}{C}{f}n(I)(\State)=\lfp \charwp{\guard}{C}{f} (\State)$ by \cref{coro:uniform_integrability}.
	Furthermore we can extract the following for any $n \in \Nats$:
	\begin{align*}
		\charwpn{\guard}{C}{f}{n+1}(I)(\State) & = \expec{\State}{X^{f,I}_n} \tag{by \cref{cor:connection_expectation} }                                                     \\
		                                       & \leq \expec{\State}{W^{f,I}} \tag{by monotonicity of the expectation}                                                       \\
		                                       & \leq \charwp{\guard}{C}{f}(I)(\State) + c \cdot \expec{\State}{\termtime{\guard}} < \infty~. \tag{by the calculation above}
	\end{align*}
	So we have just shown that $\charwpn{\guard}{C}{f}{n}(I)(\State) < \infty$ for any $n \in \Nats$.
\end{proof}

\begin{restatable}[Sufficient Condition for Uniform Integrability]{corollary}{coroconditionaldifferenceboundednesslfp}
	\label{coro:conditional_difference_boundedness_lfp}
	Let $I \pprec \infty$ be a conditionally difference bounded expectation that harmonizes with $f \pprec \infty$, $\charwp{\guard}{C}{f}(I) \pprec \infty$,
	and let the expected looping time of $\WHILEDO{\guard}{C}$ be finite for every initial state $\State \in \States$, where $C$ is AST.
	Then I is uniformly integrable for $f$ and $\Phi^{n}_f(I) \pprec \infty$ for any $n \in \Nats$.
\end{restatable}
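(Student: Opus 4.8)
The plan is to obtain this corollary as an immediate statewise consequence of \cref{lemma:conditional_difference_boundedness_lfp}. The crucial observation is that the hypotheses assumed here are exactly those of \cref{lemma:conditional_difference_boundedness_lfp}, except that finiteness of the expected looping time is now required at \emph{every} initial state instead of at a single fixed one.

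First I would fix an arbitrary $\State \in \States$ and verify that all preconditions of \cref{lemma:conditional_difference_boundedness_lfp} hold at $\State$: the global assumptions $I \pprec \infty$, conditional difference boundedness of $I$, harmonization of $I$ with $f$, and $f \pprec \infty$ transfer verbatim; the condition $\charwp{\guard}{C}{f}(I)(\State) < \infty$ follows from $\charwp{\guard}{C}{f}(I) \pprec \infty$; and $\expec{\State}{\termtime{\guard}} < \infty$ holds because the expected looping time is assumed finite for every initial state. Invoking \cref{lemma:conditional_difference_boundedness_lfp} at $\State$ then delivers simultaneously that $\charwpn{\guard}{C}{f}{n}(I)(\State) < \infty$ for all $n \in \Nats$ and that $\lim_{n \to \omega} \charwpn{\guard}{C}{f}{n}(I)(\State) = \lfp \charwp{\guard}{C}{f}(\State)$.

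Since $\State$ was arbitrary, the first assertion already yields $\charwpn{\guard}{C}{f}{n}(I) \pprec \infty$ for every $n$. For the second, I would recall that the partial order $\preceq$ on $\E$, and hence limits of sequences in $\E$, are taken pointwise (\cref{defQQQexpectations}); thus statewise convergence to $\lfp \charwp{\guard}{C}{f}(\State)$ at every $\State$ means that $\lim_{n \to \omega} \charwpn{\guard}{C}{f}{n}(I)$ exists in $\E$ and equals $\lfp \charwp{\guard}{C}{f}$. By \cref{def:uniform_integrability_invariants} this is exactly uniform integrability of $I$ for $f$, which finishes the argument.

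I expect no genuine obstacle here: all substantive work---the construction of the dominating variable $W^{f,I}$, bounding its expected value using conditional difference boundedness together with the finite expected looping time, and the appeal to \cref{coro:uniform_integrability}---is already carried out inside \cref{lemma:conditional_difference_boundedness_lfp}. The only mild subtlety is passing from the per-state formulation of the lemma to the lattice-level notion of uniform integrability, and this is immediate once one uses that all order-theoretic operations in $(\E,\preceq)$ are pointwise.
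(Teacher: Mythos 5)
Your proposal is correct and is essentially identical to the paper's own proof, which simply states that the result follows by applying \cref{lemma:conditional_difference_boundedness_lfp} to every state $\State \in \States$. You merely spell out the (routine) verification that the hypotheses transfer statewise and that pointwise convergence yields the lattice-level statement, which is exactly the intended argument.
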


\begin{proof}
	The result follows immediately by applying \cref{lemma:conditional_difference_boundedness_lfp} to every state $\State \in \States$.
\end{proof}

\thmoptionalstoppingprobabilisticprograms*
\begin{proof}
	That the subinvariant $I$ is a lower bound iff it is uniformly integrable for $f$ is exactly \cref{thm:uniform_integrability_least_fixed_point}.
	Nevertheless, we present a proof for the whole theorem in analogy to the Optional Stopping Theorem (\cref{thm:optional_stopping}).

	First of all, recall that $\mathbf{X}^{f,I}_{\wedge \termtime{\guard}}=\mathbf{X}^{f,I}$ holds (cf.\ the proof of \cref{coro:conditional_difference_boundedness_lfp}).

	Secondly, in any of the three cases (a) to (c), we have $\charwpn{\guard}{C}{f}n(I) \pprec \infty$ for any $n \in \Nats$: in (a) it is a precondition, in (b) it holds due to \cref{coro:conditional_difference_boundedness_lfp}, and in (c) the boundedness of $f$ and $I$ implies that $\charwpn{\guard}{C}{f}n(I)$ is bounded as well.
	So in particular, it is finite (cf.\ \cite{DBLP:series/mcs/McIverM05}).
	Therefore in any of the three cases, $\mathbf{X}^{f,I}$ is a submartingale by \cref{lemma:submartingale} as $I$ is a subinvariant.

	Furthermore, in any of the three cases (a) to (c), $\WHILEDO{\guard}{C}$ is universally almost surely terminating.
	Hence by \cref{lemma:as_limit} we have
	\[
		\lim\limits_{n \to \omega} X^{f,I}_{n \wedge \termtime{\guard}} = \lim\limits_{n \to \omega} X^{f,I}_{n} = X^{f}_{\termtime{\guard}},
	\]
	almost--surely for every $\State \in \States$.
	So if we can prove for all of the three cases (a) to (c) that $\mathbf{X}^{f,I}=\mathbf{X}^{f,I}_{\wedge \termtime{\guard}}$ is uniformly integrable for any $\State \in \States$, then we have independent of $\State \in \States$:

	\begin{align*}
		I(\State)\leq & \charwp{\guard}{C}{f}(I)(\State) \tag{as $I$ is a subinvariant}                                                \\
		=             & \expec{\State}{X_0^{f,I}}\tag{by \cref{cor:connection_expectation}}                                            \\
		\leq          & \lim\limits_{n \to \omega} \expec{\State}{X_n^{f,I}} \tag{as $\mathbf{X}^{f,I}$ is a submartingale}            \\
		=             & \expec{ \State}{\lim\limits_{n \to \omega} X_n^{f,I}} \tag{by the uniform integrability of $\mathbf{X}^{f,I}$} \\
		=             & \expec{\State}{X_{\termtime{\guard}}^{f}} \tag{by \cref{lemma:as_limit}}                                       \\
		=             & (\lfp \charwp{\guard}{C}{f}) (\State)\tag{by \cref{cor:connection_expectation}}
	\end{align*}
	i.e., $I \preceq \lfp \charwp{\guard}{C}{f} = \wp{\WHILEDO{\guard}{C}}{f}$ as desired.

	We will now use the Optional Stopping Theorem (\cref{thm:optional_stopping}) and \cref{coro:conditional_difference_boundedness_lfp} to prove the uniform integrability.
	\begin{enumerate}
		\item[(a)] Let $\State \in \States$.
			Then there is an $N(\State) \in \Nats$ with $\IP{\State}{\termtime{\guard}\leq N(\State)}$, i.e., the looping time of $\WHILEDO{\guard}{C}$ is almost--surely bounded for any $\State \in \States$.
			So by \cref{thm:optional_stopping} (a), $\mathbf{X}^{f,I}$ is uniformly integrable for any $\State \in \States$.
		\item[(b)] Due to \cref{coro:conditional_difference_boundedness_lfp}, $I$ is uniformly integrable for $f$.
			Hence $\mathbf{X}^{f,I}$ is uniformly integrable by \cref{coro:uniform_integrability}.
			As $\mathbf{X}^{f,I}$ is a submartingale, the result follows from \cref{thm:optional_stopping}.
		\item[(c)] If $f$ and $I$ are bounded, then so is the process $\mathbf{X}^{f,I}=\mathbf{X}^{f,I}_{\wedge \termtime{\guard}}$.
			By \cref{thm:optional_stopping} (c) $\mathbf{X}^{f,I}$ is uniformly integrable.
	\end{enumerate}
\end{proof}

\begin{example}[Details on \cref{ex:running_example_cdb,ex:running_example_our_rule}]

	Reconsider the program $C_{cex}$, given by \hyperlink{ex:calculations_running_example}{}
	\begin{align*}
		 & \WHILE{a \neq 0}                                                            \\
		 & \qquad \COMPOSE{\PCHOICE{\ASSIGN{a}{0}}{\sfrac{1}{2}}{\ASSIGN{b}{b + 1}}}{} \\
		 & \qquad \ASSIGN{k}{k + 1}                                                    \\
		 & \}~.
	\end{align*}
	The characteristic function of the while loop with respect to postexpectation $b$ is given by
	\begin{align*}
		\Phi_b(X) \eeq & \iverson{a = 0} \cdot b \quad \pplus \iverson{a \neq 0} \cdot \tfrac{1}{2} \cdot \Bigl(X\subst{a}{0}
		\pplus X\subst{b}{b + 1}\Bigr)\subst{k}{k+1}~.
	\end{align*}
	We have seen that
	\begin{align*}
		I \eeq b + \iverson{a \neq 0}
	\end{align*}
	and
	\begin{align*}
		\primed{I} \eeq b + \iverson{a \neq 0} \cdot \left(1+ 2^k \right)
	\end{align*}
	are fixed points of $\Phi_b$ and in \textnormal{\cref{ex:running_example_our_rule}} we proved that $I$ is indeed the least fixed point.
	So, $\primed{I}$ cannot be the least fixed point.
	Hence, $\primed{I}$ cannot be uniformly integrable and in particular, it cannot be conditionally difference bounded.
	Indeed, $\Diff{\primed{I}}$ is unbounded: Let $\State \in \States$.
	For any $x \in \Vars$ and any arithmetic expression $e$, let $\State[x/e]$ denote the state with $\State[x/e](y) = \State(y)$ for $y \in \Vars \setminus \{x\}$ and $\State[x/e](x) = \State(e)$, where $\State(e)$ is obtained by extending states from variables to arithmetic expressions in the straightforward way.
	\begin{align*}
		\Diff{\primed{I}}(\State) = & \iverson{a \neq 0}(\State)\cdot \tfrac{1}{2}\cdot\Bigl( \abs{\primed{I}-\primed{I}(\State)}\subst{k, a}{k+1, 0} (\State) + \abs{\primed{I}-\primed{I}(\State)}\subst{k, b}{k+1, b+1}(\State)\Bigr) \\
		=                           & \iverson{a \neq 0}(\State)\cdot \tfrac{1}{2}\cdot\Bigl( \abs{\primed{I}(\State\subst{k, a}{k+1, 0})-\primed{I}(\State)} + \abs{\primed{I}(\State\subst{k, b}{k+1, b+1})-\primed{I}(\State)}\Bigr)  \\
		=                           & \iverson{a \neq 0}(\State)\cdot \tfrac{1}{2}\cdot\Bigl(\abs{\State(b) + \iverson{0 \neq 0}(\State)\cdot (2^{\State(k)+1}+1) - (\State(b) + \iverson{a \neq 0}(\State)\cdot (2^{\State(k)}+1))}     \\
		                            & + \abs{\State(b)+1+ \iverson{a \neq 0}(\State)\cdot (2^{\State(k)+1}+1)-(\State(b)+ \iverson{a \neq 0}(\State)\cdot (2^{\State(k)}+1)(\State))}\Bigr)                                              \\
		=                           & \iverson{a \neq 0}(\State)\cdot \tfrac{1}{2}\cdot\Bigl( 2^{\State(k)}+1 + 1+ 2^{\State(k)}\Bigr)                                                                                                   \\
		=                           & \iverson{a \neq 0}(\State)\cdot \Bigl(2^{\State(k)}+1\Bigr)
	\end{align*}
	So we have $\Diff{\primed{I}} = \lambda\State.\iverson{a \neq 0}(\State)\cdot (2^{\State(k)}+1)$ which is unbounded.
	So $\primed{I}$ does not satisfy the preconditions of \textnormal{\cref{thm:optional_stopping_probabilistic_programs}} (b), hence our proof rule sorts out this invariant.
	Note that neither (a) (as the looping time is unbounded) nor (c) (as neither $b$ nor $\primed{I}$ are bounded) are applicable.
\end{example}

\renewcommand{\charwp}[3]{\charfun{\wpsymbol}{#1}{#2}{#3}}
\renewcommand{\charwpn}[4]{\charfunn{\wpsymbol}{#1}{#2}{#3}{#4}}

\section{Proofs for Section \ref{sec:mciver_morgan}}
\label{app:mciver_morgan}
\renewcommand{\charwp}[3]{\Phi_{#3}}
\renewcommand{\charwpn}[4]{\Phi_{#3}^{#4}}

\noindent
We will show that \cref{thm:lower_bounds_mcivermorgan} can be easily inferred from our results in \cref{sec:expectations_processes} and we can generalize (3) to a \emph{complete} proof rule.
To do so, we will make use of the \emph{Martingale Convergence Theorem} of which we present a specialized version suitable for our purposes:%
\begin{theorem}[Martingale Convergence Theorem \textnormal{\protect{\cite[Thm.12.3.(1)]{grimmett2001probability}}}]
	\label{thm:martingale_convergence}
	Let $(X_n)_{n\in \Nats}$ be a submartingale on a probability space $(\Omega,\F{F},\mathbb{P})$ with respect to a filtration $(\F{F}_n)_{n \in \Nats}$.
	If there is a constant $c\geq 0$ such that $X_n\leq c$ for every $n \in \Nats$ then there exists a random variable $X_{\omega}$ such that $\IP{}{\{\run \in \Omega \mid \lim\limits_{n \to \omega} X_n(\run)=X_{\omega}(\run)\}}=1,$ i.e., $(X_n)_{n\in \Nats}$ converges almost surely to $X_{\omega}$.
	Furthermore, $(X_n)_{n\in \Nats}$ is uniformly integrable, i.e., $\lim\limits_{n \to \omega}\expec{}{X_n}=\expec{}{X_{\omega}}.$ Moreover,
	\[
		\expec{}{X_n} \leq \expec{}{X_{\omega}} \text{ for all } n \in \Nats.
	\]
	If $(X_n)_{n\in \Nats}$ is a martingale, i.e., for all $n \in \Nats$ we have $\expec{}{X_{n+1} \mid \F{F}_n} = X_n$, then we even have
	\[
		\expec{}{X_n} = \expec{}{X_{\omega}} \text{ for all } n \in \Nats.
	\]
\end{theorem}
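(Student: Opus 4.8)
The plan is to recognize this statement as the classical Doob submartingale convergence theorem, specialized to a process that is non-negative (it takes values in $\PosRealsInf$) and uniformly bounded above by $c$, and then to establish its four assertions — almost--sure convergence, uniform integrability, the inequality $\expec{}{X_n} \leq \expec{}{X_\omega}$, and the equality in the martingale case — one at a time. The crucial observation is that $0 \leq X_n \leq c$ for every $n$, so the family is $L^1$--bounded, i.e. $\sup_{n} \expec{}{X_n} \leq c < \infty$; this is the only integrability hypothesis the general theorem requires, and it also guarantees in advance that any pointwise limit will satisfy $0 \leq X_\omega \leq c$.

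First I would prove the almost--sure convergence, which carries the real content. The standard device is Doob's upcrossing inequality: for each pair of rationals $0 \leq a < b$, let $U_N[a,b](\run)$ count the number of upward crossings of the trajectory $X_0(\run), \dots, X_N(\run)$ from below $a$ to above $b$. By applying the submartingale property across the successive hitting times of $a$ and $b$ — truncated at the bounded times $\min(\cdot, N)$ so that every quantity stays integrable — one obtains $(b-a)\,\expec{}{U_N[a,b]} \leq \expec{}{(X_N - a)^+} \leq c$. Letting $N \to \omega$ and using monotone convergence shows that the total number of upcrossings $U_\infty[a,b]$ has finite expectation, hence is finite almost surely. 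Taking the countable union over all rational pairs $a < b$ then rules out oscillation, so that on a set of probability one $\liminf_n X_n = \limsup_n X_n$; the pointwise limit $X_\omega \coloneqq \lim_{n\to\omega} X_n$ therefore exists almost surely, and is automatically bounded by $c$. I expect this upcrossing argument to be the main obstacle, since it is the one genuinely nontrivial step and the only place where the (sub)martingale hypothesis is used in an essential way.

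The remaining claims are then routine. For uniform integrability and the convergence of the expectations, I would invoke that the whole process is dominated by the integrable constant $c$: by \cref{lem:upper_bound_uniform_integrability} (bounded stochastic processes are uniformly integrable) the process $(X_n)_{n \in \Nats}$ is uniformly integrable, and then dominated convergence — equivalently, the definition of uniform integrability together with the already--established a.s.\ limit — yields $\lim_{n \to \omega} \expec{}{X_n} = \expec{}{X_\omega}$.

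Finally, the inequality and the equality follow from the behaviour of the sequence of expectations. For a submartingale, taking expectations in $\expec{}{X_{n+1} \mid \F{F}_n} \geq X_n$ gives $\expec{}{X_n} \leq \expec{}{X_{n+1}}$, so $(\expec{}{X_n})_{n \in \Nats}$ is non--decreasing; combined with the convergence $\expec{}{X_n} \to \expec{}{X_\omega}$ this forces $\expec{}{X_n} \leq \expec{}{X_\omega}$ for every $n$. In the martingale case the same computation gives $\expec{}{X_n} = \expec{}{X_0}$ for all $n$, so the sequence of expectations is constant; since it converges to $\expec{}{X_\omega}$, we conclude $\expec{}{X_n} = \expec{}{X_\omega}$ for every $n$, as claimed.
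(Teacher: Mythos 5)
Your proposal is correct. One thing to be aware of: the paper does not prove this theorem at all --- it is imported by citation from Grimmett \& Stirzaker [Thm.~12.3.(1)] and used as a black box in the derivation of the McIver--Morgan rules, so there is no in-paper argument to compare against. Your proof is the standard one from exactly that reference: Doob's upcrossing inequality gives almost--sure convergence (the boundedness $0 \leq X_n \leq c$ supplying the required $L^1$--bound and keeping the limit finite), boundedness by the integrable constant $c$ gives uniform integrability --- which is precisely the paper's \cref{lem:upper_bound_uniform_integrability} --- and the monotonicity (resp.\ constancy) of the expectation sequence for a submartingale (resp.\ martingale), combined with its convergence to $\expec{}{X_\omega}$, yields the final inequality and equality. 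All steps are sound and nothing essential is missing.
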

\noindent
Now let $f,I \in \E$ be \textbf{bounded} such that $I \preceq \charwp{\guard}{C}{f}(I)$, i.e., $I$ is a subinvariant and assume there is some $c\geq 0$ with $f,I \preceq c$.
Then the process $\mathbf{X}^{f,I}$ satisfies $X^{f,I}_n\leq c$ for every $n \in \Nats$.
By \cref{thm:martingale_convergence} there exists a random variable $X^{f,I}_{\omega}$ such that $X^{f,I}_n$ converges to $X^{f,I}_{\omega}$ almost surely.
By \cref{lemma:as_limit} we get that for any run $\run \in \Omega^{\lp}$ with $\termtime{\guard}(\run)<\omega$ we must have $X^{f,I}_{\omega}(\run)=X^{f}_{\termtime{\guard}}(\run)$, i.e., w.l.o.g.~we can assume $X^{f,I}_{\omega}\cdot \indicator{(\termtime{\guard})^{-1}(\Nats)}=X^{f}_{\termtime{\guard}}$.
As $I$ is a subinvariant we have by \cref{lemma:submartingale} that $\mathbf{X}^{f,I}$ is a submartingale.
We conclude for an arbitrary initial state $\State \in \States$
\begin{align*}
	I(\State) \lleq & \charwp{\guard}{C}{f}(I)(\State) \eeq \expec{\State}{X_0^{f,I}} \tag{\cref{cor:connection_expectation}}                                                                    \\
	\lleq           & \expec{\State}{X^{f,I}_{\omega}} \tag{\cref{thm:martingale_convergence}}                                                                                                   \\
	\eeq            & \expec{\State}{\left(\indicator{(\termtime{\guard})^{-1}(\Nats)}+\indicator{(\termtime{\guard})^{-1}(\{\omega\})}\right)\cdot X^{f,I}_{\omega}}                            \\
	\eeq            & \expec{\State}{\indicator{(\termtime{\guard})^{-1}(\Nats)}\cdot X^{f,I}_{\omega}} + \expec{\State}{\indicator{(\termtime{\guard})^{-1}(\{\omega\})}\cdot X^{f,I}_{\omega}} \\
	\eeq            & \expec{\State}{X^f_{\termtime{\guard}}} + \expec{\State}{\indicator{(\termtime{\guard})^{-1}(\{\omega\})}\cdot X^{f,I}_{\omega}}                                           \\
	\eeq            & \lfp \charwp{\guard}{C}{f}(\State) + \expec{\State}{\indicator{(\termtime{\guard})^{-1}(\{\omega\})}\cdot X^{f,I}_{\omega}}\tag{\cref{thm:lfp_expectation}}.
\end{align*}

Consequently,
\begin{equation}
	I(\State)\lleq \lfp \charwp{\guard}{C}{f}(\State) + \expec{\State}{\indicator{(\termtime{\guard})^{-1}(\{\omega\})}\cdot X^{f,I}_{\omega}}\label{eq:lower_bounding}
	.
\end{equation}

If $I$ is a \emph{fixed} point of $\charwp{\guard}{C}{f}$, then the process $\mathbf{X}^{f,I}$ is a \emph{martingale}.
By \cref{thm:cond_expec} we have for an arbitrary initial state $\State$
\[
	\expec{\State}{X_{n+1}^{f,I}} \eeq X_n^{f,\charwp{\guard}{C}{f}(I)} \eeq X_n^{f,I}.
\]
Hence, in this case $=$ instead of $\leq$ holds in \cref{eq:lower_bounding}.
We will now discuss the results of \cref{thm:lower_bounds_mcivermorgan}.
First of all, $T(\State) = \wp{\WHILEDO{\guard}{C}}{1}(\State) = \IP{\State}{\termtime{\guard}<\omega}$ by using $X^{1,T}=\indicator{(\termtime{\guard})^{-1}(\Nats)}$ and \cref{thm:lfp_expectation}.

We will now prove \thmlowerboundsmcivermorgan*

\begin{proof}
	\begin{enumerate}
		\renewcommand{\expec}[2]{\ensuremath{\phantom{}^{#1}\mathbb{E}\Bigl(#2\Bigr)}}
		\item Assume $I=\iverson{G}$ for some predicate $G$, i.e., $I(\State)\in \{0,1\}$.
		      W.l.o.g.
		      let $I(\State)=1$ as the claim holds trivially if $I(\State)=0$.
		      Then $\expec{\State}{\indicator{(\termtime{\guard})^{-1}(\{\omega\})}\cdot \underbrace{X^{f,I}_{\omega}}_{\leq 1}} \leq \expec{\State}{\indicator{(\termtime{\guard})^{-1}(\{\omega\})}}=\IP{\State}{\termtime{\guard}=\omega}.$ By \cref{eq:lower_bounding} we get $I(\State)\cdot T(\State)=T(\State) =\IP{\State}{\termtime{\guard}<\omega}=1-\IP{\State}{\termtime{\guard}=\omega}=I(\State)-\IP{\State}{\termtime{\guard}=\omega}\leq \lfp \charwp{\guard}{C}{f}(\State),$ so we have
		      \[
			      I\cdot T \ppreceq \wp{\WHILEDO{\guard}{C}}{f}.
		      \]
		      \renewcommand{\expec}[2]{\ensuremath{\phantom{}^{#1}\mathbb{E}\left(#2\right)}}
		\item Assume that for some predicate $G$ we have $\iverson{G}\preceq T$.
		      Again, w.l.o.g.
		      let $\iverson{G}(\State)=1$.
		      But then we must have $1\leq T(\State)=\IP{\State}{\termtime{\guard}<\omega}\leq 1$, i.e., $\IP{\State}{\termtime{\guard}<\omega}=1$.
		      So, $\expec{\State}{\indicator{(\termtime{\guard})^{-1}(\{\omega\})}\cdot X^{f,I}_{\omega}}=0$ and by \cref{eq:lower_bounding} we have $\iverson{G}(\State)\cdot I(\State)=I(\State)\leq \wp{\WHILEDO{\guard}{C}}{f}(\State)$, i.e.,
		      \[
			      \iverson{G} \cdot I \ppreceq \wp{\WHILEDO{\guard}{C}}{f}.
		      \]

		\item Assume there is some $\epsilon > 0$ with $\epsilon \cdot I \preceq T$.
		      By definition, $T=\lfp \charwp{\guard}{C}{1}$.
		      By \cref{eq:lower_bounding}, we have
		      \[
			      T(\State) =\lfp \charwp{\guard}{C}{1}(\State) + \expec{\State}{\indicator{(\termtime{\guard})^{-1}(\{\omega\})}\cdot X^{1,T}_{\omega}}
			      =T(\State) + \expec{\State}{\indicator{(\termtime{\guard})^{-1}(\{\omega\})}\cdot X^{1,T}_{\omega}},
		      \]
		      i.e., $\expec{\State}{\indicator{(\termtime{\guard})^{-1}(\{\omega\})}\cdot X^{1,T}_{\omega}} = 0$.
		      By definition, $\frac{T}{\epsilon}$ is a fixed point of $\charwp{\guard}{C}{\frac{1}{\epsilon}}$.
		      Thus,
		      \begin{align*}
			      \expec{\State}{\indicator{(\termtime{\guard})^{-1}(\{\omega\})}\cdot X^{f,I}_{\omega}}
			       & \leq \expec{\State}{\indicator{(\termtime{\guard})^{-1}(\{\omega\})}\cdot X^{\frac{1}{\epsilon},\frac{T}{\epsilon}}_{\omega}} \tag{$f$ is irrelevant} \\
			       & = \frac{1}{\epsilon}\cdot\expec{\State}{\indicator{(\termtime{\guard})^{-1}(\{\omega\})}\cdot X^{1,T}_{\omega}} = \frac{1}{\epsilon} \cdot 0 = 0
		      \end{align*}
		      So by \cref{eq:lower_bounding} we can conclude that $I(\State)\leq \lfp \charwp{\guard}{C}{f}(\State)$ for any state $\State$, i.e.,
		      \[
			      I \ppreceq \wp{\WHILEDO{\guard}{C}}{f}.
		      \]
	\end{enumerate}
\end{proof}
Notice that we have \emph{not} used the fact that $T$ is the termination probability but only that $T=\wp{\WHILEDO{\guard}{C}}{f}$ for some \emph{bounded} postexpectation $f$.
Furthermore, if $I$ is a lower bound, by definition $I \preceq \wp{\WHILEDO{\guard}{C}}{f}$.
Hence, we have generalized \cref{thm:lower_bounds_mcivermorgan} (3) in case of a loop with universally almost--surely terminating body to a complete characterization of lower bounds.
So we have proved the following theorem.

\thmgeneralizationmcivermorgan*

\begin{example}[Details on \cref{ex:mciver_morgan}]
	Let us consider the program $C_{rdw}$
	\begin{align*}
		 & \WHILE{x > 0}                                                     \\
		 & \qquad \PCHOICE{\ASSIGN{x}{x-1}}{\sfrac{1}{3}}{\ASSIGN{x}{x + 1}} \\
		 & \qquad \ASSIGN{y}{\max(y-1,0)}                                    \\
		 & \}~,
	\end{align*}
	with $x,y\in \Nats$ and $y \leq 100$.
	Note that this program is \emph{not} AST.
	Furthermore, the postexpectation $y$ is bounded.
	If $y \leq x$ initially then $y$ is $0$ after termination of the program.
	So, $\wp{C_{rdw}}{y}\geq \iverson{y > x}\cdot \left(\tfrac{1}{3}\right)^x\cdot (y-x)\coloneqq I$.

	Now consider $f=\iverson{y \text{ even}}\cdot 200\cdot y^2 + \iverson{y \text{ odd}}\cdot (y+5)^4$.
	We have $I' \leq \Phi_f(I')$, where $I'=400\cdot I$.

	\begin{align*}
		\Phi_f(I') & \eeq \iverson{x = 0} \cdot f + \iverson{x>0} \bigl( \tfrac{1}{3} \cdot I'\subst{x,y}{x-1,\max(y-1,0)}+ \tfrac{2}{3}\cdot I'\subst{x,y}{x+1,\max(y-1,0)} \bigr)                                                                \\
		           & \eeq \iverson{x = 0} \cdot f + \iverson{x>0} \cdot 400\cdot\bigl(\tfrac{1}{3} \cdot \iverson{\max(y-1,0) > x-1}\cdot \left(\tfrac{1}{3}\right)^{x-1}\cdot (\max(y-1,0)-(x-1))                                                 \\
		           & \qquad +\tfrac{2}{3}\cdot \iverson{\max(y-1,0) > x+1}\cdot \left(\tfrac{1}{3}\right)^{x+1}\cdot (\max(y-1,0)-(x+1)) \bigr)                                                                                                    \\
		           & \eeq \iverson{x = 0} \cdot f + \iverson{x>0} \cdot 400\cdot\bigl(\tfrac{1}{3} \cdot \iverson{y-1 > x-1}\cdot \left(\tfrac{1}{3}\right)^{x-1}\cdot (y-1-(x-1))                                                                 \\
		           & \qquad +\tfrac{2}{3}\cdot \iverson{y-1 > x+1}\cdot \left(\tfrac{1}{3}\right)^{x+1}\cdot (y-1-(x+1)) \bigr)                                                                                                                    \\
		           & \eeq \iverson{x = 0} \cdot f + \iverson{x>0} \cdot 400\cdot\bigl(\iverson{y > x}\cdot \left(\tfrac{1}{3}\right)^{x}\cdot (y-x) + \tfrac{2}{3}\cdot \iverson{y > x+2}\cdot \left(\tfrac{1}{3}\right)^{x+1}\cdot (y-x-2) \bigr) \\
		           & \eeq \iverson{x = 0} \cdot f + \iverson{x>0} \bigl(I' + \underbrace{400\cdot\tfrac{2}{3}\cdot \iverson{y > x+2}\cdot \left(\tfrac{1}{3}\right)^{x+1}\cdot (y-x-2)}_{\geq 0} \bigr)                                            \\
	\end{align*}
	If $\State(x)>0$, then obviously $I(\State) \leq \Phi_f(I')(\State)$ by the calculation above.
	If $\State(x)=0$, then we have

	\begin{align*}
		\Phi_f(I') (\State) & \eeq \iverson{x = 0}(\State) \cdot f (\State)                                                                                \\
		                    & \eeq \iverson{y \text{ is even}}(\State)\cdot 200\cdot \State(y)^2 + \iverson{y \text{ is odd}}(\State)\cdot (\State(y)+5)^4 \\
		                    & \ggeq 400 \cdot \State(y) = \iverson{y > x}(\State)\cdot \left(\tfrac{1}{3}\right)^0\cdot (\State(y)-0) = I'(\State),
	\end{align*}
	as for every even $y$ we have $200\cdot y^2 \geq 400 \cdot y$ and for every odd $y$ we have $200\cdot (y+5)^4 \geq 400 \cdot y$.

	We have $\tfrac{1}{400}\cdot I' \preceq \wp{C_{rdw}}{y}$.
	Thus, we can conclude from \textnormal{\cref{thm:generalization_mciver_morgan}} that $I' \preceq \wp{C_{rdw}}{f}$.
	Note that this is easier than relating $I'$ and the termination probability as required in \textnormal{\cref{thm:lower_bounds_mcivermorgan}} as $y$ does not influence the termination behavior of the loop.
\end{example}

\renewcommand{\charwp}[3]{\charfun{\wpsymbol}{#1}{#2}{#3}}
\renewcommand{\charwpn}[4]{\charfunn{\wpsymbol}{#1}{#2}{#3}{#4}}

\section{Details for Section \ref{sec:runtime}}
\label{app:runtime}

\subsection{Proofs}
\label{app:runtime_proofs}
\thmlowerboundsert*
\begin{proof}
	Remember the connection between $\wpsymbol$ and $\ertsymbol$ (cf.
	\cite[Thm.
		5.2]{DBLP:conf/lics/OlmedoKKM16}): For any probabilistic program $P$ we have
	\begin{align}
		\ert{P}{t} \eeq \wp{P}{t} + \ert{P}{0}\label{eq:wp_ert}
	\end{align}

	Our goal is to show $\charertnnoindex{\guard}{C}{t}{n}(I) = \charwpnnoindex{\guard}{C}{t}{n}(I) + \charertnnoindex{\guard}{C}{0}{n}(0)$ for all $n \geq 1$.
	We use induction on $n$ to prove this result.
	In the base case we have $n=1$.
	Here, we obtain

	\begin{align*}
		\charertnoindex{\guard}{C}{t}(I) = & 1 + \iverson{\neg \guard}\cdot t + \iverson{\guard}\cdot\ert{C}{I}                                                                \\
		=                                  & 1 + \iverson{\neg \guard}\cdot t + \iverson{\guard}\cdot\wp{C}{I} + \iverson{\guard}\cdot\ert{C}{0}                               \\
		=                                  & \iverson{\neg \guard}\cdot t + \iverson{\guard}\cdot\wp{C}{I} + 1 + \iverson{\neg \guard}\cdot 0 +\iverson{\guard}\cdot\ert{C}{0} \\
		=                                  & \charwpnoindex{\guard}{C}{t}(I) + \charertnoindex{\guard}{C}{0}(0).
	\end{align*}

	In the induction step we use the induction hypothesis $\charertnnoindex{\guard}{C}{t}{n}(I) = \charwpnnoindex{\guard}{C}{t}{n}(I) + \charertnnoindex{\guard}{C}{0}{n}(0)$.
	Then we have
	\begin{align*}
		  & \charertnnoindex{\guard}{C}{t}{n+1}(I)                                                                                                                                                                      \\
		= & \charertnoindex{\guard}{C}{t}(\charertnnoindex{\guard}{C}{t}{n}(I))                                                                                                                                         \\
		= & 1 + \iverson{\neg \guard}\cdot t + \iverson{\guard}\cdot\ert{C}{\charertnnoindex{\guard}{C}{t}{n}(I)} \tag{by definition}                                                                                   \\
		= & 1 + \iverson{\neg \guard}\cdot t + \iverson{\guard}\cdot\wp{C}{\charertnnoindex{\guard}{C}{t}{n}(I)} + \iverson{\guard}\cdot\ert{C}{0} \tag{by \cref{eq:wp_ert}}                                            \\
		= & 1 + \iverson{\neg \guard}\cdot t + \iverson{\guard}\cdot\wp{C}{\charwpnnoindex{\guard}{C}{t}{n}(I) + \charertnnoindex{\guard}{C}{0}{n}(0)} + \iverson{\guard}\cdot\ert{C}{0}\tag{by linearity of \wpsymbol} \\
		= & 1 + \iverson{\neg \guard}\cdot t + \iverson{\guard}\cdot\wp{C}{\charwpnnoindex{\guard}{C}{t}{n}(I)} + \iverson{\guard}\cdot\wp{C}{\charertnnoindex{\guard}{C}{0}{n}(0)} + \iverson{\guard}\cdot\ert{C}{0}   \\
		= & 1+ \charwpnnoindex{\guard}{C}{t}{n+1}(I) + \iverson{\guard}\cdot\wp{C}{\charertnnoindex{\guard}{C}{0}{n}(0)} + \iverson{\guard}\cdot\ert{C}{0}\tag{by definition}                                           \\
		= & 1+ \charwpnnoindex{\guard}{C}{t}{n+1}(I) + \iverson{\guard}\cdot\ert{C}{\charertnnoindex{\guard}{C}{0}{n}(0)}                                                                                               \\
		= & \charwpnnoindex{\guard}{C}{t}{n+1}(I) + 1 + \iverson{\neg \phi}\cdot 0 + \iverson{\guard}\cdot\ert{C}{\charertnnoindex{\guard}{C}{0}{n}(0)}\tag{by \cref{eq:wp_ert}}                                        \\
		= & \charwpnnoindex{\guard}{C}{t}{n+1}(I) + \charertnnoindex{\guard}{C}{0}{n+1}(0).                                                                                                                             \\
	\end{align*}
	So $\charertnnoindex{\guard}{C}{t}{n}(I) = \charwpnnoindex{\guard}{C}{t}{n}(I) + \charertnnoindex{\guard}{C}{0}{n}(0)$ holds for an arbitrary $n \in \Nats$ with $n \geq 1$.

	Now let $\State \in \States$.
	Then one of the following two cases occurs.
	\begin{enumerate}
		\item \underline{$\ert{\WHILEDO{\guard}{C}}{0}(\State)=\infty$}

		      In this case we have by \cref{eq:wp_ert} $\ert{\WHILEDO{\guard}{C}}{t}(\State)=\infty \geq I(\State)$.

		\item \underline{$\ert{\WHILEDO{\guard}{C}}{0}(\State)<\infty$}

		      In this case we have $\expec{\State}{\termtime{\guard}}<\infty$.
		      First of all, in \cref{thm:optional_stopping_probabilistic_programs} (b) we have seen, that if $I$ is conditionally difference bounded, $\charwpnoindex{\guard}{C}{t}(I)\pprec \infty$ and $\expec{\State'}{\termtime{\guard}}<\infty$ for every $\State' \in \States$ then we have $\lim \limits_{n \to \omega}\charwpnnoindex{\guard}{C}{t}{n}(I) = \lfp \charwpnoindex{\guard}{C}{t}$.
		      However, in \cref{thm:optional_stopping_probabilistic_programs} (b) we need that the expected looping time is finite for \emph{every} initial state $\State' \in \States$.
		      As we cannot ensure this condition, we use \cref{lemma:conditional_difference_boundedness_lfp}, a specialized result used in the proof of \cref{thm:optional_stopping_probabilistic_programs} which is indeed dependent on the initial state $\State \in \States$.

		      Furthermore, the expected runtime of the program with initial state $\State \in \States$ is finite so, the expected looping time of the program has to be finite as well, i.e., $\ert{\WHILEDO{\guard}{C}}{0}(\State)<\infty$ implies $\expec{\State}{\termtime{\guard}}<\infty$.

		      Hence, as $I$ harmonizes with $t$, $I$ is conditionally difference bounded,$\charwpnoindex{\guard}{C}{t}(I) \pprec \infty$, and $\expec{\State}{\termtime{\guard}}<\infty$.
		      Thus, we can apply \cref{lemma:conditional_difference_boundedness_lfp} and get
		      \[
			      \lim \limits_{n \to \omega}\charwpnnoindex{\guard}{C}{t}{n}(I)(\State) = \lfp \charwpnoindex{\guard}{C}{t}(\State)=\wp{\WHILEDO{\guard}{C}}{t}(\State).
		      \]
		      Hence we have
		      \begin{align*}
			      \charertnnoindex{\guard}{C}{t}{n}(I)(\State) = & \charwpnnoindex{\guard}{C}{t}{n}(I)(\State) + \charertnnoindex{\guard}{C}{0}{n}(0)(\State) \tag{by calculation above} \\
			      \xrightarrow{n \to \omega}                     & \wp{\WHILEDO{\guard}{C}}{t}(\State) + \ert{\WHILEDO{\guard}{C}}{0}(\State)                                            \\
			      =                                              & \ert{\WHILEDO{\guard}{C}}{t}(\State)
		      \end{align*}
		      But the sequence $\charertnnoindex{\guard}{C}{t}{n}(I)_{n \in \Nats}$ is monotonically increasing as $I$ is an $\ertsymbol$--subinvariant.
		      Hence, $I(\State)\leq \lim\limits_{n \to \omega}\charertnnoindex{\guard}{C}{t}{n}(I)(\State)=\ert{\WHILEDO{\guard}{C}}{t}(\State)$.
	\end{enumerate}
	Combining these results, we get $I\leq \ert{\WHILEDO{\guard}{C}}{t}$.
\end{proof}

\subsection{Details for Example \ref{ex:coupon-collector}}
\label{app:coupon-collector}

More detailed annotations for the outer loop of the coupon collector are as follows:
\begin{align*}
	 & \eqannotate{1 \pplus N \cdot \harm{N} }                                                                                                         \\
	 & \ertannotate{1 \pplus \iverson{0 < N \leq N} \cdot N \cdot \harm{N} \pplus \iverson{N < N} \cdot (N \cdot \harm{N} + N - N)}                          \\
	 & \COMPOSE{\ASSIGN{x}{N}}{}                                                                                                                 \\
	 & \preceqannotate{\iverson{0 < x \leq N} \cdot N \cdot \harm{x} \pplus \iverson{N < x} \cdot (N \cdot \harm{N} + x - N)}                          \\
	 & \eqannotate{
		1 \pplus \iverson{0 < x \leq N} \cdot \left( N \cdot \harm{x} + 3 + \tfrac{N}{x}\right) \pplus \iverson{N < x} \cdot (2 + N \cdot \harm{N} + x - N)
	}
	\\
	 & \eqannotate{
		1 \pplus \iverson{x = 1} \cdot (3 + 2 \cdot N ) \pplus \iverson{1 < x \leq N} \cdot \left( 3 + N \cdot \left(\harm{x} + \tfrac{1}{x}\right)\right)
	}                                                                                                                                            \\
	 & \qquad \annocolor{\boldsymbol{
			\pplus \iverson{x = N + 1} \cdot \left( 3 + N \cdot \harm{N} \right) \pplus \iverson{N + 1 < x} \cdot (2 + N \cdot \harm{N} + x - N)
	}}                                                                                                                                           \\
	 & \eqannotate{
		1 \pplus \iverson{x = 1} \cdot (3 + 2 \cdot N ) \pplus \iverson{1 < x \leq N} \cdot \left( 3 + N \cdot \left(\harm{x} - \tfrac{1}{x}\right) + \tfrac{2 \cdot N}{x} \right)
	}                                                                                                                                            \\
	 & \qquad \annocolor{\boldsymbol{
			\pplus \iverson{x = N + 1} \cdot \left( 3 + N \cdot \left(\harm{N + 1} - \tfrac{1}{N + 1}\right) \right) \pplus \iverson{N + 1 < x} \cdot (2 + N \cdot \harm{N} + x - N)
	}}                                                                                                                                           \\
	 & \eqannotate{
		1 \pplus \iverson{0 < x} \cdot \Bigl( 2 + t + \iverson{x \leq N} \cdot \tfrac{2 \cdot N}{x}
		\Bigr)
	}                                                                                                                                            \\
	 & \phiannotate{
		1 \pplus \iverson{x \leq 0} \cdot 0 \pplus \iverson{0 < x} \cdot \Bigl( 2 + t + \iverson{0 < x \leq N} \cdot \tfrac{2 \cdot N}{x}
		\Bigr)
	}                                                                                                                                            \\
	 & \WHILE{0 < x}                                                                                                                             \\
	 & \qquad \eqannotate{
		2 \pplus t \pplus \iverson{0 < x \leq N} \cdot \tfrac{2 \cdot N}{x}
	}                                                                                                                                            \\
	 & \qquad \ertannotate{
		1 \pplus 1 \pplus t \pplus \iverson{0 < x < N + 1} \cdot 2 \cdot \Max{\tfrac{N}{x}}{1}
	}                                                                                                                                            \\
	 & \qquad \ASSIGN{i}{N+1}                                                                                                                    \\
	 & \qquad \ertannotate{
		1 \pplus t \pplus \iverson{0 < x < i} \cdot 2 \cdot \Max{\tfrac{N}{x}}{1}
	} \tag{by \cref{lem:batz-special-case} below}                                                                                                \\
	 & \qquad \WHILEDO{0 < x < i}{ \ASSIGN{i}{\mathrm{Unif}[1..N]} }                                                                             \\
	 & \qquad \eqannotate{\underbrace{
		1 \pplus \iverson{1 < x \leq N + 1} \cdot N \cdot \left(\harm{x} - \tfrac{1}{x}\right) \pplus \iverson{N + 1 < x} \cdot (N \cdot \harm{N} + x - 1 - N)
	}_{~{}~{}~{}~{}~{}\eqqcolon t}}                                                                                                              \\
	 & \qquad \ertannotate{1 \pplus \iverson{0 < x - 1 \leq N} \cdot N \cdot \harm{x - 1} \pplus \iverson{N < x - 1} \cdot (N \cdot \harm{N} + (x - 1) - N)} \\
	 & \qquad \ASSIGN{x}{x-1}                                                                                                                    \\
	 & \qquad \starannotate{\iverson{0 < x \leq N} \cdot N \cdot \harm{x} \pplus \iverson{N < x} \cdot (N \cdot \harm{N} + x - N) ~}                         \\
	 & \annotate{0}
\end{align*}
For the inner loop, we make use of the following Lemma, for which we also give a detailed proof in the following:
\begin{lemma}
	\label{lem:batz-special-case}
	Let $t \in \E$ be a runtime (i.e.\ an expectation) such that $t$ does \emph{not} depend on program variable~$i$.
	Then the following expected runtime annotation is valid:
	\begin{align*}
		 & \annotate{
			1 \pplus t \pplus \iverson{0 < x < i} \cdot 2\cdot \Max{\tfrac{N}{x}}{1}
		}                                                       \\
		 & \WHILEDO{0 < x < i}{\ASSIGN{i}{\mathrm{Unif}[1..N]}} \\
		 & \annotate{t}
	\end{align*}
\end{lemma}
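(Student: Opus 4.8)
The inner loop $\WHILEDO{x < i}{\ASSIGN{i}{\mathrm{Unif}[1..N]}}$ is an \emph{i.i.d.\ loop}: its body overwrites $i$ by a fresh uniform draw independent of the past, and the guard $x < i$ reads only this fresh $i$ together with the value of $x$, which the body leaves unchanged. By \cref{defQQQert}, treating the uniform draw as a single probabilistic assignment, its $\ertsymbol$--characteristic function $\Phi$ is
\[
  \charert{x<i}{\ASSIGN{i}{\mathrm{Unif}[1..N]}}{t}(X) \eeq 1 \pplus \iverson{x \geq i}\cdot t \pplus \iverson{x < i}\cdot\Bigl(1 + \tfrac{1}{N}\textstyle\sum_{j=1}^{N} X\subst{i}{j}\Bigr),
\]
and the sought runtime is $\lfp\,\Phi$. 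First I would record the structural fact that on the guard--true region $\{x < i\}$ the right--hand side depends on $X$ only through the $i$--averaged sum $\tfrac{1}{N}\sum_j X\subst{i}{j}$, hence is independent of the particular value of $i$; consequently $\lfp\,\Phi$ is constant in $i$ there, matching the shape of the claimed annotation (whose guard--true part $2\Max{\tfrac{N}{x}}{1}$ indeed does not mention $i$).

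\textbf{Fixed--point verification.} Write $U$ for the claimed right--hand side and $m \ccoloneqq \Min{x}{N}$ for the number of draws $j \in \{1,\dots,N\}$ with $j \leq x$. Let $A$ denote the common value $\lfp\,\Phi$ takes on $\{x<i\}$. Splitting the draws into the $m$ values $j \leq x$ (on which the guard is false, contributing $1+t$) and the $N-m$ values $j > x$ (on which the guard is true, contributing $A$), the fixed--point equation on $\{x<i\}$ reads
\[
  A \eeq 2 + \tfrac{1}{N}\bigl(m(1+t) + (N-m)\,A\bigr).
\]
For $x \geq 1$ (so $m \geq 1$) this has the unique finite solution $A = 1 + t + \tfrac{2N}{m} = 1 + t + 2\Max{\tfrac{N}{x}}{1}$, while for $x = 0$ (so $m = 0$) it reduces to $0 = 2N$ and admits no finite solution, forcing $A = \infty$; on $\{x \geq i\}$ the equation gives $1 + t$ directly. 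Together these show $\Phi(U) = U$, i.e.\ $U$ is a fixed point.

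\textbf{Leastness and the main obstacle.} A fixed--point check alone does not give $U = \lfp\,\Phi$, so the crux is establishing leastness. I would argue via Kleene iteration (\cref{thm:tarski-katorovich}): $\lfp\,\Phi = \lim_{n\to\omega}\Phi^n(0)$, and identify this monotone limit with the geometric series in the per--draw exit probability $p = \tfrac{m}{N}$. For $x \geq 1$ we have $p \geq \tfrac{1}{N} > 0$, so the loop is AST with finite expected iteration count $1/p = \Max{\tfrac{N}{x}}{1}$; the series then converges to the unique finite solution $A$ above, giving $\lfp\,\Phi = U$ there. For $x = 0$ with $i \geq 1$ every draw keeps the guard true, the loop diverges, and $\Phi^n(0) \nearrow \infty = U$. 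Alternatively, the exact value is an instance of the closed form for i.i.d.\ loops in \cite[Theorem~4]{DBLP:conf/esop/BatzKKM18}, which is how it is invoked in \cref{ex:coupon-collector}. The main difficulty is precisely this convergence/leastness bookkeeping together with the careful handling of the $\infty$ entries arising from nontermination at $x = 0$; the one degenerate state $x = i = 0$ (where the guard is already false) is never reached in context, since the inner loop is entered only after $\ASSIGN{i}{N+1}$ with $N \geq 1$, so the stated equality holds on all inputs of interest, namely those with $i \geq 1$.
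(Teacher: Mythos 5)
Your proof is correct, and its core argument takes a genuinely different route from the paper's. The paper never reasons about the fixed point directly: it verifies that the loop is a $t$-i.i.d.\ loop in the sense of Batz et al.\ --- computing $\wp{\ASSIGN{i}{\mathrm{Unif}[1..N]}}{\iverson{x<i}} = \Max{1 - \tfrac{x}{N}}{0}$ and $\wp{\ASSIGN{i}{\mathrm{Unif}[1..N]}}{\iverson{i \leq x}\cdot t} = \Min{\tfrac{x}{N}}{1}\cdot t$, checking that both (as well as the body's runtime, which is $1$) are independent of $i$ and that the body is AST --- and then invokes the closed-form expected-runtime formula of \cite[Theorem 4]{DBLP:conf/esop/BatzKKM18}, simplifying it algebraically with the convention $\tfrac{a}{0}=\infty$, which is where the $\iverson{x=0}\cdot\infty$ summand comes from. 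Your Kleene-iteration/geometric-series argument is exactly the content of that external theorem specialized to this loop, so you have in effect re-proved the needed instance from first principles. What the paper's route buys is brevity and reuse of a general result; what yours buys is self-containedness and an explicit account of where leastness comes from --- which, as you rightly identify, is the real crux, since for $1 \leq x < N$ the fixed-point equation on the guard-true region admits both your finite solution and $\infty$, so a fixed-point check alone proves nothing. Your auxiliary observation that each $\Phi^n(0)$ (hence the limit) is constant in $i$ on the guard-true region is the right way to make the one-dimensional recursion $a_{n+1} = 2 + p(1+t) + (1-p)a_n$ legitimate; note that both you and the paper implicitly use that $x$ ranges over $\Nats$ when counting the favourable draws.

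One point of care, which cuts both ways. As you note in your final paragraph, the displayed expectation is \emph{not} the exact $\ertsymbol$ on states with $x = 0$ and $i \leq 0$: there the guard is false, the true value is $1+t$, yet the unguarded $\iverson{x=0}\cdot\infty$ makes the annotation $\infty$. Consequently your earlier unqualified claim $\Phi(U)=U$ holds only on the region $\{i \geq 1\}$ (which is closed under $\Phi$, so restricting to it is legitimate), and your closing restriction is what actually repairs the statement. The paper's own derivation has the same blemish --- the step tagged ``by $\tfrac{a}{0}=\infty$'' silently moves the $\infty$ outside the guard $\iverson{x<i}$, weakening equality to an over-approximation at exactly those states --- and it is harmless there for the same reason you give: the lemma is applied immediately after $\ASSIGN{i}{N+1}$, so only states with $i \geq 1$ ever matter. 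Flagging this explicitly, as you do, is if anything more careful than the paper.
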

\begin{proof}
	We employ \cite[Theorem 4]{DBLP:conf/esop/BatzKKM18} for so-called $t$--independent and identically distributed loops ($t$-i.i.d.~loops for short) (see \cite[Definition 5]{DBLP:conf/esop/BatzKKM18}).
	In order to verify the $t$-i.i.d.-ness of the loop $\WHILEDO{0 < x < i}{\ASSIGN{i}{\mathrm{Unif}[1..N]}}$, we have to establish that neither
	\begin{align*}
		\wp{\ASSIGN{i}{\mathrm{Unif}[1..N]}}{\iverson{0 < x < i}}
		\eeq \frac{1}{N} \cdot \sum_{m = 1}^{N} \iverson{0 < x < m}
		\eeq \iverson{0 < x} \cdot \Max{1 - \tfrac{x}{N}}{0} \tag{$\dagger$}
	\end{align*}
	nor
	\begin{align*}
		 & \wp{\ASSIGN{i}{\mathrm{Unif}[1..N]}}{\iverson{\neg(0 < x < i)} \cdot t}                                                                                                        \\
		 & \eeq \wp{\ASSIGN{i}{\mathrm{Unif}[1..N]}}{\iverson{\neg(0 < x < i)}} \cdot t \tag{by \cite[Lemma 1]{DBLP:conf/esop/BatzKKM18}}                                                 \\
		 & \eeq \wp{\ASSIGN{i}{\mathrm{Unif}[1..N]}}{1 - \iverson{0 < x < i}} \cdot t                                                                                                     \\
		 & \eeq \bigl( \wp{\ASSIGN{i}{\mathrm{Unif}[1..N]}}{1} - \wp{\ASSIGN{i}{\mathrm{Unif}[1..N]}}{\iverson{0 < x < i}} \bigr) \cdot t \tag{by \cite[Corollary 4.22]{thesis:kaminski}} \\
		 & \eeq \left( 1 - \iverson{0 < x} \cdot \Max{1 - \tfrac{x}{N}}{0} \right) \cdot t                                                                                                \\
		 & \eeq \left(\iverson{x \leq 0} \pplus \iverson{0 < x} \cdot \Min{\tfrac{x}{N}}{1}\right) \cdot t \tag{$\ddagger$}
	\end{align*}
	depend on program variable $i$ which is indeed the case by the assumption that $t$ does not depend on $i$.
	Additionally to $t$-i.i.d.-ness, \cite[Theorem 4]{DBLP:conf/esop/BatzKKM18} requires us to establish that
	\begin{align*}
		\ert{\ASSIGN{i}{\mathrm{Unif}[1..N]}}{0} \eeq 1
	\end{align*}
	does not depend on variable $i$ and that the loop body terminates almost-surely, i.e.\
	\begin{align*}
		\wp{\ASSIGN{i}{\mathrm{Unif}[1..N]}}{1} \eeq 1~.
	\end{align*}
	Both conditions are obviously true.

	Having established all preconditions of \cite[Theorem 4]{DBLP:conf/esop/BatzKKM18}, we can now make the following $\ertsymbol$-annotations (recall that such annotations are best read from bottom to top):
	\begin{align*}
		 & \annotate{
			1 \pplus t \pplus \iverson{0 < x < i} \cdot 2 \cdot \Max{\tfrac{N}{x}}{1}
		}                                                                                                                                                                                                \\
		 & \annotate{
			1 \pplus \iverson{0 < x < i} \cdot \left( \frac{2}{\Min{\tfrac{x}{N}}{1}} + t \right) \pplus \iverson{\neg(0 < x < i)} \cdot t
		}                                                                                                                                                                                                \\
		 & \annotate{
			1 \pplus \iverson{0 < x < i} \cdot \frac{1 + 1 + \Min{\tfrac{x}{N}}{1} \cdot t}{\Min{\tfrac{x}{N}}{1}}
			\pplus \iverson{\neg(0 < x < i)} \cdot t
		}
		\tag{see $(\ddagger)$}                                                                                                                                                                           \\
		 & \annotate{
		1 \pplus \iverson{0 < x < i} \cdot \frac{1 + \ert{\ASSIGN{i}{\mathrm{Unif}[1..N]}}{0} + \wp{\ASSIGN{i}{\mathrm{Unif}[1..N]}}{\iverson{\neg(0 < x < i)} \cdot t}}{1 - \Max{1 - \tfrac{x}{N}}{0}}} \\
		 & \qquad\annocolor{\boldsymbol{\pplus \iverson{\neg(0 < x < i)} \cdot t}
		}
		\tag{see \cite{DBLP:journals/jacm/KaminskiKMO18}}                                                                                                                                                \\
		 & \annotate{
			1 \pplus \iverson{0 < x < i} \cdot \frac{1 + \ert{\ASSIGN{i}{\mathrm{Unif}[1..N]}}{\iverson{\neg(0 < x < i)} \cdot t}}{1 - \Max{1 - \tfrac{x}{N}}{0}} \pplus \iverson{\neg(0 < x < i)} \cdot t
		}
		\tag{with $\tfrac{0}{0} = 0$ by~\cite[Theorem 4]{DBLP:conf/esop/BatzKKM18} and $(\dagger)$}                                                                                                      \\
		 & \WHILE{0 < x < i}                                                                                                                                                                             \\
		 & \qquad \ASSIGN{i}{\mathrm{Unif}[1..N]}                                                                                                                                                        \\
		 & \}                                                                                                                                                                                            \\
		 & \annotate{t}
	\end{align*}
	It is important to note that, again, any loop semantics needed to be applied only a finite number of times.
	In particular, it was not necessary to find the limit of a sequence or anything alike.
\end{proof}

}
\end{document}